\newif\iffull
\theoremstyle{plain}
\newtheorem{theorem}{Theorem}
\newtheorem{proposition}[theorem]{Proposition}
\newtheorem{lemma}[theorem]{Lemma}
\newtheorem{corollary}[theorem]{Corollary}
\theoremstyle{definition}
\newtheorem{definition}[theorem]{Definition}
\newtheorem{observation}[theorem]{Observation}
\theoremstyle{remark}
\newtheorem{fact}[theorem]{Fact}
\newtheorem{example}[theorem]{Example}
\newtheorem{remark}[theorem]{Remark}
\newtheorem{sketch}[theorem]{Sketch}
\newcommand{\nfP}[1]{P^-_{#1}}
\newcommand{\nfD}[1]{D^-_{#1}}
\newcommand{\nfN}[1]{N^-_{#1}}
\newcommand{\lbl}{\mathit{label}}
\newcommand{\dbracket}[1]{[\![ #1 ]\!]}
\newcommand{\midd}{\mathrel{\,\mid\,}}
\newcommand{\eqdef}                     % equal by definition
  {\stackrel{\scriptscriptstyle \mathrm{def}}{=}}
\newcommand{\md}{{\rm dd}}
\newcommand{\child}{{\to}} %the child relation
\newcommand{\childp}[1]{{\stackrel{#1}{\to}}} %the child in n steps relations, where n is a parameter
\newcommand{\xpath}{\text{XPath}\xspace}
\newcommand{\corexpath}{\text{Core-XPath}\xspace}
\newcommand{\coredataxpath}{\text{Core-Data-XPath}\xspace}
\newcommand{\xp}[1]{\xpathd({#1})}
\newcommand{\xpathd}{\ensuremath{\xpath_{=}}\xspace}
\newcommand{\down}{\downarrow}
\newcommand{\xpd}{\xp{\dow}\xspace}
\newcommand{\xpdeq}{\xpd^-}
\newcommand{\dow}{{\downarrow}}
\newcommand{\rtdow}{{\downarrow_*}}
\newcommand{\OMIT}[1]{}
\newcommand{\con}{\textbf{Con}\xspace}
\newcommand{\bu}{{\bf u}}
\newcommand{\bv}{{\bf v}}
\newcommand{\equivInstance}[2]{#1 \equiv #2} % Equivalence Instance.
\newcommand{\leqInstance}[2]{#1 \leq #2} % equivalente a una implicacion
\newcommand{\botNode}{\mbox{\sc false}} %Bottom node
\newcommand{\botPath}{\bot} % Bottom path
\newcommand{\topNode}{\mbox{\sc true}} % Top node
\newcommand{\sisi}{{\bf V_{=,\neq}}}
\newcommand{\sino}{{\bf V_{=,\lnot \neq}}}
\newcommand{\nosi}{{\bf V_{\lnot =,\neq}}}
\newcommand{\nono}{{\bf V_{\lnot =,\lnot \neq}}}
\newcommand{\prax}[1]{%
    \IfEqCase{#1}{%
        {1}{\hyperlink{praxone}{\sf PrAx1}\xspace}%
        {2}{\hyperlink{praxtwo}{\sf PrAx2}\xspace}%
        {3}{\hyperlink{praxthree}{\sf PrAx3}\xspace}%
    }%
}%
\newcommand{\ndax}[1]{%
    \IfEqCase{#1}{%
        {1}{\hyperlink{ndaxone}{\sf NdAx1}\xspace}%
        {2}{\hyperlink{ndaxtwo}{\sf NdAx2}\xspace}%
        {3}{\hyperlink{ndaxthree}{\sf NdAx4}\xspace}%
        {4}{\hyperlink{ndaxfour}{\sf NdAx3}\xspace}%
    }%
}%
\newcommand{\lbax}[1]{%
    \IfEqCase{#1}{%
        {1}{\hyperlink{lbaxone}{\sf LbAx1}\xspace}%
        {2}{\hyperlink{lbaxtwo}{\sf LbAx2}\xspace}%
    }%
}%
\newcommand{\eqax}[1]{%
    \IfEqCase{#1}{%
        {1}{\hyperlink{eqaxone}{\eqaxonen}\xspace}%
        {2}{\hyperlink{eqaxtwo}{\eqaxtwon}\xspace}%
        {3}{\hyperlink{eqaxthree}{\eqaxthreen}\xspace}%
        {4}{\hyperlink{eqaxfour}{\eqaxfourn}\xspace}%
        {5}{\hyperlink{eqaxfive}{\eqaxfiven}\xspace}%
        {6}{\hyperlink{eqaxsix}{\eqaxsixn}\xspace}%
        {7}{\hyperlink{eqaxseven}{\eqaxsevenn}\xspace}%
        {8}{\hyperlink{eqaxeight}{\eqaxeightn}\xspace}%
    }%
}%
\newcommand{\eqaxonen}{{\sf EqAx6}}%
\newcommand{\eqaxtwon}{{\sf EqAx1}}%
\newcommand{\eqaxthreen}{{\sf EqAx2}}%
\newcommand{\eqaxfourn}{{\sf EqAx3}}%
\newcommand{\eqaxfiven}{{\sf EqAx4}}%
\newcommand{\eqaxsixn}{{\sf EqAx7}}%
\newcommand{\eqaxsevenn}{{\sf EqAx5}}%
\newcommand{\eqaxeightn}{{\sf EqAx8}}%
\newcommand{\isaxone}{\hyperlink{isaxone}{\sf IsAx1}\xspace}
\newcommand{\isaxtwo}{\hyperlink{isaxtwo}{\sf IsAx2}\xspace}
\newcommand{\isaxfour}{\hyperlink{isaxfour}{\sf IsAx4}\xspace}
\newcommand{\isaxfive}{\hyperlink{isaxfive}{\sf IsAx5}\xspace}
\newcommand{\isaxsix}{\hyperlink{isaxsix}{\sf IsAx6}\xspace}
\newcommand{\isaxseven}{\hyperlink{isaxseven}{\sf IsAx7}\xspace}
\newcommand{\neqaxone}{\hyperlink{neqaxone}{\neqaxonen}\xspace}
\newcommand{\neqaxtwo}{\hyperlink{neqaxtwo}{\neqaxtwon}\xspace}
\newcommand{\neqaxfour}{\hyperlink{neqaxfour}{\neqaxfourn}\xspace}
\newcommand{\neqaxfive}{\hyperlink{neqaxfive}{\neqaxfiven}\xspace}
\newcommand{\neqaxeight}{\hyperlink{neqaxeight}{\neqaxeightn}\xspace}
\newcommand{\neqaxnine}{\hyperlink{neqaxnine}{\neqaxninen}\xspace}
\newcommand{\neqaxten}{\hyperlink{neqaxten}{\neqaxtenn}\xspace}
\newcommand{\neqaxfifteen}{\hyperlink{neqaxfifteen}{\neqaxfifteenn}\xspace}
\newcommand{\neqaxsixteen}{\hyperlink{neqaxsixteen}{\neqaxsixteenn}\xspace}
\newcommand{\neqaxseventeen}{\hyperlink{neqaxseventeen}{\neqaxseventeenn}\xspace}
\newcommand{\neqaxeighteen}{\hyperlink{neqaxeighteen}{\neqaxeighteenn}\xspace}
\newcommand{\neqaxonen}{{\sf NeqAx1}\xspace}
\newcommand{\neqaxtwon}{{\sf NeqAx4}\xspace}
\newcommand{\neqaxfourn}{{\sf NeqAx6}\xspace}
\newcommand{\neqaxfiven}{{\sf NeqAx6}\xspace}
\newcommand{\neqaxeightn}{{\sf NeqAx7}\xspace}
\newcommand{\neqaxninen}{{\sf NeqAx8}\xspace}
\newcommand{\neqaxtenn}{{\sf NeqAx9}\xspace}
\newcommand{\neqaxfifteenn}{{\sf NeqAx5}\xspace}
\newcommand{\neqaxsixteenn}{{\sf NeqAx2}\xspace}
\newcommand{\neqaxseventeenn}{{\sf NeqAx3}\xspace}
\newcommand{\neqaxeighteenn}{{\sf NeqAx10}\xspace}
\newcommand{\axiomRestrNeq}{\mathsf{XP}}
\newcommand{\axiomRestr}{\mathsf{XP^-}}
\newcommand{\eqfull}{\equiv}
\newcommand{\eqres}{\equiv^-}
\title{Axiomatizations for downward XPath on Data Trees}
\author{Sergio Abriola$^{1}$ \and Mar\'ia Emilia Descotte$^{2}$ \and Raul Fervari$^{3}$ \and Santiago Figueira$^{1}$}
\date{
\begin{small}
$^1$ Universidad de Buenos Aires, and ICC-CONICET, Argentina \\
$^2$ Universidad de Buenos Aires, and IMAS-CONICET, Argentina \\
$^3$ Universidad Nacional de C\'ordoba, and CONICET, Argentina\\
\end{small}}
\begin{document}
\maketitle

\begin{abstract}
We give sound and complete axiomatizations for XPath with data tests by `equality' or `inequality', and containing the single `child' axis. This data-aware logic predicts over data trees,
which are tree-like structures whose every node contains a label from a finite alphabet
and a data value from an infinite domain. The language allows us to compare data values of two nodes but cannot access the data values themselves (i.e.\ there is no comparison by constants).
Our axioms are in the style of equational logic, extending the axiomatization of data-oblivious XPath, by B.\ ten Cate, T.\ Litak and M. Marx. We axiomatize the full logic with tests by `equality' and `inequality', and also a simpler fragment with `equality' tests only. Our axiomatizations apply both to node expressions and path expressions.
The proof of completeness relies on a novel normal form theorem for XPath with data tests.
\end{abstract}

%\tableofcontents

\section{Introduction} \label{sec:intro}
%!TEX root = main.tex

XML (eXtensible Markup Language) is the most successful language for data exchange on
the web. It meets the requirements of a flexible, generic and platform-independent language.
An XML document is a hierarchical structure that can be abstracted as a data tree,
where nodes have labels (such as {\em LastName}) from a finite domain, and data values (such as 
{\em Smith}) from an infinite domain. For some tasks, data values can be disregarded (for instance, 
checking whether a given XML document conforms to a schema specification). But many applications
require data-aware query languages, that is, languages with the ability of comparing data
values. Indeed, the possibility to perform {\em joins} in queries or comparing for equality of data
values is a very common and necessary feature in database query languages.

XPath is the most widely used query language for XML documents; it is an open standard and constitutes a World Wide Web
Consortium (W3C) Recommendation~\cite{xpath:w3c}. XPath has syntactic operators or `axes' to navigate the tree using the `child', `parent', `sibling', etc.\ accessibility relations, and can make tests on intermediate nodes.~\corexpath~\cite{GKP05} is the fragment of XPath~1.0 containing only the navigational behavior of XPath, i.e.\ without any reference to the data in the queries.

\corexpath can be seen as a {\em modal language}, such as those used in
software verification, like Linear Temporal Logic (LTL)~\cite{BK08} or Propositional Dynamic Logic (PDL)~\cite{HKT00}.
XPath has been already investigated from a `modal' point of view. In~\cite{CFL10} this perspective
is illustrated by showing how some results on \corexpath fragments can be derived from classical 
results in modal logic. 
In particular, when the only accessibility relation is `child', \corexpath has many similarities with basic modal logic (BML). 
First, in the absence of data, an XML document just becomes a tree whose every node has a label from a finite domain; this is a special kind of Kripke models, when labels are represented as propositional letters. Second, any property expressed in BML can be translated to \corexpath and vice versa. There are also some differences: \corexpath may express not only properties $\varphi$ on nodes, called {\em node expressions}, but also on {\em paths}, called {\em path expressions}. When $\alpha$ is a path expression, its truth is evaluated on pairs of nodes instead of on individual nodes, as BML does. In a nutshell, $\alpha$ is true at $(x,y)$ if the path from $x$ to $y$ (which is unique, since our models are trees) satisfies the condition expressed by $\alpha$. 

The formal syntax and semantics of \corexpath will be given later in full detail, but let us now give a glimpse of it. If $\alpha$ is a path expression, in \corexpath we can write $\tup{\alpha}$, a node expression saying that there is a descendant $y$ of $x$ such that $(x,y)$ satisfies $\alpha$ (see Figure~\ref{fig:diamonds}(a)).

\begin{figure}[ht]
   \begin{center}
   \begin{tabular}{c@{\hskip 1in}c@{\hskip 1in}c}
   \includegraphics[scale=0.25]{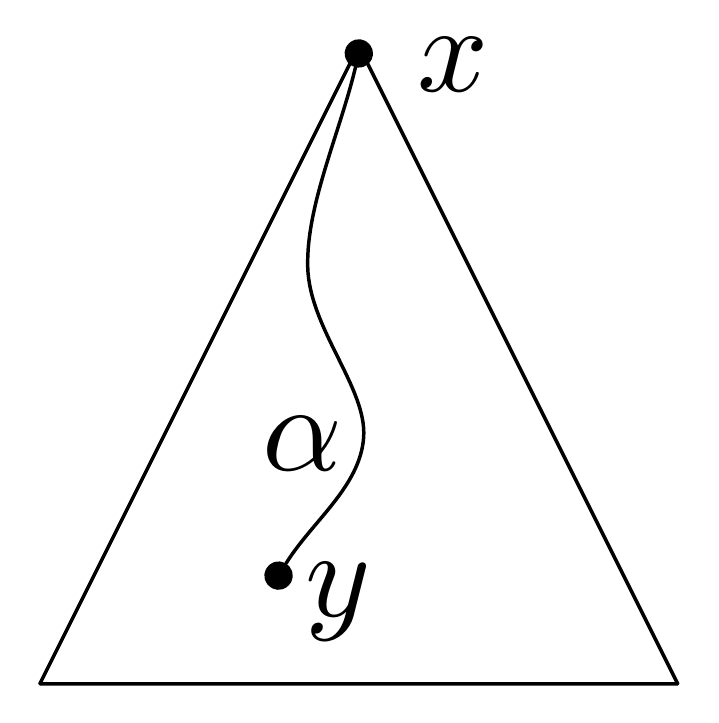}&\includegraphics[scale=0.25]{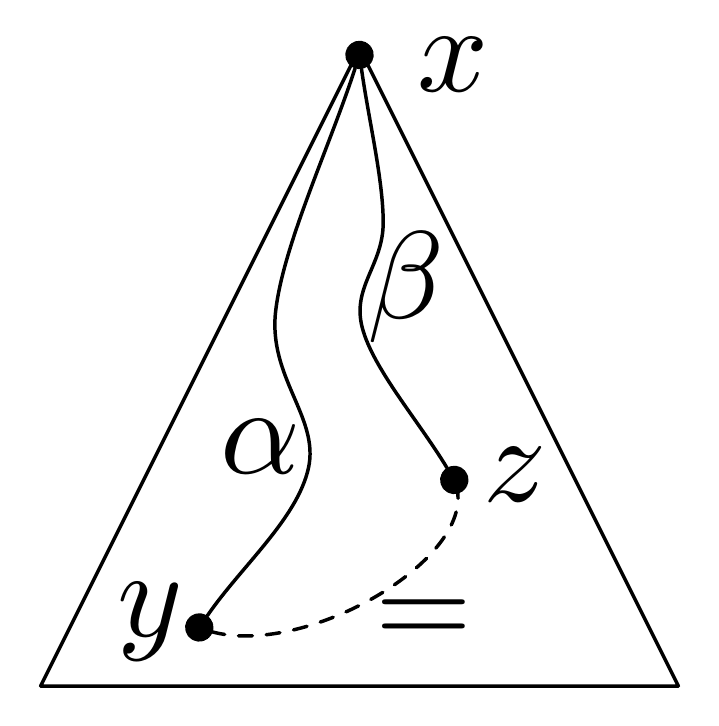}&\includegraphics[scale=0.25]{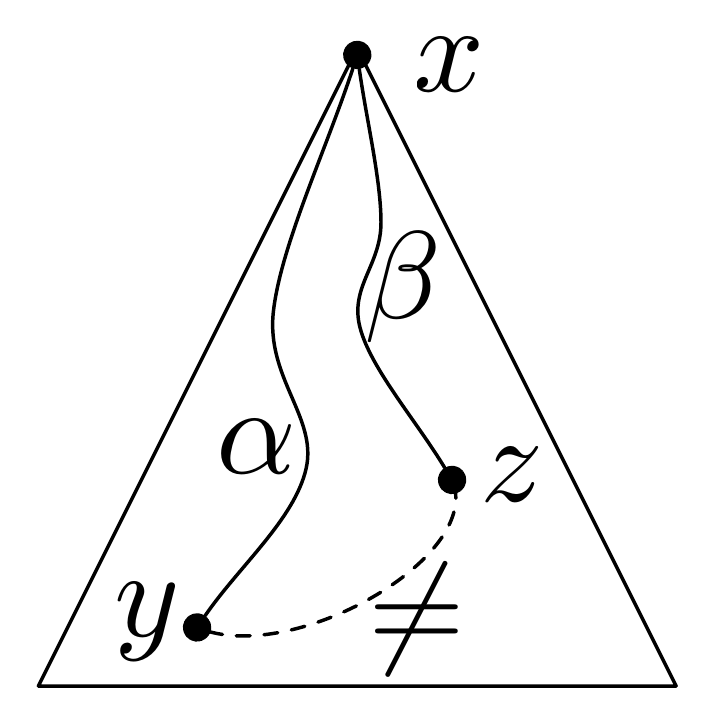}\\
   (a)&(b)&(c)
   \end{tabular}
   \end{center}
   \caption{In $x$ it holds: (a) The {\em modal} diamond $\tup{\alpha}$; (b) the data-aware diamond $\tup{\alpha=\beta}$; (c) the data-aware diamond $\tup{\alpha\neq\beta}$; }\label{fig:diamonds}
\end{figure}

% \sergio{Detalle: Para la figura 1)c) quiz\'as cambiar\'ia la forma del path $\beta$ y/o el nombre del nodo $z$, para que no parezca contradictorio con 1)b)}\santi{bueno... no tiene que ser el mismo arbol. es solo para ilustrar}

Imagine that $\alpha$ simply expresses ``go to child; $\varphi$ holds; end of path''.  Then the node expression $\tup{\alpha}$ is translated as $\Diamond  \tilde \varphi$ in BML, where $\tilde\varphi$ is the recursive translation of $\varphi$ to the BML language. For illustrating a more complex path expression, suppose that the path expression $\beta$ expresses ``go to child; $\varphi$ holds; go to child; $\psi$ holds; end of path''. Then the node expression $\tup{\beta}$ is translated to the language of BML as $\Diamond (\tilde\varphi\wedge\Diamond \tilde\psi)$. \corexpath generalizes the `diamond' $\Diamond$ operator of BML to complex diamonds $\tup{\alpha}$, where $\alpha$ describes a property on a path. 
Conversely, any formula $\Diamond\varphi$ of BML can be straightforwardly translated to \corexpath as $\tup{\alpha}$, where $\alpha$ expresses ``go to child; $\varphi$ holds; end of path''.

By the (finite) tree model property of BML, the validity of a formula with respect to the class of all Kripke models is equivalent to the validity in the class of (finite) tree-shaped Kripke models. Since there are truth-preserving translations to and from \corexpath, it is not surprising that there exist axiomatizations of the node expressions fragment of \corexpath with `child' as the only accessibility operator. Interestingly, there are also axiomatizations of the path expressions fragment of it. Even more, there are also axiomatizations of all single axis fragments of \corexpath (those where the only accessibility relation is the one of `child', `descendant', `sibling', etc.), and also for the full \corexpath language~\cite{cateLM10}. 

\coredataxpath~\cite{BMSS09:xml:jacm} ---here called~\xpathd--- is the extension of~\corexpath with \linebreak (in)equality tests between attributes of elements in an XML document. The resemblance with modal languages is now more distant, since the models of \xpathd cannot be represented by Kripke models. A first attempt to represent a data tree as a Kripke model, would be to let any data value $v$ in the data tree correspond to a propositional letter $p_v$ in the Kripke model~\cite{KR16}. However, this would be unfair: in BML, $p_v$ is a licit formula expressing ``the value is $v$'' but this kind of construction is not permitted in \xpathd. Indeed, \xpathd can only compare data values by equality or inequality 
% \sergio{Con - ? No escribimos inequalities?}\santi{nunca se como llamarlas. Inequalities me suena a menor or igual} 
at the end of paths, but it cannot compare the data value of a node with a constant. The rationale of this feature is twofold: on the one hand, it remains a finitary language; on the other, its semantics is invariant over renaming of data values. \xpathd augments \corexpath expressivity with `data-aware diamonds' of the form $\tup{\alpha=\beta}$ and $\tup{
\alpha\neq\beta}$. The former is true at $x$ if there are descendants $y$ and $z$ of $x$ such that $\alpha$ is 
true at $(x,y)$ and $\beta$ is true at $(x,z)$, and $y$ and $z$ have the same data value (see Figure~\ref{fig:diamonds}(b)). The latter is true at $x$ if there are $y$ and $z$ as before but such that $y$ and $z$ have distinct data values (see Figure~\ref{fig:diamonds}(c)).
Observe that $\lnot\tup{\alpha=\beta}$ expresses that all pairs of paths satisfying $\alpha$ and $\beta$ respectively, starting in $x$, end up in nodes with different data values, while $\tup{\alpha\neq\beta}$ expresses that there is a pair of paths satisfying $\alpha$ and $\beta$ respectively which end up in nodes with different data value. One can see that $\tup{\alpha\neq\beta}$ is not expressible in terms of Boolean combinations of expressions of the form $\tup{\cdot=\cdot}$.

Whilst the model theory of \xpathd was recently investigated both for the node expressions fragment~\cite{ICDT14,ICDT14Jair}
and for the path expressions fragment,~\cite{ADF14, ADF14journal}, % no research has been carried out for the proof theory of it.
the only other research into the proof theory of $\xpathd$ outside of this work is for a simple fragment~\cite{datagl}.

%Studying a complete axiomatization has applications in static analysis of queries, such as optimization through query rewriting, and it gives us an alternative method for solving the validity problem, which is undecidable for the full logic \coredataxpath~\cite{GeertsF05}, but it is decidable when the only axis present in the language is `child', and in fact, also when adding `descendant'~\cite{Figueira12ACM} (and also for other fragments).

%Finding a sound and complete axiomatic system for our logics over data-graphs is interesting by itself, but it may also give a general framework for showing decidability of each fragment. The aim is to find and isolate axioms for each navigational axis, which may be used ?modularly? to capture the semantics of each fragment.

%%%% Nuevo:

Obtaining a complete axiomatization has applications in static analysis of queries, such as optimization through query rewriting. The idea here is to see equivalence axiom schemes as (undirected) rules for the rewriting of queries; in this context, the completeness of the axiomatic system means that a semantic equivalence between two node or path expressions must have a corresponding chain of rewriting rules that transform the first expression into the second one. Therefore, obtaining an axiomatization, along with all the proofs of the theorems involved in the demonstration of its completeness, can be used as a first step in finding effective strategies for rewriting queries into equivalent but less complex forms. 

Studying complete axiomatizations can also give us an alternative method for solving the validity problem, which is undecidable for the full logic \coredataxpath~\cite{GeertsF05}, but it is decidable when the only axis present in the language is `child', and in fact, also when adding `descendant'~\cite{Figueira12ACM} (and also for other fragments).

%\bigsergio{As argued in the introduction, the axiomatization promises the same benefits as those already named by ten Cate et al.: it provides a calculus for query optimization and rewriting. Studying the proof theory of Core-Data-XPath is thus a relevant endeavor. However, the motivation is kept very brief in the introduction and needs elaboration to be convincing (cf. the motivation given in~\cite{CateM09}, \cite{cateLM10}).}

\subsection{Contributions}\label{subsec:contr}

We give sound and complete axiomatizations for \xpathd with `child' as the only axis. We extend the axiomatization of \corexpath given in~\cite{cateLM10} with the needed axiom schemes to obtain all validities of $\coredataxpath$. Our axiomatizations will be equational: all axiom schemes are of the form $\varphi\equiv\psi$ for node expressions $\varphi$ and $\psi$ or of the form $\alpha\equiv\beta$ for path expressions $\alpha$ and $\beta$, and inference rules will be the standard ones of equational logic. We show that an equivalence $\varphi\equiv\psi$ is derivable in the axiomatic system if and only if for any data tree, and any node $x$ in it, either $\varphi$ and $\psi$ are true at $x$ or  both are false at $x$. We also present a similar result for path expressions:  an equivalence $\alpha\equiv\beta$ is derivable if and only if for any data tree, and any pair of nodes $(x,y)$ in it, either $\alpha$ and $\beta$ are true at $(x,y)$ or  both are false 
at $(x,y)$. Our completeness proof relies on a normal form theorem for expressions of \xpathd with `child' axis, and a construction of a canonical model for any consistent formula in normal form inspired by~\cite{fine75}. 

We proceed gradually. To warm up, we first show an axiomatization for the fragment of $\xpathd$ with all Boolean operators, with data-aware diamonds of the form $\tup{\alpha=\beta}$, but keeping out those of the form $\tup{\alpha\neq\beta}$. This fragment is still interesting since it allows us to express the {\em join} query constructor. Then we give the axiomatization for the full \xpathd with `child' axis, whose proof is more involved but uses some ideas from the simpler case. 

\subsection{Related Work} \label{subsec:relwork}

As we mentioned before, there exist axiomatizations for navigational fragments of XPath with different axes~\cite{cateLM10}. Axiomatizations of other fragments of \corexpath have been investigated in~\cite{BenediktFK05}, and extensions with XPath 2.0 features have been addressed in~\cite{CateM09}. We found only a few attempts of axiomatizing modal logics with some notion of data value.

A logical framework to reason about data organization is investigated in~\cite{ArtemovK96}. 
They introduce {\em reference structures} as the model to represent data storage,
and a {\em propositional labeled modal language} to talk about such structures.
Both together model memory configurations, i.e., they allow storing data files, and
retrieving information about other cells' content and location of files. A sentence $\llbracket m \rrbracket A$ is
read as ``memory cell $m$ stores sentence $A$''. Then, data is represented by mean of sentences:
for instance, if data $c_i$ represents a number $N$, $c_i$ is the sentence ``this is a number $N$''
(same for other sorts of data).
This representation is quite different from our approach. Nevertheless, according to
our knowledge this is one of the first attempts on axiomatizing data-aware logics, by
introducing a Hilbert-style axiomatization.

{\em Tree Query Language} (TQL) is a formalism based on {\em ambient logic},
designed as a query language for semi-structured data. It allows checking schema
properties, extracting tags satisfying a property and also recursive queries.
The TQL data model is {\em information trees}, and the notation to talk about 
information trees is called {\em info-terms}. In~\cite{CardelliG04} an
axiomatization for info-terms is given in terms of a minimal congruence. 
This axiomatization is sound and complete with respect to the information
tree semantics. This is more related to our approach in the sense that we
consider data values as an equivalence relation.

The most closely related work is~\cite{datagl}, where an axiomatization for a very simple fragment of XPath, named {\em DataGL}, was
given. Following our informal description of \xpathd, DataGL allows for constructions of the form $\tup{\eps=\beta}$ and $\tup{\eps\neq\beta}$, where $\eps$ represents the empty path and $\beta = \rtdow [\varphi]$ is a path of the form `go to descendant; $\varphi$ holds; end of path'.
In particular, they introduce a sound and complete sequent calculus for this logic and derive PSPACE-completeness for the validity problem.

\subsection{Organization}
In \S\ref{sec:prelim} we give the formal syntax and semantics of \xpathd with `child' axis, called $\xpd$. As we already mentioned, we also study a special syntactical fragment, called $\xpdeq$, whose all data-aware diamonds are of the form $\tup{\alpha=\beta}$, and keeps out those of the form $\tup{\alpha\neq\beta}$. In \S\ref{sec:compl-restr} we give a sound and complete axiomatic system for $\xpdeq$: in \S\ref{subsec:AxiomsDown} we state the needed axiom schemes, an extension of those introduced in \cite{cateLM10}; in \S\ref{subsec:nf} we define the normal forms for $\xpdeq$ (these are not an extension of those defined in \cite{cateLM10}) and state the corresponding normal form theorem; in \S\ref{subsec:completeness} we show the completeness result, whose more complex part lies in proving that any node expression in normal form is satisfiable in a canonical model. In \S\ref{sec:compl} we extend the previous
axiom schemes to get a sound and complete axiomatic system for $\xpd$. We follow the same route as for $\xpdeq$: axiom schemes (\S\ref{subsec:Axiomsneq}), normal form (\S\ref{sec:nf-neq}) and canonical model (\S\ref{completeness neq}). Those proofs requiring highly technical arguments were deferred to Appendix \ref{app}. Finally, in \S\ref{sec:conclusions} we close with some final remarks and future lines of research.

\section{Preliminaries} \label{sec:prelim}
%!TEX root = main.tex

\paragraph{Syntax of $\xpd$.}
We work with a simplification of
$\xpath$, stripped of its syntactic sugar and with the only axis being the `child' relation, notated $\dow$. We consider fragments of
\xpath that correspond to the navigational part of \xpath 1.0 with
data equality and inequality.  $\xpd$ is a two-sorted language, with
\defstyle{path expressions} (which we write $\alpha, \beta, \gamma$) expressing properties of paths, and
\defstyle{node expressions} (which we write $\varphi, \psi, \rho$), expressing properties of nodes.  
%We introduce it formally.

%\begin{definition}[Downward XPath]\label{def:xpsyn}
 The language \defstyle{Downward XPath}, notated $\xpd$ is defined
by mutual recursion as follows:
\begin{align*}
  \alpha, \beta \; &\Coloneqq \; \eps \; \midd \; \down \; \midd [\varphi]  \midd
  \alpha\beta \midd \alpha \cup \beta \\
  %&&\hfill o \in
  %\{\eps,\uparrow,\down\}\\
  \varphi, \psi \; &\Coloneqq \; a \midd \lnot \varphi \midd \varphi \land \psi
  \midd \tup{\alpha} \midd \tup{\alpha= \beta} \midd \tup{\alpha \neq \beta}, &&
  a \in \A
\end{align*}
where $\A$ is a finite set of labels.

%We use $\tup{\alpha}$ as a short for $\tup{\alpha=\alpha}$, and other Boolean operators are defined as usual.

Other Boolean operators, such as $\vee$, $\to$, are defined as usual. We define the node expressions $\topNode$ and $\botNode$, and the path expression $\botPath$, as follows: 
%We define the node expression $\topNode$ as $\tup{\eps}$, the node expression $\botNode$ as $\neg \topNode$ and the path expression $\botPath$ as $[\neg\tup{\eps}]$. 
\begin{center}
\begin{tabular}{rll}
$\topNode$ & $\eqdef$ & $\tup{\eps}$  \\
$\botNode$ & $\eqdef$ & $\neg \topNode$  \\ 
$\botPath$ & $\eqdef$ & $[\neg\tup{\eps}]$  
\end{tabular}
\end{center}

As we remark later, these expressions behave as expected in the axiomatic systems we design.

We notate $\xpdeq$ to the syntactic fragment which does not use the last rule $\tup{\alpha \neq \beta}$. An \defstyle{{\rm $\xpd$}-formula} [resp.\ \defstyle{{\em $\xpdeq$}-formula}] is either a node expression or a path expression of $\xpd$ [resp.\ $\xpdeq$].
%\end{definition}

\medskip

%\bigsanti{cambie desde aca...}
\newcommand{\len}{{\rm len}}
We define the \defstyle{length} of an $\xpd$-path expression  $\alpha$, notated $\len(\alpha)$, as follows: 
$$
\begin{array}{rcl@{\hskip 1in}rcl}
  \len(\eps) &=& 0
  &
  \len(\alpha\beta) &=& \len(\alpha)+\len(\beta) \\
  \len(\dow) &=& 1
  &
  \len(\alpha\cup\beta) &=& \max\{\len(\alpha),\len(\beta)\} \\
  \len([\varphi]) &=& 0
  &&&
\end{array}
$$
%
%
%If f $\alpha$ is of the form
%$$
%\beta^0_0\dots\beta^0_{n_0} \dow \beta^1_1\dots\beta^1_{n_1} \dow \dots \dow \beta^k_1\dots\beta^k_{n_k}
%$$
%for $n_i\geq 0$ and all the $\beta^i_j$ are either $\eps$ or of the form $[\varphi^i_j]$, or of the form $\gamma^i_j\cup\delta^i_j$, then $\len(\alpha)=k$. In other words, $\len(\alpha)$ counts the number of $\dow$ in $\alpha$ at it uppermost level.
%
We write $\md$ to denote the \defstyle{downward depth} \cite{ICDT14Jair} of an $\xpd$-formula,
which measures `how deep' such formula can see, and is defined as follows:
\begin{center}
\begin{tabular}{rllrll}
$\md(a)$ & $=$ & $0$ & $\md(\eps)$ & $=$ & $0$ \\
$\md(\neg\varphi)$ & $=$ & $\md(\varphi)$ & $\md(\dow)$ & $=$ & $1$  \\ 
$\md(\varphi\land\psi)$ & $=$ & $\max\{\md(\varphi),\md(\psi)\}$ & $\md([\varphi])$ & $=$ & $\md(\varphi)$  \\
$\md(\tup{\alpha})$ & $=$ & $\md(\alpha)$ & $\md(\alpha\beta)$ & $=$ & $\max\{\md(\alpha),\md(\beta),\len(\alpha)+\md(\beta)\}$\\ 
$\md(\tup{\alpha*\beta})$ & $=$ & $\max\{\md(\alpha),\md(\beta)\}$ & $\md(\alpha\cup\beta)$ & $=$ & $\max\{\md(\alpha),\md(\beta)\}$ \\   
\end{tabular}
\end{center}
where $a\in\A$ $\varphi,\psi$ are node expressions of $\xpd$, $\alpha,\beta$ are path expressions of $\xpd$, and $*\in\{=,\neq\}$.
%
%\bigsanti{... hasta aca}
%
%\bigsergio{Alternativa para dd?:}
%We write $\md$ to denote the \defstyle{downward depth} \cite{ICDT14Jair} of an $\xpd$-expression, which measures `how deep' such expression can see, and is defined as follows:
%$$
%\begin{array}{rclrcl}
%  \md(a) &=& 0
%  &
%  \md(\lambda) &=& 0 \\
%  \md(\varphi \land \psi) &=& \max \set{ \md(\varphi), \md(\psi)}
%  &
%  \md(\eps \alpha) &=& \md(\alpha) \\
%  \md(\lnot \varphi) &=& \md(\varphi)
%  &
%  \md([\varphi] \alpha) &=& \max \set{ \md(\varphi), \md(\alpha)}\\
%  \md(\tup{\alpha}) &=& \md(\alpha)
%  &
%  \md(\dow \alpha) &=& 1 + \md(\alpha) \\
%  \md(\tup{\alpha * \beta}) &=& \max\set{\md(\alpha),\md(\beta)}
%  &
%  \md((\beta \cup \gamma ) \alpha) &=& \max\set{\md(\beta \alpha), \md(\gamma \alpha)},
%\end{array}
%$$
%where $\lambda$ is the empty string, $*\in\{=,\neq\}$, $a\in\A$, $\varphi,\psi$ are node expressions of $\xpd$, $\alpha$ is a path expression of $\xpd$ or the empty string, and $\beta,\gamma$ are path expressions of $\xpd$.

\paragraph{Data trees.}
We introduce {\em data trees}, the structures in which we interpret $\xpd$-formulas. Usually, a data tree is defined as a tree whose every node contains a label from a finite alphabet $\A$ and a data value from an infinite domain. An example of a data tree is depicted in Figure  \ref{fig:example-tree}(a).
\begin{figure}[ht]
   \begin{center}
   \begin{tabular}{c@{\hskip .5in}c}
   \includegraphics[scale=0.275]{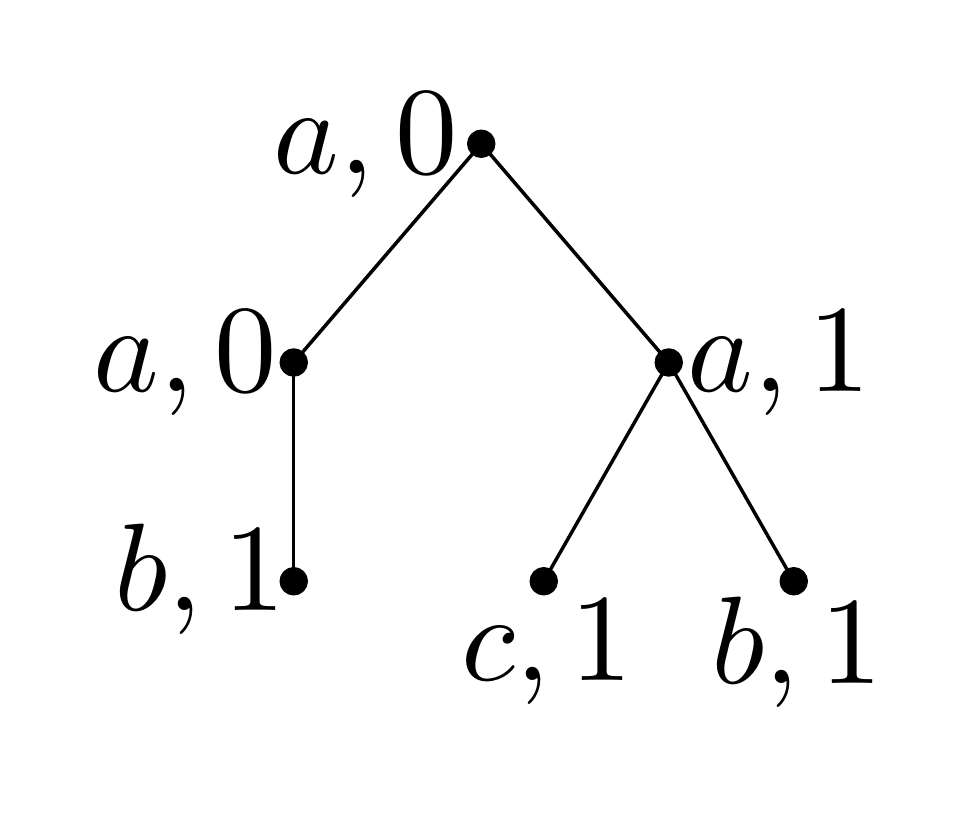}&\includegraphics[scale=0.275]{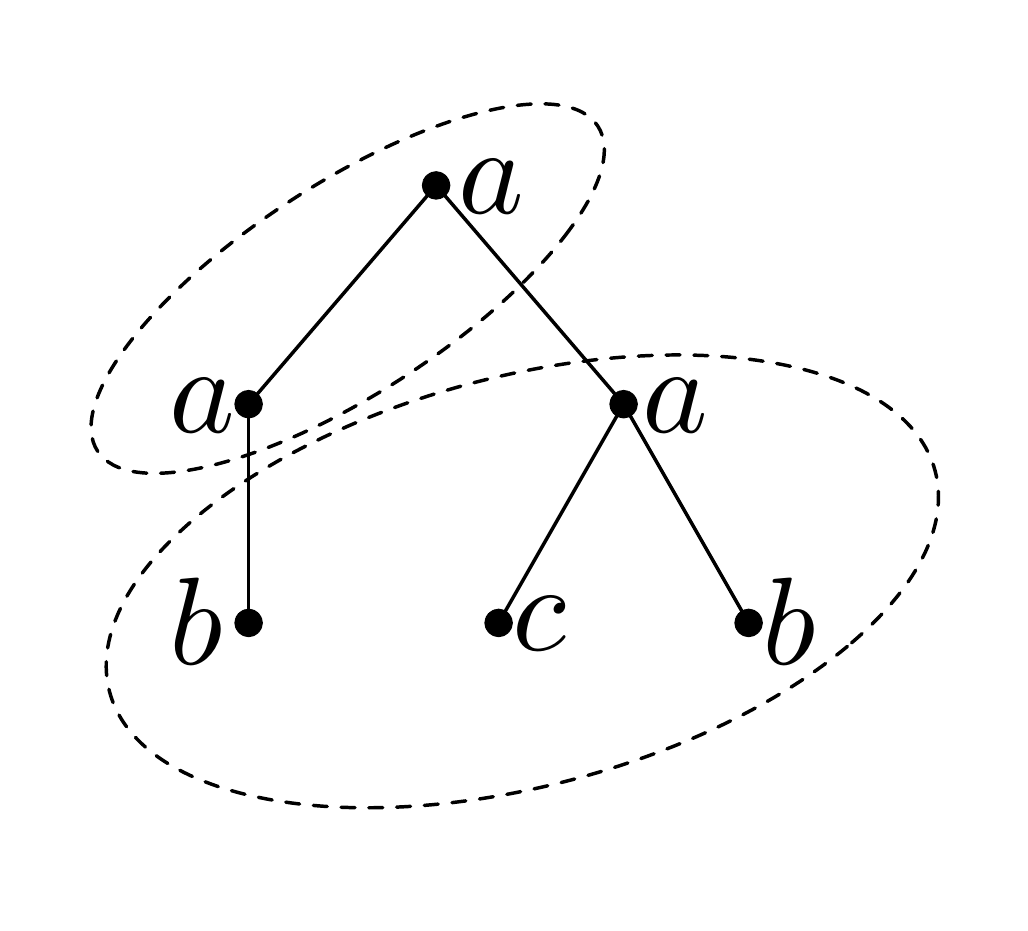}\\
   (a)&(b)
   \end{tabular}
   \end{center}
   \caption{(a) A data tree. Nodes are tagged with $(\ell,n)$ meaning that its label is $\ell$ and its data-value is $n$. (b) Our view of data tree: a node-labeled tree and a partition over its nodes.}\label{fig:example-tree}
\end{figure}
Our logical language, whose formal semantics is defined below, will be able to compare the data value of two nodes by equality or inequality but it will not be able to compare against a concrete value. Hence we will work with an abstraction of the usual definition of data tree: instead of having data values in each node of the tree, we have an equivalence relation between the nodes or, equivalently, a partition. We identify two nodes with the same data value as being related by the equivalence relation, or belonging to the same equivalence class in the partition ---see Figure  \ref{fig:example-tree}(b). While this is not the classical view of a data tree, it is more convenient for our purposes, and it is equivalent, as far as the semantics of our logical language is concerned.
\begin{definition}
Let $\A$ be a finite set of {\em labels}, a \emph{data tree} $\Tt$ is a pair $(T,\pi)$, where $T$ is a tree (i.e.\ a connected acyclic graph such that every node has exactly one parent, except the root, which has no parent) whose nodes are labeled with elements from $\A$, and $\pi$ is a partition over the nodes of $T$. We use $T$ indistinctly to denote the set of nodes of $\Tt$ or the structure of the labeled tree. Given two nodes $x,y\in T$ we write $x\child y$ if $y$ is a 
child of $x$ and $x\childp{i} y$ (for $i\geq 1)$ as a short for
$$
(\exists z_0,\dots,z_{i}\in T)\ x=z_0\child z_1\child\dots\child z_i=y.
$$
Observe that in particular $x\child y$ iff $x\childp{1}y$. 

We denote with $[x]_\pi$ the class of $x$ in the partition $\pi$, and with $label(x)\in \A$ the node's label.
We say that $\Tt,x$ is a {\em pointed data tree}, and $\Tt,x,y$ is a {\em two-pointed data tree}.
\end{definition}
%
%Figure~ \ref{fig:dt} is an example of a data tree. For instance, we can see that 
%$x$ is the root of the tree, with label $a$ and data value $0$. Node $x$ has two children,
%$y$ and $z$ (i.e., $x\rig y$ and $x\rig z$), both with different data values. 

%\begin{figure}[!h]
%\centering
%\begin{tikzpicture}
%[
%    scale = 1, transform shape, thick,
%    every node/.style = {draw, shape = rectangle, rounded corners, minimum size = 7mm},
%    grow = down,  % alignment of characters
%    level 1/.style = {sibling distance=5cm},
%    level 2/.style = {sibling distance=2.5cm}, 
%    level 3/.style = {sibling distance=3cm}, 
%    level distance = 1.5cm
%  ]
%\node (x) {$a,0$ }
%child {node (y) {$a,1$}}
%child {node (z) {$b,2$}
%child {node (u) {$a,0$}}
%child {node (v) {$a,1$}}
%child {node (w) {$b,0$}}
%};
%
%\begin{scope}[nodes = {right = 11pt}]
%\node[draw=none]  at (x) {$x$};
%\node[draw=none]  at (y) {$y$};
%\node[draw=none]  at (z) {$z$};
%\node[draw=none]  at (u) {$u$};
%\node[draw=none]  at (v) {$v$};
%\node[draw=none]  at (w) {$w$};
%\end{scope}
%
%\end{tikzpicture}
%\caption{A data tree $\Tt$ with $a$ and $b$ as labels and integers as data values. Nodes $x, y, z, u, v, w$ are
%names for nodes. They are not part of the data tree; they are mentioned only for explanatory
%purposes.}
%\end{figure}\label{fig:dt}

\paragraph{Semantics of \boldmath{$\xpd$}.}

Let us introduce the semantics of $\xpd$-formulas.
%
%\begin{definition}[$\xpd$ Semantics]\label{def:xpsem}
Let  $\Tt=(T,\pi)$ be a data tree. We define the semantics of $\xpd$ on $\Tt$ (notated as $\dbracket{\, \cdot \,}^\Tt$) in Table  \ref{tab:semantics}.
\begin{table}[ht]
\begin{align*}
  \dbracket{\eps}^\Tt & = \{(x,x) \mid x \in T\}\\
  \dbracket{\dow}^\Tt & = \{(x,y) \mid x \child y\} \\
%  \dbracket{\rtdow}^\Tt & = \textrm{reflexive transitive closure of }\dbracket{\dow}^\Tt\\
%  \dbracket{\upw}^\Tt & = \{(x,y) \mid y \child x\} \\
%  \dbracket{\rtupw}^\Tt & = \textrm{reflexive transitive closure of }\dbracket{\upw}^\Tt\\
  \dbracket{\alpha \beta}^\Tt & = \{(x,z) \mid \hbox{there exists } y\in T \hbox{ with } \ (x,y)
     \in \dbracket{\alpha}^\Tt,(y,z) \in \dbracket{\beta}^\Tt\} \\
  \dbracket{\alpha \cup \beta}^\Tt & = \dbracket{\alpha}^\Tt \cup \dbracket{\beta}^\Tt\\
  \dbracket{[\varphi]}^\Tt & = \{(x,x) \mid x \in \dbracket{\varphi}^\Tt\}\\
  \dbracket{a}^\Tt & = \{ x \in T  \mid \lbl(x) = a \}\\
  \dbracket{\lnot \varphi}^\Tt & =  T \setminus \dbracket{\varphi}^\Tt\\
  \dbracket{\varphi \land \psi}^\Tt & = \dbracket{\varphi}^\Tt \cap \dbracket{\psi}^\Tt\\
  \dbracket{\tup{\alpha}}^\Tt & = \{ x \in T \mid \hbox{there exists } y\in T \hbox{ with }
     (x,y) \in \dbracket{\alpha}^\Tt \} \\
  \dbracket{\tup{\alpha = \beta}}^\Tt & = \{ x \in T \mid \hbox{there exist } y, z\in T \hbox{ with } (x,y) \in \dbracket{\alpha}^\Tt, (x,z) \in
  \dbracket{\beta}^\Tt, [y]_\pi=[z]_\pi)\} \\
  \dbracket{\tup{\alpha \neq \beta}}^\Tt & = \{ x \in T \mid \hbox{there exist } y,z \in T \hbox{ with } (x,y) \in \dbracket{\alpha}^\Tt, (x,z) \in
  \dbracket{\beta}^\Tt,[y]_\pi\neq[z]_\pi)\}
  \end{align*}	
  \caption{Semantics of $\xpd$}\label{tab:semantics}
  \end{table}

 Let $\Tt,x$ be a pointed data tree and $\varphi$ a node expression, we write
 $\Tt,x\models\varphi$ to denote $x\in\dbracket{\varphi}^\Tt$, and we say
 that $\Tt,x$ satisfies $\varphi$ or that $\varphi$ is true at $\Tt,x$.
 Let $\Tt,x,y$ be a two-pointed data tree and $\alpha$ a path expression,
 we write $\Tt,x,y\models\alpha$ to denote $(x,y)\in\dbracket{\alpha}^\Tt$,
 and we say that  $\Tt,x,y$ satisfies $\alpha$ or that $\alpha$ is true at $\Tt,x,y$. We say that a node expression $\varphi$ is satisfiable in a data tree~$\Tt$ if $\Tt,r\models\varphi$, where $r$ is the root of $\Tt$. We say that $\varphi$ is satisfiable if it is satisfiable in some data tree $\Tt$.

\begin{example}\label{ex:semantics}Consider the data tree of Figure  \ref{fig:example-tree} with root $x$.
\begin{enumerate}
\item $\tup{\dow=\dow[a]\dow[b]}$ is true at $x$ because there is a path of length 1, and there is a path of length 2 (with labels $a$ in the second node and $b$ in the third one) ending in nodes with the same data value.

\item $\tup{\eps=\dow\dow}$ is false at $x$ because there are no paths of length 2 ending in nodes with the same data value as $x$.

\item  $\lnot\tup{\dow\dow\neq\dow\dow}$ is true at $x$ because all paths of length 2 end in nodes with the same data value.

\item\label{example:4} $\tup{\dow[a]\dow[b]=\eps}$ is false at $x$ because no path of length 2 with labels $a$ in the second node and $b$ in the third node, end in a node with the same data value as $x$. 

\item\label{example:5} $\tup{\dow[a\wedge \tup{\dow[b]}]=\eps}$ is true at $x$ because $x$ has a child with label $a$, satisfying $\tup{\dow[b]}$, and with the same data value as $x$.

\end{enumerate}

\end{example}

 We say that two node expressions $\varphi,\psi$ of $\xpd$ are equivalent
(notation: $\models\varphi\equiv\psi$) iff $\dbracket{\varphi}^\Tt=\dbracket{\psi}^\Tt$
for all data trees $\Tt$. Similarly, path expressions $\alpha,\beta$ of $\xpd$
are equivalent (notation: $\models\alpha\equiv\beta$) iff $\dbracket{\alpha}^\Tt=\dbracket{\beta}^\Tt$
for all data trees $\Tt$. 

Let $\Tt,x,y$ and $\Tt',x',y'$ be two-pointed data trees, we say that
$\Tt,x\eqfull\Tt',x'$  [resp. $\Tt,x\eqres\Tt',x'$] iff for all node expressions $\varphi$ of $\xpd$ [resp. $\xpdeq$] 
we have $\Tt,x\models\varphi$ iff
$\Tt',x'\models\varphi$, and we say that $\Tt,x,y\eqfull\Tt',x',y'$ [resp. $\Tt,x,y\eqres\Tt',x',y'$] iff for all path expressions $\alpha$ of $\xpd$ [resp. $\xpdeq$] $\Tt,x,y\models\alpha$ iff $\Tt',x',y'\models\alpha$.  
%\end{definition}

\newcommand{\restr}[1]{\!\restriction\!\!{{\scriptstyle #1}}}

Let $\Tt=(T,\pi)$ be a data tree. When $T'$ is a subset of $T$, we write $\pi \restr{T'}$ to denote the restriction of the partition $\pi$ to $T'$.
Let $x\in T$, and let $X$ be the set of $x$ and all its descendants in $T$, i.e.\ $X=\{x\}\cup\{y\in T\mid (\exists i\geq 1)\ x\childp{i} y\}$. We define $\Tt\restr{x}=(T\restr{x},\pi\restr{x})$ as the data tree that consists of the subtree of $T$ that is hanging from $x$, maintaining the partition of that portion.

The logic $\xpd$ is local in the same way as the basic modal logic:
\begin{proposition}\label{prop:local}
Let $(\Tt,\pi)$ be a data tree. Then
\begin{itemize}
\item $\Tt,x\eqfull\Tt\restr{x},x$. 
\item If $y,z$ are descendants of $x$ in $\Tt$, then $\Tt,y,z\eqfull\Tt\restr{x},y,z$.
\end{itemize}
\end{proposition}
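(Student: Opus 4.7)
The plan is to prove both items simultaneously by mutual structural induction on $\xpd$-formulas, and to actually strengthen the statement so that the induction goes through smoothly. Concretely, I would prove: for every $w \in T$ with $x \ancestorof w$ (i.e., $w=x$ or $x \childp{i} w$ for some $i\geq 1$), every pair $(y,z) \in T \times T$ with $x \ancestorof y$ and $x \ancestorof z$, every node expression $\varphi$ and every path expression $\alpha$ of $\xpd$,
\[
\Tt, w \models \varphi \iff \Tt\restr{x}, w \models \varphi, \qquad \Tt, y, z \models \alpha \iff \Tt\restr{x}, y, z \models \alpha.
\]
Both items of the proposition follow by instantiating $w=x$ and $(y,z)$ with two descendants of $x$, respectively.

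The crucial observation driving the induction is that in $\xpd$ the only axis is $\dow$, so every atomic navigational step goes strictly downward. Hence, if $\Tt, y, z \models \alpha$ and $y$ is a descendant of $x$, then $z$ is also a descendant of $x$ (and in particular still a node of $\Tt\restr{x}$), and any intermediate node arising from the semantics of composition is likewise in $\Tt\restr{x}$. A short sub-lemma formalizing this — ``if $(y,z)\in\dbracket{\alpha}^{\Tt}$ and $x\ancestorof y$ then $x\ancestorof z$, and any witness in the composition semantics for $\alpha\beta$ is also a descendant of $x$'' — is proved by a straightforward induction on $\alpha$ and will be used throughout.

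The base cases are immediate: $\dbracket{a}^{\Tt}$ and $\dbracket{a}^{\Tt\restr{x}}$ agree on descendants of $x$ because the labeling is inherited; $\eps$ and $\dow$ agree on pairs of descendants of $x$ because the child relation is inherited (a child in $\Tt$ of a descendant of $x$ is still in $\Tt\restr{x}$, and vice versa). The Boolean cases $\neg\varphi$ and $\varphi\land\psi$ are direct from the inductive hypothesis. For $\tup{\alpha}$, $\tup{\alpha=\beta}$, $\tup{\alpha\neq\beta}$ at a descendant $w$ of $x$, the sub-lemma guarantees that the existential witnesses $y,z$ promised by satisfaction in $\Tt$ already lie in $\Tt\restr{x}$, and vice versa witnesses in $\Tt\restr{x}$ lift to $\Tt$; the equality/inequality of data classes is also preserved since $\pi\restr{T\restr{x}}$ is the restriction of $\pi$, so for $y,z\in T\restr{x}$ one has $[y]_{\pi}=[z]_{\pi}$ iff $[y]_{\pi\restr{T\restr{x}}}=[z]_{\pi\restr{T\restr{x}}}$. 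The path cases $[\varphi]$, $\alpha\beta$, $\alpha\cup\beta$ follow from the inductive hypothesis together with the sub-lemma that confines intermediate witnesses to $\Tt\restr{x}$.

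The only mild subtlety — and the step I expect to require the most care — is the data-aware diamonds $\tup{\alpha=\beta}$ and $\tup{\alpha\neq\beta}$, where one must check that the equivalence classes seen by the two witnesses are the same relative to $\pi$ as relative to $\pi\restr{T\restr{x}}$; this is clean because restricting a partition preserves both equality and inequality of classes among the retained elements, but it is the one spot where the data semantics interacts with the restriction, and so is worth stating explicitly.
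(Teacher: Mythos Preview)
Your proposal is correct and is exactly the standard argument one would expect. The paper does not actually give a proof of this proposition; it is stated as a basic fact (``The logic $\xpd$ is local in the same way as the basic modal logic''), so there is nothing to compare against beyond noting that your mutual structural induction, driven by the observation that the only axis is $\dow$ and hence all witnesses stay within $\Tt\restr{x}$, is precisely the routine verification the authors are implicitly appealing to.
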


\paragraph{Inference rules.}

An $\xpd$-\defstyle{node equivalence} is an expression of the form $\equivInstance{\varphi}{\psi}$, where $\varphi, \psi$ are node expressions of $\xpd$. An $\xpd$-\defstyle{path equivalence}  is an expression of the form $\equivInstance{\alpha} {\beta}$, where $\alpha, \beta$ are path expressions. 
An \defstyle{axiom} is either a node equivalence or a path equivalence.
%We use $\leqInstance{\alpha}{\beta}$ as a shorthand for $\equivInstance{\alpha \cup \beta} {\beta}$, and $\leqInstance{\varphi}{\psi}$ as a shorthand for $\equivInstance{\varphi \lor \psi}{\psi}$.
%For $\varphi,\psi$ node expressions of $\xpd$, we say that an equivalence instance $\equivInstance{\varphi}{\psi}$ is \emph{valid} (notation $\models \varphi\equiv \psi$) if for every data tree $\Tt=(T,\pi)$ and $x\in T$ we have $\+T,x\models\varphi$ iff $\+T,x\models\psi$. For $\alpha,\beta$ path expressions of $\xpd$, we say that an equivalence instance $\equivInstance{\alpha}{\beta}$ is \emph{valid} (notation $\models \alpha\equiv \beta$) if for every $\Tt=(T,\pi)$ and $x,y\in T$ we have $\+T,x,y\models\varphi$ iff $\+T,x,y\models\psi$. 

%\begin{definition}
For $P,Q$ both path expressions or both node expressions, we say that $\equivInstance{P}{Q}$ is \emph{derivable} (or also that $P$ is \defstyle{provably equivalent} to $Q$) from a given set of axioms $\Sigma$ (notation $\Sigma\vdash P\equiv Q$) if it can be obtained from them using the standard rules of equational logic:
	\begin{enumerate}
	\item\label{rule:refl} $\equivInstance{P}{P}$.
	\item\label{rule:sym} If $\equivInstance{P}{Q}$, then $\equivInstance{Q}{P}$.
	\item\label{rule:trans} If $\equivInstance{P}{Q}$ and $\equivInstance{Q}{R}$, then $\equivInstance{P}{R}$.
	\item\label{rule:repl} If $\equivInstance{P}{Q}$ and $R'$ is obtained from $R$ by replacing some occurrences of $P$ by $Q$, then $\equivInstance{R}{R'}$.
	%Puede ser que haya que pedir la Sep rule (ver seccion 4 de B. ten Cate para completeness the path expressions; no es necesaria para node expressions)
	%\item If $\tup{\equivInstance{\alpha[\varphi]}{\beta[\varphi]}}$ and $\varphi$ does not occur in $\alpha$ and $\beta$, then $\equivInstance{\alpha}{\beta}$. %in neither alpha nor beta, no?
	\end{enumerate}
%\end{definition}
%
 We write \defstyle{$\leqInstance{\varphi}{\psi}$} when $\equivInstance{\varphi \lor \psi}{\psi}$, and we write \defstyle{$\leqInstance{\alpha}{\beta}$} when $\equivInstance{\alpha \cup \beta}{\beta}$. 
% We write \defstyle{$\leqInstance{\varphi}{\psi}$} when $\equivInstance{\varphi \lor \psi}{\psi}$ and 
% $\equivInstance{\psi\lor \varphi}{\psi}$
% , and we write \defstyle{$\leqInstance{\alpha}{\beta}$} when $\equivInstance{\alpha \cup \beta}{\beta}$ and $\equivInstance{\beta\cup\alpha}{\beta}$. Notice that $\varphi\vee\psi$ and $\psi\vee\varphi$ will be provably equivalent, but we need the above definitions of $\leq$ for this to be true (the same for $\alpha\cup\beta$ and $\beta\cup\alpha$).
%but that is not known at this point (the same for $\alpha\cup\beta$ and $\beta\cup\alpha$).

\begin{definition}[Consistent Node and Path Expressions] \label{def:con-nodeexp}
Let $\Sigma$ be a set of axioms.
We say that a node  expression [resp. path expression] $P$ of $\xpd$ is \defstyle{$\Sigma$-consistent} %relative to a given set of axioms $\Sigma$ 
if $\Sigma\not\vdash\equivInstance{P}{\botNode}$ [resp. $\Sigma\not\vdash\equivInstance{P}{\botPath}$]. %An expression $P$ is \emph{provably equivalent} to an expression $Q$ relative to $\Sigma$ if $\equivInstance{\Sigma}{\Delta}$ is derivable from $\Sigma$.
We define $\con_{\Sigma}$ as the set of \defstyle{$\Sigma$-consistent node expressions}.%  as
%$$
%\con_{\Sigma}=\{\varphi \mid \varphi \mbox{ is $\Sigma$-consistent}\}.
%$$
%\bigraul{santi: aca saque la definicion de consistente (definida arriba por Santi), y defini $\con_\Sigma$ respecto a un conjunto de axiomas $\Sigma$. Notar que abajo despues usamos entonces $\con_\Gamma$}
\end{definition}

%\begin{example}\label{ex:navig}
%We formally express in $\xpd$ navigational formulas introduced in Section~ \ref{sec:intro},
%together with their interpretation on the data tree $\Tt$ from Figure~ \ref{fig:dt}.
%\begin{itemize}
%	\item The node expression $\varphi_1=\tup{\dow[a]} \wedge \tup{\dow[b]}$  expresses
%	the property ``nodes with a child labeled $a$ and a child labeled $b$''. Its interpretation
%	on $\Tt$ is $\dbracket{\varphi_1}^\Tt=\{x,z\}$.
%	\item The path expression $\alpha_1=[a]\dow\dow[b]$ says ``downward paths of length two 
%	starting in a node with label $a$ and ending in a node with label $b$''. Its 
%	interpretation on $\Tt$ is $\dbracket{\alpha_1}^\Tt=\{(x,w)\}$.
%\end{itemize}
%\end{example}
%
%\begin{example}\label{ex:eqineq}
%Now we formally introduce data-aware formulas from Section~ \ref{sec:intro} and
%their corresponding interpretation on $\Tt$ from Figure~ \ref{fig:dt}.
%\begin{itemize}
%	\item The node expression $\varphi_2=\tup{\dow[a]\neq\dow[a]}$ expresses the property 
%	``nodes with two children with same label $a$ but different data value''.
%	$\dbracket{\varphi_2}^\Tt=\{z\}$.
%	\item The path expression $\alpha_2=\dow[\tup{\eps\neq\dow}]]$ expresses 
%	``downward paths of length one such that the ending node has a child with different data value''. 
%	$\dbracket{\alpha_2}^\Tt=\{(x,z)\}$ .
%\end{itemize}
%\end{example}

\section{Axiomatic System for $\xpdeq$}\label{sec:compl-restr}

\subsection{Axiomatization} \label{subsec:AxiomsDown}
%!TEX root = main.tex

The main theorems of this article are the ones about the completeness of the proposed axiomatizations. These theorems have two main ingredients: one is a normal form theorem that allows to rewrite any consistent node or path expression in terms of normal forms. The other one is the construction of a canonical model for any consistent node expression in normal form. As it is usually the case, at the same time, we give (through the set of axioms) the definition of consistency. So, an axiom (or an axiom scheme) could have been added either because it was needed to prove the normal form theorem or because it was needed to guarantee that every unsatisfiable formula is inconsistent ---the key fact is that we have a much better intuition of what should be satisfiable than of what should be consistent. Of course we should be careful that the added axioms are sound but that is quite intuitive.          

In Table~\ref{tab:axiomseq} we list the axiom schemes for the fragment $\xpdeq$. This list includes all the axiom schemes from \cite{cateLM10} for the logic \corexpath with single axis `child' (second, third and fourth blocks) and adds the new axiom schemes for data-aware diamonds of the form $\tup{\alpha=\beta}$ (last block). Also, remember that in our data trees each node satisfies exactly one label. We add two axiom schemes to handle this issue (first block). This is unessential for our development, and could be dropped to axiomatize $\xpd$ over data trees whose every node is tagged with multiple labels, with minor changes to the definitions of normal forms.

Let $\axiomRestr$ be the set of all instantiations of the axiom schemes of Table~\ref{tab:axiomseq} for a fixed alphabet~$\A$. In the scope of this section we will often say that a node expression is {\em consistent} meaning that it is $\axiomRestr$-consistent (as in Definition \ref{def:con-nodeexp}).

%We will have as axioms all the basic axioms listed in \cite{cateLM10}. In particular, we have the Path axiom schemes for idempotent semiring, path axiom schemes for predicates, and node axiom schemes.

\begin{table}[h!]
\centering
\colorbox{black!10}{
\begin{tabular}{lrcl}
\hline
\multicolumn{4}{l}{\bf Axioms for labels\vspace{.05in}}
\\
\hypertarget{lbaxone}{{\sf LbAx1}} &$\topNode$&$\equiv$&$\Lor_{a \in \A} a$  
\\
\hypertarget{lbaxtwo}{{\sf LbAx2}} &$\botNode$&$\equiv$&$a \land b$ (where $a \neq b$) 
\vspace{.1in}\\
\hline
\multicolumn{4}{l}{\bf Path axiom schemes for predicates\vspace{.05in}}
\\
\hypertarget{praxone}{{\sf PrAx1}}  &$(\alpha[\lnot\tup{\beta}])\beta$&$\equiv$&$\botPath$
\\ 
\hypertarget{praxtwo}{{\sf PrAx2}} &$[\topNode]$&$\equiv$&$\eps$
\\
\hypertarget{praxthree}{{\sf PrAx3}} &$[\varphi \lor \psi]$&$\equiv$&$[\varphi] \cup [\psi]$
%\\
%\prax{4} &$(\alpha\beta)[\varphi]$&$\equiv$&$\alpha(\beta[\varphi])$
\vspace{.1in}\\
\hline\multicolumn{4}{l}{\bf Path axiom schemes for idempotent semirings\vspace{.05in}}
\\
\hypertarget{isaxone}{{\sf IsAx1}} &$(\alpha\cup\beta)\cup\gamma$&$\equiv$&$\alpha\cup(\beta\cup\gamma)$
\\
\hypertarget{isaxtwo}{{\sf IsAx2}} &$\alpha\cup\beta$&$\equiv$&$\beta\cup\alpha$
\\
\hypertarget{isaxthree}{{\sf IsAx3}} &$\alpha\cup\alpha$&$\equiv$&$\alpha$ 
\\
\hypertarget{isaxfour}{{\sf IsAx4}} &$\alpha(\beta\gamma)$&$\equiv$&$(\alpha\beta)\gamma$
\\
\multirow{2}{*}{\hypertarget{isaxfive}{{\sf IsAx5}}}\rdelim\{{2}{0pt}[] &$\eps\alpha$&$\equiv$&$\alpha$
\\
 &$\alpha\eps$&$\equiv$&$\alpha$ 
 \\
\multirow{2}{*}{\hypertarget{isaxsix}{{\sf IsAx6}}}\rdelim\{{2}{0pt}[]  &  $\alpha(\beta\cup\gamma)$ & $\equiv$ & $(\alpha\beta)\cup(\alpha\gamma)$
\\
& $(\alpha\cup\beta)\gamma$ & $\equiv$ & $(\alpha\gamma)\cup(\beta\gamma)$ 
\\
\hypertarget{isaxseven}{{\sf IsAx7}} & $\botPath\cup\alpha$ & $\equiv$ & $\alpha$
\vspace{.1in}
\\
\hline\multicolumn{4}{l}{\bf Node axiom schemes\vspace{.05in}}
\\
\hypertarget{ndaxone}{{\sf NdAx1}} &$\varphi$&$\equiv$&$\lnot(\lnot\varphi \lor \psi) \lor \lnot(\lnot\varphi \lor \neg\psi)$
\\
\hypertarget{ndaxtwo}{{\sf NdAx2}} &$\tup{[\varphi]}$&$\equiv$&$\varphi$
\\
\hypertarget{ndaxfour}{{\sf NdAx3}} &$\tup{\alpha\cup\beta}$&$\equiv$&$\tup{\alpha}\vee\tup{\beta}$
\\
\hypertarget{ndaxthree}{{\sf NdAx4}} &$\tup{\alpha\beta}$&$\equiv$&$\tup{\alpha[\tup{\beta}]}$
\vspace{.1in}\\
\hline\multicolumn{4}{l}{\bf Node axiom schemes for equality\vspace{.05in}}
\\
\hypertarget{eqaxtwo}{\eqaxtwon} &$\tup{\alpha=\beta}$&$\equiv$&$\tup{\beta=\alpha}$ 
\\
\hypertarget{eqaxthree}{\eqaxthreen} &$\tup{\alpha \cup \beta = \gamma}$&$\equiv$&$\tup{\alpha = \gamma} \lor \tup{\beta = \gamma}$ 
\\
\hypertarget{eqaxfour}{\eqaxfourn} &$\varphi \land \tup {\alpha = \beta}$&$\equiv$&$\tup{[\varphi] \alpha = \beta}$
\\
\hypertarget{eqaxfive}{\eqaxfiven} &$\tup{\alpha=\beta}$&$\leq$&$\tup{\alpha}$
\\
\hypertarget{eqaxseven}{\eqaxsevenn} &$\tup{\gamma[\tup{\alpha=\beta}]}$&$\leq$&$\tup{\gamma\alpha=\gamma\beta}$ 
\\
\hypertarget{eqaxone}{\eqaxonen} &$\tup{\alpha = \alpha}$&$\equiv$&$\tup{\alpha}$
\\
\hypertarget{eqaxsix}{\eqaxsixn} &$\tup{\alpha=\eps}\wedge\tup{\beta=\eps}$&$\leq$&$\tup{\alpha=\beta}$
\\
\hypertarget{eqaxeight}{\eqaxeightn} &$\tup {\alpha = \beta[\tup{\eps = \gamma}]}$&$\leq$&$ \tup{\alpha = \beta \gamma}$
\\
\hline
\end{tabular}
} 
\caption{Axiomatic system $\axiomRestr$ for $\xpdeq$}\label{tab:axiomseq}
\end{table} 

\medskip

Observe that {\sf PrAx4} from \cite[Table 3]{cateLM10}, defined by

\begin{center}
\colorbox{black!10}{
\begin{tabular}{lrcl}
\hline
{\sf PrAx4}\qquad\  &$(\alpha\beta)[\varphi]$&$\equiv$&$\alpha(\beta[\varphi])$
\\
\hline
\end{tabular}
}
\end{center} 

\noindent is not present in our axiomatization because, due to our language design, it is a particular case of \isaxfour.

\medskip

The following syntactic equivalences will be useful for the next sections:

\begin{fact}\label{fact boolean}
As seen in~\cite{cateLM10}, $\topNode$, $\botNode$, and $\botPath$ behave as expected: $\axiomRestr\vdash \equivInstance{\varphi \lor \topNode}{\topNode}$, $\axiomRestr\vdash \equivInstance{ \alpha [\botNode]} { \botPath}$, et cetera. 
Furthermore, we have the following from~\cite[Table 6]{cateLM10}:
	\begin{center}
	\colorbox{black!10}{
	\begin{tabular}{l@{\ \ }l} 
	\hline 
	{\bf Der1} & $\axiomRestr\vdash\equivInstance{\varphi \lor \psi}{\psi \lor \varphi}$ \\
	{\bf Der2} & $\axiomRestr\vdash\equivInstance{\varphi \lor (\psi \lor \rho)}{(\varphi \lor \psi) \lor \rho}$ \\
	{\bf Der12} & $\axiomRestr\vdash\leqInstance{\tup{\alpha\beta}}{\tup{\alpha}}$ \\
	{\bf Der13} & $\axiomRestr\vdash\equivInstance{\tup{\alpha[\botNode]}}{\botNode}$ \\
	{\bf Der21} &  $\axiomRestr\vdash\equivInstance{\alpha[\varphi][\psi]}{\alpha[\varphi\wedge\psi]}$ \\
	\hline
	\end{tabular}
	}
	\end{center}

%\sergio{Notas: Der 1 usa la regla NdAx3 de ten Cate (que sale de nuestros axiomas EqAx1,2,3,5), y luego ndax{2}, Isax{2}
%Der 2 usa NdAx2, de vuelta el ten Cate's NdAx3, e IsAx1. Desde aca, con NdAx1, tenemos equivalencias booleanas.
%Der 12 usa NUESTRO \ndax{3}, Der 7's monotonicity  (IsAx5, IsAx6,  ten Cate's NdAx3, y PrAx4 = ISAx4; monotonicity se deriva de esto y usa {\sf NdAx3}, \prax{3}) y Der 4 (PrAx2, IsAx5, Ten cate's PrAx4= ISAx4 )
%Der 13: Der 7's monotonicity y NdAx2
%Der 21: IsAx7, Der 4, Monotonicity, Der 8 (isAx7, Monotonicity, NdAx2),  Der 10(Der 4, PrAx3,  Der 9 (Der 8 + Boolean), Der 19 (monotonicity, Der 18 (Der 3(IsAx 5, IsAx4); Der 16(Boolean, NdAx2, PrAx1))), Der 20 (PrAx3, Der 8, Monotonicity, Der 4, Der 18).
%}
We note that in order to prove the previous derivations one needs to use {\prax{1}, \prax{2},  \prax{3}, \isaxone, \isaxtwo, \isaxfour, \isaxfive, \isaxsix, \isaxseven, \ndax{1},  \ndax{2}, \ndax{4} and \ndax{3}  }.\footnote{
{\bf Der1} uses \isaxtwo, \ndax{2}, and \ndax{4}. 
{\bf Der2} uses \isaxone (and \ndax{2} and \ndax{4} again). 
We can now derive all the axioms of Boolean algebras by also using \ndax{1}. 
{\bf Der12} also uses \prax{2}, \prax{3}, \isaxfour, \isaxfive, \isaxsix, and \ndax{3}.
{\bf Der13} does not need further axioms.
{\bf Der21} also uses \prax{1} and \isaxseven.}
%%%%%%

As a consequence of {\bf Der1}, {\bf Der2} and Huntington's equation \ndax{1}, we can derive  all the axioms of Boolean algebras from the axioms in $\axiomRestr$  \cite{huntington1933new, huntington1933boolean}. In what follows, we will often use the Boolean properties without explicitly referencing them. In particular, we use the fact that $\axiomRestr\vdash\psi \le \botNode$ implies that $\psi$ is an inconsistent node expression, and that $\axiomRestr\vdash\varphi\leq\psi$ implies that $\varphi\land\lnot\psi$ is inconsistent.
\end{fact}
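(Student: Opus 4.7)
I would prove each identity of the Fact by a direct derivation in equational logic from $\axiomRestr$, following the arguments of~\cite{cateLM10} essentially unchanged: the new label axioms \lbax{1}, \lbax{2} and the data-equality axioms play no role in any of these identities, so the arguments from the predecessor paper transfer verbatim. The overarching trick is that \ndax{2} ($\tup{[\varphi]}\equiv\varphi$) and \ndax{4} ($\tup{\alpha\cup\beta}\equiv\tup{\alpha}\vee\tup{\beta}$) let me shuttle between node-level $\vee$ and path-level $\cup$, so the semiring axioms can do most of the work.

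The natural order is to prove Der1 and Der2 first, purely by this shuttling: for Der1, chain $\varphi\vee\psi\equiv\tup{[\varphi]}\vee\tup{[\psi]}\equiv\tup{[\varphi]\cup[\psi]}\equiv\tup{[\psi]\cup[\varphi]}\equiv\psi\vee\varphi$ via \ndax{2}, \ndax{4}, \isaxtwo, with Der2 identical modulo \isaxone. Once these are in hand, Huntington's 1933 theorem~\cite{huntington1933new, huntington1933boolean} applied to $(\vee,\neg)$ with Huntington's equation \ndax{1} immediately delivers every Boolean algebra identity on node expressions; the asserted behavior of $\topNode$ and $\botNode$ then follows, and $\alpha[\botNode]\equiv\botPath$ is obtained by instantiating \prax{1} with $\beta=\eps$ (using $\tup{\eps}\equiv\topNode$ by definition so $\neg\tup{\eps}\equiv\botNode$) and absorbing the trailing $\eps$ with \isaxfive. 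Der13 is then a one-line consequence: $\tup{\alpha[\botNode]}\equiv\tup{\botPath}\equiv\tup{[\botNode]}\equiv\botNode$ via \ndax{2}. For Der12 the plan is a case split on $\tup{\beta}$: starting from $\tup{\alpha}\equiv\tup{\alpha[\topNode]}$ via \isaxfive\ and \prax{2}, together with the Boolean tautology $\topNode\equiv\tup{\beta}\vee\neg\tup{\beta}$, I expand using \prax{3}, \isaxsix, and \ndax{4} to obtain $\tup{\alpha}\equiv\tup{\alpha[\tup{\beta}]}\vee\tup{\alpha[\neg\tup{\beta}]}$; \ndax{3} rewrites $\tup{\alpha[\tup{\beta}]}$ as $\tup{\alpha\beta}$, yielding $\tup{\alpha\beta}\leq\tup{\alpha}$ after Boolean absorption.

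Der21 is the most delicate step, and I expect it to be the main obstacle. The plan is to split the outer test via the Boolean identity $\varphi\equiv(\varphi\wedge\psi)\vee(\varphi\wedge\neg\psi)$, distribute through the $\alpha[\,\cdot\,][\psi]$ context with \prax{3}, \isaxsix, and \isaxfour, and then kill the branch $\alpha[\varphi\wedge\neg\psi][\psi]$ via \prax{1} (invoking \ndax{2} to identify $\tup{[\psi]}\equiv\psi$) with \isaxseven\ absorbing the resulting $\botPath$; the surviving branch simplifies to $\alpha[\varphi\wedge\psi]$ by the same kind of manipulation in reverse. The subtlety here is that the derivation freely mixes path-level identities with Boolean rewriting, which is only available after Der1, Der2, and \ndax{1} have been used to discharge Huntington's theorem, so the bookkeeping must respect that order throughout.
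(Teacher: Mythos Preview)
Your proposal is correct and matches the paper's approach: the paper itself gives no proof here, only citing~\cite{cateLM10} and a footnote tracking which axioms are needed, and your derivations line up exactly with those axiom-usage hints (Der1/Der2 via the \ndax{2}--\ndax{4} shuttle plus \isaxtwo/\isaxone; Boolean via Huntington; Der12 via the $\tup{\beta}\vee\neg\tup{\beta}$ case split; Der21 via \prax{1} and \isaxseven). The only place worth tightening is the ``kill the branch'' step in Der21: \prax{1} with $\beta=[\psi]$ gives $[\neg\psi][\psi]\equiv\botPath$, not $[\varphi\wedge\neg\psi][\psi]\equiv\botPath$ directly, so you should make explicit the monotonicity step $[\varphi\wedge\neg\psi]\leq[\neg\psi]$ (via \prax{3} and Boolean) together with the observation that $\gamma'\leq\gamma$ and $\gamma\delta\equiv\botPath$ imply $\gamma'\delta\equiv\botPath$ (via \isaxsix\ and \isaxseven).
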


Sometimes we use \isaxone, \isaxfour, \eqax{2}, \eqax{5}, and \eqax{1} without explicitly mentioning them. We omit such steps in order to make the proofs more readable. 

\medskip

It is not difficult to see that the axioms $\axiomRestr$ are sound for $\xpdeq$:

\newcommand{\thmsoundness}[2]
{
\begin{enumerate}
 \item Let $\varphi$ and $\psi$ be node expressions of {\rm #1}. Then  ${#2}\vdash\varphi\equiv\psi$ implies $\models \varphi\equiv\psi$. 
 
 \item Let $\alpha$ and $\beta$ be path expressions of {\rm #1}. Then ${#2}\vdash\alpha\equiv\beta$ implies $\models \alpha\equiv\beta$.
\end{enumerate}
} 
\begin{proposition}[Soundness of $\xpdeq$]\label{prop:corr-restr}
~
\thmsoundness{$\xpdeq$}{\axiomRestr}
\end{proposition}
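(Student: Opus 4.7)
The plan is to proceed by induction on the length of the derivation $\axiomRestr\vdash P\equiv Q$, where $P,Q$ range uniformly over either node expressions or path expressions. The base cases are the instances of the axiom schemes from Table~\ref{tab:axiomseq}; the inductive cases are the four equational rules (reflexivity, symmetry, transitivity and replacement of equals by equals). Reflexivity, symmetry and transitivity are immediate because $\models\,\cdot\equiv\cdot$ is an equivalence relation on formulas. The replacement rule requires a compositionality lemma: if $\dbracket{P}^\Tt=\dbracket{Q}^\Tt$ for every data tree $\Tt$, and $R'$ arises from $R$ by replacing some occurrences of $P$ by $Q$, then $\dbracket{R}^\Tt=\dbracket{R'}^\Tt$ for every $\Tt$. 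This is proved by a straightforward induction on the structure of $R$, since every clause in Table~\ref{tab:semantics} is defined pointwise from the denotations of the immediate subexpressions.

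For the base cases I would go block by block. The label axioms \lbax{1} and \lbax{2} use only the standing assumption that every node carries exactly one label from $\A$. The path predicate axioms \prax{1}--\prax{3} unfold from the semantics of $[\varphi]$ and composition. The idempotent semiring axioms \isaxone--\isaxseven restate standard properties of relations under union and composition (associativity, commutativity and idempotency of $\cup$, identity of $\eps$, two-sided distributivity, and the fact that $\dbracket{\botPath}^\Tt=\emptyset$ is a neutral element for $\cup$). The node axioms \ndax{1}--\ndax{3} follow from Boolean reasoning (Huntington's equation for \ndax{1}) together with direct unfolding of the clause for $\dbracket{\tup{\alpha}}^\Tt$, which in turn uses the semantics of composition and union of path expressions.

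The last block, the equality axioms, is the most delicate. Axioms \eqax{2}, \eqax{3}, \eqax{4}, \eqax{5} and \eqax{1} are direct unfoldings of the semantic clause for $\tup{\alpha = \beta}$ (respectively: symmetry of $[y]_\pi=[z]_\pi$, distributing the witness pair over $\alpha\cup\beta$, moving a guard $[\varphi]$ across the source node, the trivial weakening to $\tup{\alpha}$, and the reflexivity of the equivalence relation $[\cdot]_\pi$). Axiom \eqax{6} uses that $\eps$ forces both witnesses to coincide with $x$, so the equality of classes holds trivially. The two axioms that take some care are \eqax{7} and \eqax{8}, which push a common prefix or an internal data-equality across the symbol $=$. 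For \eqax{7} I would take the intermediate node $w$ witnessing the outer $\tup{\gamma[\cdot]}$ together with the local witnesses of $\tup{\alpha=\beta}$ at $w$, and concatenate them with the path from $x$ to $w$ to produce the two witnesses of $\tup{\gamma\alpha=\gamma\beta}$. For \eqax{8} I would unfold $[\tup{\eps=\gamma}]$ at the endpoint of the $\beta$-path to obtain a $\gamma$-extension ending in a node with the same equivalence class, which together with the $\alpha$-witness yields a pair witnessing $\tup{\alpha=\beta\gamma}$. I expect these two quantifier manipulations to be the only nontrivial step; every other verification is mechanical.
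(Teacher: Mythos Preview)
Your proposal is correct and follows essentially the same approach as the paper: verify each axiom scheme semantically and observe that the equational rules preserve validity because the semantics is compositional. The paper's own proof is a two-line sketch (``Equational rules are valid because we have compositional semantics, and the proof that all the axioms schemes from Table~\ref{tab:axiomseq} are sound is straightforward''), whereas you spell out the induction on derivations, the compositionality lemma needed for the replacement rule, and the witness manipulations for \eqax{7} and \eqax{8}; this is a faithful elaboration rather than a different argument.
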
 

\begin{proof}
Equational rules are valid because we have compositional semantics, and 
the proof that all the axioms schemes from Table~\ref{tab:axiomseq} are sound is straightforward.
\end{proof}

\subsection{Normal forms} \label{subsec:nf}
%!TEX root = main.tex

When working in \corexpath, the only diamond in the language is the {\em modal} diamond of the form $\tup{\alpha}$, where $\alpha$ is a path expression. In the absence of data-aware diamonds any node expression $\tup{[\varphi]\dow\beta}$ is equivalent to $\varphi\wedge \tup{\dow[\tup{\beta}]}$. Hence when the only axis is `child', the only path expressions that we need are of the form $\dow[\tup{\cdot}]$, of length $1$, and therefore the only diamonds that we need are of the form $\tup{\dow[\psi]}$, which in the basic modal logic is written simply as $\Diamond\psi$. This rewriting of path expressions is carried out in \cite{cateLM10}, and so normal forms have somewhat the same flavour as in the basic modal logic. 

When data shows up, this rewriting is no longer possible: the node expression $\tup{\alpha=\beta}$ checks if there are nodes with equal data value {\em at the end of paths $\alpha$ and $\beta$.} So these paths cannot be compressed as before. For an easy example, observe that the data-aware diamond $\tup{\dow[a]\dow[b]=\eps}$ is not equivalent to $\tup{\dow[a\wedge \tup{\dow[b]}]=\eps}$ (see items \ref{example:4} and \ref{example:5} of Example~\ref{ex:semantics}).

The normal forms we will introduce are inspired by the classic {\em Disjunctive Normal Form} (DNF) for propositional logic, in which literals are formulas of smaller depth.
Our normal forms will take into account path expressions of arbitrary length, and this makes our definition more involved than the one in \cite{cateLM10}. We introduce them in this section for the language $\xpdeq$. This definition will be extended to the general logic $\xpd$ in \S\ref{sec:nf-neq}.

We define the sets $\nfP{n}$ and $\nfN{n}$, which contain the path and node expressions of $\xpdeq$, respectively,  in normal form at level $n$:
\begin{definition}[Normal form for $\xpdeq$] \label{def:cons-ne}
\begin{eqnarray*}
\nfP{0} &=& \left\{\eps\right\}\\
%\nfD{0} &=& \{\tup{\eps}\}\\
\nfN{0} &=& \{a\wedge \tup{\eps=\eps}\mid a\in\A\}   \\
%\nfP{n+1} &=& \{\eps\} \cup \{ \dow[\psi_1]\dots\dow[\psi_k] \mid 1\leq k\leq n+1, \psi_i\in N_{n+1-i} \}\\
\nfP{n+1} &=& \{\eps\} \cup \left\{\dow[\psi]\beta\mid \psi\in \nfN{n},\beta\in \nfP{n} \right\}\\
\nfD{n+1} &=& \left\{\tup{\alpha=\beta}\mid \alpha,\beta\in \nfP{n+1}\right\}\\ %\{\tup{\eps=\alpha}\mid \alpha\in \nfP{n+1}\} \cup
%\nfN{n+1} &=& \left\{\psi\wedge \bigwedge_{\varphi\in C} \varphi  \wedge \bigwedge_{\varphi\in \nfD{n+1}\setminus C} \neg\varphi \mid C\subseteq \nfD{n+1}, \psi\in \nfN{0}\right\} \cap \con_\axiomRestr.
%\\
\nfN{n+1} &=& \left\{a \wedge \bigwedge_{\varphi\in C} \varphi  \wedge \bigwedge_{\varphi\in \nfD{n+1}\setminus C} \neg\varphi \mid C\subseteq \nfD{n+1}, a\in\A\right\} \cap \con_\axiomRestr.
\end{eqnarray*}
\end{definition}

Observe that we define normal forms by mutual recursion among three kinds of sets: $\nfP{n}$, $\nfD{n}$ and $\nfN{n}$ (for some $n$), which are
sets of path expressions, data-aware diamonds, and node expressions, respectively. They consist of expressions that can look forward up to a certain downward depth. The index $n$ indicates which maximum downward depth we are exploring, both in path and node expressions.
Base cases are the simplest expressions of each kind (depth  $0$). New path expressions are constructed by using node and path expressions $\psi$ and~$\beta$ from a previous level of their respective type, and exploring one more step using~$\dow$. 
Data-aware diamond expressions are auxiliary expressions consisting of equalities between two path expressions of the same level.
Finally, node expressions in normal form at some level $n$ are formed of consistent conjunctions of positive and negative data-aware diamond expressions of level $n$. Notice that at each level $i$, each conjunction in $\nfN{i}$ has one conjunct of the form $a$ with $a\in \A$ which provides a label for the current node. 
Finally, let us remark that it would suffice that $\nfN{0}$ contains formulas of the form $a$, for $a\in\A$.  However, we include instead formulas of the form $a\wedge\tup{\eps=\eps}$ (containing the tautology $\tup{\eps=\eps}$) only for technical reasons.

\begin{example} \label{example:normalForm}
Let us see some examples of expressions in normal form. We consider only two labels $a$ and $b$, and ignore  
redundancies (if we write $\tup{\alpha=\beta}$, we do not write $\tup{\beta=\alpha}$). The sets $\nfP{1}$ and $\nfD{1}$ are as follows:
\begin{eqnarray*}
	\nfP{1}&=&\{\dow[a\land\tup{\eps=\eps} ]\eps,\dow[b\land\tup{\eps=\eps} ]\eps,\eps\}
	\\
	\nfD{1}&=&\{\tup{\dow[a\land\tup{\eps=\eps} ]\eps=\dow[b\land\tup{\eps=\eps} ]\eps},\tup{\eps=\dow[a\land\tup{\eps=\eps} ]\eps},\tup{\eps=\dow[b\land\tup{\eps=\eps} ]\eps}, \\
	& & \tup{\dow[a\land\tup{\eps=\eps} ]\eps=\dow[a\land\tup{\eps=\eps} ]\eps},\tup{\dow[b\land\tup{\eps=\eps} ]\eps=\dow[b\land\tup{\eps=\eps} ]\eps},\tup{\eps=\eps}\}
\end{eqnarray*}
An example of a node expression in normal form at level 1, i.e.\ a node expression in $\nfN{1}$, is 
\begin{eqnarray*}
\varphi & = & a \wedge \tup{\eps=\eps} \wedge \tup{\dow[a\land\tup{\eps=\eps} ]\eps=\dow[b\land\tup{\eps=\eps} ]\eps} 
\wedge \tup{\dow[a\land\tup{\eps=\eps} ]\eps=\dow[a\land\tup{\eps=\eps} ]\eps} 
\\
&&  \wedge \tup{\dow[b\land\tup{\eps=\eps} ]\eps=\dow[b\land\tup{\eps=\eps} ]\eps}
\wedge \neg\tup{\eps=\dow[a\land\tup{\eps=\eps} ]\eps} \wedge \neg\tup{\eps=\dow[b\land\tup{\eps=\eps} ]\eps}.
\end{eqnarray*}
\end{example}

The following lemmas (\ref{lem:eq_in_psi}, \ref{nextPathIsConjunct} and \ref{lemma:inconsistent}) are very intuitive and their proofs are straightforward. %Despite that, we prove them in detail here since we are going to use them many times: 

\begin{lemma}
\label{lem:eq_in_psi}
Let $\psi\in \nfN{n}$ and $\alpha,\beta\in \nfP{n}$. Let $\Tt,x$ be a pointed data tree,
such that $\Tt,x\models\psi$ and $\Tt,x\models\tup{\alpha=\beta}$. Then $\tup{\alpha=\beta}$
is a conjunct of $\psi$.
\end{lemma}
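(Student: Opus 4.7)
The plan is to argue directly from the shape enforced by Definition~\ref{def:cons-ne}, which is built so that every $\psi \in \nfN{n}$ explicitly commits to the truth value of every data-aware diamond in $\nfD{n}$.

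First I would handle the degenerate case $n = 0$. Here $\nfP{0} = \{\eps\}$, so necessarily $\alpha = \beta = \eps$, and the tautology $\tup{\eps = \eps}$ is by inspection a conjunct of every element of $\nfN{0}$. (This is precisely why $\nfN{0}$ was defined to bundle in $\tup{\eps = \eps}$ rather than just $a$.) For $n \geq 1$, since $\alpha, \beta \in \nfP{n}$, the data-aware diamond $\tup{\alpha = \beta}$ lies in $\nfD{n}$ by definition. By the form of $\psi \in \nfN{n}$, there exists $C \subseteq \nfD{n}$ and $a \in \A$ with
$$\psi \;=\; a \wedge \bigwedge_{\varphi \in C} \varphi \wedge \bigwedge_{\varphi \in \nfD{n} \setminus C} \neg\varphi,$$
so either $\tup{\alpha = \beta}$ is a positive conjunct of $\psi$ or $\neg\tup{\alpha = \beta}$ is.

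The second alternative is ruled out by the hypotheses: from $\Tt, x \models \psi$ we would get $\Tt, x \models \neg\tup{\alpha = \beta}$, contradicting $\Tt, x \models \tup{\alpha = \beta}$. Hence $\tup{\alpha = \beta}$ itself must be a conjunct of $\psi$, as claimed. There is no real obstacle here: the statement is essentially an unpacking of the ``maximally specified'' structure of elements of $\nfN{n}$, and the only mild care required is the base case, which is why $\tup{\eps = \eps}$ was embedded into the level-$0$ node expressions in the first place.
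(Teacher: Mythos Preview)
Your proof is correct and follows essentially the same approach as the paper's: treat $n=0$ by inspection of $\nfN{0}$, and for $n>0$ use that $\tup{\alpha=\beta}\in\nfD{n}$ so either it or its negation is a conjunct of $\psi$, the latter being ruled out semantically. The only difference is that you spell out the $C$-indexed form of $\psi$ explicitly, which the paper leaves implicit.
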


\begin{proof}
%\color{white}{A}
%\color{black}
%\begin{itemize}
% \item 
The case when $n=0$ follows from the definitions of $\nfP{0}$ and $\nfN{0}$. 
% \item 
If $n>0$, since $\alpha,\beta\in \nfP{n}$, by definition of $\nfD{n}$, we have $\tup{\alpha=\beta}\in \nfD{n}$.
Because $\psi\in \nfN{n}$, either $\tup{\alpha=\beta}$ or its negation is a conjunct of $\psi$. Suppose that the latter occurs, then $\Tt,x\models\neg\tup{\alpha=\beta}$,
and, by hypothesis, $\Tt,x\models\tup{\alpha=\beta}$, which is a contradiction.
%\end{itemize}
%
%
% $\alpha,\alpha'\in \nfP{n}$, then by definition of $D_{n}$, we have $\tup{\alpha=\alpha'}\in D_n$.
% Because $\psi\in \nfN{n}$ (and it is maximal), we
% have two possibilities: if $\tup{\alpha=\alpha'}$ is a conjunct of $\psi$, we are done;
% if $\neg\tup{\alpha=\alpha'}$ is a conjunct of $\psi$ we have $\Tt,u\models\neg\tup{\alpha=\alpha'}$,
% and by hypothesis $\Tt,u\models\tup{\alpha=\alpha'}$, which is a contradiction.
\end{proof}

\begin{lemma} \label{nextPathIsConjunct}
Let $\psi\in \nfN{n}$ and $\alpha\in \nfP{n}$.  If $[\psi]\alpha$ is consistent then $\tup{\alpha=\alpha}$ is a conjunct of~$\psi$. As an immediate consequence, if $\tup{\dow[\psi]\alpha}$ is consistent then $\tup{\alpha=\alpha}$ is a conjunct of~$\psi$.
\end{lemma}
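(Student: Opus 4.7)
The plan is to argue by contradiction using the dichotomy baked into the definition of $\nfN{n}$. The base case $n=0$ is immediate: $\nfP{0}=\{\eps\}$ forces $\alpha=\eps$, and every $\psi\in\nfN{0}$ has the shape $a\wedge\tup{\eps=\eps}$, so $\tup{\alpha=\alpha}=\tup{\eps=\eps}$ is literally a conjunct of $\psi$.

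For $n\ge 1$, I would first observe that $\tup{\alpha=\alpha}\in\nfD{n}$ by the definition of $\nfD{n}$, so by the definition of $\nfN{n}$ either $\tup{\alpha=\alpha}$ or $\neg\tup{\alpha=\alpha}$ must appear as a conjunct of $\psi$. Assuming for contradiction that $\neg\tup{\alpha=\alpha}$ is the one appearing gives $\psi\le\neg\tup{\alpha=\alpha}$; applying \eqax{1} yields $\psi\le\neg\tup{\alpha}$, hence $\psi\equiv\psi\wedge\neg\tup{\alpha}$. The heart of the argument is then the derivation
\[
[\psi]\alpha \;\equiv\; [\psi\wedge\neg\tup{\alpha}]\alpha \;\stackrel{\textbf{Der21}}{\equiv}\; [\psi][\neg\tup{\alpha}]\alpha \;\stackrel{\prax{1}}{\equiv}\; \botPath,
\]
where the last step instantiates \prax{1} with $[\psi]$ and $\alpha$ in the roles of the schematic $\alpha,\beta$. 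This equivalence contradicts the hypothesis that $[\psi]\alpha$ is consistent, so $\tup{\alpha=\alpha}$ must be the conjunct of $\psi$ after all.

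For the ``immediate consequence'', I would reduce to the first part by showing that consistency of $\tup{\dow[\psi]\alpha}$ implies consistency of $[\psi]\alpha$. Indeed, were $[\psi]\alpha\equiv\botPath$, then since $\botPath=[\botNode]$ by definition we would get $\dow[\psi]\alpha\equiv\dow[\botNode]$, hence $\tup{\dow[\psi]\alpha}\equiv\tup{\dow[\botNode]}\equiv\botNode$ by \textbf{Der13}, contradicting the assumed consistency of $\tup{\dow[\psi]\alpha}$. I do not foresee any genuine obstacle here; the whole argument is short bookkeeping over the normal-form definitions combined with \eqax{1}, \textbf{Der21}, and \prax{1}.
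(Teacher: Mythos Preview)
Your proof is correct and follows essentially the same route as the paper: use the dichotomy in the definition of $\nfN{n}$ to get either $\tup{\alpha=\alpha}$ or $\neg\tup{\alpha=\alpha}$ as a conjunct, and in the latter case derive $[\psi]\alpha\equiv\botPath$ via \eqax{1}, \textbf{Der21}, and \prax{1}. The paper organizes the case split on the shape of $\alpha$ (the case $\alpha=\eps$ versus $\alpha\neq\eps$) rather than on $n$, and leaves the ``immediate consequence'' implicit, but the argument is the same.
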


\begin{proof}
Since $\alpha\in \nfP{n}$, then either $\alpha=\eps$ or $\alpha$ is of the form
$
\dow[\psi_1]\dots\dow[\psi_k]\eps
$
for some $1\leq k\leq n$, and $\psi_i\in \nfN{n-i}$. If $\alpha=\eps$, we are done, as $\tup{\eps=\eps}$ is always a conjunct of $\psi$ by consistency. %\textcolor{red}{que pasa si $n=0$? En tal caso $\nfN{0}$ no contiene a $\tup{\eps}$}
%Else, $\alpha$ is of the form $\dow[\psi_1]\alpha_1$, for some $\alpha_1\in P_{n-1}$ (namely $ \dow[\psi_2]\dots\dow[\psi_k]$).
Else, since $\tup{\alpha = \alpha} \in \nfD{n}$, $\tup{\alpha=\alpha}$ or its negation is a conjunct of $\psi$. By using {\bf Der 21} from Fact \ref{fact boolean}, \eqax{1} and \prax{1} consecutively, one can see that the latter case is not possible, because $[\psi]\alpha$ is consistent. Then $\tup{\alpha=\alpha}$ is a conjunct of $\psi$.
\end{proof}

\begin{lemma}\label{lemma:inconsistent}
For every pair of distinct elements $\varphi,\psi \in \nfN{n}$, $\varphi\wedge\psi$ is
inconsistent.
\end{lemma}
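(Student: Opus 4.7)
The plan is straightforward case analysis on how two distinct normal forms can differ. By Definition \ref{def:cons-ne}, any $\varphi \in \nfN{n}$ is (syntactically) determined by a label $a_\varphi \in \A$ together with a subset $C_\varphi \subseteq \nfD{n}$ (where for $n=0$ we have $\nfD{0}$ playing no role and the only data-aware conjunct is the trivial $\tup{\eps=\eps}$). So if $\varphi \neq \psi$ lie both in $\nfN{n}$, either their labels differ, or their associated subsets of $\nfD{n}$ differ.

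First I would dispose of the label case: if $a_\varphi \neq a_\psi$, then $\varphi \wedge \psi$ contains the conjunction $a_\varphi \wedge a_\psi$, which by \lbax{2} is provably equivalent to $\botNode$; together with the Boolean laws derivable from $\axiomRestr$ (Fact~\ref{fact boolean}), this immediately gives $\axiomRestr \vdash \varphi \wedge \psi \equiv \botNode$, i.e.\ $\varphi \wedge \psi$ is inconsistent.

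For the second case, suppose $a_\varphi = a_\psi$ but $C_\varphi \neq C_\psi$. Then there exists some data-aware diamond $\tup{\alpha = \beta} \in \nfD{n}$ lying in the symmetric difference $C_\varphi \triangle C_\psi$. Without loss of generality, $\tup{\alpha=\beta} \in C_\varphi$ and $\tup{\alpha=\beta} \notin C_\psi$, so $\tup{\alpha=\beta}$ is a conjunct of $\varphi$ while $\neg \tup{\alpha=\beta}$ is a conjunct of $\psi$. Hence $\varphi \wedge \psi$ has both $\tup{\alpha=\beta}$ and $\neg\tup{\alpha=\beta}$ among its conjuncts, and by the Boolean laws derivable from $\axiomRestr$ this is again provably equivalent to $\botNode$.

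Since any two distinct elements of $\nfN{n}$ must fall into one of these two cases, the result follows. I do not expect any real obstacle here: the argument is purely syntactic and uses only the shape dictated by Definition \ref{def:cons-ne}, axiom \lbax{2}, and the Boolean reasoning already available in $\axiomRestr$ via Fact~\ref{fact boolean}. The only mild subtlety is the base case $n=0$, where $\nfD{0}$ does not appear in the construction, so only the label case is possible; this is handled by the same argument.
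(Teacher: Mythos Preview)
Your proof is correct and follows essentially the same approach as the paper: a case split on whether the distinct normal forms differ in their label conjunct (handled by \lbax{2}) or in their $\nfD{n}$-subset (yielding a conjunct and its negation, hence inconsistency by Boolean reasoning), with the base case $n=0$ reducing to the label case.
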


\begin{proof}
If $n=0$, then $\varphi=a \wedge \tup{\eps =\eps}$ and $\psi=b \wedge \tup{\eps =\eps}$,
with $a, b \in \A$ and $a\neq b$. Then by  \lbax{2}, we have $\axiomRestr\vdash\equivInstance{\varphi\wedge\psi}{\botNode}$, i.e.,  $\varphi\wedge\psi$ is inconsistent.

Let $\varphi$ and $\psi$ be distinct normal forms of degree $n>0$, then we have two possibilities: 

\begin{itemize}
 \item If $\varphi$ and $\psi$ differ in the conjunct of the form $a$ with $a \in \A$, then we use an argument similar to the one used for the base case.
 
 \item If not, then there is $\sigma \in \nfD{n}$ such that, without loss of generality, $\sigma$ is a conjunct of $\varphi$ and $\lnot \sigma$ is a conjunct of $\psi$. Therefore, because $\varphi \wedge \psi$ contains $\sigma\wedge\neg\sigma$ as a sub-expression, we have $\axiomRestr\vdash\equivInstance{\varphi\wedge\psi}{\botNode}$, i.e., it is inconsistent.
\end{itemize}
This concludes the proof.
\end{proof}

\begin{lemma}\label{lemma:inconsistent-path}
Let $\alpha,\beta\in\nfP{n}$. If there is a data tree $\Tt$ and nodes $x,y\in T$ such that $\Tt,x,y\models\alpha$ and $\Tt,x,y\models\beta$, then $\alpha=\beta$.
\end{lemma}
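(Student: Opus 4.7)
My plan is to prove the statement by induction on $n$, using the key facts that (i) every path in $\nfP{n}$ only navigates downward, (ii) in a tree two distinct children of a node cannot both lie on the path from the root to any given node, and (iii) two distinct normal forms at level $n$ are logically incompatible by Lemma~\ref{lemma:inconsistent} and Proposition~\ref{prop:corr-restr}.

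\textbf{Base case ($n=0$).} We have $\nfP{0}=\{\eps\}$, so $\alpha=\beta=\eps$ immediately.

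\textbf{Inductive step.} Assume the statement for $n$ and let $\alpha,\beta\in\nfP{n+1}$ with $\Tt,x,y\models\alpha$ and $\Tt,x,y\models\beta$. First I would establish (by a straightforward induction on the structure of $\gamma$) the auxiliary fact that for any $\gamma\in\nfP{k}$ and any $(u,v)\in\dbracket{\gamma}^\Tt$, the node $v$ equals $u$ or is a strict descendant of $u$. Next, I handle the case $\alpha=\eps$: then $x=y$, and if $\beta=\dow[\psi']\beta'$ with $\beta'\in\nfP{n}$, the semantics gives a child $z$ of $x$ with $(z,y)\in\dbracket{\beta'}^\Tt$, hence $y$ equals or is a descendant of $z$; but $y=x$ and $z$ is a child of $x$, contradicting acyclicity. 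Hence $\beta=\eps$. The case $\beta=\eps$ is symmetric.

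\textbf{Main case.} Both $\alpha$ and $\beta$ are nonempty, so $\alpha=\dow[\psi]\alpha'$ and $\beta=\dow[\psi']\beta'$ with $\psi,\psi'\in\nfN{n}$ and $\alpha',\beta'\in\nfP{n}$. From the semantics there exist children $z,z'$ of $x$ with $\Tt,z\models\psi$, $\Tt,z'\models\psi'$, $(z,y)\in\dbracket{\alpha'}^\Tt$, and $(z',y)\in\dbracket{\beta'}^\Tt$. By the auxiliary fact, both $z$ and $z'$ lie on the (unique) path from $x$ to $y$; being children of $x$, they must coincide, so $z=z'$. Then $\Tt,z\models\psi\wedge\psi'$; if $\psi\neq\psi'$, Lemma~\ref{lemma:inconsistent} would give $\axiomRestr\vdash\psi\wedge\psi'\equiv\botNode$, which by soundness (Proposition~\ref{prop:corr-restr}) contradicts satisfiability at $z$. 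Thus $\psi=\psi'$. Finally $(z,y)\in\dbracket{\alpha'}^\Tt\cap\dbracket{\beta'}^\Tt$ with $\alpha',\beta'\in\nfP{n}$, so the inductive hypothesis yields $\alpha'=\beta'$, and hence $\alpha=\beta$.

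\textbf{Main obstacle.} No step is really technical; the only subtlety is making sure the intermediate children $z,z'$ are forced to coincide, which rests on the observation that all normal-form path expressions navigate strictly downward. Once that auxiliary fact is in place, the argument is a clean induction combining uniqueness of paths in a tree with the inconsistency lemma for node normal forms.
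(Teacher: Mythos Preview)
Your proof is correct and follows essentially the same approach as the paper: both arguments exploit uniqueness of the downward path from $x$ to $y$ in a tree to match up the intermediate nodes, and then invoke Lemma~\ref{lemma:inconsistent} together with soundness to conclude that the node expressions at each step coincide. The only cosmetic difference is that the paper unfolds the full path $\dow[\psi_1]\dots\dow[\psi_i]\eps$ at once and argues all steps simultaneously, whereas you organize the same reasoning as an induction on $n$, peeling off one $\dow[\psi]$ at a time.
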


\begin{proof}
Let $\alpha=\dow[\psi_1]...\dow[\psi_i]\eps$ and $\beta=\dow[\rho_1]...\dow[\rho_j]\eps$. By definition of the semantics of $\xpdeq$, $i=j$ since $\Tt,x,y\models\alpha$ and $\Tt,x,y\models\beta$. In particular, there are nodes $z_i \in T$, $1 \le k \le i$, such that $\Tt, z_i \models \psi_k, \Tt, z_i \models \rho_k$ for all $1 \le k \le i$.
Using Proposition~\ref{prop:corr-restr}, we obtain that $\psi_k \land \rho_k$ is consistent for all $k=1,\dots,i$ , and thus by Lemma~\ref{lemma:inconsistent} we have that $\psi_k=\rho_k$ for all $k=1,\dots,i$. Then we conclude that $\alpha=\beta$.
\end{proof}

The following lemma is a normal form result for the special case of data-aware `diamond' node expressions in $\nfD{n}$:
\begin{lemma}\label{lem:d} 
Let $n>0$ and $a\in\A$. If $\varphi\in \nfD{n}$ is consistent then there are $\psi_1,\dots,\psi_k\in\nfN{n}$ such that $\axiomRestr \vdash \equivInstance{a\wedge \varphi}{ \bigvee_i\psi_i}$ 
\end{lemma}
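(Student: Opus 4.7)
The plan is a straightforward disjunctive-normal-form expansion: the formula $a\wedge\varphi$ is a partial description of a node at level $n$, and I will complete it to the disjunction of all of its consistent full extensions, which are exactly the intended $\psi_i\in\nfN{n}$.

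First, since every Boolean tautology on node expressions is derivable from $\axiomRestr$ (Fact~\ref{fact boolean}), we have
$\axiomRestr\vdash \topNode \equiv \bigwedge_{\sigma\in\nfD{n}}(\sigma\vee\lnot\sigma)$.
Conjoining this with $a\wedge\varphi$ and then distributing $\wedge$ over $\vee$ (again by the Boolean reasoning afforded by Fact~\ref{fact boolean}) yields
\begin{equation*}
\axiomRestr\vdash a\wedge\varphi \;\equiv\; \bigvee_{C\subseteq\nfD{n}} \Bigl( a \wedge \varphi \wedge \bigwedge_{\sigma\in C}\sigma \wedge \bigwedge_{\sigma\in\nfD{n}\setminus C}\lnot\sigma\Bigr).
\end{equation*}

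Next I would split the disjunction according to whether $\varphi\in C$ or not. For each $C$ with $\varphi\notin C$, the corresponding disjunct contains both $\varphi$ and $\lnot\varphi$ and is therefore provably equivalent to $\botNode$, so by Boolean absorption it drops out. For each $C$ with $\varphi\in C$, the outside occurrence of $\varphi$ is absorbed into the inner conjunction, and the disjunct becomes
\begin{equation*}
\chi_C \;:=\; a \wedge \bigwedge_{\sigma\in C}\sigma \wedge \bigwedge_{\sigma\in\nfD{n}\setminus C}\lnot\sigma,
\end{equation*}
which has exactly the syntactic shape specified in Definition~\ref{def:cons-ne}. If $\chi_C$ is $\axiomRestr$-consistent, then by definition $\chi_C\in\nfN{n}$; otherwise $\axiomRestr\vdash \chi_C\equiv\botNode$ and the disjunct once more drops out. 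Collecting the surviving disjuncts $\psi_1,\dots,\psi_k$ yields the desired equivalence $\axiomRestr\vdash a\wedge\varphi \equiv \bigvee_i\psi_i$.

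The only step one might worry about is the derivability, inside $\axiomRestr$, of the purely propositional manipulations used above (distribution, absorption, and the identities $\sigma\vee\lnot\sigma\equiv\topNode$, $\botNode\vee\psi\equiv\psi$). These are all handled uniformly by Fact~\ref{fact boolean}, which guarantees that every Boolean-algebra identity on node expressions is a theorem of $\axiomRestr$; no data-aware axiom from the last block of Table~\ref{tab:axiomseq} is invoked for this lemma beyond the fact that $\nfD{n}$ and $\nfN{n}$ are well defined.
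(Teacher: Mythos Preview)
Your argument is correct and follows essentially the same approach as the paper: expand $a\wedge\varphi$ into a disjunction over all subsets $C\subseteq\nfD{n}$ with $\varphi\in C$, and discard the inconsistent disjuncts. The paper states this more tersely (defining the target disjunction directly and saying ``it can be seen''), whereas you spell out the DNF manipulation; the one thing the paper adds that you omit is an explicit Lindenbaum-style argument that the resulting disjunction is nonempty, obtained by successively extending $a\wedge\varphi$ with either $\sigma$ or $\lnot\sigma$ for each $\sigma\in\nfD{n}\setminus\{\varphi\}$ while preserving consistency.
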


\begin{proof}
Take 
$$\psi =  \Lor \left( \left\{a \wedge \bigwedge_{\psi \in C} \psi \wedge \bigwedge_{\psi \in \nfD{n} \setminus C} \lnot \psi \mid \mbox { $C \subseteq \nfD{n}$, $\varphi \in C$ }\right\} \cap \con_\axiomRestr \right).$$
It can be seen that $\axiomRestr \vdash \equivInstance{a  \land \varphi}{\psi}$. Notice that the above disjunction is not empty.
Indeed,  let $\nfD{n}\setminus\{\varphi\}=\{\psi_1,\dots,\psi_k\}$, and define $\varphi_0=a\wedge \varphi$ and $\varphi_{i+1}=\varphi_i\wedge\psi_{i+1}$ if $\varphi_i\wedge\psi_{i+1}$ is consistent and $\varphi_{i+1}=\varphi_i\wedge\lnot\psi_{i+1}$ otherwise. By \ndax{1} either $\varphi_i\wedge\psi_{i+1}$ or $\varphi_i\wedge\lnot\psi_{i+1}$ is consistent, and hence $\varphi_i$ is consistent for all $i$. This means that $\varphi_k$ is consistent and hence it is one of the disjuncts of the above formula.
\end{proof}

The next lemma states that expressions in any $\nfP{n}$ or $\nfN{n}$ are provably equivalent to the union or disjunction, respectively, of expressions in higher levels of $\nfP{n}$ or $\nfN{n}$.

\begin{lemma}\label{lem:m}
Let $m>n$. If $\varphi\in \nfN{n}$ then there are $\varphi_1\dots\varphi_k\in\nfN{m}$ such that $\axiomRestr \vdash \equivInstance{\varphi}{\bigvee_{i}\varphi_i}$. If $\alpha\in \nfP{n}$ then there are $\alpha_1\dots\alpha_k\in\nfP{m}$ such that $\axiomRestr \vdash \equivInstance{\alpha}{\bigcup_{i}\alpha_i}$.
\end{lemma}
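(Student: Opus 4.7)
I would prove both statements by simultaneous induction on $n$, treating only the case $m = n+1$; the full claim then follows by iterating the construction.

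For the path case, given $\alpha \in \nfP{n}$, if $\alpha = \eps$ then $\alpha \in \nfP{n+1}$ trivially. Otherwise $n \geq 1$ and $\alpha = \dow[\psi]\beta$ with $\psi \in \nfN{n-1}$ and $\beta \in \nfP{n-1}$. Applying the induction hypothesis to $\psi$ and to $\beta$ yields $\psi \equiv \bigvee_i \psi_i$ with $\psi_i \in \nfN{n}$ and $\beta \equiv \bigcup_j \beta_j$ with $\beta_j \in \nfP{n}$. Then using \prax{3} iteratively (to split $[\bigvee_i \psi_i]$ as $\bigcup_i [\psi_i]$) together with \isaxsix (distributivity of concatenation over union), I derive $\alpha \equiv \bigcup_{i,j} \dow[\psi_i]\beta_j$, each disjunct lying in $\nfP{n+1}$.

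For the node case, I write $\varphi = a \wedge \bigwedge_{\sigma \in C}\sigma \wedge \bigwedge_{\sigma \in \nfD{n} \setminus C}\neg\sigma \in \nfN{n}$ (the base case $n=0$ is handled by noting that $\tup{\eps=\eps} \equiv \topNode$ via \eqax{1}, \prax{2}, and \ndax{2}, so $\varphi \equiv a$ and the argument below applies taking $C = \emptyset$). My plan is to complete $\varphi$ with respect to $\nfD{n+1}$: distributing the Boolean tautology $\tau \vee \neg\tau$ (derivable from \ndax{1}) across each $\tau \in \nfD{n+1}$ gives
\[
\varphi \;\equiv\; \bigvee_{C' \subseteq \nfD{n+1}} \chi_{C'}, \qquad \chi_{C'} := \varphi \wedge \bigwedge_{\tau \in C'} \tau \wedge \bigwedge_{\tau \in \nfD{n+1}\setminus C'} \neg\tau.
\]
Setting $\varphi_{C'} = a \wedge \bigwedge_{\tau \in C'}\tau \wedge \bigwedge_{\tau \in \nfD{n+1} \setminus C'}\neg\tau$ (the candidate element of $\nfN{n+1}$), $\chi_{C'} \vdash \varphi_{C'}$ is immediate since $\varphi$ contributes the label $a$. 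I would then argue that whenever $\chi_{C'}$ is consistent the converse $\varphi_{C'} \vdash \varphi$ also holds, yielding $\chi_{C'} \equiv \varphi_{C'} \in \nfN{n+1}$; inconsistent $C'$ contribute $\botNode$ and are discarded.

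The main obstacle is precisely this converse: once $\varphi_{C'}$ fixes the complete $\nfD{n+1}$-type, it must also decide every $\nfD{n}$-conjunct of $\varphi$, and consistency of $\chi_{C'}$ must force those decisions to match those in $\varphi$. Here I would invoke the path case at the same level: for $\sigma = \tup{\alpha=\beta} \in \nfD{n}$ we have $\alpha,\beta \in \nfP{n}$, so the path step just proved gives $\alpha \equiv \bigcup_i \alpha_i$ and $\beta \equiv \bigcup_j \beta_j$ with $\alpha_i,\beta_j \in \nfP{n+1}$, and then \eqax{3} together with \eqax{2} for symmetry produces $\sigma \equiv \bigvee_{i,j} \tup{\alpha_i=\beta_j}$, each $\tup{\alpha_i=\beta_j} \in \nfD{n+1}$. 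Hence $\varphi_{C'}$ decides $\sigma$ (positively when some $\tup{\alpha_i=\beta_j} \in C'$, negatively when all lie outside $C'$), and consistency of $\chi_{C'} = \varphi \wedge \varphi_{C'}$ forces $\varphi_{C'}$'s verdict to agree with $\varphi$ for every $\sigma \in \nfD{n}$. It is this dependence on the path case at level $n+1$ that makes the induction genuinely interleaved.
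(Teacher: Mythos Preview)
Your proof is correct and follows essentially the same route as the paper: reduce to $m=n+1$, induct on $n$, handle the path case first via \prax{3} and \isaxsix, then use the path result at the same level to handle the node case. The paper compresses the entire node step into the phrase ``solved similarly, using that we know the result holds for path expressions in $\nfP{n+1}$''; your completion argument (expanding $\varphi$ by $\tau\vee\neg\tau$ over $\nfD{n+1}$, then showing via \eqax{3} and \eqax{2} that each $\sigma\in\nfD{n}$ unfolds into a disjunction over $\nfD{n+1}$, so that consistency of $\chi_{C'}$ forces $\varphi_{C'}\leq\varphi$) is exactly what that phrase hides.
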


\begin{proof}
Observe that it suffices to show this result for $m=n+1$.

The basic idea is to proceed by induction over $n$, first proving the result for $\nfP{n}$ and then using that for the case of $\nfN{n}$.

The base case for $\nfP{0}$ is trivial, while the case for $\varphi \in \nfN{0}$ is easy by taking the disjunction of all node expressions in $\nfN{1}$ which contain the same label as $\varphi$ as a conjunct.

Now for the inductive case $\alpha = \nfP{n+1}$, if $\alpha = \epsilon$ then the result is trivial, and otherwise $\alpha = \dow [\psi] \beta$ with $\psi \in \nfN{n}$ and $\beta \in \nfP{n}$. We now use the inductive hypothesis on $\psi$ and $\beta$ and distribute into a union in $\nfP{n+2}$ using \prax{3} and the path axiom schemes for idempotent semirings. %the appropriate axioms.
The case $\varphi \in \nfN{n+1}$ is solved similarly, using that we know the result holds for path expressions in $\nfP{n+1}$.

%Let $A$ be the set of all $\psi\in\nfN{n+1}$ such that $\varphi\wedge\psi$ is consistent. One can show that $\axiomRestr \vdash \equivInstance{\varphi}{\bigvee A}$.
%
%For $\alpha=\dow[\psi^{1}]\dow\dots\dow[\psi^k]\eps\in\nfP{n}$ (and hence $\psi^i\in\nfN{n-i}$), let, for any $i=1\dots k$, $(\psi^i_{j_i})_{j_i}$ be a finite set of $\nfN{n+1-i}$ such that $\axiomRestr \vdash \equivInstance{\psi^i}{\bigvee_{j_i}\psi^i_{j_i}}$. One can prove that 
%$\axiomRestr \vdash \equivInstance{\alpha}{\bigcup_{j_1\dots j_k}\dow[\psi^1_{j_1}]\dow\dots\dow[\psi^k_{j_k}]}$.
\end{proof}

It is easier to prove that every consistent formula is satisfiable over expressions in normal form than over the general case, as we can rely on the particular structure of those expressions. However, these proofs would be of little use if expressions in normal form only represented a small subset of all possible expressions. That is not really the case:  Theorem \ref{thm:normal-form} below will show that all node expressions (and also all path expressions) are provably equivalent to a disjunction of expressions in normal form. 

\begin{example}\label{ex:nf-equiv}
As a simple example of these equivalences, take the language with only three labels $a$, $b$, and $c$, and consider the node expression $\varphi = \lnot a$. Then $\axiomRestr\vdash \equivInstance{\varphi}{(b \wedge \tup{\eps=\eps}) \lor (c \wedge \tup{\eps=\eps}})$, where $b \wedge \tup{\eps=\eps}$ and $c \land \tup{\eps=\eps}$ are node expressions in $\nfN{0}$.

For a slightly more complex example, related with Example \ref{example:normalForm}, take the language with only the labels $a$ and $b$, and consider the node expression 
$$\varphi = \tup{[a] \dow [a] = \dow[b]} \land \lnot \tup{\eps = \dow[a]}.$$
 Then 
$
\axiomRestr\vdash \equivInstance{\varphi}{\psi_1 \lor \psi_2},
$ where
\begin{align*}
\psi_1 &= \psi \wedge \neg\tup{\eps=\dow[b\land\tup{\eps=\eps}]\eps}
\\
\psi_2 &= \psi \wedge \tup{\eps=\dow[b\land\tup{\eps=\eps}]\eps}
\\
\psi &= a \wedge \tup{\eps=\eps} \wedge \tup{\dow[a\land\tup{\eps=\eps}]\eps=\dow[b\land\tup{\eps=\eps}]\eps} \wedge \tup{\dow[a\land\tup{\eps=\eps}]\eps=\dow[a\land\tup{\eps=\eps}]\eps}\wedge
\\
& \quad \wedge \tup{\dow[b\land\tup{\eps=\eps}]\eps=\dow[b\land\tup{\eps=\eps}]\eps} \wedge \lnot  \tup{\eps=\dow[a\land\tup{\eps=\eps}]\eps}
\end{align*}
\end{example}

\newcommand{\thmnormalform}[4]
{
Let $\varphi$ be a consistent node expression of {\rm #1} such that $\md(\varphi)= n$. Then 
$
{#2} \vdash \equivInstance{\varphi}{\bigvee_{i}\varphi_i}
$ for some $(\varphi_i)_{1\leq i\leq k}\in {#3}$.
Let $\alpha$ be a consistent path expression of {\rm #1} such that $\md(\alpha)= n$. Then $
{#2} \vdash \equivInstance{\alpha}{\bigcup_{i}[\varphi_i]\alpha_i}
$
for some $(\alpha_i)_{1\leq i\leq k}\in {#4}$ and $(\varphi_i)_{1\leq i\leq k}\in{#3}$. Furthermore, if $\alpha$ is $\eps$ or starting with $\dow$ then
$
{#2} \vdash \equivInstance{\alpha}{\bigcup_{i}\alpha_i}
$
for some $(\alpha_i)_{1\leq i\leq k}\in {#4}$.  
}
\begin{theorem}[Normal form for $\xpdeq$]\label{thm:normal-form}
\thmnormalform{$\xpdeq$}{\axiomRestr}{\nfN{n}}{\nfP{n}}
\end{theorem}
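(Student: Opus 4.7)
The plan is to proceed by simultaneous structural induction on node and path expressions, while tracking the downward depth $n = \md(\cdot)$ so that the rewritten normal forms land at the correct level. Three supporting tools will be used throughout: Lemma~\ref{lem:m} to \emph{lift} a normal form from level $n$ to level $n' > n$ (so we can merge subexpressions sitting at different depths into a common $\nfN{n}$ or $\nfP{n}$), Lemma~\ref{lem:d} to expand a $\nfD{n}$ data-diamond conjoined with a label into a disjunction in $\nfN{n}$, and Lemma~\ref{lemma:inconsistent} to prune inconsistent cross terms after distributions. In the base cases $a \equiv a \land \tup{\eps=\eps}$ via \eqax{1} and \prax{2}, yielding an element of $\nfN{0}$; $\eps \in \nfP{0}$ is already normal; and $\dow \equiv \dow[\topNode]\eps \equiv \bigcup_{\psi \in \nfN{0}} \dow[\psi]\eps$, using \prax{2}, \isaxfive, \lbax{1} and \prax{3}.

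For the inductive step on node expressions, the cases $\neg\varphi$ and $\varphi \land \psi$ are handled by rewriting via the IH, raising levels with Lemma~\ref{lem:m}, and then doing boolean manipulation. The crucial fact is that, modulo $\axiomRestr$, $\bigvee_{\psi \in \nfN{n}} \psi \equiv \topNode$ (by the exhaustive enumeration over subsets of $\nfD{n}$ together with \lbax{1}); this lets negation be expressed as the disjunction of the complementary normal forms at level $n$, and conjunction is handled by distributing and discarding all cross terms that are distinct $\nfN{n}$ formulas via Lemma~\ref{lemma:inconsistent}. For $\tup{\alpha}$, I apply the path-IH and distribute using \ndax{3}, \ndax{4} and \ndax{2}. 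The delicate case is $\tup{\alpha=\beta}$: using the path-IH we write $\alpha \equiv \bigcup_i [\varphi_i]\alpha_i$ and $\beta \equiv \bigcup_j [\psi_j]\beta_j$ with $\alpha_i,\beta_j \in \nfP{n}$; \eqax{2} and \eqax{3} distribute the equality to $\bigvee_{i,j}\tup{[\varphi_i]\alpha_i = [\psi_j]\beta_j}$; applying \eqax{4} (twice, with \eqax{2}) strips the leading predicates to give $\varphi_i \land \psi_j \land \tup{\alpha_i = \beta_j}$, where $\tup{\alpha_i = \beta_j} \in \nfD{n}$; finally Lemma~\ref{lem:d} converts each surviving conjunct into a disjunction within $\nfN{n}$.

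For path expressions, $[\varphi]$ follows from the node-IH and \prax{3}; $\alpha \cup \beta$ is immediate by combining unions. The composition $\alpha\beta$ requires the most care: after applying the IH and distributing via \isaxsix, each summand has the shape $[\varphi_1]\alpha_1[\varphi_2]\beta_2$ with $\alpha_1,\beta_2$ in path normal form. Since every element of $\nfP{n}$ terminates in $\eps$, I use \isaxfive to absorb $\eps[\varphi_2]$ into $[\varphi_2]$, then \textbf{Der21} to fuse $[\psi_k][\varphi_2]$ into $[\psi_k \land \varphi_2]$ at the junction, and then re-invoke the node-IH on $\psi_k \land \varphi_2$ to put the merged predicate back into $\nfN{n-1}$ form. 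This unfolding proceeds by a subsidiary induction on $\len(\alpha)$, with the length strictly decreasing at each merge step. For the \emph{furthermore} clause, the $\eps$ case is immediate; if $\alpha$ begins with $\dow$, I factor $\alpha = \dow\gamma$, apply the IH to $\gamma$ (obtaining a union with leading predicates), and then commute each $[\varphi_j]$ past the initial $\dow$: by \ndax{3}, \prax{3} and \isaxsix, $\dow \bigcup_j [\varphi_j]\gamma_j \equiv \bigcup_j \dow[\varphi_j]\gamma_j$, which is already of the form $\dow[\cdot]\cdot$ with no outer predicate and sits in $\nfP{n}$ after lifting $\varphi_j$ to $\nfN{n-1}$.

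The main obstacle I expect is the equality case $\tup{\alpha = \beta}$ combined with the composition case for paths: both require that \emph{predicates floating in the middle of a path} be absorbed into the surrounding $\dow[\cdot]$ brackets without disturbing the data-equality structure, and this requires a precise orchestration of \eqax{4}, \eqax{8}, \textbf{Der21}, and Lemma~\ref{lem:d}. The secondary technical hurdle is bookkeeping the induction measure: because rewriting a subexpression can bump its depth, one must repeatedly apply Lemma~\ref{lem:m} so that all disjuncts and all path components agree on a common level $n$, and simultaneously verify that this level matches $\md$ of the top-level expression, so that the produced formulas land in $\nfN{n}$ and $\nfP{n}$ as claimed.
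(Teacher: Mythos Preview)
Your plan matches the paper's proof closely: the same supporting lemmas (\ref{lem:m}, \ref{lem:d}, \ref{lemma:inconsistent}), the same treatment of labels, Booleans, and $\tup{\alpha=\beta}$ via \eqax{2}--\eqax{4} followed by Lemma~\ref{lem:d}. The paper additionally eliminates $\tup{\alpha}$ at the outset via \eqax{1} (rewriting it as $\tup{\alpha=\alpha}$), so no separate case is needed there.

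The one genuine divergence is the induction scheme and the composition case. The paper does \emph{not} use structural induction; it introduces a bespoke complexity measure $\cc$ (with $\cc(\eps)=0$, $\cc([\varphi])=2+\cc(\varphi)$, $\cc(\alpha\beta)=\cc(\alpha)+\cc(\beta)$, etc.) and decomposes a concatenation by its leftmost factor into the three shapes $[\varphi]\beta$, $\dow\beta$, and $(\gamma\cup\delta)\beta$. In the last shape it applies the IH directly to $\gamma\beta$ and $\delta\beta$, which are not subformulas but have strictly smaller $\cc$. Crucially, no mid-path predicate fusion ever arises. Your route---apply the IH to both factors and then merge---is viable, but the step ``re-invoke the node-IH on $\psi_k \land \varphi_2$'' is not licensed by structural induction, since that conjunction is not a subexpression of the original $\alpha\beta$; the ``subsidiary induction on $\len(\alpha)$'' does not rescue this, because the problematic call concerns node expressions, not path length. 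The fix is simply not to use the IH here: $\psi_k$ and $\varphi_2$ are already normal forms, so lift both to a common level with Lemma~\ref{lem:m}, distribute via \prax{3}/\isaxsix, and let Lemma~\ref{lemma:inconsistent} collapse each cross-term to either $\botNode$ or a single $\nfN{m}$ disjunct. With that adjustment (and careful level bookkeeping so the $j$-th bracket lands in $\nfN{n-j}$), your argument goes through and is equivalent to the paper's.
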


\begin{proof}
%It suffices to show the result with the hypothesis $\md(\varphi)=n$ and $\md(\alpha)=n$ (instead of $\leq n$) by Lemma \ref{lem:m}.

We show that if $F$ is a consistent formula of $\xpdeq$ such that $\md(F)= n$, then 
\begin{enumerate}[label={\em \alph*})]
\item  if $F$ is a node expression then $\axiomRestr \vdash \equivInstance{F}{\bigvee_{i}\psi_i}$ for some $(\psi_i)_{1\leq i\leq k}\in\nfN{n}$; 

\item  if $F$ is a path expression then $\axiomRestr \vdash \equivInstance{F}{\bigcup_{i}[\varphi_i]\alpha_i}$ for some $(\alpha_i)_{1\leq i\leq k} \in\nfP{n}$ and $(\varphi_i)_{1\leq i\leq k}\in\nfN{n}$; furthermore, if $F$ is $\eps$ or starts with $\dow$, then $\axiomRestr \vdash \equivInstance{F}{\bigcup_{i}\alpha_i}$ for some $(\alpha_i)_{1\leq i\leq k} \in\nfP{n}$.  %Notar que posiblemente no sea el mismo k entre los dos casos.
\end{enumerate}
\newcommand{\cc}{\textrm{c}}
%Observe that proving this is enough to obtain our desired results, since, by Lemma~\ref{lem:m}, when $m < n$ we can turn expressions in $\nfN{m}$ or $\nfP{m}$ into disjunction or union of expressions in $\nfN{n}$ or $\nfP{n}$, respectively.
%Observe that by Lemma \ref{lem:m} it is enough to show that the disjunctions or unions are in $\nfN{m}$ and $\nfP{m}$ for some $m\leq n$\bigemi{yo diria mas bien que basta ver que F es equivalente a una nf, no me queda muy claro lo de meter las uniones que todavia no existen. Me estoy perdiendo algo?}.
%
Because of \eqax{1}, it is enough to prove the lemma for the fragment of $\xpdeq$ without diamonds of the form $\tup{\alpha}$. 
We proceed by induction on the complexity of $F$, denoted by $\cc$, and defined for the specific purpose of this proof as follows:
\begin{center}
\begin{tabular}{rllrll}
$\cc(a)$ & $=$ & $1$ & $\cc(\eps)$&$=$&$0$ \\
$\cc(\neg\varphi)$ & $=$ & $1+\cc(\varphi)$ & $\cc(\dow)$ & $=$ & $1$  \\ 
$\cc(\varphi\land\psi)$ & $=$ & $1+\cc(\varphi)+\cc(\psi)$ & $\cc(\alpha\beta)$ & $=$ & $\cc(\alpha)+\cc(\beta)$ \\
$\cc(\tup{\alpha=\beta})$ & $=$ & $1+\cc(\alpha)+\cc(\beta)$  & $\cc(\alpha\cup\beta)$ & $=$ & $1+\cc(\alpha)+\cc(\beta)$  \\
&&&  $\cc([\varphi])$ & $=$ & $2+\cc(\varphi)$
\end{tabular}
\end{center}
%
%The complexity $\cc(F)$ can be seen as the number of symbols of $F$, without counting the implicit parenthesis (not mentioned in the grammar)  `$($' and `$)$' in formulas such as $\lnot(\varphi\wedge\psi)$ or $\gamma(\alpha\cup\beta)$. 
Observe that the only node expressions of least complexity (namely, 1) are the labels $a$ or $\tup{\eps =\eps}$, that the only path expressions of least complexity (namely, 0) are those of the form $\eps\dots\eps$, and that the only path expressions of complexity 1 consist of one $\dow$ symbol concatenated with any number of $\eps$ symbols at both sides (that number might be $0$, leaving the path expression $\dow$). Observe also that $\cc(\varphi \wedge \tup{\alpha=\beta})<\cc(\tup{[\varphi]\alpha=\beta})$.

\paragraph{Base case.} If the complexity of $F$ is $0$ then it is the path expression $\eps\dots\eps$, which is provably equivalent to $\eps$ by \isaxfive. Since $\eps \in \nfP{0}$, {\em b}) is immediate. If the complexity of $F$ is $1$ then $F$ is either a node expression which consists of a single label or $\tup{\eps=\eps}$, or $F$ is the path expression $\dow$ (eventually concatenated with $\eps$ but those path expressions are all provably equivalent to $\dow$ by \eqax{1}). If $F=a$  ($a\in \A$), then {\em a}) is immediate, since using $\eqax{1}$ and Boolean reasoning we have $\axiomRestr \vdash F \equiv a\wedge\tup{\eps=\eps}$,  so $a\wedge\tup{\eps=\eps}\in\nfN{0}$, and we finish by applying Lemma~\ref{lem:m}. If $F=\tup{\eps=\eps}$, then {\em a}
) follows from $\eqax{1}$, \lbax{1}, Boolean reasoning and Lemma~\ref{lem:m}. If $F=\dow$, 
 by \isaxfive, \lbax{1} and \prax{2} we have $\axiomRestr \vdash F\equiv \dow[\bigvee_{a\in\A}a]\eps$, and by \prax{3} and \isaxsix, we conclude $\axiomRestr \vdash F\equiv \bigcup_{a\in\A}\dow[a]\eps$ (observe that $\dow[a]\eps \equiv \dow[a \land \tup{\eps =\eps}] \eps \in \nfP{1}$).

\paragraph{Induction.} If the complexity of $F$ is greater than $1$, then $F$ involves some of the operators $\lnot, \land , \tup{\ }, \cup , [\ ]$ or a concatenation different from the ones of complexity $1$ mentioned above. We will perform the inductive step for each of these operators.

If $F = \varphi \land \psi$ or $\lnot\varphi$, we reason as in the propositional case.
If $F = \varphi \land \psi$, we use the inductive hypothesis  on $\varphi$ and $\psi$ to obtain that $\axiomRestr \vdash \equivInstance{F}{\Lor_{i} \varphi_i} \land \Lor_{j} \psi_j$, where $\varphi_i \in \nfN{\md{(\varphi)}}$ for all $i$ and $\psi_j \in \nfN{\md{(\psi)}}$ for all $j$. Actually, we can assume that $\varphi_i, \psi_j \in \nfN{n}$ for all $i,j$ by Lemma \ref{lem:m}.
We now use Boolean distributive laws to prove that $F$ is equivalent to $\Lor_{i, j} (\varphi_{i} \land \psi_{j})$. We then use Lemma~\ref{lemma:inconsistent} plus the consistency of $F$ to remove from that expression redundant conjunctions (if $\varphi_{i} = \psi_{j}$, from $\varphi_i \land \psi_j$ we just keep $\varphi_{i}$) and inconsistent conjunctions (cases where $\varphi_{i} \neq \psi_{j}$).

If $F = \lnot \varphi$, we have by inductive hypothesis  that $\axiomRestr \vdash \equivInstance{\lnot\varphi}{\lnot \Lor_{1 \le i \le m}\varphi_i}$, 
and we can again assume by Lemma~\ref{lem:m} that $\varphi_i \in \nfN{n}$ for all $i$. Expanding each $\varphi_i$ into $a_i \land \Land_{\rho\in C_i} \rho \land \Land_{\rho\in \nfD{n}\setminus C_i} \lnot \rho$ (where $C_i \subseteq \nfD{n}$) and then using Boolean algebra, we have $\axiomRestr \vdash \equivInstance{\lnot\varphi}{\Land_{1 \le i \le m}(\lnot a_i \lor \Lor_{\rho\in C_i} \lnot \rho  \lor \Lor_{\rho\in \nfD{n}\setminus C_i} \rho)}$. 
We now use Boolean distributive laws to get $\axiomRestr \vdash \equivInstance{\lnot\varphi}{\Lor_{\omega \in \Omega} \Land_{1 \le i \le m} \omega(i)}$, where each $\omega(i)$ is either $\lnot a_i$, some $\lnot \rho$ for $\rho \in C_i$, or some $\rho \in \nfD{n} \setminus C_i$, and where $\Omega$ contains all possible strings $\omega$ of length $m$ formed in that way. 
%where $\Omega = (\{\lnot a_1\} \cup \lnot C_1 \cup (\nfD{n} \setminus C_1)) (\{\lnot a_2\} \cup \lnot C_2 \cup (\nfD{n} \setminus C_2)) \dots (\{\lnot a_m\} \cup \lnot C_m \cup (\nfD{n} \setminus C_m))$, that is, a word $\omega \in \Omega$ has as its $i$-th character either $\lnot a_i$, some $\lnot \rho$ for $\rho \in C_i$, or some $\rho \in \nfD{n} \setminus C_i$.   
We now use \lbax{1} to get  %$\axiomRestr \vdash \equivInstance{\lnot\varphi}{\Lor_{\omega \in \Omega} \Land_{1 \le i \le m} \Lor_{a \in \A} a \land \omega(i)}$. 
$\axiomRestr \vdash \equivInstance{\lnot\varphi}{\Lor_{\omega \in \Omega} \Lor_{a \in \A} a \land \Land_{1 \le i \le m} \omega(i)}$. 
Then, we eliminate repetitions in conjunctions of node expressions, and use properties of Boolean algebra %\emi{tampoco entiendo por qu\'e el lemma 11, no veo las formas normales yo todav\'ia :S} \sergio{No s\'e de d\'onde qued\'o, no se usaba. Gracias!}
 to eliminate inconsistencies; also, as each disjunct has some positive occurrences of some $a \in \A$, we can use $\lbax{2}$ and eliminate the (redundant) occurrences of negation of labels. 
 So now we have that $\axiomRestr \vdash \equivInstance{\lnot \varphi} {\Lor_{\omega \in \Omega} \Lor_{a \in \A} \psi_{\omega,a}}$, where each $\psi_{\omega, a}$ is of the form $a \land \Land_{\rho \in C} \rho \land \Land_{\rho \in D} \lnot \rho$, with $C,D \subseteq \nfD{n}$ and $C \cap D = \emptyset$. However we do not necessarily have  $D = \nfD{n} \setminus C$, so these conjunctions may not add up to be of the form of a node expression in $\nfN{n}$: %; if a label is missing, we use \lbax{1} to add it to all expressions, and then one distributes, and this problem does not reappear; 
%if one of the disjuncts lacks expressions in $\nfD{n}$ in order to belong to $\nfN{n}$, we use Lemma~\ref{lem:d} and distribute \emi{esto me lo creo m\'as pero tampoco me molestar\'ia m\'as detalle}. 
to add the conjunctions needed in order to get normal forms, we proceed as in the proof of Lemma~\ref{lem:d} to complete each $a \land \Land_{\rho \in C} \rho \land \Land_{\rho \in D} \lnot \rho$ into $\Lor_{j} (a \land \Land_{\rho \in C_j} \land \Land_{\rho \in \nfD{n} \setminus C_j} \lnot \rho)$, where $C \subseteq C_j$ for all $j$.
Finally, we have obtained a set $(C_k)_{k \in K}$ of subsets of $\nfD{n}$ such that $\axiomRestr \vdash \equivInstance{\lnot\varphi}{\Lor_{k \in K} (a_k \land \Land_{\rho \in C_k} \rho \land \Land_{\rho \in \nfD{n} \setminus C_k} \lnot \rho)}$. 

If $F$ is of the form $\tup{\alpha=\beta}$, we reason as follows. Since $c(\alpha)< c(\tup{\alpha=\beta})$,
by inductive hypothesis, we have $\axiomRestr \vdash \equivInstance{\alpha}{\bigcup_{i}[\varphi_i]\alpha_i}$ for some $\alpha_i \in\nfP{n}$ and $\varphi_i \in\nfN{n}$ (We may have to use also Lemma~\ref{lem:m}, \prax{3}, \isaxsix, and {\bf Der21} of Fact \ref{fact boolean} if $\md(\alpha) < \md(\tup{\alpha=\beta})$). Similarly, we can turn $\beta$ into $\bigcup_{j}[\psi_j]\beta_j$.  
 Using \eqax{3}, \eqax{4}, and \eqax{2}, we obtain $\axiomRestr \vdash \equivInstance{F}{\Lor_{i, j}\varphi_i\land\psi_j\land\tup{\alpha_i = \beta_j}}$. We then use \lbax{1} and Boolean reasoning to get $\axiomRestr \vdash \equivInstance{F}{\Lor_{i, j} \varphi_i \land \psi_j\land (\Lor_{a \in \A} a \land \tup{\alpha_i = \beta_j})}$, and then distribute the $\land$, use Lemma~\ref{lem:d} over each $a \land \tup{\alpha_i = \beta_i}$, and eliminate inconsistencies using Lemma~\ref{lemma:inconsistent} to obtain our desired result.
%\bigsergio{Prueba alternativa sin separaci\'on en casos}\bigemi{si es la de arriba creo que me cierra sacando ese detalle que coment\'e}
%\color{red} If both $\alpha$ and $\beta$ are $\eps$, it is immediate \sergio{Por Lemma~\ref{lem:m}, lbax{1}  y la forma de $\nfN{0}$}. If $\alpha$ starts with $\eps$, the result is immediate by  \isaxfive and inductive hypothesis. The case when $\beta$ starts with $\eps$ is analogous. If $\alpha$ is of the form $[\varphi]\gamma$, then by \eqax{4} we show that $\alpha$ is provably equivalent to $\varphi\wedge\tup{\gamma=\beta}$. Since the complexity of $\varphi\wedge\tup{\gamma=\beta}$ is strictly smaller than the one of $\varphi\wedge\tup{[\varphi]\alpha=\beta}$, we apply the inductive hypothesis and arrive to the desired result. The case when $\beta$ starts with a test $[\varphi]$ is analogous. The remaining case is when both $\alpha$ and $\beta$ start with $\dow$. By inductive hypothesis $\alpha$ is provably equivalent to $\bigcup_i\alpha_i$ for some $(\alpha_i)_i\in\nfP{n}$ and $\beta$ is provably equivalent to $\bigcup_j\beta_j$ for some $(\beta_j)_j\in\nfP{n}$. By \eqax{4}, $F$ is provably equivalent 
%to $\bigvee_{i,
%j}\tup{\alpha_i=\beta_j}$, and hence to $\bigvee_{i,j,a\in\A} (a\wedge \tup{\alpha_i=\beta_j}$). After removing the inconsistent formulas $\tup{\alpha_i=\beta_j}$ in the disjunction, we apply Lemma \ref{lem:d} to obtain the desired result.
%\color{black}\bigemi{lo rojo se ir\'ia, no?}

Suppose that $F$ is a path expression. Without loss of generality, we can assume that $F\neq[\varphi]$ or $F\neq\alpha \cup \beta$ because in those cases, there exist an equivalent expression of the same complexity that is a concatenation ($[\varphi]\epsilon$ or $(\alpha \cup \beta) \eps$ respectively). Then we only need to prove the result for the concatenation in order to conclude the proof. 
Also without loss of generality we may assume that $F$ does not start with $\eps$, since in that case there exist an equivalent expression of the same complexity that doesn't start with $\eps$.
In case $F$ is a concatenation $F=\alpha \beta$ that doesn't start with $\eps$, we split the proof in three different cases according to the form of $\alpha$ (note that, by \isaxfour, we can assume that $\alpha$ is not a concatenation itself).  

If $F$ is of the form $[\varphi]\beta$ then by \isaxfive we may suppose that $\beta$ ends in $\eps$ and 
by {\bf Der21} of Fact~\ref{fact boolean} we may suppose that $\beta$ is either $\eps$ or starts with $\dow$ (notice that $\eps$ does not count in the complexity of a formula and that the expression in the left hand side of {\bf Der21} of Fact~\ref{fact boolean} has a complexity greater than the one in the right hand side).   By inductive hypothesis, $\varphi$ is provably equivalent to $\bigvee_i \varphi_i$ for some $(\varphi_i)_{i} \in\nfN{n}$ (We may have to use Lemma~\ref{lem:m} to increase the degree). Therefore, by \prax{3}, $[\varphi]$ is provably equivalent to $\bigcup_i[\varphi_i]$. 
By inductive hypothesis, $\beta$ is provably equivalent to $\bigcup_j\beta_j$ for some $(\beta_j)_{j} \in\nfP{n}$ (again, we may have to use Lemma~\ref{lem:m}). Hence $F$ is provably equivalent to $(\bigcup_i[\varphi_i])(\bigcup_j\beta_j)$, and by \isaxsix we conclude that $F$ is provably equivalent to $\bigcup_{i,j}[\varphi_i]\beta_j$ as we wanted to show.

If $F$ is of the form $\dow\beta$, we use inductive hypothesis to show that $\beta$ is provably equivalent to $\bigcup_{i}[\varphi_i]\beta_i$ for some $(\beta_i)_{i} \in\nfP{n-1}$ and  $(\varphi_i)_{i}\in\nfN{n-1}$.
By \isaxsix, we conclude that $F$ is provably equivalent to $\bigcup_{i}\dow[\varphi_i] \beta_i$, and $\dow[\varphi_i] \beta_i\in\nfP{n}$ as we wanted to show.

Finally, if $F$ is of the form $(\gamma\cup\delta)\beta$, then, by \isaxsix, $F \equiv (\gamma \beta)\cup (\delta \beta)$. The result follows from inductive hypothesis for $\gamma \beta$ and $\delta \beta$ (as usual, we may have to use Lemma~\ref{lem:m}, \prax{3}, \isaxsix, and {\bf Der21} of Fact~\ref{fact boolean} to increase the degree). 
%
%
% we split the proof in two different cases according to the complexity of $\beta$. If $c(\beta)=0$, then $F\equiv \gamma \cup \delta$ and the result follows from inductive hypothesis  for $\gamma$ and $\delta$ (as usual, we may have to use Lemma~\ref{lem:m}, \prax{3}, \isaxsix, and {\bf Der21} to increase the degree). If $c(\beta) > 0$, then $c(\gamma \cup \delta) < c(F)$. The result follows from inductive hypothesis  for $\gamma \cup \delta$ and $\beta$, \isaxthree, \isaxfive and \isaxsix.
% % , and if $F$ is of the form $\eps\alpha$, we apply \isaxfive and use the previous cases. 
\end{proof}

\subsection{Completeness for node and path expressions}\label{subsec:completeness}
%!TEX root = main.tex

In this section we show that for node expressions $\varphi$ and $\psi$ of $\xpdeq$, the equivalence $\varphi\equiv\psi$ is derivable from the axiom schemes of Table~\ref{tab:axiomseq} if and only if $\varphi$ is $\xpdeq$-semantically equivalent to $\psi$. We also show the same result for path expressions of $\xpdeq$. 

We first introduce the main lemma of this section, and then continue to its consequences; as the proof of this lemma is very extensive, we postpone it to Section~\ref{construccion}.
\begin{lemma}\label{lem:construction-restr}
Any node expression $\varphi\in \nfN{n}$ is satisfiable.
\end{lemma}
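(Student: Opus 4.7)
The proof will be by induction on $n$, building an explicit data tree by a canonical-model-style construction. The base case $n=0$ is immediate: for $\varphi = a \wedge \tup{\eps=\eps} \in \nfN{0}$, take the one-node data tree whose root is labeled $a$ (the only conjunct, $\tup{\eps=\eps}$, is a tautology).

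For the inductive step, fix $\varphi = a \wedge \bigwedge_{\sigma \in C}\sigma \wedge \bigwedge_{\sigma \in \nfD{n+1}\setminus C}\lnot\sigma \in \nfN{n+1}$. I would construct a data tree $\Tt$ with root $x$ with $\mathit{label}(x)=a$ as follows: for every positive conjunct $\tup{\alpha_i = \beta_i} \in C$ with $\alpha_i = \dow[\psi_1^i]\cdots\dow[\psi_{k_i}^i]\eps \in \nfP{n+1}$ and analogously $\beta_i$, attach two disjoint chains of descendants $y_1^i,\ldots,y_{k_i}^i$ and $z_1^i,\ldots,z_{m_i}^i$ below $x$. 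At each intermediate position $j$, the filter node expression $\psi_j^i$ lives in $\nfN{n+1-j}\subseteq \nfN{n}$ (after applying Lemma~\ref{lem:m}), is consistent by Lemma~\ref{nextPathIsConjunct} (applied iteratively down the chain), and therefore can be realized by an induction-hypothesis subtree rooted at $y_j^i$, into which we graft the remainder of the chain as a distinguished branch --- this grafting is consistent because by Lemma~\ref{nextPathIsConjunct} the tautology-like conjunct $\tup{\dow[\psi_{j+1}^i]\cdots\eps = \dow[\psi_{j+1}^i]\cdots\eps}$ is already a conjunct of $\psi_j^i$.

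The partition $\pi$ is then the smallest equivalence relation that puts $y_{k_i}^i$ in the same class as $z_{m_i}^i$ for every positive $\tup{\alpha_i=\beta_i}\in C$, puts $x$ in the class of $y_{k_i}^i$ whenever $\tup{\eps=\alpha_i}\in C$ (and symmetrically), and otherwise leaves every node in a singleton class. By this choice, every positive conjunct of $\varphi$ is satisfied at $x$ by construction.

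The hard part, and the main obstacle, is verifying the negative conjuncts: for every $\lnot\tup{\alpha=\beta}\in \nfD{n+1}\setminus C$ that is a conjunct of $\varphi$, we must rule out witnesses in $\Tt$. Suppose for contradiction $(x,y)\in\dbracket{\alpha}^\Tt$ and $(x,z)\in\dbracket{\beta}^\Tt$ with $[y]_\pi = [z]_\pi$. Since every equivalence class was introduced either (i) at the top level to realize some $\tup{\alpha'=\beta'}\in C$, or (ii) inside an inductively-built subtree rooted at some intermediate $y_j^i$ to realize a positive conjunct of $\psi_j^i$, I would use Lemma~\ref{lemma:inconsistent-path} to force the paths through $y$ and $z$ to coincide with $\alpha'$ and $\beta'$ on their lower portion, and then apply axiom \eqax{7} (propagation of an inner equality through a prefixing filter, $\tup{\gamma[\tup{\alpha'=\beta'}]}\leq \tup{\gamma\alpha'=\gamma\beta'}$) repeatedly along the ancestors, together with \eqax{6} and \eqax{8} for the edge cases involving $\eps$, to derive $\axiomRestr\vdash \varphi \leq \tup{\alpha=\beta}$, contradicting that $\lnot\tup{\alpha=\beta}$ is a conjunct of $\varphi$. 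The delicate bookkeeping required to locate the ``lowest'' class where the offending equality is introduced and to lift it via \eqax{7} up to a conjunct of $\varphi$ is precisely where I expect the technical weight of the argument to lie.
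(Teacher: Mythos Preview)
Your construction has a concrete error in the partition: you write ``otherwise leaves every node in a singleton class,'' but this destroys the satisfaction of $\psi_j^i$ at the chain node $y_j^i$. That node is meant to be the root of an induction-hypothesis subtree $\Tt^{\psi_j^i}$; every positive conjunct $\tup{\mu=\nu}$ of $\psi_j^i$ with $\mu\neq\nu$ needs two nodes in the same class inside that subtree, and singletons kill all such witnesses. What you must do instead is retain the inductive partition $\pi^{\psi_j^i}$ on each grafted subtree, keep those partitions pairwise disjoint across different subtrees and chains, and only then perform your endpoint merges.

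Even with that fix, your justification of the grafting step (``consistent because by Lemma~\ref{nextPathIsConjunct}\ldots'') is incomplete: that lemma only guarantees the required \emph{positive} diamond for the chain continuation is a conjunct of $\psi_j^i$; it says nothing about preserving the \emph{negative} conjuncts of $\psi_j^i$ once a fresh child $y_{j+1}^i$ is attached. You need a bottom-up induction along the chain: assuming $\psi_{j+1}^i$ already holds at $y_{j+1}^i$, any new witness for $\tup{\mu=\nu}$ at $y_j^i$ must have both endpoints in the same grafted subtree $\Tt^{\psi_\ell^i}$ for some $\ell>j$ (since the inductive partitions are disjoint), whence Lemma~\ref{lem:eq_in_psi} gives the corresponding equality as a conjunct of $\psi_\ell^i$, and \eqax{7} lifts it to $\tup{\mu=\nu}$, contradicting consistency of $\psi_j^i$. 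The paper takes a different route that avoids this layer entirely: it never grafts chains, but hangs one complete copy of $\Tt^\psi$ per component of a positive conjunct and locates the witness for $\alpha$ \emph{inside} that copy. The cost is a separate key lemma (Lemma~\ref{Lemma:ClavePlusMinus}): the naturally occurring $\alpha$-witness inside $\Tt^\psi$ may share its data class with witnesses for forbidden paths $\beta_i$, so the paper duplicates a subtree with fresh data values to manufacture a witness whose class is clean. Your fresh chain endpoints sidestep that lemma, but you pay for it with the grafting-preservation argument you have not yet supplied.
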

Based on the above lemma, we arrive to the next theorem, which is the main result of this section:
\newcommand{\thmcompletenessnode}[2]
{
\ 

\begin{enumerate}
\item Let $\varphi$ and $\psi$ be node expressions of {\rm #1}. Then ${#2}\vdash\varphi\equiv\psi$ iff $\models\varphi\equiv\psi$.

\item Let $\alpha$ and $\beta$ be path expressions of {\rm #1}. Then ${#2}\vdash\alpha\equiv\beta$ iff $\models\alpha\equiv\beta$.
\end{enumerate}
}

\begin{theorem}[Completeness of $\xpdeq$]\label{thm:sat-restr}
\thmcompletenessnode{$\xpdeq$}{\axiomRestr}
\end{theorem}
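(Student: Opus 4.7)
The plan is to use contraposition in both parts, leveraging the normal form theorem (Theorem \ref{thm:normal-form}) and the satisfiability result (Lemma \ref{lem:construction-restr}). The direction $(\Rightarrow)$ is immediate from the soundness result (Proposition \ref{prop:corr-restr}), so the work is in $(\Leftarrow)$.

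For node expressions, suppose $\axiomRestr \not\vdash \varphi \equiv \psi$. Since $\axiomRestr$ subsumes Boolean algebra (Fact \ref{fact boolean}), this is equivalent to one of $\varphi \wedge \neg\psi$, $\neg\varphi \wedge \psi$ being $\axiomRestr$-consistent; by symmetry, assume it is $\varphi \wedge \neg\psi$. Theorem \ref{thm:normal-form} rewrites this provably as a disjunction $\bigvee_i \rho_i$ with each $\rho_i \in \nfN{n}$, and the disjunction is non-empty because $\varphi \wedge \neg\psi$ is consistent (an empty disjunction would be $\botNode$). Each $\rho_i$ is consistent by definition of $\nfN{n}$, so Lemma \ref{lem:construction-restr} provides a pointed tree $\Tt, x$ with $\Tt, x \models \rho_i$. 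Soundness of the established derivation $\varphi \wedge \neg\psi \equiv \bigvee_i \rho_i$ then gives $\Tt, x \models \varphi \wedge \neg\psi$, exhibiting $\not\models \varphi \equiv \psi$.

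For path expressions, suppose $\axiomRestr \not\vdash \alpha \equiv \beta$. Using Theorem \ref{thm:normal-form} together with Lemma \ref{lem:m} to align levels, obtain $\axiomRestr \vdash \alpha \equiv \bigcup_{(\varphi, \alpha') \in I} [\varphi] \alpha'$ and $\axiomRestr \vdash \beta \equiv \bigcup_{(\psi, \beta') \in J} [\psi] \beta'$, with $I, J \subseteq \nfN{n} \times \nfP{n}$ for a common $n$; by \isaxseven we may discard disjuncts provably equal to $\botPath$, and by \isaxone, \isaxtwo, \isaxthree we may deduplicate, so WLOG $I$ and $J$ consist only of pairs yielding consistent disjuncts. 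If $I = J$, then the two unions would be provably equal via the idempotent-semiring axioms, contradicting the hypothesis; so choose $(\varphi^*, \alpha^*) \in I \setminus J$.

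The crux is constructing a two-pointed witness. By Lemma \ref{lem:construction-restr} take $\Tt, x \models \varphi^*$; since $[\varphi^*]\alpha^*$ is consistent, Lemma \ref{nextPathIsConjunct} ensures $\tup{\alpha^* = \alpha^*}$ is a conjunct of $\varphi^*$, which by \eqax{1} yields $\tup{\alpha^*}$ at $x$, hence some $y$ with $\Tt, x, y \models \alpha^*$, so $\Tt, x, y \models \alpha$. If also $\Tt, x, y \models \beta$, then $\Tt, x, y \models [\psi]\beta'$ for some $(\psi, \beta') \in J$, and the satisfiable conjunctions $\varphi^* \wedge \psi$ at $x$ and $\alpha^* \wedge \beta'$ at $(x, y)$ are then consistent by soundness, forcing $\varphi^* = \psi$ by Lemma \ref{lemma:inconsistent} and $\alpha^* = \beta'$ by Lemma \ref{lemma:inconsistent-path}, contradicting $(\varphi^*, \alpha^*) \notin J$. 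The main obstacle I anticipate is the careful bookkeeping needed to reduce the normal-form unions to a canonical set of consistent pairs, so that the symmetric-difference argument delivers a genuine counterexample rather than being spoiled by redundant or $\botPath$-equivalent disjuncts.
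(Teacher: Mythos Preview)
Your proposal is correct; both items go through essentially as you outline. One small omission: in the path-expression part you invoke Theorem~\ref{thm:normal-form} on $\alpha$ and $\beta$ without first noting that at least one of them must be consistent (if both were inconsistent you would have $\axiomRestr\vdash\alpha\equiv\botPath\equiv\beta$); this is a one-line fix and your subsequent argument handles an empty $J$ without change.

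For item~1 your route is genuinely more direct than the paper's. The paper puts $\varphi$ and $\psi$ separately into normal form at a common level~$n$, argues their sets of disjuncts must coincide (otherwise a stray disjunct is satisfiable by Lemma~\ref{lem:construction-restr} while being inconsistent with every disjunct on the other side by Lemma~\ref{lemma:inconsistent}), and concludes $\axiomRestr\vdash\varphi\equiv\psi$. You instead exploit the Boolean layer to reduce $\not\vdash\varphi\equiv\psi$ to the consistency of $\varphi\wedge\neg\psi$ (or its symmetric twin), apply the normal form theorem once to that single formula, and invoke Lemma~\ref{lem:construction-restr} on any resulting disjunct. This avoids the matching-of-disjuncts step entirely and is the cleaner presentation of ``every consistent node expression is satisfiable $\Rightarrow$ completeness''. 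For item~2 no such shortcut is available (path expressions lack Boolean negation), and your symmetric-difference argument on $I$ versus $J$ is essentially the paper's proof; the ``bookkeeping'' you flag as the main obstacle is in fact routine, since \isaxseven, \isaxone--\isaxthree\ let you normalize the unions, and Lemmas~\ref{lemma:inconsistent} and~\ref{lemma:inconsistent-path} give exactly the uniqueness you need to conclude $(\varphi^*,\alpha^*)\in J$ from a joint witness.
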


\begin{proof}
Let us show item 1. Soundness follows from Proposition~\ref{prop:corr-restr}. 

For completeness, suppose $\models\varphi\equiv\psi$. 
Now assume that $\varphi$ is consistent and $\psi$ is not. 
On the one hand, by Theorem~\ref{thm:normal-form}, there is $n$ such that $\varphi$ is provably equivalent to $\bigvee_{1 \le i \le k}\varphi_i$, for $\varphi_i\in \nfN{n}$. %By Fact~\ref{fac:vee-consist}, $\varphi_i$ is consistent for some $i$. 
%$\varphi_1$ is consistent and 
By Lemma~\ref{lem:construction-restr}, to be proved next,  we have that in particular $\varphi_1$ (and hence $\varphi$) is satisfiable. On the other hand, by Proposition~\ref{prop:corr-restr}, $\psi$ is unsatisfiable, and this contradicts the fact that $\models\varphi\equiv\psi$.
This shows that if $\varphi$ is consistent then so is $\psi$. Symmetrically, one can show that if $\psi$ is consistent, then so is $\varphi$. Therefore, either $\varphi$ and $\psi$ are consistent or $\varphi$ and $\psi$ are inconsistent. In the latter case, we trivially have $\axiomRestr\vdash\varphi\equiv\psi$.

In case $\varphi$ and $\psi$ are consistent, by Theorem~\ref{thm:normal-form} and Lemma~\ref{lem:m}, there is $n$ and  node expressions $\varphi'$ and $\psi'$ which are disjunctions of node expressions in $\nfN{n}$ such that $\axiomRestr\vdash\varphi\equiv\varphi'$ and $\axiomRestr\vdash\psi\equiv\psi'$. 

Suppose that $\varphi'$ contains a disjunct $\varphi''$ which is not a disjunct of $\psi'$. By Lemma~\ref{lem:construction-restr}, $\varphi''$ is satisfiable in some data tree $\+T$. By Lemma~\ref{lemma:inconsistent}, for any disjunct $\psi''$ of $\psi'$ we have that $\varphi''\wedge\psi''$ is inconsistent, and by Proposition~\ref{prop:corr-restr}, unsatisfiable. Hence $\psi'$ is not satisfiable in $\+T$, and so $\not\models\varphi\equiv\psi$, a contradiction. The case when $\psi'$ contains a disjunct which is not a disjunct of $\varphi'$ is analogous.

Then $\varphi'$ and $\psi'$ are identical, modulo reordering of disjunctions, and so $\axiomRestr\vdash\varphi'\equiv\psi'$ which implies $\axiomRestr\vdash\varphi\equiv\psi$.

\bigskip

For item 2, soundness follows from Proposition~\ref{prop:corr-restr}. For completeness, suppose $\models\alpha\equiv\beta$. 

Suppose that $\alpha$ is consistent and $\beta$ is not. 
On the one hand, by Theorem~\ref{thm:normal-form},  there is $n$ such that $\alpha$ is provably equivalent to $\bigcup_{1 \le i \le k}[\varphi_i]\alpha_i$, with $\alpha_i\in \nfP{n}$ and $\varphi_i\in\nfN{n}$. %By Fact~\ref{fac:uinion-consist}, $[\varphi_i]\alpha_i$ is consistent for some $i$ and $\tup{\alpha_i}$ is a conjunct of $\varphi_i$
 Furthermore, we can assume that $[\varphi_1]\alpha_1$ is consistent (if it is not, we simply remove it from the disjunction) and so $\tup{\alpha_1=\alpha_1}$ is a conjunct of $\varphi_1$ by Lemma \ref{nextPathIsConjunct}. By Lemma~\ref{lem:construction-restr}, the node expression $\varphi_1$ is satisfiable. Then, since $\tup{\alpha_1=\alpha_1}$ is a conjunct of $\varphi_1$, the path expression $[\varphi_1]\alpha_1$ is satisfiable, and so $\alpha$ is satisfiable. On the other hand, by Proposition~\ref{prop:corr-restr}, $\beta$ is unsatisfiable, and this contradicts the fact that $\models\alpha\equiv\beta$.
This shows that if $\alpha$ is consistent then so is $\beta$. Symmetrically, one can show that if $\beta$ is consistent, then so is $\alpha$. Therefore, either $\alpha$ and $\beta$ are consistent or $\alpha$ and $\beta$ are inconsistent. In the latter case, we trivially have $\axiomRestr\vdash\alpha\equiv\beta$.

Suppose both $\alpha$ and $\beta$ are consistent.
By Theorem~\ref{thm:normal-form} plus Lemma \ref{lem:m} we have that there is $n$ and path expressions $\alpha_1\dots\alpha_k,\beta_1\dots\beta_\ell$ in $\nfP{n}$ and node expressions 
$\varphi_1\dots\varphi_k,\psi_1\dots\psi_\ell\in\nfN{n}$
such that $\axiomRestr\vdash\alpha\equiv\bigcup_{1 \le i \le k}[\varphi_i]\alpha_i$ and $\axiomRestr\vdash\beta\equiv\bigcup_{1 \le j \le \ell}[\psi_j]\beta_j$. Furthermore, we can assume that $\tup{\alpha_i=\alpha_i}$ is a conjunct of $\varphi_i$ for $i=1\dots k$ and $\tup{\beta_j=\beta_j}$ is a conjunct of $\psi_j$ for $j=1\dots\ell$.
%Hence by Lemma~\ref{lem:construction-restr} any $\alpha_i$ and any $\beta_j$ is satisfiable.  \emi{esto no deber\'ia ir despu\'es de decir que las de $N_n$ son satisfiable?}
%
%Observe that for any $i$, since $\varphi_i\in \nfN{n}$, by Lemma~\ref{lem:construction-restr}, $\varphi_i$ is satisfiable. Since $\tup{\alpha_i}$ is a conjunct of $\varphi_i$ (otherwise $\alpha$ could not be consistent), we have that $\alpha_i$ is also satisfiable. Equivalently one can show that $\beta_j$ is satisfiable for any $j$.

Now, suppose that 
\begin{equation}\label{eqn:aux}
[\varphi_i]\alpha_i\notin\{[\psi_1]\beta_1,\dots,[\psi_\ell]\beta_\ell\}
\end{equation}
 for some $i$. Since $\varphi_i\in \nfN{n}$, by Lemma~\ref{lem:construction-restr}, there is a data tree $\Tt=(T,\pi)$ with root $r$ such that $\Tt,r\models\varphi_i$. Since $\tup{\alpha_i=\alpha_i}$ is a conjunct of $\varphi_i$, we have that there is $y\in T$ such that $\Tt,r,y\models\alpha_i$.
 
Let us show that $\Tt,r,y\not\models[\psi_j]\beta_j$ for any $j\leq\ell$. Fix any $j$. By \eqref{eqn:aux}, we have that $\varphi_i\neq\psi_j$ or $\alpha_i\neq\beta_j$. In the first case,  $\Tt,r,y\not\models[\psi_j]\beta_j$ follows from Lemma~\ref{lemma:inconsistent} and Proposition~\ref{prop:corr-restr} (in particular $\Tt,r\not\models\psi_j$). If $\varphi_i=\psi_j$, we have $\alpha_i\neq\beta_j$ and $\Tt,r,y\not\models[\psi_j]\beta_j$ follows from Lemma~\ref{lemma:inconsistent-path}. 

So we have that $\Tt,r,y \models \alpha$ but $\Tt,r,y\not\models \beta$, %This, together with Proposition~\ref{prop:corr-restr}\emi{creo que ya lo pregunt\'e en el pasado pero para qu\'e se usa la prop 6 ac\'a?}\santi{no lo veo. Creo que esta mal. Sergio?}
a contradiction with our hypothesis that $\models\alpha\equiv\beta$. Hence for any $i$ there is $j$ such that $[\varphi_i]\alpha_i=[\psi_j]\beta_j$. Analogously one can show that for any $j$ there is $i$ 
such that $[\psi_j]\beta_j=[\varphi_i]\alpha_i$.
Then $\bigcup_{1\le i\leq k}[\varphi_i]\alpha_i$ and $\bigcup_{1\le  j\leq \ell}[\psi_j]\beta_j$ are identical, modulo reordering of unions, and so $\axiomRestr\vdash\alpha\equiv\beta$.
\end{proof}

%\newcommand{\thmcompletenesspath}[2]
%{
%Let $\alpha$ and $\beta$ be path expressions of {\rm #1}. Then ${#2}\vdash\alpha\equiv\beta$ iff $\models\alpha\equiv\beta$.
%}

%\begin{theorem}[Completeness of $\xpdeq$ path expressions]\label{thm:sat-restr-path}
%\thmcompletenesspath{$\xpdeq$}{\axiomRestr}
%\end{theorem}
%
%\begin{proof}

All we need to complete the argument is to prove Lemma~\ref{lem:construction-restr}. Doing this involves the rest of the section.

\subsubsection{Canonical model}\label{construccion}
%!TEX root = main.tex

In order to prove Lemma~\ref{lem:construction-restr}, we construct, recursively in $n$ and for every $\varphi\in \nfN{n}$, a data tree $\Tt^\varphi=(T^\varphi,\pi^\varphi)$ such that $\varphi$ is satisfiable in $\Tt^\varphi$. 

For the base case, if $\varphi\in \nfN{0}$ and $\varphi=a \wedge \tup{\eps =\eps}$ with $a \in\A$,  simply define the data tree $\Tt^\varphi=(T^\varphi,\pi^\varphi)$ where $T^\varphi$ is a tree which consists of the single node $x$ with label $a$, and $\pi^\varphi = \{\{x\}\}$. 

% \bigemi{lo que esta en rojo es nuevo, es un primer boceto de la intuicion que piden sobre la construccion. Hay que pulirlo pero queria empezar a tipear a ver como quedaba. Habria que ver como se pueden adaptar los dibujos para usarlos en esta parte tambien como sugiere uno de los reviewers, creo que sumaria pero no es tan facil que sirvan aca y luego de introducir toda la notacion tecnica. Ahora dejo poque estoy quemada :S}
% \bigsanti{El texto esta muy bien. Es claro y explica bien las dificultades. Me gusto mucho. Cambie algunas partes y puse varios comentarios (hay muchos, asi que algunos se pisan con otros; mira el latex porque cuando compilas no se ven todos). Tambien hay cosas de estilo: 1) hablemos de ``endpoints'' en vez de ``ending points''. 2) confunde un poco decir ``path $\alpha$'' porque parece el nombre de un camino; yo prefiero ``path satisfying $\alpha$'', pero puede ser un abuso de notacion, si es que lo aclaramos en algun lado. 3) hablemos de ``merge'' clases 
% 
% Creo que se complementaria bien con dibujitos nuevos. Despues pensamos cuales podrian ser}

Now let $\varphi\in \nfN{n+1}$. Since $\varphi$ is a conjunction as in Definition~\ref{def:cons-ne}, it is enough to guarantee that the following conditions hold (we now observe that these conditions are enough because of \eqax{2}, but we usually avoid these observations of symmetry):

\begin{enumerate}[label=(C\arabic*)]
 \item\label{label} If $a\in \A$ is a conjunct of $\varphi$, then the root $r^\varphi$ of $\Tt^\varphi$ has label $a$. 
 
 \item\label{epsiloneqpsialpha} If $\tup{\eps = \dow [\psi] \alpha }$ is a conjunct of $\varphi$, then there is a child $r^{\bf v}$ (where $\bv=(\psi,\alpha)$; we will introduce this notation in time to formalize the construction) of the root $r^\varphi$ of $\Tt^\varphi$ at which $\psi$ is satisfied and a node $x^{\bv}$ with the same data value as $r^\varphi$
 such that $\Tt^{\varphi}, r^\bv, x^{\bv} \models \alpha$.  
 
 \item\label{psialphaeqrhobeta} If $\tup{\dow [\psi]\alpha = \dow [\rho] \beta}$ is a conjunct of $\varphi$, then there are two children $r^\bu_1$, $r^\bu_2$ of the root $r^\varphi$ of $\Tt^\varphi$ at which $\psi$ and $\rho$ are satisfied respectively, and there are nodes $x^\bu$ and $y^\bu$ with the same data value such that $\Tt^{\varphi}, r^\bu_1, x^{\bu} \models \alpha$ and $\Tt^{\varphi}, r^\bu_2, y^{\bu} \models \beta$.

%  \sergio{En estos puntos, me resulta un poco raro hablar de `paths', cuando lo que importa son solo 2 puntos).} %Notar que la correspondencia entre el recorrido y los dos puntos extremos solo sucede en $\xpd$ sobre \'arboles (en un DAG ya no sabemos qu\'e camino es el que satisface
%  \bigemi{Estoy de acuerdo con el comentario de Sergio. Fijense si les gusta como quedo el primero (el C2) y si les gusta cambio todos. Tambi\'en estoy de acuerdo con el comentario de Santi m\'as abajo pero me parece que va a ser irrelevante si cambiamos porque no voy a mencionar m\'as los caminos}
%  \bigsanti{OK, me gusta}
 \item\label{noepsiloneqpsialpha} If $\lnot \tup{\eps = \dow [\psi] \alpha }$ is a conjunct of $\varphi$, then for each child $z$ of the root $r^\varphi$ of $\Tt^\varphi$ at which $\psi$ is satisfied, if $x$ is a node such that $\Tt^\varphi, z, x \models \alpha$, then the data value of $x$ is different than the one of $r^\varphi$.
 
 \item\label{nopsialphaeqrhobeta} If $\lnot \tup{\dow [\psi]\alpha = \dow [\rho] \beta}$ is a conjunct of $\varphi$, then for each children $z_1, z_2$ of the root $r^\varphi$ of $\Tt^\varphi$ at which $\psi$ and $\rho$ are satisfied respectively, if $w_1, w_2$ are nodes such that $\Tt^\varphi, z_1, w_1 \models \alpha$ and $\Tt^\varphi, z_2, w_2 \models \beta$, then the data values of $w_1$ and $w_2$ are different.
 
\end{enumerate}

Since the construction of the canonical model requires some technical notation that might hinder the understanding of the ideas behind it, we will begin with an intuitive description of the construction.

\subsubsection*{Insight into the construction}

The idea to achieve all the previous conditions is to incrementally build a tree such that it satisfies at its root conditions~\ref{label}, \ref{noepsiloneqpsialpha}, and \ref{nopsialphaeqrhobeta}, then also~\ref{epsiloneqpsialpha} (without spoiling any previous conditions), and finally also~\ref{psialphaeqrhobeta}.

 First we start with a root $r^{\varphi}$ labeled $a$, where $a$ is the label present in $\varphi$ (so that condition~\ref{label} is satisfied). 
At this point in the construction, as we only have one node, conditions~\ref{noepsiloneqpsialpha} and~\ref{nopsialphaeqrhobeta} are trivially satisfied. On the contrary,~\ref{epsiloneqpsialpha} and~\ref{psialphaeqrhobeta} might not be satisfied, and require a positive action (i.e.\ changing the current model) to make them true.
We want to add witnesses that guarantee the satisfaction of~\ref{epsiloneqpsialpha} and~\ref{psialphaeqrhobeta}, and we will achieve this by the use of the inductive hypothesis to construct new trees that we will hang as children of $r^{\varphi}$.
However, adding witnesses jeopardizes the satisfaction of~\ref{noepsiloneqpsialpha} and~\ref{nopsialphaeqrhobeta}, so we need to do it carefully enough.

First we add witnesses in order to satisfy condition~\ref{epsiloneqpsialpha}. If $\psi \in \nfN{n}$, by inductive hypothesis, there exists a tree $\Tt^{\psi}$ such that $\psi$ is satisfied at $\Tt^{\psi}$. Also, if $\tup{\eps=\dow [\psi]\alpha}$ is a conjunct of $\varphi$, by the consistency of $\varphi$, Lemma~\ref{nextPathIsConjunct} and the inductive hypothesis, there is a pair of nodes satisfying $\alpha$ in $\Tt^{\psi}$ and starting at its root. In this case, we will hang a copy of $\Tt^{\psi}$ (or perhaps a slightly modified copy of it constructed in order not to spoil condition \ref{noepsiloneqpsialpha}) as a child of $r^{\varphi}$ and merge the equivalence class of $r^{\varphi}$ to the equivalence class of the endpoint $x^\bv$ of a specially chosen pair of nodes satisfying $\alpha$ and beginning at the root of $\Tt^{\psi}$ (see Figure~\ref{fig:idea1}(a)). 
\begin{figure}[h!]
   \begin{center}
   \begin{tabular}{c@{\hskip .5in}c@{\hskip .5in}c}
   \includegraphics[scale=0.25]{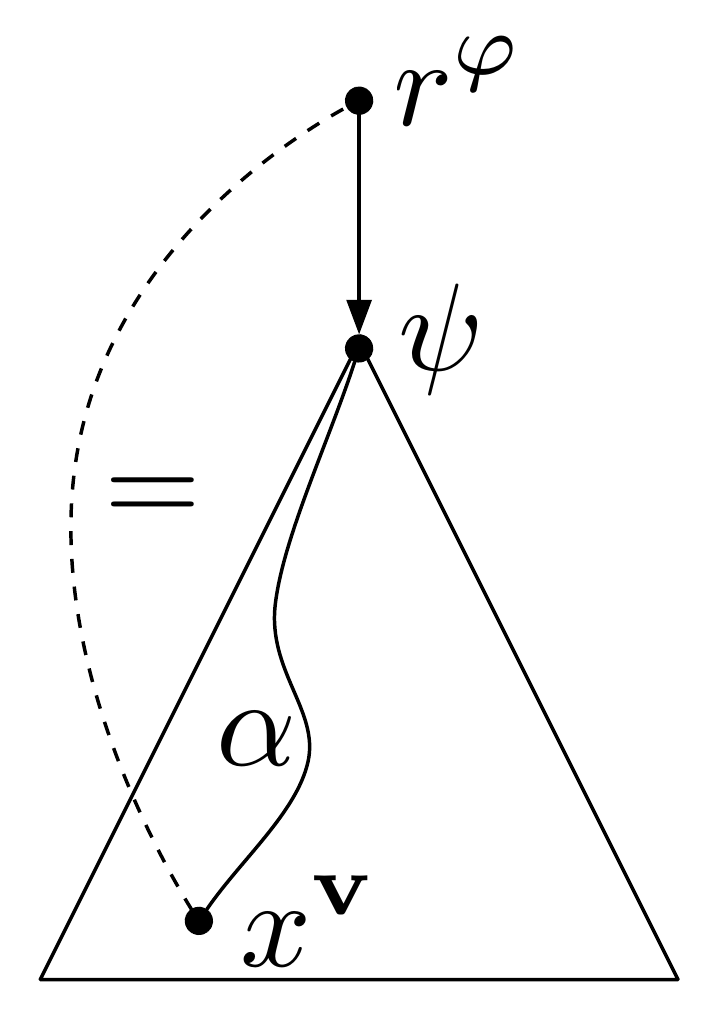}
& \includegraphics[scale=0.25]{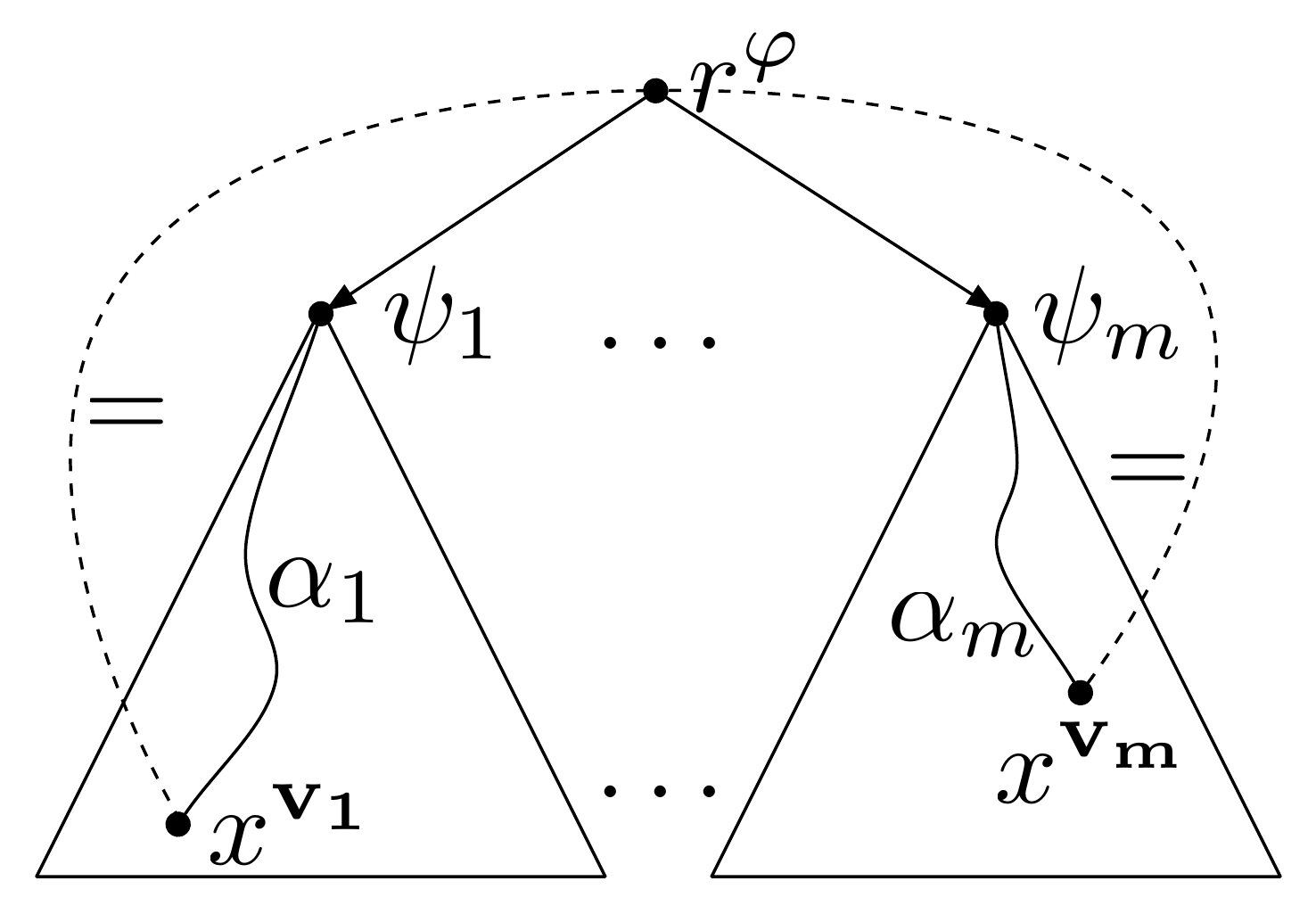}
& \includegraphics[scale=0.25]{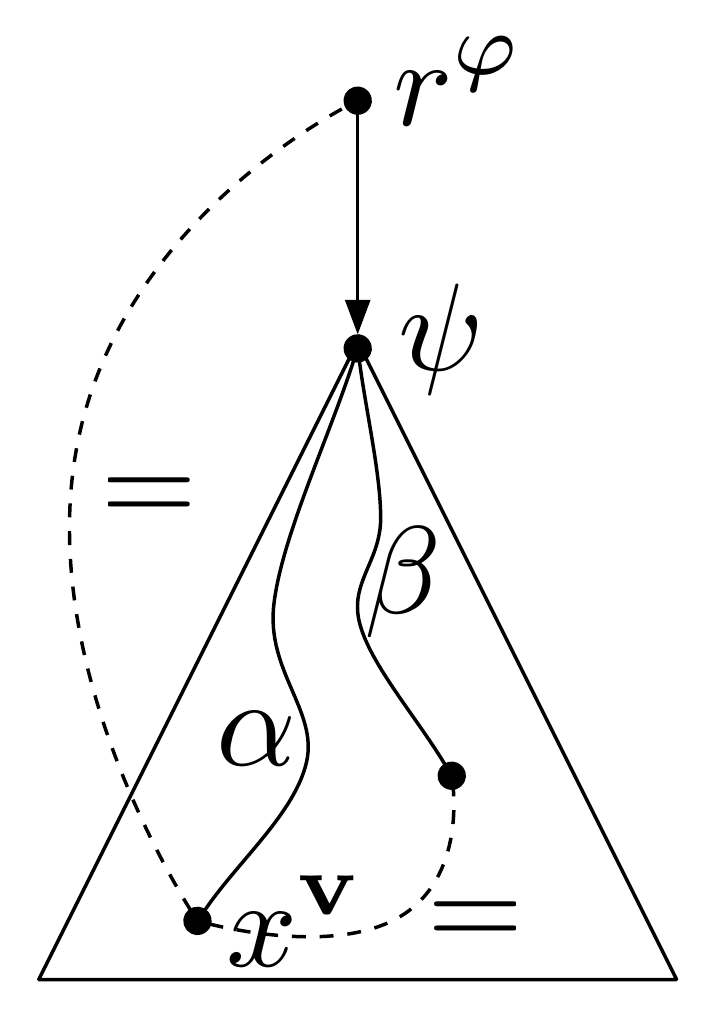}\\
   (a)&(b)&(c)
   \end{tabular}
   \end{center}
   \caption{(a) A witness for $\tup{\eps=\dow[\psi]\alpha}$; (b) we repeat the process of (a) for each conjunct $\tup{\eps=\dow[\psi_1]\alpha_1},\dots,\tup{\eps=\dow[\psi_m]\alpha_m}$ of $\varphi$; (c) by adding a witness for $\tup{\eps=\dow[\psi]\alpha}$, we may be creating an unwanted witness for $\tup{\eps=\dow[\psi]\beta}$.}\label{fig:idea1}
\end{figure}
This is the only merging required; other classes in $\Tt^{\psi}$ remain disjoint from the previous constructed part of $\Tt^{\varphi}$. In this way, we will guarantee condition~\ref{epsiloneqpsialpha} (see Figure~\ref{fig:idea1}(b)).
% \santi{en la figura aparecen $\alpha_1\dots\alpha_m$ pero nada de esto aparece mencionado en la explicacion. Lo mismo con los x}\emi{me parece que queda claro con el caption} 
Since the other equivalence classes of $\Tt^{\psi}$ will remain disjoint from the rest of the tree $\Tt^{\varphi}$ all along the construction and since 
% \bigemi{si te refer\'is a lo anterior a tu comentario, quise decir que las clases que no son las del $x^{\bf v}$ de ese \'arbol no van a cambiar. La clase de un $x\in T^\psi$ que no est\'a en la clase de $x$ va a seguir siendo la misma en la partici\'on final, se entiende? Si otras clases de ese \'arbol se te empezaran a pegar con clases de otros, perd\'es control de lo que pasa con los conyuntos negativos}
% \bigsanti{eso lo entiendo. Mi 
% comentario es mucho mas superficial: ni en C2 ni en esta explicacion se mencionan $m$ conyuntos. Me pregunto si se entiende la figura b. Tal vez en la caption podemos poner for  conjuncts $\tup{\eps=\dow[\psi_1]\alpha_1},\dots,\tup{\eps=\dow[\psi_m]\alpha_m}$ of $\varphi$? Ahi lo cambie. Esta bien?} 
two different normal forms cannot be satisfied at the same point (see Lemma~\ref{lemma:inconsistent} plus Proposition \ref{prop:corr-restr}), the only way in which this process could spoil condition~\ref{noepsiloneqpsialpha} is that there is $\beta \in \nfP{n}$ such that $\lnot \tup{\eps = \dow [\psi] \beta}$ is a conjunct of $\varphi$ and a pair of nodes satisfying $\beta$ in $\Tt^{\psi}$, starting at its root and ending in a point with the same data value as $x^\bv$ (see Figure~\ref{fig:idea1}(c)). 
But Lemma~\ref{Lemma:ClavePlusMinus} ensures that (maybe with changes to $\Tt^\psi$) we can choose $x^\bv$ to avoid this situation. Then, since we only add nodes to the equivalence class of the root $r^{\varphi}$ by this process, the only way in which we could spoil condition~\ref{nopsialphaeqrhobeta} is if $\varphi$ has conjuncts $\tup{\eps=\dow [\psi] \mu}$, $\tup{\eps=\dow [\rho] \delta} $ and $\lnot \tup{\dow [\psi] \mu= \dow [\rho] \delta}$ for some $\psi, \rho \in \nfN{n}$, $\mu, \delta \in \nfP{n}$. But this is clearly unsatisfiable and so our axioms should tell us that it is inconsistent (see \eqax{6}).  

We  now proceed to add witnesses in order to satisfy condition~\ref{psialphaeqrhobeta}. By an argument similar to the one given for condition~\ref{epsiloneqpsialpha}, if $\tup{\dow[\psi] \alpha =\dow [\rho] \beta}$ is a conjunct of $\varphi$, there are, by inductive hypothesis, trees $\Tt^{\psi}$ and $\Tt^{\rho}$ at which $\psi$ and $\rho$ are satisfied and pairs of nodes  satisfying $\alpha$ and $\beta$ starting at their respective roots. We will hang a copy of each of those trees (or perhaps slightly modified copies of them) as children of $r^{\varphi}$ and we will merge the equivalence classes (in $\Tt^{\psi}$ and $\Tt^{\rho}$) of the ending points $x^\bu, y^\bu$ of a specially chosen pair of nodes satisfying $\alpha$ (and starting at the root of $\Tt^{\psi}$) and a specially chosen pair of nodes satisfying $\beta$ (and starting at the root of $\Tt^{\rho}$) as mentioned before (see Figure~\ref{fig:idea2}(a)). %
\begin{figure}[h!]
   \begin{center}
   \begin{tabular}{c@{\hskip .5in}c}
   \includegraphics[scale=0.25]{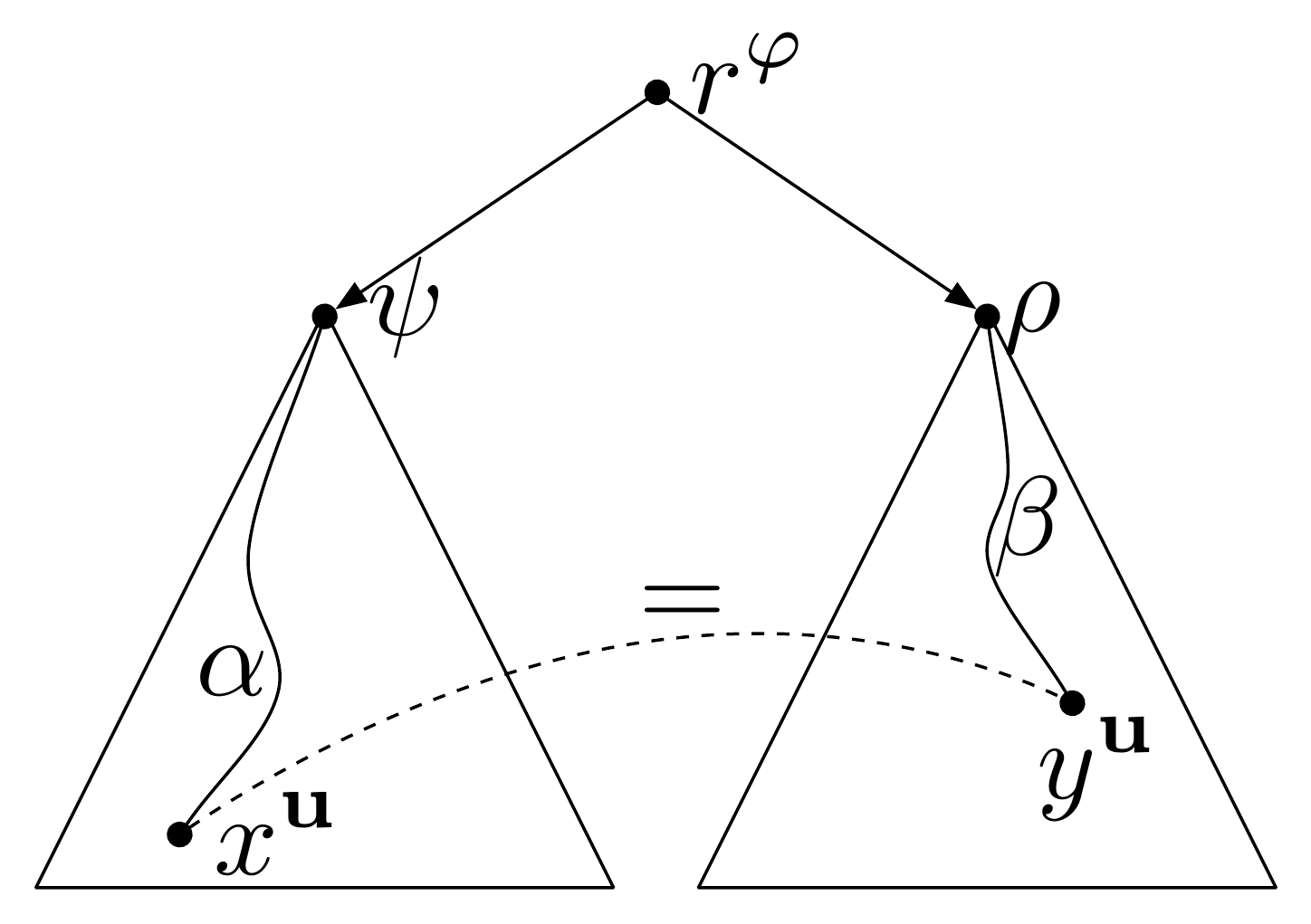}
& \includegraphics[scale=0.25]{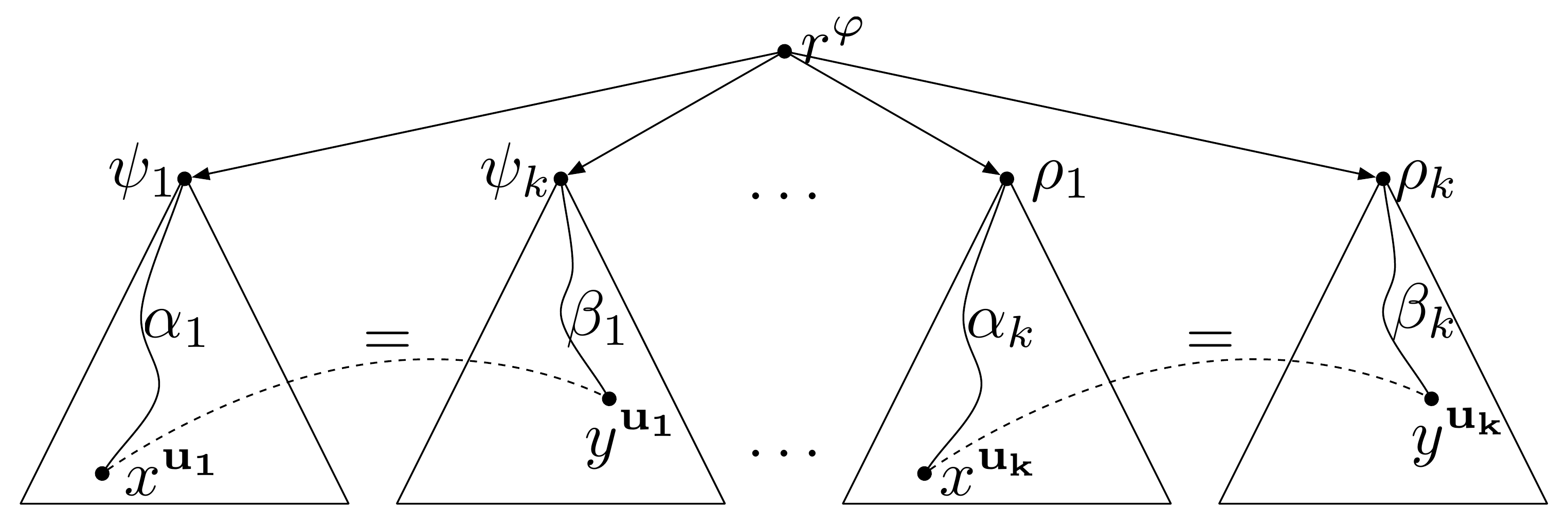}
\\
   (a)&(b)
   \\
  \multicolumn{2}{c}{\includegraphics[scale=0.25]{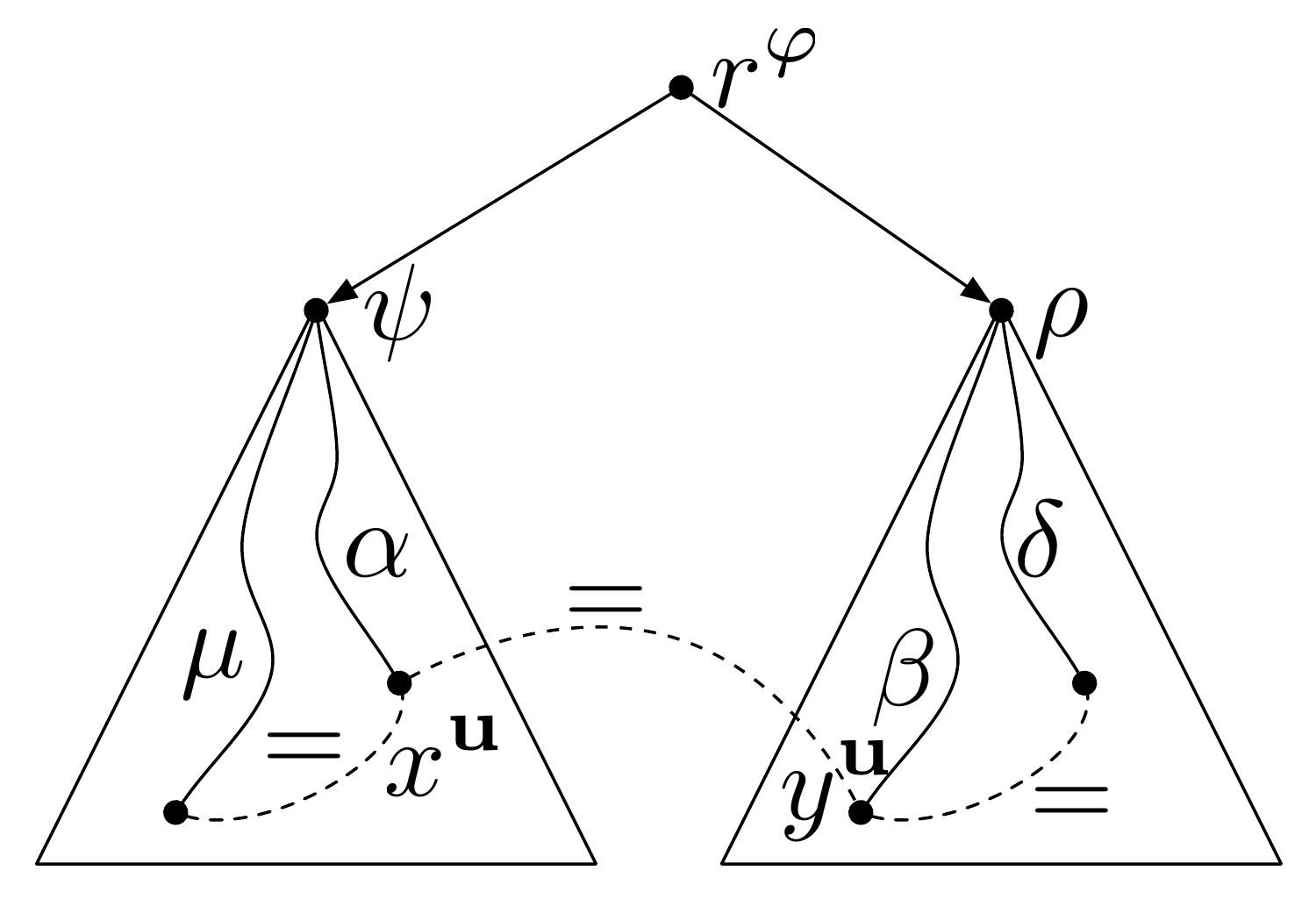}
  }
  \\
  \multicolumn{2}{c}{(c)}
   \end{tabular}
   \end{center}
   \caption{(a) A witness for $\tup{\dow[\psi]\alpha=\dow[\rho]\beta}$; (b) we repeat the process of (a) for each conjunct $\tup{\dow[\psi_1]\alpha_1=\dow[\rho_1]\beta_1},\dots,\tup{\dow[\psi_k]\alpha_k=\dow[\rho_k]\beta_k}$ of $\varphi$; (c) by adding a witness for $\tup{\dow[\psi]\alpha=\dow[\rho]\beta}$, we may be creating an unwanted witness for $\tup{\dow[\psi]\mu=\dow[\rho]\delta}$.}\label{fig:idea2}
\end{figure}
% \santi{cambie el caption para darle nombre a las cosas}
% \sergio{Figura 3: Si es posible, agregar a la parte $b$ el nombre de los caminos hasta cada $x^{v_i}$, llamados $\alpha_i$. Tambi\'en agregar puntos suspensivos entre los \'arboles}\santi{hecho. revisar}
% \sergio{Figura 4: Ojo en el $b$. Tiene que ser $\alpha_i, \beta_i$ acordes a cada par}\santi{hecho. revisar}
%
%
Note that all the classes in $\Tt^{\psi}$ and $\Tt^\rho$ remain disjoint from the previous constructed part of $\Tt^{\varphi}$. In this way, we guarantee condition~\ref{psialphaeqrhobeta} (see Figure~\ref{fig:idea2}(b)).
% \santi{mismo comentario sobre la repeticion. Se entiende la figura aunque no se mencione en el texto?} \emi{misma respuesta :P} 
Since we are not adding any nodes to the class of $r^\varphi$, it is clear that we cannot spoil condition~\ref{epsiloneqpsialpha} by performing this procedure. With a similar argument as the one given before, the only way in which we can spoil condition~\ref{nopsialphaeqrhobeta} is that there are $\psi, \rho \in \nfN{n}$, $\alpha,\beta, \mu, \delta \in \nfP{n}$ such that $\tup{\dow[\psi]\alpha =\dow[\rho]\beta}$ and $\lnot \tup{\dow [\psi] \mu= \dow [\rho] \delta}$ are conjuncts of $\varphi$, a pair of nodes satisfying $\mu$ beginning at the root of  $\Tt^{\psi}$ and ending in a point with the same data value as $x^\bu$, and a pair of nodes satisfying $\delta $ beginning at the root of $\Tt^{\rho}$ and ending in a point with the same data value as $y^\bu$ (see Figure~\ref{fig:idea2}(c)). But Lemma~\ref{Lemma:ClavePlusMinus} ensures that we can choose $x^\bu$ and $y^\bu$ to 
avoid this situation.

\subsubsection*{Formalization}

In order to formalize the construction described above, we introduce the following key lemma:

%The following lemma is key to the proof of the completeness result in \S~\ref{subsec:completeness}.

\begin{lemma}\label{Lemma:ClavePlusMinus} %Version parecida a {lema clave plus}
 Let $\psi_0\in \nfN{n}$, $\alpha, \beta_1,\dots,\beta_m\in \nfP{n}$. Suppose that there exists a tree $\Tt^{\psi_0}=(T^{\psi_0},\pi^{\psi_0})$ with root $r^{\psi_0}$ such that $\Tt^{\psi_0}, r^{\psi_0}\models \psi_0$ and for all $i=1,\dots,m $ there exists $\gamma_i\in \nfP{n+1}$ such that $\tup{\gamma_i=\dow[\psi_0]\alpha} \land \lnot \tup{\gamma_i=\dow[\psi_0]\beta_i}$ is consistent. Then there exists a tree $\widetilde{\Tt^{\psi_0}}=(\widetilde{T^{\psi_0}},\widetilde{\pi^{\psi_0}})$ with root $\widetilde{r^{\psi_0}}$ and a node $x$ such that:
\begin{itemize} 
 \item $\widetilde{\Tt^{\psi_0}}, \widetilde{r^{\psi_0}}\models \psi_0$, 
 \item $\widetilde{\Tt^{\psi_0}}, \widetilde{r^{\psi_0}}, x\models \alpha$, and 
 \item $[x]_{\widetilde{\pi^{\psi_0}}}\neq [y]_{\widetilde{\pi^{\psi_0}}}$ for all $y$ such that $\widetilde{\Tt^{\psi_0}}, \widetilde{r^{\psi_0}}, y\models \beta_i$ for some $i=1,\dots,m$.
\end{itemize}

%Let $\psi_0\in N_n$, $\alpha, \beta_1,\dots,\beta_m\in P_n$. Suppose that there exists a tree $\Tt^{\psi_0}=(T^{\psi_0},\pi^{\psi_0})$ with root $r^{\psi_0}$ such that $\Tt^{\psi_0}, r^{\psi_0}\models \psi_0$ and for all $i=1,\dots,m, $ % there exists $\gamma_i\in P_{n+1}$ such that 
%%$\tup{\beta_i = \alpha}$ is a conjunct of $\psi_0$\bigemi{mismo comentario que en la construccion, creo que esta condicion no deberia usarse para nada} 
%$\tup{\eps=\dow[\psi_0]\alpha} \land \lnot \tup{\eps=\dow[\psi_0]\beta_i}$ is consistent. Then there exists a tree $\widetilde{\Tt^{\psi_0}}=(\widetilde{T^{\psi_0}},\widetilde{\pi^{\psi_0}})$ with root $\widetilde{r^{\psi_0}}$ and a node $x$ such that:
%\begin{itemize}
% \item $\widetilde{\Tt^{\psi_0}}, \widetilde{r^{\psi_0}}\models \psi_0$,
% \item $\widetilde{\Tt^{\psi_0}}, \widetilde{r^{\psi_0}}, x\models \alpha$,
% \item $[x]_{\widetilde{\pi^{\psi_0}}}\neq [y]_{\widetilde{\pi^{\psi_0}}}$ for all $y$ such that $\widetilde{\Tt^{\psi_0}}, \widetilde{r^{\psi_0}}, y\models \beta_i$ for some $i=1,\dots,m$.
%\end{itemize}

\end{lemma}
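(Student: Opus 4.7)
I propose proving the lemma by induction on $m$, handling one forbidden path $\beta_i$ at a time via a tree surgery on $\Tt^{\psi_0}$ guided by the corresponding $\gamma_i$-certificate.

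As a preliminary observation, the consistency of any $\tup{\gamma_i=\dow[\psi_0]\alpha}$ (which follows from the hypothesis whenever $m\geq 1$), combined with \eqax{2}, \eqax{5}, and \ndax{4}, yields the consistency of the path expression $[\psi_0]\alpha$; Lemma~\ref{nextPathIsConjunct} then forces $\tup{\alpha=\alpha}$ to be a conjunct of $\psi_0$, so $\Tt^{\psi_0}$ satisfies $\tup{\alpha}$ and in particular has $\alpha$-endpoints from $r^{\psi_0}$. This supplies the starting point for the construction.

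For the inductive step, given $m$ forbidden paths, first apply the inductive hypothesis to $\beta_1,\dots,\beta_{m-1}$ to obtain an intermediate tree $\Tt'$ satisfying $\psi_0$ with an $\alpha$-endpoint $x'$ whose class is disjoint from all $\beta_i$-endpoint classes for $i<m$. If $[x']$ in $\Tt'$ is already disjoint from the $\beta_m$-endpoint classes, take $\widetilde{\Tt^{\psi_0}}=\Tt'$. Otherwise, I would analyze the normal form of $\gamma_m\in\nfP{n+1}$---either $\gamma_m=\eps$, or $\gamma_m=\dow[\sigma]\gamma'$ with $\sigma\in\nfN{n}$ and $\gamma'\in\nfP{n}$---and combine this structural analysis with \eqax{4}, \eqax{6}, \eqax{7}, and \eqax{8} to distill a consistent $\nfN{n}$-level sub-formula. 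Applying Lemma~\ref{lem:construction-restr} to it at level $n$ (available by inner induction, since this lemma is proved alongside Lemma~\ref{lem:construction-restr} at level $n+1$) yields a tree fragment which, when carefully grafted into $\Tt'$, separates the $\alpha$-endpoint from every $\beta_m$-endpoint class.

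The technical heart of the argument, and the main obstacle, is ensuring that this final graft does not reintroduce coincidences with the $\beta_i$-endpoint classes for $i<m$. Addressing this requires choosing the graft site in $\Tt'$ and the grafted fragment so that the new class identifications introduced are disjoint from the equivalence classes of any prior $\beta_i$-endpoints. I expect a further case analysis on the shape of the graft, together with \eqax{2}, \eqax{6}, and \eqax{8} to coordinate the simultaneous equalities, to suffice, though the bookkeeping is delicate.
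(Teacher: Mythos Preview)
Your proposal has a genuine gap, and it also misses the key simplifying idea of the paper.

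The paper does \emph{not} induct on $m$. Instead it handles all the $\beta_i$ simultaneously with a single, uniform surgery on $\Tt^{\psi_0}$: writing $\alpha=\dow[\psi_1]\dots\dow[\psi_{j_0}]\eps$, one picks any witness $z$ of $\alpha$ from the root (which exists because $\tup{\alpha=\alpha}$ is a conjunct of $\psi_0$, as you correctly note), and then hangs a \emph{fresh copy} of the subtree $T^{\psi_0}\restr{z}$ as a sibling of $z$, with \emph{all data values fresh}. The root $x$ of this copy is the desired node. Because the new subtree has a partition disjoint from the rest of the tree, the only way some $\beta_i$-endpoint $y$ could satisfy $[y]=[x]$ is if $y$ lies inside the fresh subtree; that forces $\beta_i$ to extend $\alpha$, say $\beta_i=\alpha\delta_i$. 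One then shows, directly from the consistency of $\tup{\gamma_i=\dow[\psi_0]\alpha}\land\lnot\tup{\gamma_i=\dow[\psi_0]\beta_i}$ via \eqax{8} (and {\bf Der21}), that $\lnot\tup{\eps=\delta_i}$ must be a conjunct of $\psi_{j_0}$, ruling out $[y]=[x]$. The structure of $\gamma_i$ is never analyzed; only its role in the consistency hypothesis matters.

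Your induction on $m$ creates exactly the obstacle you flag as ``the technical heart'': after grafting to separate $x$ from the $\beta_m$-endpoints, you must argue that the earlier separations from $\beta_1,\dots,\beta_{m-1}$ survive. You acknowledge this is ``delicate'' but give no mechanism to ensure it, and your proposed case analysis on the shape of $\gamma_m$ is not the right lever---nothing in the axioms relates the $\gamma_i$ to one another, so there is no obvious way to coordinate $m$ separate grafts. The paper's single fresh-copy construction sidesteps this entirely: one surgery, one new equivalence class, and the freshness of the data values immediately gives separation from \emph{every} $\beta_i$-endpoint outside the new subtree, uniformly in $i$. I recommend abandoning the induction on $m$ and adopting the fresh-duplication idea.
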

%Observe that this lemma does not give a model satisfying all node expressions $\tup{\gamma_i=\dow[\psi_0]\alpha} \land \lnot \tup{\gamma_i=\dow[\psi_0]\beta_i}$, only a pointed model $\widetilde{\Tt^{\psi_0}}, \widetilde{r^{\psi_0}}$ that ensures a semantical consequence of those formulas: there must be a node $x$ such that $\widetilde{r^{\psi_0}}, x$ satisfies $\alpha$ and such that if $y$ is a node such that $\widetilde{r^{\psi_0}}, y$ satisfies some $\beta_i$, then $x$ and $y$ are in different partitions. \sergio{Puse esto tratando de explicar hacia d\'onde apunta el Lemma; quiz\'as es demasiado obvio.}\bigemi{no puedo decir que me encante :S Como que el principio me crea una confusi\'on que no ten\'ia y el final es repetir lo que dice arriba. No veo el punto pero pod\'es convencerme (o a Santi y me ganan por mayor\'ia :P)}
\begin{proof}

Suppose that $\alpha=\dow[\psi_1]\dots\dow[\psi_{j_0}]\eps$, where $\psi_k \in \nfN{n-k} $ for all $k=1,\dots,j_0$.
If $j_0 = 0$ (that is, $\alpha = \eps$), then it suffices to take $\widetilde{\Tt^{\psi_0}} = \Tt^{\psi_0}$ and $x = r^{\psi_0}$.  %let $y$ and $i$ be such that $\Tt^{\psi_0},r^{\psi_0}, y\models \beta_i$; we need to see that $[x]_{\pi^{\psi_0}}\neq [y]_{\pi^{\psi_0}}$.
We only need to show that then $\lnot \tup{ \eps = \beta_i }$ is a conjunct of $\psi_0$ for all $i$.
%$m=0$, that is, there is no $\beta$ such that $\tup{\beta = \eps}$ is a conjunct of $\psi_0$ and $\tup{\eps=\dow[\psi_0] \eps} \land \lnot \tup{\eps=\dow[\psi_0]\beta}$ is consistent. \bigemi{ahora entiendo por que ponias esa hipotesis pero me parece mas elegante no ponerla y decir que en este caso para cualquier $\beta_i$ como en el enunciado, en $\psi_0$ vale $\lnot \tup{\eps =\beta_i}$ y por lo tanto se tiene lo pedido}
Indeed, assuming instead that $\tup{\eps = \beta_i }$  is a conjunct of $\psi_0$ for some $i$, we have
%\begin{center}
  \begin{align*}
  \tup{\gamma_i=\dow[\psi_0]\alpha} \land \lnot \tup{\gamma_i=\dow[\psi_0]\beta_i} & \equiv\tup{\gamma_i=\dow[\psi_0\land \tup{ \eps= \beta_i }]} \land \lnot \tup{\gamma_i=\dow[\psi_0]\beta_i} & \tag{Hypothesis $\tup{\eps = \beta_i }$  is a conjunct of $\psi_0$} \\
  & \leq  \tup{\gamma_i=\dow[\psi_0]\beta_i} \land \lnot \tup{\gamma_i=\dow[\psi_0]\beta_i}\tag{{\bf Der21} (Fact~\ref{fact boolean}) \& \eqax{8}} \\
  & \equiv \botNode \tag{Boolean}
  \end{align*}
%\end{center}
%
%
% $\axiomRestr \vdash \equivInstance{\tup{\gamma_i=\dow[\psi_0] \eps}} {\tup{\gamma_i=\dow[\psi_0 \land \tup{\beta_i = \eps}]}}$, and by the fact that $\axiomRestr \vdash \equivInstance{\dow[\psi_0 \land \tup{\beta_i = \eps}]}{\dow[\psi_0][\tup{\beta_i = \eps}]}$ ({\bf Der21} of \cite[Table 6]{cateLM10}) and \eqax{8}, we have that $\axiomRestr \vdash \leqInstance{\tup{\gamma_i=\dow[\psi_0 \land \tup{\beta_i = \eps}]}} {\tup{\gamma_i = \dow [\psi_0] \beta_i}}$, 
%
which is a contradiction with our assumption that $\tup{\gamma_i=\dow[\psi_0]\alpha} \land \lnot \tup{\gamma_i=\dow[\psi_0]\beta_i}$ is consistent, by standard propositional reasoning.
%gives us an inconsistence with $ \lnot \tup{\eps=\dow[\psi_0]\beta}$.

If $j_0 >0$, to define $\widetilde{\Tt^{\psi_0}}=(\widetilde{T^{\psi_0}},\widetilde{\pi^{\psi_0}})$ we modify the tree $\Tt^{\psi_0}=(T^{\psi_0},\pi^{\psi_0})$.
From the consistency of $\tup{\gamma_i=\dow[\psi_0]\alpha}$ for some $i$, by Lemma~\ref{nextPathIsConjunct}, we conclude that $\tup{\alpha=\alpha}$ is a conjunct of $\psi_0$. Hence there is $z\in T^{\psi_0}$, $z \neq r^{\psi_0}$, such that $\Tt^{\psi_0},r^{\psi_0},z\models\alpha$. 

Before proceeding to complete the proof of this case, we give a sketch of it.
We prove that we cannot have a witness for $\beta_i$  with the same data value as $z$ in the subtree $T^{\psi_0} \restr{z}$. Intuitively this is because, in that case, $\alpha$ would be a prefix of $\beta_i$, say $\beta_i=\alpha\delta$, and $\tup{\eps =\delta}$ would be a conjunct of $\psi_{j_0}$. Then $\tup{\gamma_i=\dow[\psi_0]\alpha} \land \lnot \tup{\gamma_i=\dow[\psi_0]\beta_i}$ would be unsatisfiable (and thus it should be inconsistent)  for any choice of $\gamma_i$ which is a contradiction. But our hypotheses are not enough to avoid having a witness for $\beta_i$ in the class of $z$ outside $T^{\psi_0} \restr{z}$; thus we need to change the tree in order to achieve the desired properties. We replicate the subtree $T^{\psi_0} \restr{z}$ but using fresh data values (different from any other data value already present in $\Tt^{\psi_0}$), see Figure \ref{fig:dibu-one}. It is clear that in this way, the second and the third conditions will be satisfied by the root of this new subtree. 
The 
first condition will also remain true because the positive conjuncts will 
remain valid since we are not suppressing any nodes, and the negative ones will not be affected either because every node we add has a fresh data value. 

Now we formalize the previous intuition.
Call $p$ the parent node of the aforementioned $z \in T^{\psi_0}$ and define $\widetilde{T^{\psi_0}}$ as $T^{\psi_0} \sqcup T^x$, where we define $T^x$ as $T^{\psi_0} \restr{z}$, and in  $\widetilde{T^{\psi_0}}$ the root of $T^x$ is a child $x$ of $p$. Define $\widetilde{\pi^{\psi_0}}$ as $\pi^{\psi_0} \sqcup \pi^{z}$; observe that the data values of $T^x$ differ from all those of the rest of $\widetilde{T^{\psi_0}}$ (see Figure~\ref{fig:dibu-one}). 

\begin{figure}[h!]
   \begin{center}
   \includegraphics[scale=0.25]{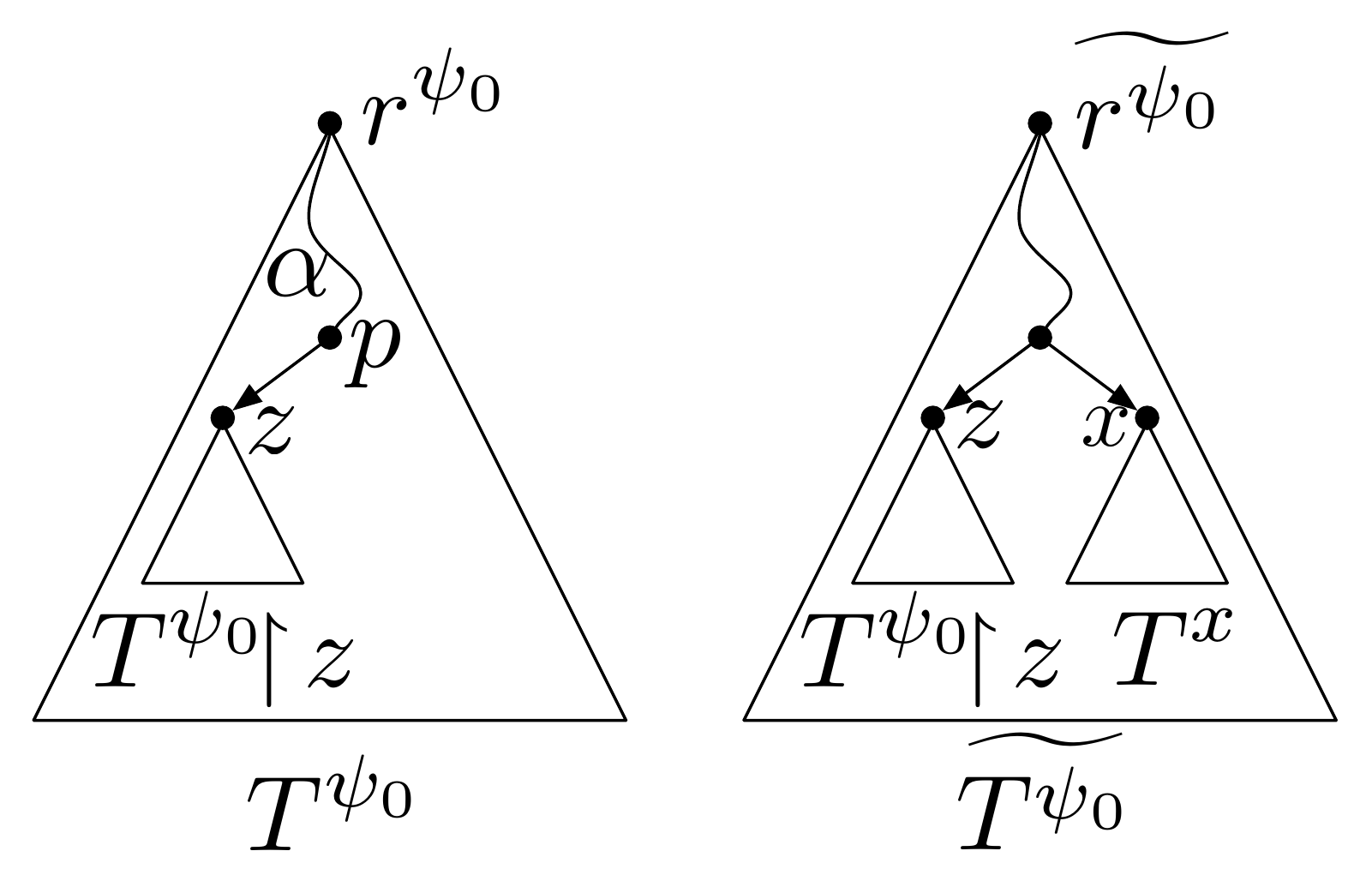}
   \end{center}
   \caption{$T^x = T^{\psi_0} \restr{z}$ is a new subtree with disjoint data values to the rest of $\widetilde{\Tt^{\psi_0}}$. The new node $x$ satisfies $\widetilde{\Tt^{\psi_0}}, \widetilde{r^{\psi_0}}, x \models \alpha$.} \label{fig:dibu-one}
\end{figure}

We now check that this new tree $\widetilde{T^{\psi_0}}$ satisfies $\psi_0$ at its root $\widetilde{r^{\psi_0}}$.
We prove by induction that $x_j$, the $j$-th ancestor of $x$ (namely $x_j\childp{j}x$, and we let $x_0:=x$), satisfies $\widetilde{\Tt^{\psi_0}}, x_j \models \psi_{j_0-j}$. This proves both that $\widetilde{\Tt^{\psi_0}}, \widetilde{r^{\psi_0}}\models \psi_0$ and that $\widetilde{\Tt^{\psi_0}}, \widetilde{r^{\psi_0}}, x\models \alpha$.
For the base case $j=0$, the result is straightforward from Proposition~\ref{prop:local}: $T^x$  is a copy of $T^{\psi_0} \restr{z}$, with $z$ satisfying $\psi_{j_0}$.
For the inductive case, assume that the result holds for $x_0, \dots, x_j$. We want to see that it holds for $x_{j+1}$. To do this, we verify that every conjunct of $\psi_{j_0-j-1}$ is satisfied at $x_{j+1}$:

\begin{itemize}
 \item If the conjunct is a label, it is clear that $x_{j+1}$ still has that label in $\widetilde{\Tt^{\psi_0}}$, as it has not been changed by the construction.
 
 \item  If the conjunct is of the form $\tup{\mu_1 = \mu_2}$, then it must still hold in $\widetilde{\Tt^{\psi_0}}$ by inductive hypothesis plus the fact that our construction did not remove nodes.
 
 \item If the conjunct is of the form $\lnot \tup{\mu_1 = \mu_2}$, 
we observe that, by inductive hypothesis plus the fact that the data classes of nodes in $T^x$ are disjoint with those of the rest of $\widetilde{\Tt^{\psi_0}}$, then $\tup{\mu_1 = \mu_2}$ can only be true in $x_{j+1}$ if there are witnesses $y_1, y_2$ in the same equivalence class in the new subtree $T^x$ such that $\widetilde{\Tt^{\psi_0}}, x_{j+1}, y_1 \models \mu_1$ and $\widetilde{\Tt^{\psi_0}}, x_{j+1}, y_2 \models \mu_2$. 
In that case, we have that 
$$
\mu_1 = \dow{[\psi_{j_0-j}] \dow \dots \dow [\psi_{j_0}] \hat{\mu_1}}\mbox{\quad and \quad}\mu_2 = \dow{[\psi_{j_0-j}] \dow \dots \dow [\psi_{j_0}] \hat{\mu_2}}
$$
 for some $\hat{\mu_1}, \hat{\mu_2}$, and that $\widetilde{\Tt^{\psi_0}}, x_0, y_1 \models \hat{\mu_1}$, $\widetilde{\Tt^{\psi_0}}, x_0, y_2 \models \hat{\mu_2}$. Therefore, by Lemma~\ref{lem:eq_in_psi}, $\tup{\hat{\mu_1} = \hat{\mu_2}}$ is a conjunct of $\psi_{j_0}$, and then $\Tt^{\psi_0}, z \models \tup{\hat{\mu_1} = \hat{\mu_2}}$, a contradiction with our assumption that $\lnot \tup{\dow{[\psi_{j_0-j}] \dow \dots \dow [\psi_{j_0}] \hat{\mu_1}} = \dow{[\psi_{j_0-j}] \dow \dots \dow [\psi_{j_0}] \hat{\mu_2}}}$ is a conjunct of $\psi_{j_0-(j+1)}$ and   $\Tt^{\psi_0}, x_{j+1} \models \psi_{j_0-(j+1)}$.
\end{itemize}

To conclude the proof, we only need to check that $[x]_{\widetilde{\pi^{\psi_0}}}\neq [y]_{\widetilde{\pi^{\psi_0}}}$ for all $y$ such that   $\widetilde{\Tt^{\psi_0}}, \widetilde{r^{\psi_0}}, y\models \beta_i$ for some $i=1,\dots,m$. 
Suppose that $\beta_i=\dow[\rho_1]\dots\dow[\rho_{l_0}]\eps$. If $l_0< j_0$ or $\rho_l\neq \psi_l$ for some $l=1,\dots,j_0$, then the result follows immediately from the construction. 
Otherwise, $l_0 \ge j_0$ and $\rho_l = \psi_l$ for all $l=1,\dots,j_0$ and so, by hypothesis, there exists $\gamma_i \in \nfP{n+1}$ such that 
$$
\tup{\gamma_i = \dow[\psi_0] \dow [\psi_1]\dow\dots\dow[\psi_{j_0}]\eps} \land \lnot \tup{\gamma_i = \dow[\psi_0]\dots\dow[\psi_{j_0}]\dow[\rho_{j_0+1}]\dots\dow[\rho_{l_0}]\eps}
$$
is consistent. We  prove that $\lnot \tup{\eps = \dow[\rho_{j_0+1}]\dots\dow [\rho_{l_0}] \eps}$ is a conjunct of $\psi_{j_0}$, from which our desired property follows immediately since we have proved that $\widetilde{\Tt^{\psi_0}}, x \models \psi_{j_0}$. 
Aiming for a contradiction, suppose instead that $\tup{\eps = \dow[\rho_{j_0+1}]\dots\dow [\rho_{l_0}] \eps}$ is a conjunct of $\psi_{j_0}$. 
Then, as $\alpha=\dow[\psi_1]\dots\dow[\psi_{j_0}]\eps$, we can derive that $\equivInstance{\alpha}{\alpha[\tup{\eps = \dow[\rho_{j_0+1}]\dots\dow [\rho_{l_0}] \eps}]}$ ({\bf Der21} (Fact~\ref{fact boolean})). Also observe that $\axiomRestr\vdash \equivInstance{\beta_i}{\alpha \dow[\rho_{j_0+1}]\dots\dow [\rho_{l_0}]\eps}$, and then %$\beta_i$ is provably equivalent to $\alpha [\tup{\eps = \dow[\rho_{j_0+1}]\dots\dow [\rho_{l_0}] \eps}] \dow[\rho_{j_0+1}]\dots\dow [\rho_{l_0}]$. 
%So we have that  $\lnot \tup{\gamma_i=\dow[\psi_0]\beta_i}$ is provably equivalent to 
%$$\lnot \tup{\gamma_i=\dow[\psi_0]\alpha [\tup{\eps = \dow[\rho_{j_0+1}]\dots\dow [\rho_{l_0}] \eps}] \dow[\rho_{j_0+1}]\dots\dow [\rho_{l_0}]},$$
we have
%
%\begin{center}
  \begin{align*}
  \tup{\gamma_i=\dow[\psi_0]\alpha}& \equiv \tup{\gamma_i=\dow[\psi_0] \alpha[\tup{\eps = \dow[\rho_{j_0+1}]\dots\dow [\rho_{l_0}] \eps}]} \tag{{\bf Der21} (Fact~\ref{fact boolean})} \\
  & \leq \tup{\gamma_i = \dow[\psi_0] \alpha \dow[\rho_{j_0+1}]\dots\dow [\rho_{l_0}]\eps}\tag{\eqax{8}} \\
  & \equiv \tup{\gamma_i = \dow[\psi_0] \beta_i}
  \end{align*}
%\end{center}
%
But using simple propositional reasoning, we have a contradiction with our hypothesis that $\tup{\gamma_i = \dow[\psi_0] \alpha} \land \lnot \tup{\gamma_i = \dow[\psi_0] \beta_i }$ was consistent, a contradiction that came from assuming that $\tup{\eps = \dow[\rho_{j_0+1}]\dots\dow [\rho_{l_0}] \eps}$ was a conjunct of $\psi_{j_0}$.
%but from $\axiomRestr\vdash \equivInstance{\alpha}{\alpha[\tup{\eps = \dow[\rho_{j_0+1}]\dots\dow [\rho_{l_0}] \eps}]}$ and \eqax{8}, we obtain that
% $$\axiomRestr\vdash \tup{\gamma_i=\dow[\psi_0]\alpha} \equiv  \tup{\gamma_i=\dow[\psi_0] \alpha[\tup{\eps = \dow[\rho_{j_0+1}]\dots\dow [\rho_{l_0}] \eps}]}  \leq \tup{\gamma_i = \dow[\psi_0] \beta_i},$$
%a contradiction.
%El axioma dice: \tup {\alpha = \beta[\tup{\eps = \gamma}]}$&$\leq$&$ \tup{\alpha = \beta \gamma}
\end{proof}

Now that we have proved this lemma, we proceed to the formal construction of $\Tt^{\varphi}$, for $\varphi\in \nfN{n+1}$ (recall the base case at the beginning of \S\ref{construccion}).

%Now we will formalize the construction of $\Tt^{\varphi}$:
%\color{black}
% To define the rooted data tree $\Tt^\varphi=(T^\varphi,\pi^\varphi,r^\varphi)$ we first examine the types of `diamonds' occurring as conjuncts in $\varphi$. 
Consider the following sets:
\begin{align*}
{\bf V}&=\left\{(\psi,\alpha) \mid \tup{\eps=\dow[\psi]\alpha} \hbox{ is a conjunct of } \varphi \right\}
\\
{\bf U}&=\left\{ (\psi, \alpha, \rho, \beta) \mid \tup{\dow[\psi]\alpha=\dow[\rho]\beta} \hbox{ is a conjunct of }\varphi \right\}
\end{align*}

% \emi{saque este parrafo porque me parecia redundante con la explicacion agregada en rojo}% Each pair $(\psi,\alpha)\in {\bf V}$ will force a witness for $\tup{\eps=\dow[\psi]\alpha}$ (and also, from \eqax{2}, for $\tup{\dow[\psi]\alpha = \eps}$; we will usually avoid these observations of symmetry), and each quadruple $(\psi,\alpha,\rho,\beta)\in {\bf U}$ will force two witnesses for $\tup{\dow[\psi]\alpha=\dow[\rho]\beta}$. All these witnesses will be data trees, which will be tied together with a single root $r^\varphi$ to form $T^\varphi$. The partition $\pi^\varphi$ will be defined so that {\em a)} each witness preserves its partition and {\em b)} $\varphi$ is satisfiable in $\Tt^{\varphi}$. 

%\medskip

\paragraph {\bf Rule 1. Witnesses for \boldmath{${\bf v}=(\psi,\alpha)\in V$}.} We define a data tree  $\Tt^{\bf v}=(T^{\bf v},\pi^{\bf v})$ with root $r^{\bf v}$.
By inductive hypothesis, there exists a tree $\Tt^{\psi}$ such that $\psi$ is satisfiable in that tree.  
In Lemma~\ref{Lemma:ClavePlusMinus}, consider 
\begin{align*}
\psi_0&:=\psi
\\
\Tt^{\psi_0}&:=\Tt^{\psi}
\\
\alpha&:=\alpha
\\
\{\beta_1,\dots,\beta_m\}&:=\{\beta \in \nfP{n} \mid \lnot \tup{\epsilon=\dow[\psi]\beta} \hbox{ is a conjunct of } \varphi\}
\\
\gamma_i&:=\epsilon \mbox{ for all $i=1,\dots,m$}
\end{align*}

\noindent Then there exists $\widetilde{\Tt^{\psi}} =(\widetilde{T^\psi}, \widetilde{\pi^\psi})$ with root $\widetilde{r^\psi}$ and a node $x$ such that 
 \begin{itemize} 
\item $\widetilde{\Tt^{\psi}}, \widetilde{r^\psi} \models \psi$, 
\item $\widetilde{\Tt^{\psi}}, \widetilde{r^\psi}, x\models \alpha$, 
\item $[x]_{\widetilde{\pi^\psi}} \neq [y]_{\widetilde{\pi^\psi}}$ for all $y$ such that there is $\beta \in \nfP{n}$ with $(\psi, \beta) \not\in {\bf V}$, and $\widetilde{\Tt^\psi},\widetilde{r^\psi}, y\models \beta$

% such that $\tup{\alpha = \beta}$  is a conjunct of $\psi$, $\lnot \tup{\eps = \dow[\psi] \beta }$ is a conjunct of $\varphi$, and $\widetilde{\Tt^{\psi}} ,\widetilde{r^\psi}, y\models \beta$, 
%\item  $[x]_{\pi^{\tilde{\psi}}}\neq [z]_{\pi^{\tilde{\psi}}}$ for all $z$ such that there is $\beta$ with $$ and $\Tt^\psi,r^\psi, z\models \gamma$.
\end{itemize}
Define $\Tt^{\bf v}$ as $\widetilde{\Tt^{\psi}} $, and $x^{\bf v}$ as $x$. The root $r^{\bf v}$ and the partition $\pi^{\bf v}$ are the ones of $\widetilde{\+T^\psi}$.

\paragraph {\bf Rule 2. Witnesses for \boldmath{${\bf u}=(\psi,\alpha,\rho,\beta)\in U$}.}
We define data trees $\Tt^{\bf u}_1=(T^{\bf u}_1,\pi^{\bf u}_1)$ and $\Tt^{\bf u}_2=(T^{\bf u}_2,\pi^{\bf u}_2)$ with roots $r^{\bf u}_1$, $r^{\bf u}_2$ respectively. 
By inductive hypothesis, there exist trees $\Tt^\psi=(T^\psi,\pi^\psi)$ (with root $r^\psi$) and $\Tt^\rho=(T^\rho,\pi^\rho)$ (with root $r^\rho$) such that $\psi$ is satisfiable in $\Tt^\psi$ and $\rho$ is satisfiable in $\Tt^\rho$. 
In Lemma~\ref{Lemma:ClavePlusMinus}, consider 
\begin{align*}
\psi_0&:=\psi
\\
\Tt^{\psi_0}&:=\Tt^{\psi}
\\
\alpha&:=\alpha
\\
\{\beta_1,\dots,\beta_m\}&:=\{\gamma \in \nfP{n} \mid \lnot \tup{\dow[\rho]\beta=\dow[\psi]\gamma} \hbox{ is a conjunct of } \varphi\}
\\
\gamma_i&:=\dow[\rho]\beta \mbox{ for all $i=1,\dots,m$} 
\end{align*}
Then there exist $\widetilde{\Tt^\psi}=(\widetilde{T^\psi},\widetilde{\pi^\psi})$ with root $\widetilde{r^\psi}$ and a node $x$ such that:
\begin{itemize}
\item $\widetilde{\Tt^\psi},\widetilde{r^\psi}\models \psi$, 
\item $\widetilde{\Tt^\psi},\widetilde{r^\psi}, x\models \alpha$ 
\item $[x]_{\widetilde{\pi^\psi}}\neq [y]_{\widetilde{\pi^\psi}}$ for all $y$ such that there is $\gamma\in \nfP{n}$ with $\widetilde{\Tt^\psi},\widetilde{r^\psi}, y\models \gamma$  and $\lnot \tup{\dow[\rho]\beta = \dow[\psi]\gamma }$ is a conjunct of $\varphi$. 
\end{itemize}
Define $T^{\bf u}_1$ as $\widetilde{T^\psi}$, $\pi^{\bf u}_1$ as $\widetilde{\pi^\psi}$, $r^{\bf u}_1$ as $\widetilde{r^\psi}$ and $x^{\bf u}=x \in T^{\bf u}_1$. Now let  
$$
\left\{\mu_1,\dots,\mu_r \right\}=\left\{\mu \in \nfP{n} \mid \mbox{ there exists } \ y\in T^{\bf u}_1 \mbox{ such that } \Tt^{\bf u}_1,r^{\bf u}_1, y\models \mu \mbox{ and } [y]_{\pi^{\bf u}_1}=[x^{\bf u}]_{\pi^{\bf u}_1}\right\}.
$$
Then it follows that $\tup{\dow[\rho]\beta=\dow[\psi]\mu_j}$ is a conjunct of $\varphi$ for all $j=1,\dots,r$.  
In Lemma~\ref{Lemma:ClavePlusMinus}, consider
\begin{align*}
\psi_0&:=\rho
\\
\Tt^{\psi_0}&:=\Tt^{\rho}
\\
\alpha&:=\beta
\\
\{\beta_1,\dots,\beta_m\}&:=\{\delta \in \nfP{n} \mid \exists j=1,\dots,r \hbox{ with } \lnot \tup{\dow[\rho]\delta=\dow[\psi]\mu_j} \hbox{ is a conjunct of } \varphi\}
\\
\gamma_i&:=\dow[\psi]\mu_j \mbox{ for $j=1,\dots r$ such that $\tup{\dow[\rho]\beta_i=\dow[\psi]\mu_j}$ is a conjunct of $\varphi$}. 
\end{align*}
Then there exist a tree $\widetilde{\Tt^{\rho}}=(\widetilde{T^{\rho}},\widetilde{\pi^{ \rho}})$ with root $\widetilde{r^{\rho}}$ and a node $y$ such that
\begin{itemize}
\item $\widetilde{\Tt^{\rho}}, \widetilde{r^{\rho}}\models \rho$,
\item $\widetilde{\Tt^{\rho}}, \widetilde{r^{\rho}}, y\models \beta$, 
\item $[y]_{\widetilde{\pi^{\rho}}}\neq [z]_{\widetilde{\pi^{\rho}}}$ for all $z$ such that there is $\delta\in \nfP{n}$ and $j=1,\dots,r$ with $\widetilde{\Tt^\rho},\widetilde{r^\rho}, z\models \delta$ and $\lnot \tup{\dow[\rho]\delta=\dow[\psi]\mu_j}$ is a conjunct of $\varphi$.
\end{itemize}
Define $T^{\bf u}_2$ as $\widetilde{T^\rho}$, $\pi^{\bf u}_2$ as $\widetilde{\pi^\rho}$, $r^{\bf u}_2$ as $\widetilde{r^\rho}$ and $y^{\bf u}=y$. Without loss of generality, we assume that $T^{\bf u}_1$ and $T^{\bf u}_2$ are disjoint.

Now we define a partition $\pi^{\bf u}$ over $T^{\bf u}_1\cup T^{\bf u}_2$ as 
$$
\pi^{\bf u}=\pi^{\bf u}_1\cup\pi^{\bf u}_2\cup\{[x^{\bf u}]_{\pi_1^{\bf u}}\cup[y^{\bf u}]_{\pi_2^{\bf u}}\}\setminus\{[x^{\bf u}]_{\pi_1^{\bf u}},[y^{\bf u}]_{\pi_2^{\bf u}}\}.
$$
In other words, the rooted data tree $(T_1^{\bf u},\pi^{\bf u}\restr{T_1^{\bf u}} ,r_1^{\bf u})$ is just a copy of $(\widetilde{T^\psi},\widetilde{\pi^\psi},\widetilde{r^\psi})$, with a special node named $x^{\bf u}$ which satisfies $T_1^{\bf u},\pi^{\bf u},r_1^{\bf u},x^{\bf u}\models\alpha$. Analogously, the pointed data tree \mbox{$(T_2^{\bf u},\pi^{\bf u}\restr{ T_2^{\bf u}},r^{\bf u}_2)$} is a copy of $(\widetilde{T^\rho},\widetilde{\pi^\rho},\widetilde{r^\rho})$, with a special node named $y^{\bf u}$ which satisfies   $T_2^{\bf u},\pi^{\bf u},r_2^{\bf u},y^{\bf u}\models\beta$.  
Notice that the equivalence class $\sim$ induced by $\pi^{\bf u}$ (defined over the disjoint sets $T_1^{\bf u}$ and $T_2^{\bf u}$) is defined as $z\sim w$ iff $w\in[z]_{\widetilde{\pi^{\psi}}}$ or $w\in[z]_{\widetilde{\pi^{\rho}}}$, or both $w\in[x^{\bf u}]_{\widetilde{\pi^{\psi}}}$ and $z\in[y^{\bf u}]_{\widetilde{\pi^{\rho}}}$ or both $w\in[y^{\bf u}]_{\widetilde{\pi^{\rho}}}$ and $z\in[x^{\bf u}]_{\widetilde{\pi^{\psi}}}$. See Figure~\ref{fig:Tu}.
\begin{figure}[ht]
   \begin{center}
   \includegraphics[scale=0.25]{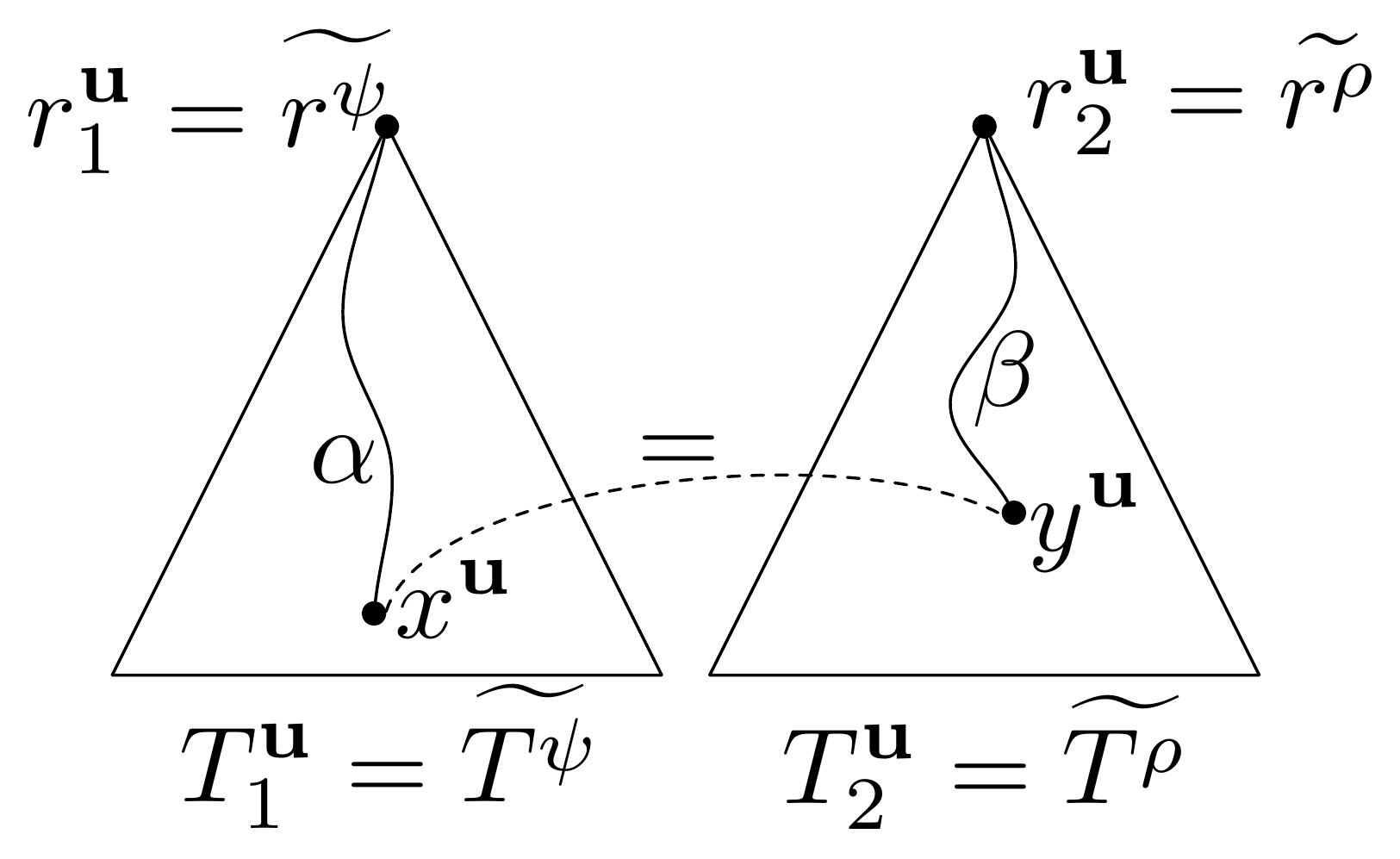}
   \end{center}
   \caption{The data trees $\Tt_1^{\bf u}=(T_1^{\bf u},\pi_1^{\bf u})$ and $\Tt_2^{\bf u}=(T_2^{\bf u},\pi_2^{\bf u})$ for some ${\bf u}\in{\bf U}$. $\pi^{\bf u_1}$ and $\pi^{\bf u_2}$ are disjoint except that the equivalence class of $x^{\bf u}$ is merged with the equivalence class of $y^{\bf u}$.}\label{fig:Tu}
\end{figure}

The following remark will be used later to prove that Rule 2 does not spoil condition~\ref{nopsialphaeqrhobeta} (cf. Figure~\ref{fig:idea2}(c)):

\begin{remark}\label{lemacasosindistinto}
Let $(\psi,\alpha,\rho,\beta)\in {\bf U}$. If $\lnot \tup{\dow[\psi]\mu = \dow[\rho]\delta}$ is a conjunct of $\varphi$, then $[y^{\bf u}]_{\pi^{\bf u}_2}\neq [y]_{\pi^{\bf u}_2}$ for all $y$ such that $\Tt_2^{\bf u}, r_2^{\bf u}, y \models \delta$ or $[x^{\bf u}]_{\pi^{\bf u}_1}\neq [x]_{\pi^{\bf u}_1}$ for all $x$ such that $\Tt_1^{\bf u}, r_1^{\bf u}, x \models \mu$.
% \begin{enumerate}
%  \item  If  $\lnot \tup{\dow[\psi]\alpha= \dow[\rho]\delta'}$ is a conjunct of $\varphi$, then $[y^{\bf u}]_{\pi^{\bf u}_2}\neq [y]_{\pi^{\bf u}_2}$ for all $y$ such that $T_2^{\bf u}, r_2^{\bf u}, y \models \delta'$.
%  
%  \item If $\lnot \tup{\dow[\rho]\beta= \dow[\psi]\gamma'}$ is a conjunct of $\varphi$, then $[x^{\bf u}]_{\pi^{\bf u}_1}\neq [x]_{\pi^{\bf u}_1} $ for all $x$ such that $T_1^{\bf u}, r_1^{\bf u}, x \models \gamma'$.
 
%  \item 
 
% \end{enumerate}
 
\end{remark}

\begin{proof}
The result is immediate from Rule 2: If neither of the conditions is satisfied, then $\mu=\mu_j$ for some $j=1,\dots,r$ and so $\tup{\dow[\rho]\delta=\dow[\psi]\mu}$ is a conjunct of $\varphi$ which is a contradiction.
\end{proof} 

\paragraph{The rooted data tree \boldmath{$(T^\varphi,\pi^\varphi,r^\varphi)$}.}

Now, using {Rule 1} and {Rule 2}, we define $T^\varphi$ as the tree which consists of a root $r^\varphi$ with label $a\in\A$ if $a$ is a conjunct of $\varphi$, and with children  
$$
(T^{\bf v})_{{\bf v}\in {\bf V}}\ ,\  (T^{\bf u}_1)_{{\bf u}\in {\bf U}}\ ,\ (T^{\bf u}_2)_{{\bf u}\in {\bf U}}.
$$
We assume that the nodes of all such trees are pairwise disjoint.
Define $\pi^\varphi$ over $T^\varphi$ by 
$$
\pi^\varphi= \left(\bigcup_{{\bf v}\in {\bf V}}\pi^{\bf v}\setminus\{[x^{\bf v}]_{\pi^{\bf{v}}}\mid {\bf v}\in {\bf V}\}\right)\cup\left\{\{r^\varphi\}\cup \bigcup_{{\bf v}\in {\bf V}}[x^{\bf v}]_{\pi^{\bf{v}}}\right\}
 \cup \bigcup_{{\bf u}\in {\bf U}} \pi^{\bf u}.
$$
In other words, $T^\varphi$ has a root, named $r^\varphi$, and children $(r^{\bf v})_{{\bf v}\in {\bf V}}$, $(r_1^{\bf u})_{{\bf u}\in {\bf U}}$, $(r_2^{\bf u})_{{\bf u}\in {\bf U}}$. Each of these children is the root of its corresponding tree inside $T^\varphi$ as defined above, i.e.\ for each ${\bf v\in V}$, $r^{\bf v}$ is the root of $T^{\bf v}$, and for each ${\bf u\in U}$, $r_i^{\bf u}$ is the root of $T_i^{\bf u}$ ($i=1,2$). All these subtrees are disjoint, and $\pi^\varphi$ is defined as the disjoint union of partitions $\pi^{\bf v}$ for ${\bf v}\in {\bf V}$, and    
all $\pi^{\bf u}$ for ${\bf u}\in {\bf U}$, {\em with the exception} that we put into the same class the nodes $r^\varphi$ and $(x^{\bf{v}})_{{\bf v}\in{\bf V}}$. See Figure~\ref{fig:Tcan}.
\begin{figure}[ht]
   \begin{center}
   \includegraphics[scale=0.25]{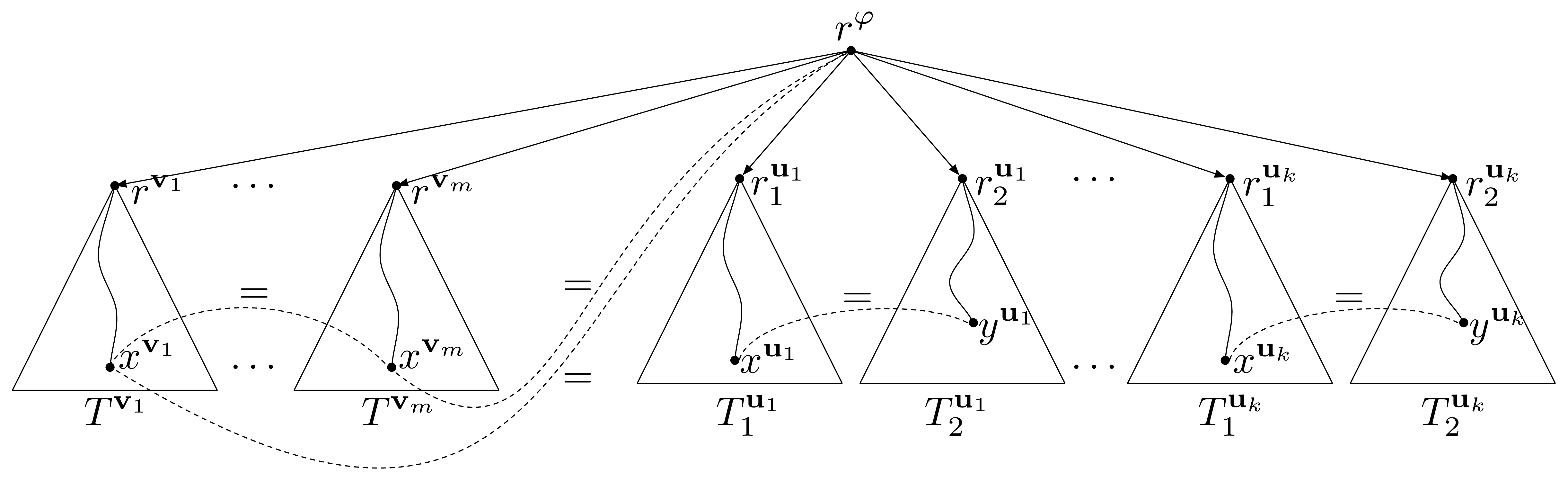}
   \end{center}
   \caption{The data tree $\Tt^\varphi$, with root $r^\varphi$, when ${\bf V}=\{{\bf v}_1,\dots,{\bf v}_m\}$ and ${\bf U}=\{{\bf u}_1,\dots,{\bf u}_k\}$. Nodes $r^\varphi,x^{{\bf v}_1},\dots,x^{{\bf v}_m}$ are in the same equivalence class, and for each $i$ nodes $x^{{\bf u}_i}$ and $y^{{\bf u}_i}$ are in the same equivalence class.}\label{fig:Tcan}
\end{figure}

The following fact follows easily by construction:
\begin{fact}
The partition restricted to the trees $T^{\bf v}$ for ${\bf v\in V}$ and the partition restricted to the trees $T^{\bf u}_1$ and $T^{\bf u}_2$ for ${\bf u\in U}$ remains unchanged. More formally:
\begin{itemize}
\item For each ${\bf v}=(\psi,\alpha)\in {\bf V}$, we have $\pi^\varphi\restr{T^{\bf v}}=\pi^{\bf v}$.

\item For each ${\bf u}=(\psi,\alpha,\rho,\beta)\in {\bf U}$, we have $\pi^\varphi\restr{T^{\bf u}_1}=\pi^{\bf u}_1$, and $\pi^\varphi\restr{T^{\bf u}_2}=\pi^{\bf u}_2$.
\end{itemize}
\end{fact}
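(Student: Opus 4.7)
The plan is to unfold the definition of $\pi^\varphi$ and exploit the disjointness of the various subtrees used in the construction. The construction assumes the node sets of $(T^{\bf v})_{{\bf v}\in {\bf V}}$, $(T^{\bf u}_1)_{{\bf u}\in {\bf U}}$, $(T^{\bf u}_2)_{{\bf u}\in {\bf U}}$ are pairwise disjoint, and $\{r^\varphi\}$ is disjoint from all of them. So when we restrict $\pi^\varphi$ to any one of these subtrees, almost every equivalence class in the definition of $\pi^\varphi$ either sits entirely inside that subtree or is entirely outside of it; the only classes that require care are the merged ones.

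First I would handle ${\bf v}=(\psi,\alpha)\in {\bf V}$. Writing out $\pi^\varphi$ as a union of three pieces, the third piece $\bigcup_{{\bf u}\in {\bf U}} \pi^{\bf u}$ lives on $\bigcup_{\bf u}(T^{\bf u}_1\cup T^{\bf u}_2)$, which is disjoint from $T^{\bf v}$, so it contributes nothing to $\pi^\varphi\restr{T^{\bf v}}$. The first piece contributes all classes of $\pi^{\bf v}$ other than $[x^{\bf v}]_{\pi^{\bf v}}$; by disjointness these sit entirely inside $T^{\bf v}$, and they survive the restriction. For the merged class $\{r^\varphi\}\cup\bigcup_{{\bf v}'\in {\bf V}}[x^{{\bf v}'}]_{\pi^{{\bf v}'}}$, its intersection with $T^{\bf v}$ is exactly $[x^{\bf v}]_{\pi^{\bf v}}$ (again by disjointness of the $T^{{\bf v}'}$ for different ${\bf v}'$ and the fact that $r^\varphi\notin T^{\bf v}$). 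Putting these together, $\pi^\varphi\restr{T^{\bf v}}$ consists of all the classes of $\pi^{\bf v}$, i.e.\ it equals $\pi^{\bf v}$.

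Next I would handle ${\bf u}=(\psi,\alpha,\rho,\beta)\in {\bf U}$ and the subtree $T^{\bf u}_1$. Among the three pieces of $\pi^\varphi$, the first two live on $\{r^\varphi\}\cup\bigcup_{{\bf v}\in {\bf V}}T^{\bf v}$, which is disjoint from $T^{\bf u}_1$; and within $\bigcup_{{\bf u}'\in {\bf U}}\pi^{{\bf u}'}$, only the summand $\pi^{\bf u}$ contributes to the restriction, since $T^{\bf u}_1$ is disjoint from $T^{{\bf u}'}_i$ for any $({\bf u}',i)\neq ({\bf u},1)$. It remains to compute $\pi^{\bf u}\restr{T^{\bf u}_1}$: by definition, $\pi^{\bf u}$ agrees with $\pi^{\bf u}_1$ on all classes of $\pi^{\bf u}_1$ different from $[x^{\bf u}]_{\pi^{\bf u}_1}$, and the merged class $[x^{\bf u}]_{\pi^{\bf u}_1}\cup[y^{\bf u}]_{\pi^{\bf u}_2}$ intersects $T^{\bf u}_1$ in exactly $[x^{\bf u}]_{\pi^{\bf u}_1}$ since $T^{\bf u}_1$ and $T^{\bf u}_2$ are disjoint. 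Hence $\pi^\varphi\restr{T^{\bf u}_1}=\pi^{\bf u}_1$, and the case of $T^{\bf u}_2$ follows by the symmetric argument.

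There is no real obstacle here: the fact is essentially a bookkeeping exercise and the only subtlety is keeping track of which classes get merged in the definition of $\pi^\varphi$ and observing that each merge, restricted to a given subtree, collapses back to the original class of the local partition.
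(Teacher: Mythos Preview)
Your proposal is correct and is exactly the bookkeeping the paper has in mind: the paper simply asserts that the fact ``follows easily by construction'' without giving any details, and your unfolding of the definition of $\pi^\varphi$ together with the pairwise disjointness of the subtrees is precisely that construction spelled out.
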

%
%Since for each ${\bf v}=(\psi,\alpha)\in {\bf V}$ we have $T^{\bf v},\pi^{\bf v},r^{\bf v}\models\psi$, and for each ${\bf u}=(\psi,\alpha,\rho,\beta)\in {\bf U}$ we have $T^{\bf u}_1,\pi^{\bf u}_1,r^{\bf u}_1\models\psi$ and $T^{\bf u}_2,\pi^{\bf u}_2,r^{\bf u}_2\models\rho$, 
We conclude from Proposition~\ref{prop:local} and the construction that:
\begin{fact}\label{fac:preserves}
The validity of a formula in a child of $r^\varphi$ is preserved in $\Tt^\varphi$. More formally:
\begin{itemize}
%
%\item For each ${\bf v}=(\psi,\alpha)\in {\bf V}$, we have $T^\varphi,\pi^\varphi,r^{\bf v}\models\psi$, and  $T^\varphi,\pi^\varphi,r^{\bf v}, x^{\bf v}\models\alpha$.
%
%\item For each ${\bf u}=(\psi,\alpha,\rho,\beta)\in {\bf U}$, we have $T^\varphi,\pi^\varphi,r_1^{\bf u}\models\psi$, $T^\varphi,\pi^\varphi,r_2^{\bf u}\models\rho$, $T^\varphi,\pi^\varphi,r_1^{\bf u},x^{\bf u}\models\alpha$, and $T^\varphi,\pi^\varphi,r_2^{\bf u},y^{\bf u}\models\beta$. 
%
%
\item For each ${\bf v}\in{\bf V}$ and $x,y\in T^{\bf v}$ we have $\Tt^\varphi,x \eqres  \Tt^{\bf v},x$ and $\Tt^\varphi,x,y \eqres   \Tt^{\bf v},x,y$.

\item For each ${\bf u}\in {\bf U}$, $i\in\{1,2\}$ and $x,y\in T^{\bf u}_i$
we have $\Tt^\varphi,x \eqres \Tt^{\bf u}_i,x$ and   $\Tt^\varphi,x,y \eqres \Tt^{\bf u}_i,x,y$.
\end{itemize}
\end{fact}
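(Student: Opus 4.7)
The plan is to reduce the Fact to a direct application of Proposition~\ref{prop:local} by establishing that $\Tt^\varphi\restr{x}$ coincides, as a data tree, with $\Tt^{\bf v}\restr{x}$ whenever $x \in T^{\bf v}$, and analogously with $\Tt^{\bf u}_i\restr{x}$ whenever $x \in T^{\bf u}_i$. Once that identification is in place, since Proposition~\ref{prop:local} yields $\eqfull$-equivalence and $\xpdeq$ is a syntactic fragment of $\xpd$, the required $\eqres$-equivalences are immediate.

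The first step is to verify the identification $\Tt^\varphi\restr{x} = \Tt^{\bf v}\restr{x}$ for $x \in T^{\bf v}$. The underlying trees and labels agree because, by construction, $T^{\bf v}$ is inserted into $T^\varphi$ as a complete subtree hanging off the child $r^{\bf v}$ of the root $r^\varphi$, so the descendants of $x$ in $T^\varphi$ coincide with its descendants in $T^{\bf v}$. For the partition, the preceding Fact records $\pi^\varphi\restr{T^{\bf v}} = \pi^{\bf v}$, and restricting further to $T^{\bf v}\restr{x} \subseteq T^{\bf v}$ preserves this equality. The analogous identification for nodes in $T^{\bf u}_i$ is proved identically from the second bullet of the preceding Fact.

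With these identifications, the single-pointed part of the Fact follows from the chain
$$\Tt^\varphi, x \eqfull \Tt^\varphi\restr{x}, x = \Tt^{\bf v}\restr{x}, x \eqfull \Tt^{\bf v}, x,$$
whose outer equivalences are instances of Proposition~\ref{prop:local}. For the two-pointed case $\Tt^\varphi, x, y \eqres \Tt^{\bf v}, x, y$ with $x, y \in T^{\bf v}$, I would split into two sub-cases. If $y = x$ or $y$ is a descendant of $x$ in $T^{\bf v}$, then both $x$ and $y$ lie in $T^{\bf v} = T^\varphi\restr{r^{\bf v}}$, and applying the second item of Proposition~\ref{prop:local} with $r^{\bf v}$ as the root yields $\Tt^\varphi, x, y \eqfull \Tt^\varphi\restr{r^{\bf v}}, x, y = \Tt^{\bf v}, x, y$. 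Otherwise, since $\dow$ is the only axis in $\xpdeq$, no $\xpdeq$ path expression can be satisfied by $(x, y)$ in either data tree, so the equivalence holds trivially. The $T^{\bf u}_i$ case is entirely analogous, with $r^{\bf u}_i$ in place of $r^{\bf v}$.

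I do not anticipate any substantive obstacle: the whole argument is structural bookkeeping. The only conceptual observation is that the classes of $\pi^{\bf v}$ (respectively $\pi^{\bf u}_i$) that get enlarged in passing to $\pi^\varphi$ are augmented only by nodes lying outside $T^{\bf v}$ (respectively outside $T^{\bf u}_i$), so the restricted partitions remain literally identical---and that is precisely what the preceding Fact records.
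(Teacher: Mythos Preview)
Your proposal is correct and follows the same approach as the paper, which simply states that the fact follows from Proposition~\ref{prop:local} together with the preceding construction (in particular the fact that $\pi^\varphi\restr{T^{\bf v}}=\pi^{\bf v}$ and $\pi^\varphi\restr{T^{\bf u}_i}=\pi^{\bf u}_i$). One minor simplification: your case split in the two-pointed argument is unnecessary, since for \emph{any} $x,y\in T^{\bf v}$ both nodes already lie in $T^\varphi\restr{r^{\bf v}}=T^{\bf v}$, so the second item of Proposition~\ref{prop:local} applies directly with $r^{\bf v}$ as the restriction point, regardless of whether $y$ is a descendant of $x$.
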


It only remains to check that conditions \ref{label}--\ref{nopsialphaeqrhobeta} at the beginning of \S\ref{subsec:completeness} are satisfied: 
%
%\begin{enumerate}

\paragraph{Verification of \ref{label}.}This condition is trivially satisfied.
%\item[\ref{label}.] 

\paragraph{Verification of \ref{epsiloneqpsialpha}.}
%\item[\ref{epsiloneqpsialpha}.] 
Suppose $\tup{\epsilon=\dow[\psi]\alpha}$ is a conjunct of $\varphi$.
Then, by Rule 1, there is $x^\bv\in T^\varphi$ such that $[r^\varphi]_{\pi^\varphi}=[x^\bv]_{\pi^\varphi}$, with $\bv=(\psi,\alpha)$.
We also know by construction that $\Tt^\bv,r^\bv\models\psi$ and $\Tt^\bv,r^\bv,x^\bv\models\alpha$.
By Fact \ref{fac:preserves} we conclude $\Tt^\varphi,r^\varphi\models\tup{\epsilon=\dow[\psi]\alpha}$.

\paragraph{Verification of \ref{psialphaeqrhobeta}.}
%\item[\ref{psialphaeqrhobeta}.] 
Suppose $\tup{\dow[\psi]\alpha=\dow[\rho]\beta}$ is a conjunct of $\varphi$.
Then, by Rule 2, there are $x^\bu,y^\bu \in \Tt^{\varphi}$ such that $[x^\bu]_{\pi^\varphi}=[y^\bu]_{\pi^\varphi}$,
with $\bu=(\psi,\alpha,\rho,\beta)$. We also know on the one hand that 
$\Tt_1^\bu,r_1^\bu\models\psi$ and $\Tt_2^\bu,r_2^\bu\models\rho$,
and on the other hand that $\Tt_1^\bu,r_1^\bu,x^\bu\models\alpha$ and 
$\Tt_2^\bu,r_2^\bu,y^\bu\models\beta$. 
By Fact \ref{fac:preserves} we conclude $\Tt^\varphi,r^\varphi\models\tup{\dow[\psi]\alpha=\dow[\rho]\beta}$.

\paragraph{Verification of \ref{noepsiloneqpsialpha}.}
%\item[\ref{noepsiloneqpsialpha}.] 
Suppose $\neg\tup{\epsilon=\dow[\psi]\alpha}$ is a conjunct of $\varphi$. Aiming for a contradiction, suppose that  $\Tt^\varphi,r^\varphi\models\tup{\epsilon=\dow[\psi]\alpha}$. 
Then there is a successor $z$ of $r^\varphi$ in which $\psi$ holds, and by construction plus Lemma~\ref{lemma:inconsistent}, $z$ is the root of some copy of a data tree $\widetilde{\Tt^\psi}$. Moreover, there is $x\in \widetilde{T^\psi}$ such that $\Tt^\varphi,z,x\models\alpha$, with $[x]_{\pi^\varphi}=[r^\varphi]_{\pi^\varphi}$. In addition to this, $(\psi, \alpha)\not \in {\bf V}$ and so, by Rule 1, $[x]_{\pi^{\varphi}}\neq [x^{\bf v}]_{\pi^{\varphi}}$ for all ${\bf v} \in {\bf V}$. Then, by construction, $[x]_{\pi^{\varphi}}\neq [r^{\varphi}]_{\pi^{\varphi}}$ which is a contradiction.

\paragraph{Verification of \ref{nopsialphaeqrhobeta}.}
%\item[\ref{nopsialphaeqrhobeta}.] 
Suppose $\neg\tup{\dow[\psi]\alpha=\dow[\rho]\beta}$ is a conjunct of $\varphi$. Aiming for a contradiction, suppose that $\Tt^\varphi,r^\varphi\models\tup{\dow[\psi]\alpha=\dow[\rho]\beta}$. 
%$T^\varphi,\pi^\varphi,r^\varphi\models\tup{\dow[\psi]\alpha=\dow[\rho]\beta}$,
Then there are successors $z_1$ and $z_2$  of $r^\varphi$ in which $\psi$ and $\rho$ holds, respectively. Also, by
construction and Lemma~\ref{lemma:inconsistent}, $z_1$ and $z_2$
 are the roots of some copies of data trees $\widetilde{\Tt^\psi}$ and $\widetilde{\Tt^\rho}$ (note that we are using the notation $\widetilde{\Tt^{\psi}}$ and $\widetilde{\Tt^{\rho}}$ either if the tree is the one obtained by inductive hypothesis or a modified version of it).
 Moreover, there are descendants $w_1$ and $w_2$ such  that $\Tt^\varphi, z_1,w_1 \models \alpha$,  $\Tt^\varphi, z_2,w_2 \models \beta$ and $[w_1]_{\pi^\varphi}=[w_2]_{\pi^\varphi}$.
%and their respective descendants $w_1, w_2$ such that $T^\varphi, \pi^\varphi, z_1,w_1 \models [\psi]\alpha$,  $T^\varphi, \pi^\varphi, z_2,w_2 \models [\rho]\beta$ and $[w_1]_{\pi^\varphi}=[w_2]_{\pi^\varphi}$.
%
%By the construction of the model and Lemma~\ref{lemma:inconsistent}, $z_1$ is the root of some data tree $\Tt^\psi=(T^\psi,\pi^\psi)$, i.e. $z_1=r^\psi$, and $z_2$ is the root of some data tree $\Tt^\rho=(T^\rho,\pi^\rho)$, that is $z_2 = r^\rho$. 
We now have two cases to analyze: 
\begin{itemize}

\item $\widetilde{\Tt^\psi}=\widetilde{\Tt^\rho}$: In this case, because of Lemma~\ref{lemma:inconsistent}, $\psi=\rho$. And we have $\widetilde{\Tt^\psi},\widetilde{r^\psi}\models\tup{\alpha=\beta}$, and
as a consequence $\tup{\alpha=\beta}$ has to be a conjunct of $\psi$ (Lemma~\ref{lem:eq_in_psi}). We prove that in this case $\tup{\dow[\psi]\alpha'=\dow[\psi]\alpha'}$ can not be a conjunct of $\varphi$ for any $\alpha'\in \nfP{n}$: If this were the case, $\tup{\dow[\psi]\alpha'=\dow[\psi]\alpha'} \land \neg\tup{\dow[\psi]\alpha=\dow[\rho]\beta}$ would be consistent, but:
\begin{align*}&
\hspace{-10pt} \tup{\dow[\psi]\alpha'=\dow[\psi]\alpha'} \land \neg\tup{\dow[\psi]\alpha=\dow[\rho]\beta} \\&
    \leq  \tup{\dow[\psi]} \land \neg\tup{\dow[\psi]\alpha=\dow[\rho]\beta} \tag{{\bf Der12} (Fact~\ref{fact boolean})}  \\&
   \equiv \tup{\dow[\psi\land \tup{\alpha=\beta}]} \land \neg\tup{\dow[\psi]\alpha=\dow[\rho]\beta}  \tag{$\tup{\alpha=\beta}$ is a conjunct of $\psi$} \\&
   \equiv \tup{\dow[\psi][\tup{\alpha=\beta}]} \land \neg\tup{\dow[\psi]\alpha=\dow[\rho]\beta}  \tag{{\bf Der21} (Fact~\ref{fact boolean})} \\&
   \leq \tup{\dow[\psi]\alpha=\dow[\psi]\beta} \land \neg\tup{\dow[\psi]\alpha=\dow[\rho]\beta} \tag{\eqax{7}} \\&
   \equiv \botNode\tag{Boolean}
\end{align*}  
which is a contradiction. 

Then, $\tup{\dow[\psi]\alpha'=\dow[\psi]\alpha'}$ is not a conjunct of $\varphi$ and so, it follows easily from the consistency of $\varphi$ that $(\psi,\alpha')\not \in {\bf V}$ for all $\alpha'\in \nfP{n}$. And also $(\psi,\alpha',\rho',\beta')\not \in {\bf U}$ for all $\alpha',\beta'\in \nfP{n}$, $\rho'\in \nfN{n}$. This gives a contradiction by construction because in this case it would not be a copy of a tree $\widetilde{\Tt^{\psi}}$. 

\item $\widetilde{\Tt^\psi}\neq\widetilde{\Tt^\rho}$: In this case, there are two possibilities to consider:

\begin{itemize}
% \item 
% \color{red}\santi{no me convence esta demo. ver la de abajo}
% If $[w_1]_{\pi^\varphi} = [w_2]_{\pi^\varphi}$ because Rule 2 was applied, then there is some $\bu=(\psi,\alpha',\rho,\beta')$ such that $\tup{\dow[\psi]\alpha'=\dow[\rho]\beta'}$
% is a conjunct of $\varphi$.
% In this case there must also be nodes $a_1,a_2$ such that $T^\psi, \pi^\psi, r^\psi, a_1 \models \alpha$ and $T^\psi, \pi^\psi, r^\psi, a_2 \models \alpha'$, i.e., 
% $T^\psi,\pi^\psi,r^\psi\models\tup{\alpha=\alpha'}$ (same with $\beta,\beta'$).
% By Lemma~\ref{lem:eq_in_psi} $\tup{\alpha=\alpha'}$ is a conjunct of $\psi$
% (with the same argument  we have that $\tup{\beta=\beta'}$ is a conjunct of $\rho$).
% Hence (by Axiom~\ref{ax:subst} and Axiom~\eqax{2}) we have that $\tup{\dow[\psi]\alpha=\dow[\rho]\beta}$ 
% is a conjunct of $\varphi$, which is a contradiction for the consistency of $\varphi$. 
% \color{black}

\item One possibility is that $[w_1]_{\pi^\varphi} = [w_2]_{\pi^\varphi}$ because Rule 2 was applied (see Figure~\ref{fig:verif}~(a)). Then there is $\bu=(\psi,\alpha',\rho,\beta')\in {\bf U}$ (the symmetric case is analogous).
In this case we have $\Tt^{\bf u}_1, r^{\bf u}_1, x^{\bf u} \models \alpha'$ and $\Tt^{\bf u}_2, r^{\bf u}_2, y^{\bf u} \models \beta'$. Furthermore, since $[w_1]_{\pi^\varphi} = [w_2]_{\pi^\varphi}$, we have that $[w_1]_{\pi^{\bf u}_1} = [x^{\bf u}]_{\pi^{\bf u}_1}$ and $[w_2]_{\pi^{\bf u}_2} = [y^{\bf u}]_{\pi^{\bf u}_2}$ which is a contradiction by Remark \ref{lemacasosindistinto}.

% and hence  
% $T^\psi,\pi^\psi,r^\psi\models\tup{\alpha=\alpha'}$ and $T^\rho,\pi^\rho,r^\rho\models\tup{\beta=\beta'}$. See Figure \ref{fig:verif} {\em (a)}.
% By Lemma~\ref{lem:eq_in_psi} $\tup{\alpha=\alpha'}$ is a conjunct of $\psi$ and $\tup{\beta=\beta'}$ is a conjunct of $\rho$.
% \color{red}\santi{mas detalle}
% Hence (by Axiom~\ref{ax:subst} and Axiom~\eqax{2}) we have that $\tup{\dow[\psi]\alpha=\dow[\rho]\beta}$ 
% is a conjunct of $\varphi$, which is a contradiction for the consistency of $\varphi$. 
% \color{black}

%
% \item
% \color{red}\santi{no me convence esta demo. ver la de abajo}
% The other possibility is that $[w_1]_{\pi^\varphi} = [w_2]_{\pi^\varphi}$ because Rule 1 was applied twice. Then there exist $\bv_1=(\psi, \alpha')$, $\bv_2=(\rho, \beta')$ such that $\tup{\epsilon =\dow[\psi]\alpha'}$ and $\tup{\epsilon=\dow[\rho]\beta'}$ are conjuncts of $\varphi$. 
% In this case there must also be nodes $a_1,a_2$ such that $T^\psi, \pi^\psi, r^\psi, a_1 \models \alpha$ and $T^\psi, \pi^\psi, r^\psi, a_2 \models \alpha'$, i.e., 
% $T^\psi,\pi^\psi,r^\psi\models\tup{\alpha=\alpha'}$ (same with $\beta,\beta'$).
% By Lemma~\ref{lem:eq_in_psi} $\tup{\alpha=\alpha'}$ is a conjunct of $\psi$
% (with the same argument  we have that $\tup{\beta=\beta'}$ is a conjunct of $\rho$).
% Hence (by Axiom~\ref{ax:subst} and Axiom~\eqax{6}) we have that $\tup{\dow[\psi]\alpha=\dow[\rho]\beta}$ 
% is a conjunct of $\varphi$, which is a contradiction for the consistency of $\varphi$.
% \color{black}

\item
The other possibility is that $[w_1]_{\pi^\varphi} = [w_2]_{\pi^\varphi}$ because Rule 1 was applied twice (see Figure~\ref{fig:verif}(b)). Then there exist $\bv_1=(\psi, \alpha')$, $\bv_2=(\rho, \beta')\in{\bf V}$.
In this case we have $\Tt^{\bv_1}, r^{\bf v_1}, x^{\bv_1} \models \alpha'$ and $\Tt^{\bv_2}, r^{\bv_2}, x^{\bv_2} \models \beta'$. Furthermore, since $[w_1]_{\pi^\varphi} = [w_2]_{\pi^\varphi}$, we have that $[w_1]_{\pi^{\bf v_1}} = [x^{\bv_1}]_{\pi^{\bf v_1}}$ and $[w_2]_{\pi^{\bf v_2}} = [y^{\bv_2}]_{\pi^{\bf v_2}}$. Then, by Rule 1, $(\psi, \alpha)$ and $(\rho,\beta)$ belong to ${\bf V}$ which gives a contradiction because of the consistency of $\varphi$ plus  \eqax{6}.

% and hence  
% $T^\psi,\pi^\psi,r^\psi\models\tup{\alpha=\alpha'}$ and $T^\rho,\pi^\rho,r^\rho\models\tup{\beta=\beta'}$. See Figure \ref{fig:verif} {\em (b)}. 
% By Lemma~\ref{lem:eq_in_psi} $\tup{\alpha=\alpha'}$ is a conjunct of $\psi$ and $\tup{\beta=\beta'}$ is a conjunct of $\rho$.
% \color{red}\santi{mas detalle}
% Hence (by Axiom~\ref{ax:subst} and Axiom~\eqax{2}) we have that $\tup{\dow[\psi]\alpha=\dow[\rho]\beta}$ 
% is a conjunct of $\varphi$, which is a contradiction for the consistency of $\varphi$. 
\end{itemize}

\end{itemize}
%\end{enumerate}

\begin{figure}[ht]
   \begin{center}
   \begin{tabular}{c@{\hskip .5in}c}
   \includegraphics[scale=0.25]{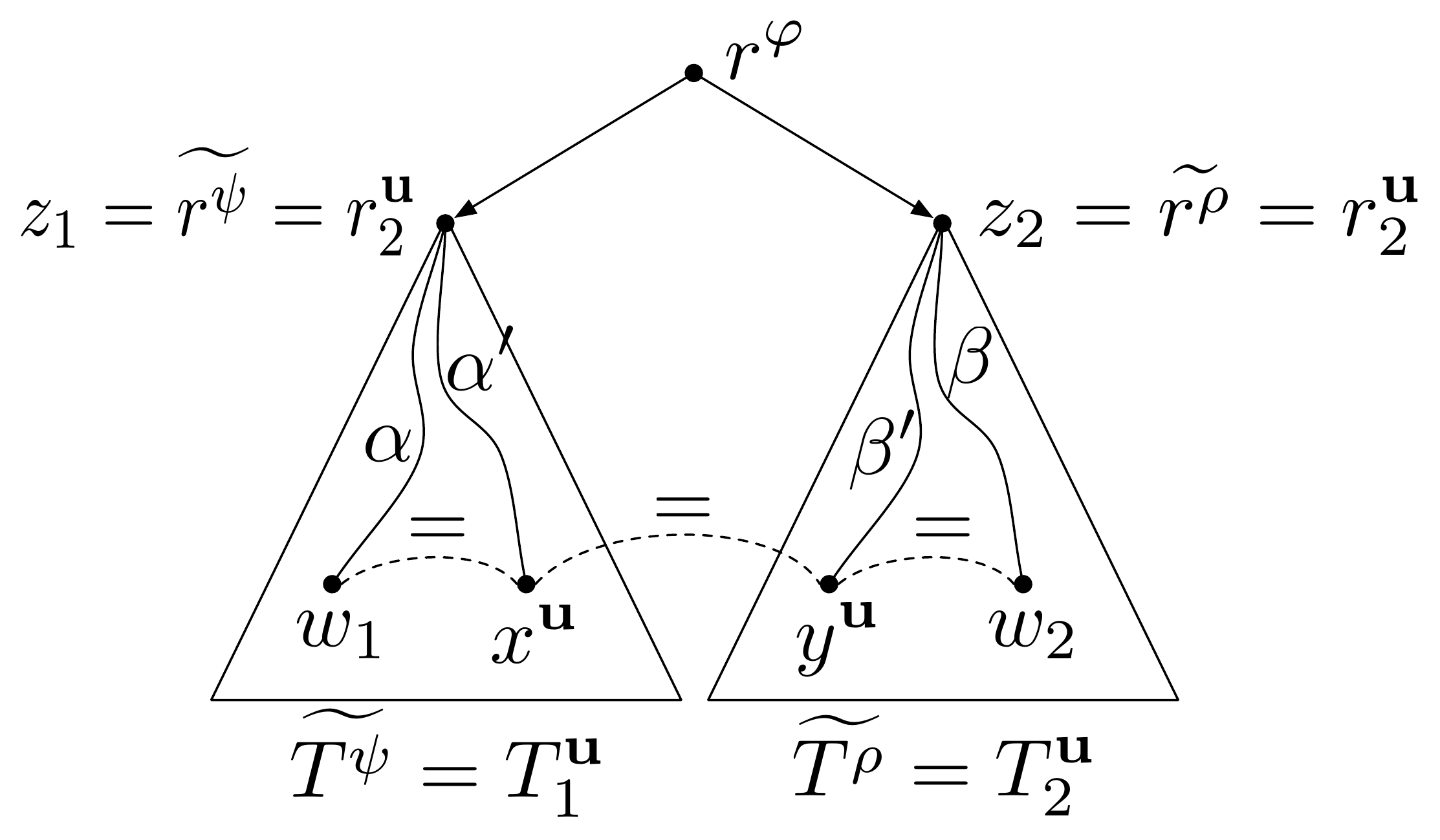}&\includegraphics[scale=0.25]{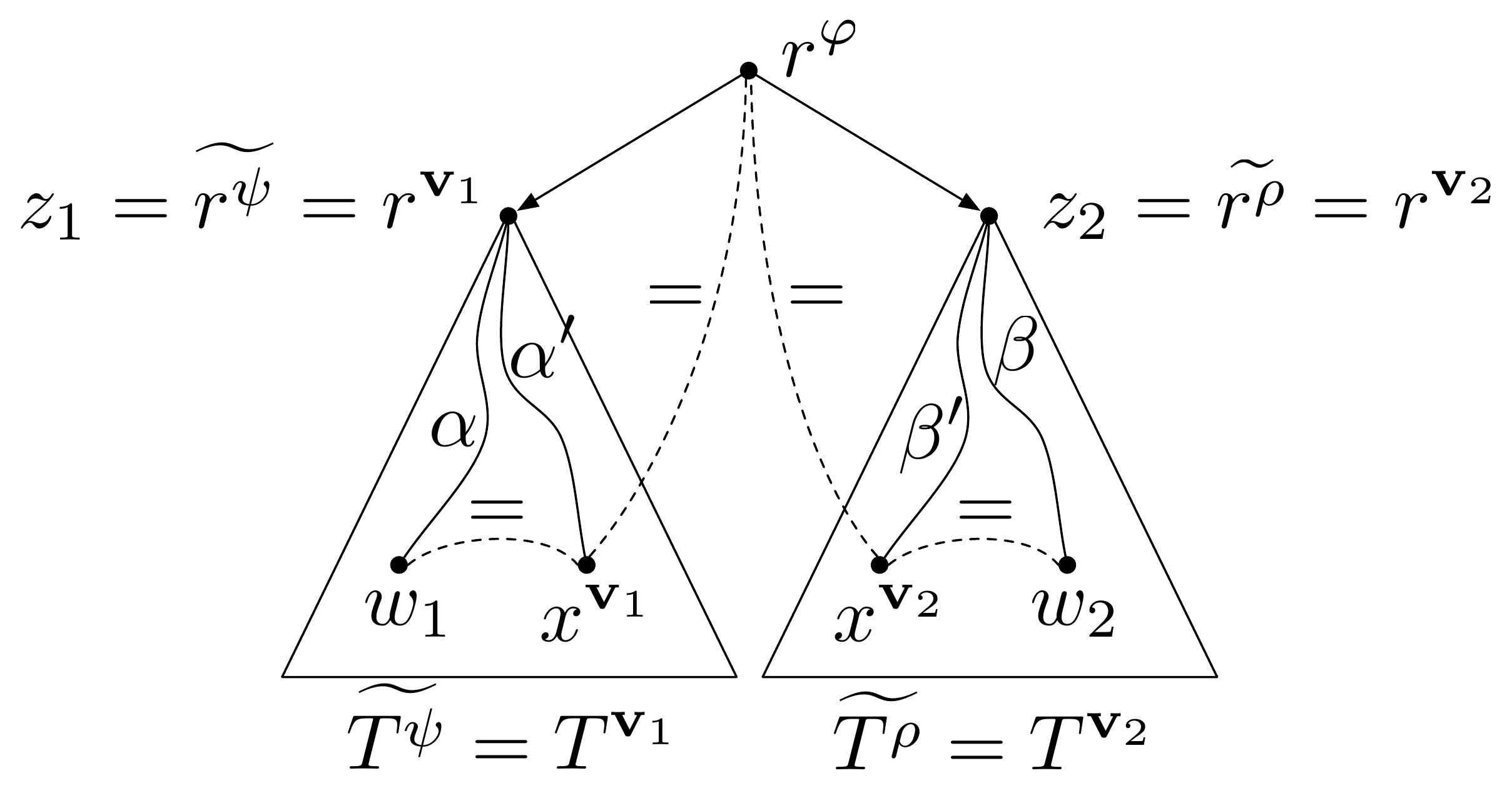}\\
   (a)&(b)
   \end{tabular}
   \end{center}
   \caption{Nodes $w_1$ and $w_2$ are in the same equivalence class because (a) Rule 2 was applied via $\bu=(\psi,\alpha',\rho,\beta')\in{\bf U}$, or (b) Rule 1 was applied twice via $\bv_1=(\psi, \alpha')$, $\bv_2=(\rho, \beta')\in{\bf V}$.
   }   \label{fig:verif}
\end{figure}

% primero ver que los conyuntos $\tup{\dow[\psi]\alpha= \dow[\rho]\beta}$ de $\varphi$ valen en $r^\varphi$. Esto debe salir por step 1. La particion $\pi_1$ solo se vuelve mas gruesa, de modo que si valia en step 1 debe seguir valiendo despues.
%
% luego ver que los conyuntos $\tup{\dow[\psi]\alpha\not= \dow[\rho]\beta}$ y de $\varphi$ valen en $r^\varphi$. aca hay que usar axiomas... esto es por step 1 tambien
%
% luego, ver que step 2 garantiza que los conyuntos $\lnot\tup{\dow[\psi]\alpha\not= \dow[\rho]\beta}$ y de $\varphi$ valen en $r^\varphi$. y ver que step 2 no rompe nada previo.
%
% finalmente, ver que step 3 realmente se puede hacer y que entonces van a valer todas las formulas de $R$ en $r^\varphi$.

% \subsubsection{Size of the canonical model}

% \sergio{Subsubseccion provisional. Borrador.} 
% \santi{parece enorme la cota, no? En ese caso, no vale la pena poner la cuenta... Mira como lo redacte ahora}
% \sergio{Es muy grande; las reglas como mucho agregan un factor polinomial a la cantidad de \'arboles que se ponen: por cada diamante se agregan como mucho dos \'arboles, y los cambios que se hacen en los \'arboles suman como mucho algo en un nivel inferior de la torre en cantidad de nodos}

% \begin{fact} 
% 	\begin{itemize}
% 	\item $|\nfP{n+1}| = 1 + |\nfN{n}|  |\nfP{n}|$
% 	\item $|\nfD{n+1}| = 2   |\nfP{n+1}|$
% 	\item $|\nfN{n+1}| \le |\nfN{0}|   2^{|\nfD{n+1}|}$
% 	\end{itemize}
% \end{fact}

% Thus, $|\nfN{n+1}| \le |\nfN{0}|   2^{2   (1 + |\nfN{n}|  |\nfP{n}|)} $.
% Consequently, there is a bound in \textsc{TOWER} for $|\nfN{n}|$.

\section{Axiomatic System for $\xpd$}\label{sec:compl}

\subsection{Axiomatization}  \label{subsec:Axiomsneq}
%!TEX root = main.tex
% Before proceeding to the following list of axiom schemes, the reader should have in mind the first paragraph of \S~\ref{subsec:AxiomsDown}.\santi{esta frase es rara...}\emi{100\% de acuerdo, cambiala como quieras ;)}\santi{propongo sacarla}

In this section we introduce additional axiom schemes to handle inequalities. Axioms schemes in Table~\ref{tab:axiomsneq}
extend those from Table~\ref{tab:axiomseq} to form a complete axiomatic system for the full logic $\xpd$. Observe that \neqaxone\ -- \neqaxfifteen are analogous to \eqax{2}\ -- \eqax{7}.

\begin{table}[h!]
\centering
\colorbox{black!10}{
\begin{tabular}{lrcll}
\hline
\multicolumn{5}{l}{\bf Node axiom schemes for inequality \vspace{.05in}}
\\
\hypertarget{neqaxone}{\neqaxonen} &$\tup{\alpha \neq \beta}$ & $\equiv$ & $\tup{\beta \neq \alpha}$  
& \!\!\!\!\!\!\!\rdelim\}{5}{-7mm}[\small\begin{tabular}{l}Analogous to\\\eqax{2}\ -- \eqax{7}\\but with symbol\\$\neq$ instead of $=$\end{tabular}]
\\
\hypertarget{neqaxsixteen}{\neqaxsixteenn} &$\tup{\alpha \cup \beta \neq \gamma}$&$\equiv$&$\tup{\alpha \neq \gamma} \lor \tup{\beta \neq \gamma}$  
&
\\
% \neqaxtwo  &$\tup{\alpha} \land \tup{\beta} $&$\equiv$& $\tup{\alpha = \beta} \lor \tup{\alpha \neq \beta}$\\ 
\hypertarget{neqaxseventeen}{\neqaxseventeenn}  &$\varphi \land \tup {\alpha \neq \beta}$&$\equiv$&$\tup{[\varphi] \alpha \neq \beta}$ 
&
\\
\hypertarget{neqaxtwo}{\neqaxtwon}  & $\tup{\alpha \neq \beta}$& $\leq$ & $\tup{\alpha}$
&
\\ 
\hypertarget{neqaxfifteen}{\neqaxfifteenn} & $\tup{\gamma[\tup{\alpha\neq\beta}]}$ & $\leq$ & $\tup{\gamma\alpha\neq\gamma\beta}$ 
& 
\\
% \neqaxthree &$\lnot \tup{\alpha \neq \beta} \land \tup{\beta} \land \lnot \tup{\beta \neq \gamma}$&$\leq$&$ \lnot\tup{\alpha \neq \gamma}$ \\
%\hypertarget{neqaxfour}{\neqaxfourn} &$ \tup{\beta = \gamma} \land \tup{\alpha}$&$\leq$&$ \tup{\alpha \neq \beta} \lor \tup{\alpha = \gamma}$
%\\
\hypertarget{neqaxfive}{\neqaxfiven} & $\tup{\alpha=\gamma} \wedge \tup{\beta=\eta} $ & $\leq$ & $\tup{\alpha=\beta} \lor \tup{\gamma\neq\eta}$ \\
% \neqaxsix  &$ \lnot \tup{\alpha \neq \beta} \land \tup{\beta \neq \gamma} \land \tup{\alpha}$&$\leq$& $\tup{\alpha \neq \gamma}$
% \\ 
% \neqaxseven &$\lnot \tup{\alpha \neq \beta} \land \tup{\beta = \gamma} \land \lnot \tup{\gamma \neq \eta}$&$\leq$&$\lnot \tup{\alpha \neq \eta}$ \\
\hypertarget{neqaxeight}{\neqaxeightn} & $\tup{\alpha\neq\gamma} \wedge \tup{\beta=\eta} $ & $\leq$ & $\tup{\alpha\neq\beta} \lor \tup{\gamma\neq\eta}$
&
\\
\hypertarget{neqaxnine}{\neqaxninen} & $\tup{\gamma=\eta[\neg\tup{\alpha=\beta}\land\tup{\alpha}]\beta}$ & $\leq$ & $\tup{\gamma\neq\eta\alpha}$
&
\\
\hypertarget{neqaxten}{\neqaxtenn} & $\tup{\gamma\neq\eta[\neg\tup{\alpha\neq\beta}\land\tup{\alpha}]\beta}$ & $\leq$ & $\tup{\gamma\neq\eta\alpha}$ 
&
\\
\hypertarget{neqaxeighteen}{\neqaxeighteenn} & $\tup{\gamma = \eta[\lnot \tup{\alpha\neq\alpha} \land \tup{\alpha=\beta}]\alpha}$ & $\leq$ & $\tup{\gamma = \eta\beta}$
&\qquad\qquad\qquad
% \neqaxeleven & $\tup{\gamma} \land \tup{\eta[\tup{\alpha\neq \beta}]} \land\neg\tup{\gamma\neq\eta[\tup{\alpha\neq\beta}]\beta}$ & $\leq$ & $\tup{\gamma\neq\eta\alpha}$ \\
% \neqaxtwelve & $\tup{\gamma} \land \tup{\eta[\tup{\alpha=\beta}] } \land\neg\tup{\gamma=\eta\beta}$ & $\leq$ & $\tup{\gamma\neq\eta\alpha}$
% \neqaxthirteen & $\tup{\alpha\neq\alpha}\land\tup{\beta}$ & $\leq$ & $\tup{\alpha\neq\beta}$\\
% \neqaxfourteen & $\tup{\alpha=\beta}\land\tup{\alpha\neq\beta}$ & $\leq$ & $\tup{\alpha\neq\alpha}\lor\tup{\beta\neq\beta}$ \\
\vspace{.1in}\\\hline
\end{tabular}
}
\caption{Additional axiom schemes to allow for data inequality tests. The axiomatic system $\axiomRestrNeq$ consists of all the instantiations of this table, plus the ones of Table~\ref{tab:axiomseq}.} \label{tab:axiomsneq}
\end{table}
%\sergio{\neqaxnine sale de otras cosas?}
%
Let $\axiomRestrNeq$ be the set of all instantiations of the axiom schemes from Table~\ref{tab:axiomseq} plus the ones from Table~\ref{tab:axiomsneq}. In the scope of this section we will often say that a node expression is {\em consistent} meaning that it is $\axiomRestrNeq$-consistent (as in Definition \ref{def:con-nodeexp}).
%We can show that this set is sound:
%
%\begin{proposition}[Soundness]\label{prop:corr-full}
%For any node expressions $\varphi$ and $\psi$ of $\xpdeq$ such that $\axiomRestrNeq\vdash\varphi\equiv\psi$, we have $\models \varphi\equiv\psi$. For any path expressions $\alpha$ and $\beta$ of $\xpdeq$ such that $\axiomRestrNeq\vdash\alpha\equiv\beta$, we have $\models \alpha\equiv\beta$.\santi{la segunda oracion se arega si vamos a probar tambien complatitud para path expressions. si no, se saca}
%\end{proposition} 
%
%\begin{proof}
%Equational rules are valid because we have compositional semantics, and 
%the proof that all the axioms from Table \ref{tab:axiomsneq} are correct is straightforward.
%\end{proof}
%
\bigskip

Sometimes we use \neqaxone and \neqaxtwo without explicitly mentioning them.
We omit such steps in order to make the proofs more readable. 
We also note that \neqaxsixteen and \neqaxseventeen are necessary for the proof of Theorem~\ref{thm:normal-form-neq}, which is omitted; they have to be used in the same way as \eqax{3} and \eqax{4} in the proof of Theorem~\ref{thm:normal-form}.

%\bigskip

It is not difficult to see that the axioms $\axiomRestrNeq$ are sound for $\xpd$:

\begin{proposition}[Soundness of $\xpd$]\label{prop:corr}
~
\thmsoundness{$\xpd$}{\axiomRestrNeq}
\end{proposition}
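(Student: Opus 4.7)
The plan is to follow exactly the template of Proposition~\ref{prop:corr-restr}: the inference rules of equational logic preserve semantic equivalence thanks to the compositional semantics given in Table~\ref{tab:semantics}, and the axiom schemes of Table~\ref{tab:axiomseq} have already been verified sound as part of Proposition~\ref{prop:corr-restr}. Hence the task reduces to checking that every scheme in Table~\ref{tab:axiomsneq} is valid over every data tree, by a direct unfolding of the semantics of $\tup{\alpha\neq\beta}$.

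First I would dispatch the easy block \neqaxone, \neqaxsixteen, \neqaxseventeen, \neqaxtwo and \neqaxfifteen. Each of these is the literal analogue of \eqax{2}, \eqax{3}, \eqax{4}, \eqax{5} and \eqax{7} respectively, with $\neq$ replacing $=$, and each soundness proof is a one-line manipulation of the existential quantifiers in Table~\ref{tab:semantics}, using only the symmetry of the predicate $[y]_\pi\neq[z]_\pi$ and the distributivity of existentials over $\cup$.

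The delicate block is \neqaxfive, \neqaxeight, \neqaxnine, \neqaxten and \neqaxeighteen, which encode class-level consequences of the partition $\pi$. For each of them I would fix arbitrary witnesses of the hypothesis and split on whether two chosen endpoints lie in the same $\pi$-class. To illustrate, for \neqaxfive: witnesses $y_1,z_1$ for $\tup{\alpha=\gamma}$ (with $[y_1]_\pi=[z_1]_\pi$) and $y_2,z_2$ for $\tup{\beta=\eta}$ (with $[y_2]_\pi=[z_2]_\pi$) either satisfy $[y_1]_\pi=[y_2]_\pi$, witnessing $\tup{\alpha=\beta}$, or not, in which case by transitivity $[z_1]_\pi\neq[z_2]_\pi$, witnessing $\tup{\gamma\neq\eta}$; \neqaxeight is identical up to replacing an equality by an inequality in the hypothesis. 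For \neqaxnine, \neqaxten and \neqaxeighteen, a witness of the composite path $\eta[\cdot]\beta$ on the hypothesis side must be decomposed into its $\eta$-prefix, its filter node $u$ and its $\beta$- or $\alpha$-suffix; then the truth or failure at $u$ of the inner node expression ($\tup{\alpha=\beta}$, $\tup{\alpha\neq\beta}$ or $\tup{\alpha\neq\alpha}$) provides exactly the class identification or separation needed to assemble the required endpoints of the conclusion.

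The main obstacle is purely bookkeeping: each existential witness must be correctly tracked with its endpoint, and each intermediate filter must be evaluated at the right node along the composite path. No step in the verification requires anything beyond the compositional definitions of Table~\ref{tab:semantics} and the fact that $=$ and $\neq$ on $\pi$-classes are respectively an equivalence relation and its (symmetric) complement.
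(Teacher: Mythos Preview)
Your proposal is correct and follows exactly the approach the paper intends: the paper in fact omits the proof of Proposition~\ref{prop:corr} entirely, merely stating beforehand that ``it is not difficult to see that the axioms $\axiomRestrNeq$ are sound for $\xpd$,'' relying on the same two-line justification it gave for Proposition~\ref{prop:corr-restr} (compositional semantics validates the equational rules; each axiom scheme is verified directly). Your outline supplies the details the paper leaves implicit, and your case analyses for \neqaxfive, \neqaxeight, \neqaxnine, \neqaxten and \neqaxeighteen are all accurate unfoldings of the semantics in Table~\ref{tab:semantics}.
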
 

\subsection{Normal forms}\label{sec:nf-neq}
%!TEX root = main.tex
% \bigsanti{$\tup{\eps \neq \eps}$ es inconsist. ademas no hace falta definir $D_0$ y no hace falta agregar $\wedge\tup{\eps}$ (puse mismo comnetario para la logica restringida}

We define the sets $P_n$ and $N_n$, that contain the path and node expressions of $\xpd$, respectively,  in normal form at level $n$:

\begin{definition}[Normal form for $\xpd$]\label{def:cons-neneq} 
%Let $\axiomRestrNeq$ be the set of all instantiations of the axioms from Table~\ref{tab:axiomsneq}, and
%let $\con_\axiomRestrNeq$ be the set of consistent node expressions with respect to $\axiomRestrNeq$, then:
%
\begin{eqnarray*}
P_0 &=& \left\{\eps\right\}\\
%N_0 &=& \A   \\
N_0 &=& \{a\wedge \tup{\eps=\eps}\wedge\lnot\tup{\eps\neq\eps}\mid a\in\A\}   \\
P_{n+1} &=& \left\{\eps\} \cup \{ \dow[\psi]\beta\mid \psi\in N_{n},\beta\in P_n \right\}\\
D_{n+1} &=& \left\{\tup{\alpha=\beta}\mid \alpha,\beta\in P_{n+1}\} \cup \{\tup{\alpha\neq\beta}\mid \alpha,\beta\in P_{n+1}\right\}\\ 
N_{n+1} &=& \left\{a \wedge \bigwedge_{\varphi\in C} \varphi  \wedge \bigwedge_{\varphi\in D_{n+1}\setminus C} \neg\varphi \mid C\subseteq D_{n+1}, a\in \A\right\} \cap \con_\axiomRestrNeq.
\end{eqnarray*}
\end{definition} 
Normal forms are built using the same idea from \S \ref{subsec:nf}, but considering also 
data-aware diamonds with inequalities. 
Again, let us remark that it would suffice that $N_0$ contains formulas of the form $a$, for $a\in\A$, but we include instead formulas of the form $a\wedge\tup{\eps=\eps}\wedge\lnot\tup{\eps\neq\eps}$ (containing the tautologies $\tup{\eps=\eps}$
and $\lnot\tup{\eps\neq\eps}$) for technical reasons.
For instance, considering again two labels $a$ and $b$, the node expressions of $N_0$ are
$$
\psi=a \wedge\tup{\eps=\eps}\wedge\lnot\tup{\eps\neq\eps} \mbox{\quad and\quad}\theta= b \wedge\tup{\eps=\eps}\wedge\lnot\tup{\eps\neq\eps}.
$$
The sets $P_1$ and $D_1$ are as follows:%
\begin{eqnarray*}
	P_1&=&\{\dow[\psi]\eps,\dow[\theta]\eps,\eps\}
	\\
	D_1&=&\{\tup{\eps=\eps}, \tup{\dow[\psi]\eps=\dow[\theta]\eps},\tup{\eps=\dow[\psi]\eps},
	\tup{\eps=\dow[\theta]\eps},\tup{\dow[\psi]\eps=\dow[\psi]\eps},\tup{\dow[\theta]\eps=\dow[\theta]\eps},\\
	& & \tup{\eps\neq\eps}, \tup{\dow[\psi]\eps\neq\dow[\theta]\eps}, \tup{\eps\neq\dow[\psi]\eps}, \tup{\eps\neq\dow[\theta]\eps},\tup{\dow[\psi]\eps\neq\dow[\psi]\eps}, \tup{\dow[\theta]\eps\neq\dow[\theta]\eps}\}
\end{eqnarray*}
An example of a node expression in normal form at level 1, i.e.\ a node expression in $N_1$, is
\begin{eqnarray*}
\varphi &= &a \wedge \tup{\eps=\eps}  \wedge\neg\tup{\eps\neq\eps} \wedge \tup{\dow[\psi]\eps=\dow[\theta]\eps} \wedge \tup{\dow[\psi]\eps= \dow[\psi]\eps} \wedge \tup{\dow[\theta]\eps = \dow[\theta]\eps} \wedge\\
& &  \wedge  \tup{\eps\neq\dow[\psi]\eps}  \wedge  \tup{\eps\neq\dow[\theta]\eps}  \wedge\tup{\dow[\psi]\eps\neq\dow[\theta]\eps}\wedge \neg\tup{\eps=\dow[\psi]\eps} \wedge \neg\tup{\eps=\dow[\theta]\eps}\wedge\\
& & \wedge \tup{\dow[\theta]\eps\neq\dow[\theta]\eps} \wedge \tup{\dow[\psi]\eps\neq\dow[\psi]\eps}.
\end{eqnarray*}
Analogs of Lemmas \ref{lem:eq_in_psi}, \ref{nextPathIsConjunct} and \ref{lemma:inconsistent} hold in this case, with the same proofs as those given for the case of $\xpdeq$:

\begin{lemma}
\label{lem:eq_in_psineq} Let $*\in\{=,\neq\}$,  $\psi\in N_n$ and $\alpha,\alpha'\in P_n$. Let $\Tt,u$ be a pointed data tree,
such that $\Tt,u\models\psi$ and $\Tt,u\models\tup{\alpha * \alpha'}$. Then $\tup{\alpha*\alpha'}$
is a conjunct of $\psi$%. ($*$ should be replaced evenly by $=$ or $\neq$).
%\begin{flushright}
%$\square$
%\end{flushright}
\end{lemma}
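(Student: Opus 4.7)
The plan is to mirror the proof of Lemma \ref{lem:eq_in_psi}, handling the symbol $*\in\{=,\neq\}$ uniformly. The ingredients needed are the definitions of $P_n$, $D_n$, and $N_n$ from Definition \ref{def:cons-neneq}, together with soundness (Proposition \ref{prop:corr}); the argument will split on whether $n=0$ or $n>0$.

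For the base case $n=0$, I would observe that $P_0=\{\eps\}$, so necessarily $\alpha=\alpha'=\eps$. Every $\psi\in N_0$ contains $\tup{\eps=\eps}$ as a conjunct, giving the conclusion when $*$ is $=$. When $*$ is $\neq$, the hypothesis $\Tt,u\models\tup{\eps\neq\eps}$ contradicts the conjunct $\lnot\tup{\eps\neq\eps}$ of $\psi$ (via soundness), so the claim holds vacuously. This is precisely the technical reason the authors insisted that $\tup{\eps=\eps}\wedge\lnot\tup{\eps\neq\eps}$ sit inside every element of $N_0$: it lets the base case for equality and inequality be treated by the same template.

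For the inductive case $n>0$, the argument is identical to the one in Lemma \ref{lem:eq_in_psi}. Since $\alpha,\alpha'\in P_n$, the definition of $D_n$ yields $\tup{\alpha*\alpha'}\in D_n$, and since $\psi\in N_n$ is a conjunction that selects either $\varphi$ or $\lnot\varphi$ for every $\varphi\in D_n$, either $\tup{\alpha*\alpha'}$ or its negation is a conjunct of $\psi$. The negation case is ruled out by the hypothesis $\Tt,u\models\tup{\alpha*\alpha'}$ together with soundness. I do not foresee any real obstacle: the proof is an immediate adaptation because $D_n$ and $N_n$ in Definition \ref{def:cons-neneq} treat the two kinds of data-aware diamonds symmetrically, so the single symbol $*$ suffices throughout and no new axiom from Table \ref{tab:axiomsneq} is needed for this particular lemma.
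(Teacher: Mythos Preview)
Your proposal is correct and matches the paper's approach exactly: the paper states that Lemma~\ref{lem:eq_in_psineq} holds ``with the same proofs'' as Lemma~\ref{lem:eq_in_psi}, and your write-up is precisely that adaptation, treating $*\in\{=,\neq\}$ uniformly and handling the base case via the built-in conjuncts $\tup{\eps=\eps}$ and $\lnot\tup{\eps\neq\eps}$ of $N_0$. One cosmetic remark: the contradiction in the $n>0$ case (and in the vacuous $*=\neq$ base case) is purely semantic, so you need not invoke Proposition~\ref{prop:corr} there.
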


\begin{lemma} \label{nextPathIsConjunctneq}
Let $\psi\in N_{n}$ and $\alpha\in P_{n}$.  If $[\psi]\alpha$ is consistent then $\tup{\alpha=\alpha}$ is a conjunct of~$\psi$. As an immediate consequence, if $\tup{\dow[\psi]\alpha}$ is consistent then $\tup{\alpha=\alpha}$ is a conjunct of~$\psi$.
%\begin{flushright}
%$\square$
%\end{flushright}
\end{lemma}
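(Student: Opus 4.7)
The plan is to replay the proof of Lemma~\ref{nextPathIsConjunct} with no essential change, since the only difference between $N_n$ and $\nfN{n}$ is that the underlying set $D_n$ of data-aware diamonds also contains inequality expressions $\tup{\gamma\neq\delta}$, but the key expression $\tup{\alpha=\alpha}$ still belongs to $D_n$ for every $\alpha\in P_n$. By Definition~\ref{def:cons-neneq}, any $\psi\in N_n$ is a consistent conjunction in which, for each $\sigma\in D_n$, exactly one of $\sigma$ or $\lnot\sigma$ is a conjunct. Thus for every $\alpha\in P_n$, either $\tup{\alpha=\alpha}$ is a conjunct of $\psi$ or $\lnot\tup{\alpha=\alpha}$ is, and the whole point is to rule out the second possibility under the hypothesis that $[\psi]\alpha$ is consistent.

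First I would dispose of the case $\alpha=\eps$. By \prax{2} and \ndax{2} we have $\tup{\eps=\eps}\equiv\tup{[\topNode]}\equiv\topNode$, so $\lnot\tup{\eps=\eps}\equiv\botNode$. Were $\lnot\tup{\eps=\eps}$ a conjunct of $\psi$, then $\psi$ itself would be inconsistent, contradicting $\psi\in N_n\subseteq\con_{\axiomRestrNeq}$. Hence $\tup{\eps=\eps}$ is a conjunct of $\psi$, as required.

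Next I would handle $\alpha=\dow[\psi_1]\cdots\dow[\psi_k]\eps$ with $k\geq 1$ and $\psi_i\in N_{n-i}$. Assume for contradiction that $\lnot\tup{\alpha=\alpha}$ is a conjunct of $\psi$. Using {\bf Der21} I first rewrite $[\psi]\alpha$ as $[\psi][\lnot\tup{\alpha=\alpha}]\alpha$. Then \eqax{1} gives $\tup{\alpha=\alpha}\equiv\tup{\alpha}$, so this expression is provably equivalent to $[\psi][\lnot\tup{\alpha}]\alpha$. Applying \prax{1} with outer path $\eps$ (together with \isaxfour and \isaxfive) yields $[\lnot\tup{\alpha}]\alpha\equiv\botPath$, whence $[\psi]\alpha\equiv\botPath$, contradicting the assumed consistency.

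For the stated consequence, I would show that consistency of $\tup{\dow[\psi]\alpha}$ forces consistency of $[\psi]\alpha$, so that the first part applies. Indeed, if $[\psi]\alpha\equiv\botPath$, then by associativity $\dow[\psi]\alpha\equiv\dow\botPath$, and \prax{1} applied with $\alpha:=\dow$, $\beta:=\eps$ (together with \isaxfive) gives $\dow\botPath\equiv\botPath\equiv[\botNode]$; hence by \ndax{2}, $\tup{\dow[\psi]\alpha}\equiv\tup{[\botNode]}\equiv\botNode$, contradicting consistency. The whole argument is routine; the only thing worth double-checking is that no axiom invoked (\eqax{1}, \prax{1}, \prax{2}, \ndax{2}, {\bf Der21}) interacts with inequality diamonds, which is immediate since each such axiom lives entirely within the equality-diamond fragment already treated in the $\xpdeq$ case.
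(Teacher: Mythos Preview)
Your proposal is correct and follows essentially the same route as the paper, which explicitly says that Lemma~\ref{nextPathIsConjunctneq} is proved exactly as Lemma~\ref{nextPathIsConjunct}; you have simply spelled out the details (including the ``immediate consequence'') more fully. One tiny omission: in the $\alpha=\eps$ case you need \eqax{1} to pass from $\tup{\eps=\eps}$ to $\tup{\eps}$ before invoking \prax{2} and \ndax{2}.
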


\begin{lemma}
\label{lemma:inconsistentneq}
For every pair of distinct elements $\varphi,\psi \in N_n$, $\varphi\wedge\psi$ is
inconsistent.
%\begin{flushright}
%$\square$
%\end{flushright}
\end{lemma}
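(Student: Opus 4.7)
The plan is to mirror exactly the proof of Lemma~\ref{lemma:inconsistent}, adapting it to the extended definition of normal forms in Definition~\ref{def:cons-neneq}, where the only change is that $D_{n+1}$ now contains data-aware diamonds of both shapes $\tup{\alpha=\beta}$ and $\tup{\alpha\neq\beta}$, and the base elements of $N_0$ carry the extra tautological conjunct $\lnot\tup{\eps\neq\eps}$. Since this conjunct is present in \emph{every} element of $N_0$, it plays no discriminating role in the argument.

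For the base case $n=0$, any two distinct elements of $N_0$ are of the form $\varphi = a\wedge\tup{\eps=\eps}\wedge\lnot\tup{\eps\neq\eps}$ and $\psi = b\wedge\tup{\eps=\eps}\wedge\lnot\tup{\eps\neq\eps}$ with $a,b\in\A$ and $a\neq b$, so $\varphi\wedge\psi$ contains $a\wedge b$ as a subexpression; by \lbax{2} we get $\axiomRestrNeq\vdash\equivInstance{\varphi\wedge\psi}{\botNode}$, hence $\varphi\wedge\psi$ is inconsistent.

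For the inductive step at level $n>0$, let $\varphi,\psi\in N_n$ be distinct. By the shape dictated by Definition~\ref{def:cons-neneq}, they agree in structure up to (i) the label conjunct from $\A$ and (ii) the choice of subset $C\subseteq D_n$ of positively occurring data-aware diamonds. If $\varphi$ and $\psi$ differ in the label conjunct, we conclude as in the base case via \lbax{2}. Otherwise there exists some $\sigma\in D_n$ such that, without loss of generality, $\sigma$ is a conjunct of $\varphi$ and $\lnot\sigma$ is a conjunct of $\psi$; then $\varphi\wedge\psi$ contains $\sigma\wedge\lnot\sigma$ as a subexpression, and by Boolean reasoning (available from \ndax{1} together with {\bf Der1} and {\bf Der2} of Fact~\ref{fact boolean}, which are all inherited by $\axiomRestrNeq$) we obtain $\axiomRestrNeq\vdash\equivInstance{\varphi\wedge\psi}{\botNode}$.

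No step in this argument requires the new axiom schemes of Table~\ref{tab:axiomsneq}: the splitting on whether $\sigma\in D_n$ is an equality- or inequality-diamond is irrelevant, because in either case the discriminating fact is simply that $\sigma$ occurs positively in $\varphi$ and negatively in $\psi$. Consequently there is no real obstacle — the proof of Lemma~\ref{lemma:inconsistent} carries over verbatim once one notices that the extra tautological conjunct $\lnot\tup{\eps\neq\eps}$ in the base case is shared by all elements of $N_0$ and thus cannot distinguish them.
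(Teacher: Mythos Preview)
Your proof is correct and follows exactly the approach the paper intends: the paper states explicitly that Lemma~\ref{lemma:inconsistentneq} holds ``with the same proofs as those given for the case of $\xpdeq$'', i.e.\ the proof of Lemma~\ref{lemma:inconsistent}, which is precisely what you reproduce (base case via \lbax{2}, and for $n>0$ either a label clash or some $\sigma\in D_n$ occurring with opposite polarities). Your additional remarks about the shared conjunct $\lnot\tup{\eps\neq\eps}$ and the irrelevance of the new inequality axioms are accurate elaborations, not deviations.
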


%\begin{lemma} \label{lemma:PasandoANormalFormMasAltaNeq}
%If $\varphi \in N_{n_0}$ and $n \ge n_0$, then there are $\varphi_1, \dots, \varphi_f \in N_{n}$  such that $\axiomRestrNeq \vdash \equivInstance{\varphi}{\Lor_{1 \le i \le f} \varphi_i}$.
%\end{lemma}
% \begin{lemma} \label{lemma:PasandoANormalFormMasAltaNeq} 
% 	\begin{enumerate}
% 	\item If $\alpha \in P_{n_0}$ and $n \ge n_0$, then there are $k \in \NN$ and $\alpha_1, \dots, \alpha_k \in P_{n}$ such that $\equivInstance{\alpha}{\bigcup_{1 \le j \le k}\alpha_i}$.
% 	\item If $\varphi \in N_{n_0}$ and $n \ge n_0$, then there are $\varphi_1, \dots, \varphi_f \in N_{n}$  such that $\axiomRestrNeq \vdash \equivInstance{\varphi}{\Lor_{1 \le i \le f} \varphi_i}$.
% 	\end{enumerate}
% \end{lemma}

We omit the proof of the following theorem, since it is analogous to the one for $\axiomRestr$ (Theorem \ref{thm:normal-form}):

\begin{theorem}[Normal form for $\xpd$]\label{thm:normal-form-neq}%\santi{cambiar}
\thmnormalform{$\xpd$}{\axiomRestrNeq}{N_n}{P_n}
\end{theorem}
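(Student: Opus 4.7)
The plan is to replay the induction on the complexity measure $\cc$ from the proof of Theorem~\ref{thm:normal-form}, with $N_n$, $P_n$ and $D_n$ replacing $\nfN{n}$, $\nfP{n}$ and $\nfD{n}$ throughout, and with one new inductive case handling diamonds of the form $\tup{\alpha\neq\beta}$. The base-case analysis for a label $a$ and for the path expressions $\eps$ and $\dow$ is as before: using \eqax{1} and Boolean reasoning to replace $a$ by $a\wedge\tup{\eps=\eps}$, and then invoking the analog of Lemma~\ref{lem:d} with $\varphi=\tup{\eps=\eps}$ to complete each disjunct by a sign for the atom $\tup{\eps\neq\eps}\in D_0$, delivers the required disjunction of elements of $N_0$; crucially, one does not need to know which of the two completions is consistent, only that at least one is, which is immediate from Huntington's axiom \ndax{1}.

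Before the induction I would establish the expected analogs of Lemma~\ref{lem:d} and Lemma~\ref{lem:m}. The proofs are literal translations of those in \S\ref{subsec:nf}: the former enumerates $D_n$ and uses \ndax{1} to keep at least one branch of the sequence $\varphi_0,\varphi_1,\dots$ consistent at every step; the latter is a mutual induction on $n$ that rewrites any $\alpha\in P_n$ as a union of paths in $P_{n+1}$ and any $\psi\in N_n$ as a disjunction of elements of $N_{n+1}$, via \prax{3}, the path axiom schemes for idempotent semirings, and the analog of Lemma~\ref{lem:d} to complete conjunctions of atoms. The only change is that $D_{n+1}$ now contains both equality and inequality atoms, but neither construction depends on the particular shape of the atoms.

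With these in hand, the inductive step splits on the outermost connective exactly as in Theorem~\ref{thm:normal-form}. The cases $F=\lnot\varphi$, $F=\varphi\wedge\psi$, $F=\tup{\alpha=\beta}$, and the path-forming cases ($[\varphi]\beta$, $\dow\beta$, $(\gamma\cup\delta)\beta$) are verbatim translations of the earlier proof, since every axiom scheme and derived fact used there still belongs to $\axiomRestrNeq$. The only genuinely new case is $F=\tup{\alpha\neq\beta}$: the induction hypothesis, together with the analog of Lemma~\ref{lem:m} to raise degrees to~$n$, yields $\alpha\equiv\bigcup_i[\varphi_i]\alpha_i$ and $\beta\equiv\bigcup_j[\psi_j]\beta_j$ with $\alpha_i,\beta_j\in P_n$ and $\varphi_i,\psi_j\in N_n$; then \neqaxsixteen, \neqaxseventeen and \neqaxone play precisely the roles that \eqax{3}, \eqax{4} and \eqax{2} played in the equality case, producing $F\equiv\bigvee_{i,j}\varphi_i\wedge\psi_j\wedge\tup{\alpha_i\neq\beta_j}$. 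A final application of \lbax{1}, of the analog of Lemma~\ref{lem:d} on each $a\wedge\tup{\alpha_i\neq\beta_j}\in D_n$, and of Lemma~\ref{lemma:inconsistentneq} to discard inconsistent conjunctions, delivers the required disjunction of elements of $N_n$.

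The main obstacle I anticipate is purely bookkeeping: $D_{n+1}$ has roughly twice as many atoms as $\nfD{n+1}$, so the case-split over $C\subseteq D_{n+1}$ inside the $\lnot\varphi$ step and inside the proof of the analog of Lemma~\ref{lem:d} roughly doubles in size, and one must verify that the inequality schemes \neqaxsixteen, \neqaxseventeen and \neqaxone really do mirror their equality counterparts \eqax{3}, \eqax{4} and \eqax{2} at every invocation. No new mathematical idea is required beyond that of Theorem~\ref{thm:normal-form}, which is presumably why the authors omit the proof.
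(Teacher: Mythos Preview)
Your plan is correct and coincides with the paper's own treatment: the paper omits the proof as ``analogous to Theorem~\ref{thm:normal-form}'' and explicitly singles out \neqaxsixteen and \neqaxseventeen as the inequality counterparts of \eqax{3} and \eqax{4}, exactly the role you give them (together with \neqaxone for symmetry). One small imprecision worth fixing: there is no set $D_0$, and $N_0$ is defined explicitly as $\{a\wedge\tup{\eps=\eps}\wedge\lnot\tup{\eps\neq\eps}\mid a\in\A\}$ rather than via a consistency filter, so in the base case you cannot appeal to ``at least one completion is consistent'' but must instead derive $\axiomRestrNeq\vdash\lnot\tup{\eps\neq\eps}$ directly (the paper treats it as a tautology alongside $\tup{\eps=\eps}$).
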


The following two technical lemmas, whose proofs are deferred to Appendix~\ref{app}, will be needed for the construction of the canonical model:

\newcommand{\lemaA}
{
Let $*\in\{=,\neq\}$, 
 $\gamma \in P_n$, $\psi_i \in N_{n-i}$ for $i=1,\dots,i_0$, $\alpha, \beta \in P_{n-i_0}$ such that 
 $$\tup{\gamma * \dow[\psi_1]\dots\dow[\psi_{i_0}]\alpha} \land \lnot \tup{\gamma * \dow[\psi_1]\dots\dow[\psi_{i_0}]\beta}$$
 is consistent and $\lnot \tup{\alpha \neq \alpha}$ is a conjunct of $\psi_{i_0}$. Then $\lnot \tup{\alpha = \beta}$ is a conjunct of $\psi_{i_0}$.
}
\begin{lemma}\label{lema A} 
\lemaA
\end{lemma}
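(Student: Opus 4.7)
The plan is to argue by contradiction, exploiting the rigid structure of normal forms. First I would assume that $\lnot\tup{\alpha=\beta}$ is \emph{not} a conjunct of $\psi_{i_0}$. Since $\alpha,\beta\in P_{n-i_0}$, the diamond $\tup{\alpha=\beta}$ belongs to $D_{n-i_0}$, so by the definition of $\psi_{i_0}\in N_{n-i_0}$ exactly one of $\tup{\alpha=\beta}$ or $\lnot\tup{\alpha=\beta}$ must appear as a conjunct; the assumption therefore forces $\tup{\alpha=\beta}$ itself to be a conjunct of $\psi_{i_0}$ (the degenerate case $n=i_0$ is vacuous, since then $\alpha=\beta=\eps$ already makes the hypothesis inconsistent). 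Writing $\eta:=\dow[\psi_1]\dots\dow[\psi_{i_0}]$, my goal is to derive $\tup{\gamma*\eta\alpha}\leq\tup{\gamma*\eta\beta}$ inside $\axiomRestrNeq$, which contradicts the consistency hypothesis. Throughout, I will repeatedly use \textbf{Der21} (Fact~\ref{fact boolean}) to import any provable consequence $\chi$ of $\psi_{i_0}$ into the last test of $\eta$, via $[\psi_{i_0}]\equiv[\psi_{i_0}][\chi]$.

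For $*={=}$ the derivation is short: using \textbf{Der21}, $\eta\alpha$ is equivalent to $\eta'[\lnot\tup{\alpha\neq\alpha}\land\tup{\alpha=\beta}]\alpha$, where $\eta'$ is $\eta$ stripped of its last test; a single instance of \neqaxeighteen then yields $\tup{\gamma=\eta\alpha}\leq\tup{\gamma=\eta\beta}$.

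For $*={\neq}$ I would split further, on whether $\tup{\alpha\neq\beta}$ or $\lnot\tup{\alpha\neq\beta}$ is the conjunct of $\psi_{i_0}$. In the easy sub-case $\lnot\tup{\alpha\neq\beta}$ is the conjunct: \neqaxone\ gives $\lnot\tup{\beta\neq\alpha}$ as a consequence of $\psi_{i_0}$, and $\eqax{2}$, $\eqax{5}$ applied to the conjunct $\tup{\alpha=\beta}$ give $\tup{\beta}$; after \textbf{Der21} rewrites $\eta\alpha\equiv\eta[\lnot\tup{\beta\neq\alpha}\land\tup{\beta}]\alpha$, the instance of \neqaxten\ with its abstract $\alpha,\beta$ swapped yields $\tup{\gamma\neq\eta\alpha}\leq\tup{\gamma\neq\eta\beta}$.

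The hard sub-case, and the main obstacle of the lemma, is when $\tup{\alpha\neq\beta}$ is the conjunct of $\psi_{i_0}$: no single axiom plays the role of \neqaxeighteen\ for $\neq$, so the transfer must be synthesized in three stages. Step one: instantiating \neqaxeight\ gives $\tup{\alpha\neq\beta}\land\tup{\alpha=\beta}\leq\tup{\alpha\neq\alpha}\lor\tup{\beta\neq\beta}$, which combined with the conjunct $\lnot\tup{\alpha\neq\alpha}$ forces $\psi_{i_0}\leq\tup{\beta\neq\beta}$. Step two: import $\tup{\beta\neq\beta}$ into the last test of $\eta$ via \textbf{Der21}, then apply \neqaxfifteen\ to deduce $\tup{\eta}\leq\tup{\eta\beta\neq\eta\beta}$. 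Step three: apply \neqaxeight\ a second time, with all four of its abstract variables collapsed onto $\eta\beta$ and $\gamma$, to obtain $\tup{\eta\beta\neq\eta\beta}\land\tup{\gamma=\gamma}\leq\tup{\gamma\neq\eta\beta}$. Since the hypothesis $\tup{\gamma\neq\eta\alpha}$ provably implies $\tup{\eta}$ (via \neqaxone, \neqaxtwo\ and \textbf{Der12}) and $\tup{\gamma=\gamma}$ (via \neqaxtwo\ and $\eqax{1}$), chaining the three stages yields $\tup{\gamma\neq\eta\alpha}\leq\tup{\gamma\neq\eta\beta}$ and hence the contradiction.
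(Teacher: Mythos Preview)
Your proof is correct and uses the same axioms (\neqaxeight, \neqaxfifteen, \neqaxten, \neqaxeighteen) in essentially the same way as the paper. The only difference is organizational: you case-split on whether $\tup{\alpha\neq\beta}$ is a conjunct of $\psi_{i_0}$, whereas the paper argues linearly---first establishing that $\lnot\tup{\beta\neq\beta}$ must be a conjunct (via exactly the computation of your ``hard sub-case'' run as a reductio), then deducing $\lnot\tup{\alpha\neq\beta}$ via \neqaxeight, and finally applying \neqaxten\ once.
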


\newcommand{\paraverif}
{ Let $\psi \in N_n$, $\alpha,\beta \in P_n$ such that $\tup{\dow[\psi]\alpha \neq \dow[\psi]\alpha} \land \lnot \tup{\dow[\psi]\gamma \neq \dow[\psi]\gamma}$ is consistent and $\lnot \tup{\alpha \neq \alpha}$ is a conjunct of $\psi$. Then $\lnot \tup{\alpha = \gamma}$ is a conjunct of $\psi$.}

\begin{lemma}\label{paraverif}
  \paraverif
\end{lemma}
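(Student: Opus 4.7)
The plan is to argue by contradiction. Since $\tup{\alpha=\gamma}\in D_n$ and $\psi\in N_n$ is in normal form, exactly one of $\tup{\alpha=\gamma}$ or $\lnot\tup{\alpha=\gamma}$ appears as a conjunct of $\psi$. I would suppose, for contradiction, that $\tup{\alpha=\gamma}$ itself is a conjunct of $\psi$, and then aim to derive $\axiomRestrNeq\vdash\tup{\dow[\psi]\alpha\neq\dow[\psi]\alpha}\le\tup{\dow[\psi]\gamma\neq\dow[\psi]\gamma}$, which immediately contradicts the standing consistency of $\tup{\dow[\psi]\alpha\neq\dow[\psi]\alpha}\land\lnot\tup{\dow[\psi]\gamma\neq\dow[\psi]\gamma}$.

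The key device is Lemma~\ref{lema A} invoked contrapositively. I would take $*:=\neq$, $i_0:=1$, $\psi_1:=\psi$, and the Lemma~\ref{lema A} parameters $\alpha,\beta$ set to our $\alpha,\gamma$, respectively; this shifts the internal index $n$ of Lemma~\ref{lema A} to our $n+1$, so that $\psi_1\in N_n$, the inner $\alpha,\beta$ lie in $P_n$, and the remaining parameter, call it $\gamma_A$, ranges over $P_{n+1}$. Because $\psi$ is a normal form, failure of the conclusion $\lnot\tup{\alpha=\gamma}\in\psi$ is exactly our contradiction assumption $\tup{\alpha=\gamma}\in\psi$; combined with $\lnot\tup{\alpha\neq\alpha}\in\psi$ (a standing hypothesis), the contrapositive of Lemma~\ref{lema A} yields
\[
\axiomRestrNeq\vdash\tup{\gamma_A\neq\dow[\psi]\alpha}\le\tup{\gamma_A\neq\dow[\psi]\gamma}\qquad\text{for every } \gamma_A\in P_{n+1},
\]
because otherwise Lemma~\ref{lema A} would deliver $\lnot\tup{\alpha=\gamma}\in\psi$, contradicting our supposition.

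Next I would exploit this universally quantified inequality by specializing $\gamma_A$ in two ways. Choosing $\gamma_A:=\dow[\psi]\alpha\in P_{n+1}$ delivers $\tup{\dow[\psi]\alpha\neq\dow[\psi]\alpha}\le\tup{\dow[\psi]\alpha\neq\dow[\psi]\gamma}$, and choosing $\gamma_A:=\dow[\psi]\gamma\in P_{n+1}$ delivers $\tup{\dow[\psi]\gamma\neq\dow[\psi]\alpha}\le\tup{\dow[\psi]\gamma\neq\dow[\psi]\gamma}$. Applying the symmetry axiom \neqaxone\ to identify $\tup{\dow[\psi]\alpha\neq\dow[\psi]\gamma}$ with $\tup{\dow[\psi]\gamma\neq\dow[\psi]\alpha}$, I would then chain the two into $\axiomRestrNeq\vdash\tup{\dow[\psi]\alpha\neq\dow[\psi]\alpha}\le\tup{\dow[\psi]\gamma\neq\dow[\psi]\gamma}$, producing the promised contradiction and thereby forcing $\lnot\tup{\alpha=\gamma}\in\psi$.

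The main obstacle is essentially bookkeeping: correctly aligning the indexing conventions of Lemma~\ref{lema A} (its internal $n$ corresponds to our $n+1$; its $\gamma$ to our $\gamma_A\in P_{n+1}$; its $\psi_{i_0}$ to our $\psi\in N_n$; and its $\alpha,\beta$ to our $\alpha,\gamma\in P_n$), and recognizing that the universal quantification over $\gamma_A$ inherent in the contrapositive is precisely what permits chaining through the intermediate term $\tup{\dow[\psi]\alpha\neq\dow[\psi]\gamma}$. Beyond this, no additional path-algebraic manipulation is required apart from one application of \neqaxone.
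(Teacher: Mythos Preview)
Your proof is correct but follows a genuinely different route from the paper. The paper argues directly from the axioms: assuming $\tup{\alpha=\gamma}$ is a conjunct of $\psi$, it first derives $\tup{\dow[\psi]\alpha\neq\dow[\psi]\alpha}\le\tup{\dow[\psi]\gamma}$ via \neqaxtwo, {\bf Der12}, \eqax{7} and \eqax{5}; then shows $\lnot\tup{\gamma\neq\gamma}$ and hence $\lnot\tup{\alpha\neq\gamma}$ must be conjuncts of $\psi$ (using \neqaxfifteen\ and \neqaxeight); and finally reaches $\botNode$ through \neqaxeight\ and \neqaxten. In effect, the paper re-derives, for this special configuration, much of the machinery already packaged in Lemma~\ref{lema A}.

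Your approach instead treats Lemma~\ref{lema A} as a black box and invokes its contrapositive twice, with $\gamma_A=\dow[\psi]\alpha$ and $\gamma_A=\dow[\psi]\gamma$, chaining the two resulting inequalities through one use of \neqaxone. This is more modular and arguably cleaner: it reveals Lemma~\ref{paraverif} as essentially a two-step consequence of Lemma~\ref{lema A}, at the cost of hiding which axioms are actually doing the work. The paper's direct derivation, by contrast, makes the specific dependence on \neqaxfifteen, \neqaxeight, and \neqaxten\ explicit. Both arguments are valid; your index-shifting is handled correctly (the membership $\dow[\psi]\alpha,\dow[\psi]\gamma\in P_{n+1}$ holds since $\psi\in N_n$ and $\alpha,\gamma\in P_n$), and the normal-form dichotomy you invoke for $\tup{\alpha=\gamma}$ versus $\lnot\tup{\alpha=\gamma}$ is legitimate because $\psi\in N_n$ is by definition consistent.
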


\subsection{Completeness for node and path expressions}\label{completeness neq}
%!TEX root = main.tex

In this section we show that for node expressions $\varphi$ and $\psi$ of $\xpd$, the equivalence $\varphi\equiv\psi$ is derivable from the axiom schemes of Table~\ref{tab:axiomseq} plus Table~\ref{tab:axiomsneq} if and only if $\varphi$ is $\xpd$-semantically equivalent to $\psi$. We also show the corresponding result for path expressions of $\xpd$. 
\begin{theorem}[Completeness of $\xpd$]\label{thm:sat}
\thmcompletenessnode{$\xpd$}{\axiomRestrNeq}
\end{theorem}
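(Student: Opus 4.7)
The plan is to follow the route of Theorem~\ref{thm:sat-restr} essentially verbatim, simply replacing each $\xpdeq$-ingredient with its $\xpd$-counterpart. Soundness is Proposition~\ref{prop:corr}. For completeness, I reduce both sides of a claimed $\models\varphi\equiv\psi$ (respectively $\models\alpha\equiv\beta$) to disjunctions (respectively unions) of expressions from a common $N_n$ (respectively unions of $[\varphi_i]\alpha_i$ with $\varphi_i\in N_n$ and $\alpha_i\in P_n$) via Theorem~\ref{thm:normal-form-neq} together with the evident $\xpd$-analog of Lemma~\ref{lem:m}. The argument then rests on two ingredients: (i)~an analog of Lemma~\ref{lem:construction-restr} for $\xpd$ asserting that every $\varphi\in N_n$ is satisfiable; and (ii)~an analog of Lemma~\ref{lemma:inconsistent-path} for $P_n$, whose proof is identical to that for $\xpdeq$ because it relies only on Lemma~\ref{lemma:inconsistentneq} and Proposition~\ref{prop:corr}.

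With these in hand, the endgame reproduces that of Theorem~\ref{thm:sat-restr}. If one of $\varphi,\psi$ is consistent and the other is not, the satisfiability of a single $N_n$-disjunct of the consistent side contradicts the semantic equivalence; if both are consistent, I match disjuncts (respectively guarded paths $[\varphi_i]\alpha_i$, using Lemma~\ref{nextPathIsConjunctneq} to extract the actual path witness inside the canonical model for $\varphi_i$) and invoke Lemma~\ref{lemma:inconsistentneq} to rule out any asymmetry, forcing the two decompositions to coincide up to reordering and hence to be derivably equivalent.

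The main obstacle, to be addressed in the remainder of \S\ref{completeness neq}, is therefore ingredient~(i). The canonical model construction of \S\ref{construccion} must be extended to handle two new kinds of conjunct: a positive $\tup{\dow[\psi]\alpha\neq\dow[\rho]\beta}$, which demands two witnesses landing in \emph{distinct} data classes, and its dual $\lnot\tup{\dow[\psi]\mu\neq\dow[\rho]\delta}$, which forbids any such distinct-class witness pair. This is delicate: in the equality case, creating fresh data classes was automatically safe, but now the very act of creating fresh classes can threaten the new negative conditions. Controlling this calls for a strengthened analog of Lemma~\ref{Lemma:ClavePlusMinus} that picks witnesses whose data classes can be forced to be either prescribed or to avoid specified forbidden ones. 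Lemmas~\ref{lema A} and~\ref{paraverif} are precisely the tools that enable this extended selection, via the new axioms \neqaxfive, \neqaxeight, \neqaxnine, \neqaxten\ and \neqaxeighteen; once the strengthened selection lemma is in place, the verification of conditions \ref{label}--\ref{nopsialphaeqrhobeta} together with two additional conditions for positive and negative inequality diamonds follows the same case-analysis blueprint as in \S\ref{construccion}.
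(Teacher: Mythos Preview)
Your proposal is correct and mirrors the paper's own treatment: the paper likewise states that the proof of Theorem~\ref{thm:sat} is analogous to that of Theorem~\ref{thm:sat-restr}, with the critical ingredient being Lemma~\ref{lem:construction} (every $\varphi\in N_n$ is satisfiable), whose proof occupies the remainder of \S\ref{completeness neq} via a canonical-model construction along the lines you sketch. One minor remark: the extended construction actually requires \emph{four} additional conditions beyond \ref{label}--\ref{nopsialphaeqrhobeta} (positive and negative $\neq$-diamonds, each in an $\eps$-form and a general form), not two, but this does not affect the soundness of your outline.
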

%
%\begin{theorem}[Completeness of $\xpd$ path expressions]\label{thm:sat-path}
%\thmcompletenesspath{$\xpd$}{\axiomRestrNeq}
%\end{theorem}
%
The proof of the above theorem is analogous to that of Theorem~\ref{thm:sat-restr}. The critical part of the argumentation is the analog of Lemma~\ref{lem:construction-restr} for the more expressive logic $\xpd$:

\begin{lemma}\label{lem:construction}
Any node expression $\varphi\in N_n$ is satisfiable.
\end{lemma}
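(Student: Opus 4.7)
The plan is to proceed by induction on $n$ and construct, for each $\varphi\in N_n$, a canonical data tree $\Tt^\varphi$ satisfying $\varphi$ at its root, extending the construction of Section~\ref{construccion} to accommodate the positive and negative data-inequality conjuncts of $\xpd$. For the base case, if $\varphi = a\wedge\tup{\eps=\eps}\wedge\lnot\tup{\eps\neq\eps}\in N_0$ with $a\in\A$, take $\Tt^\varphi$ to be the one-node tree with root labeled $a$: the only pair satisfying $\eps$ is $(r,r)$, which has equal data value, so $\tup{\eps\neq\eps}$ fails and all conjuncts of $\varphi$ hold at $r$.

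For the inductive step on $\varphi\in N_{n+1}$, the verification conditions (C1)--(C5) from Section~\ref{construccion} must be extended with four new ones: a positive witness (C6) for each conjunct $\tup{\eps\neq\dow[\psi]\alpha}$ (a $\psi$-child with an $\alpha$-descendant in a class distinct from that of $r^\varphi$), a positive witness (C7) for each $\tup{\dow[\psi]\alpha\neq\dow[\rho]\beta}$, and negative bans (C8), (C9) that forbid any such configuration when the corresponding negated conjunct appears. Define $\mathbf{V}'$ (resp.\ $\mathbf{U}'$) as the set of pairs $(\psi,\alpha)$ (resp.\ quadruples $(\psi,\alpha,\rho,\beta)$) for which $\tup{\eps\neq\dow[\psi]\alpha}$ (resp.\ $\tup{\dow[\psi]\alpha\neq\dow[\rho]\beta}$) is a conjunct of $\varphi$, alongside the sets $\mathbf{V}$ and $\mathbf{U}$ of the equality case. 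I would retain Rules 1 and 2 of Section~\ref{construccion}, but strengthen them via an enlarged instantiation of Lemma~\ref{Lemma:ClavePlusMinus} whose forbidden-endpoint set $\{\beta_1,\dots,\beta_m\}$ now also records all paths whose data-inequality against the chosen witness is forbidden; and add two further rules: Rule 3 for each $(\psi,\alpha)\in\mathbf{V}'$ and Rule 4 for each $(\psi,\alpha,\rho,\beta)\in\mathbf{U}'$, each hanging a fresh copy of the inductive model for $\psi$ (and $\rho$) whose classes remain disjoint from any class already attached to $r^\varphi$.

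The hard part will be the verification of (C8) and (C9) under the class-mergings performed by Rules 1 and 2, together with the dual verification of the old negative equality conditions (C4) and (C5) once the new children from Rules 3 and 4 are present. Concretely, when Rule 1 introduces a $\psi$-subtree as witness for a conjunct $\tup{\eps=\dow[\psi]\gamma}$ and $\lnot\tup{\eps\neq\dow[\psi]\alpha}$ is also a conjunct of $\varphi$, every $\alpha$-descendant of the subtree's root must land in the class of $r^\varphi$; Lemma~\ref{lema A} will guarantee that $\lnot\tup{\alpha\neq\gamma}$ is a conjunct of $\psi$, so by the inductive structure of $\Tt^\psi$ each such $\alpha$-descendant is forced into the class of the $\gamma$-witness, which is precisely the one merged with $r^\varphi$. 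A symmetric argument, leveraging Lemma~\ref{paraverif} together with axioms \neqaxfive, \neqaxeight, \neqaxnine, \neqaxten, and \neqaxeighteen, handles (C9) against each of Rules 1--4. The remaining conditions (C1)--(C3), (C6) and (C7) follow directly from the construction, while (C4) and (C5) proceed as in Section~\ref{construccion} after observing that Rules 3 and 4 introduce no new class-merges that could create an unwanted equality.
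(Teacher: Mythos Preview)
Your proposal has a genuine gap in the treatment of the negative inequality conjuncts. Condition (C9) in your numbering (and the analogous condition for $\lnot\tup{\eps\neq\dow[\psi]\alpha}$) is a \emph{universal} equality constraint: if $\lnot\tup{\dow[\psi]\alpha\neq\dow[\rho]\beta}$ is a conjunct of $\varphi$, then \emph{every} $\dow[\psi]\alpha$-endpoint and \emph{every} $\dow[\rho]\beta$-endpoint below $r^\varphi$ must lie in the same data class. This cannot be achieved by enlarging the forbidden-endpoint set in Lemma~\ref{Lemma:ClavePlusMinus}, which only ensures that a single chosen witness \emph{avoids} certain classes; what is needed is a global \emph{merging} step that fuses classes across all children of $r^\varphi$. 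Your Rules~3 and~4, which hang fresh copies with classes ``disjoint from any class already attached to $r^\varphi$'', directly violate this: any $\dow[\psi]\alpha$-endpoint inside such a fresh copy sits in a fresh class, contradicting the required equality with endpoints in other subtrees (and with $r^\varphi$ itself when $\lnot\tup{\eps\neq\dow[\psi]\alpha}$ is present). The same problem afflicts the subtrees hung by your unchanged Rule~2.

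The paper's proof addresses exactly this by abandoning the simple $\mathbf{V}/\mathbf{U}$ decomposition in favour of a four-way split $\sisi,\sino,\nosi,\nono$ according to the \emph{joint} presence of the $=$ and $\neq$ conjuncts for each pair $(\psi,\alpha)$, together with a set $\mathbf{Z}$ recording the pairs forced to share a class by $\lnot\tup{\cdot\neq\cdot}$. The canonical partition is then built in two passes: first $\widehat{\pi^\varphi}$ performs global class-merges (all $\sino$-endpoints into the root's class, and all $\mathbf{Z}$-related endpoints into common classes $L_i$), and only afterwards are the Rule-4 witness merges applied. A substantial portion of the argument (Facts~\ref{factnocambialaparticion1} and~\ref{factnocambialaparticion}, Lemmas~\ref{lema 1'} and~\ref{no cambia la particion}) is devoted to showing that these global merges do not collapse distinct classes \emph{within} any inductively-built subtree---this is where Lemmas~\ref{lema 1},~\ref{lema 2},~\ref{lema A} and~\ref{paraverif} actually enter, not in a local witness-selection step. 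Your closing remark that ``Rules~3 and~4 introduce no new class-merges that could create an unwanted equality'' is correct but beside the point: the difficulty is that they \emph{must} participate in merges to avoid creating unwanted \emph{inequalities}.
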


%The proof of this lemma is given in the rest of the section. 
The rest of this section, namely \S\ref{canonical model}, is devoted to the proof of Lemma~\ref{lem:construction}.

\subsubsection{Canonical model}\label{canonical model}
%!TEX root = main.tex

We construct, recursively in $n$ and for every $\varphi\in N_n$, a data tree $\Tt^\varphi=(T^\varphi,\pi^\varphi)$ such that $\varphi$ is satisfiable in $\Tt^\varphi$. 

For the base case, if $\varphi\in N_{0}$ and $\varphi=a \wedge \tup{\eps=\eps} \land \lnot \tup{\eps \neq \eps}$ with $a \in\A$, we define  the data tree $\Tt^\varphi=(T^\varphi,\pi^\varphi)$ where $T^\varphi$ is a tree which consists of the single node $x$ with label $a$, and $\pi^\varphi = \{\{x\}\}$. 

Now, let $\varphi\in N_{n+1}$. Since $\varphi$ is a conjunction as in Definition~\ref{def:cons-neneq}, it is enough to guarantee that the following conditions hold (observe that we are using \eqax{2} and \neqaxone but we usually avoid these observations of symmetry):
\begin{enumerate}[label=(C\arabic*)]
 \item\label{labelneq} If $a\in \A$ is a conjunct of $\varphi$, then the {\rm root} $r^\varphi$ of $\Tt^\varphi$ has label $a$. 
 
 \item\label{epsiloneqpsialphaneq} If $\tup{\eps = \dow [\psi] \alpha }$ is a conjunct of $\varphi$, then there is a child $r^{\bf v}$ of the root $r^\varphi$ of $\Tt^\varphi$ at which $\psi$ is satisfied, and a node $x^{\bv}$ with the same data value as $r^\varphi$
 such that $\Tt^{\varphi}, r^\bv, x^{\bv} \models \alpha$. 

 \item\label{epsilonneqpsialphaneq} If $\tup{\eps \neq \dow [\psi] \alpha }$ is a conjunct of $\varphi$, then there is a child $r^{\bf v}$ of the root $r^\varphi$ of $\Tt^\varphi$ at which $\psi$ is satisfied, and a node $x^{\bv}$ with different data value than $r^\varphi$
 such that $\Tt^{\varphi}, r^\bv, x^{\bv} \models \alpha$.   
 
 \item\label{psialphaeqrhobetaneq} If $\tup{\dow [\psi]\alpha = \dow [\rho] \beta}$ is a conjunct of $\varphi$, then there are two children $r^\bu_1$, $r^\bu_2$ of the root $r^\varphi$ of $\Tt^\varphi$ at which $\psi$ and $\rho$ are satisfied respectively, and there are nodes $x^\bu$ and $y^\bu$ with the same data value such that $\Tt^{\varphi}, r^\bu_1, x^{\bu} \models \alpha$ and $\Tt^{\varphi}, r^\bu_2, y^{\bu} \models \beta$.

 \item\label{psialphaneqrhobetaneq} If $\tup{\dow [\psi]\alpha \neq \dow [\rho] \beta}$ is a conjunct of $\varphi$, then there are two children $r^\bu_1$, $r^\bu_2$ of the root $r^\varphi$ of $\Tt^\varphi$ at which $\psi$ and $\rho$ are satisfied respectively, and there are nodes $x^\bu$ and $y^\bu$ with different data value such that $\Tt^{\varphi}, r^\bu_1, x^{\bu} \models \alpha$ and $\Tt^{\varphi}, r^\bu_2, y^{\bu} \models \beta$.
 
 \item\label{noepsiloneqpsialphaneq} If $\lnot \tup{\eps = \dow [\psi] \alpha }$ is a conjunct of $\varphi$, then for each child $z$ of the root $r^\varphi$ of $\Tt^\varphi$ at which $\psi$ is satisfied, if $x$ is a node such that $\Tt^\varphi, z, x \models \alpha$, then the data value of $x$ is different than the one of $r^\varphi$.
 
 \item\label{noepsilonneqpsialphaneq}  If $\lnot \tup{\eps \neq \dow [\psi] \alpha }$ is a conjunct of $\varphi$, then for each child $z$ of the root $r^\varphi$ of $\Tt^\varphi$ at which $\psi$ is satisfied, if $x$ is a node such that $\Tt^\varphi, z, x \models \alpha$, then the data value of $x$ is the same as the one of $r^\varphi$.  
 
 \item\label{nopsialphaeqrhobetaneq} If $\lnot \tup{\dow [\psi]\alpha = \dow [\rho] \beta}$ is a conjunct of $\varphi$, then for each children $z_1, z_2$ of the root $r^\varphi$ of $\Tt^\varphi$ at which $\psi$ and $\rho$ are satisfied respectively, if $w_1, w_2$ are nodes such that $\Tt^\varphi, z_1, w_1 \models \alpha$ and $\Tt^\varphi, z_2, w_2 \models \beta$, then the data values of $w_1$ and $w_2$ are different.

 \item\label{nopsialphaneqrhobetaneq} If $\lnot \tup{\dow [\psi]\alpha \neq \dow [\rho] \beta}$ is a conjunct of $\varphi$, then for each children $z_1, z_2$ of the root $r^\varphi$ of $\Tt^\varphi$ at which $\psi$ and $\rho$ are satisfied respectively, if $w_1, w_2$ are nodes such that $\Tt^\varphi, z_1, w_1 \models \alpha$ and $\Tt^\varphi, z_2, w_2 \models \beta$, then $w_1$ and $w_2$ have the same data value.
 
\end{enumerate}

As in \S\ref{construccion}, we first give an intuitive description of the construction of the model, and then proceed to formalize it:

\subsubsection*{Insight into the construction}
 
The construction given in \S\ref{construccion} has some similarities with the one we are about to present. As before, we will hang, from a common {\rm root}, copies of trees given by inductive hypothesis to guarantee the satisfaction of some conjuncts of $\varphi$. Like in the previous case, we may need to introduce some changes on those trees in order to avoid spoiling the satisfaction of other conjuncts. 

However, this construction is far more complex than the one for $\xpdeq$. In the previous case, when adding new witnesses with fresh data values, one only needed to be careful enough to avoid putting in the same class nodes that should  be in different classes. Now, in addition to that (which is also harder to achieve, as witnessed by the differences between Lemmas~\ref{Lemma:ClavePlusMinus} and~\ref{lema clave plus} explained at the end of the latter), one also needs to guarantee conditions of the form $\lnot \tup{\mu \neq \delta}$ with $\mu, \delta \in P_{n+1}$ which force the merging of classes of {\em every} witness of the kind of paths involved that could appear along the construction. 

Unlike the case of $\xpdeq$, each pair of path expressions $\mu, \delta$ in $P_{n+1}$ will occur in two conjuncts of $\varphi$ instead of one (we do not  care about symmetric repetitions). Indeed, in the case of $\xpdeq$, for $\mu, \delta$ in $\nfP{n+1}$, we either have $\tup{\mu =\delta}$ or $\lnot \tup{\mu = \delta}$ as a conjunct of a node expression in $\nfN{n+1}$. Now we have four choices because we also have either $\tup{\mu \neq \delta}$ or $\lnot \tup{\mu \neq \delta}$, and hence {\em two} conjuncts containing $\mu$ and $\delta$ will occur in node expressions of $N_{n+1}$. We cannot treat as separate from each other those two conjuncts in which the same pair $\mu, \delta$ 
appear, so we first split $P_{n+1}$ into four subsets to deal with diamonds that compare against the constant empty path:
% 
%
%
% The idea to achieve the previous conditions is to start with a {\rm root} $r^{\varphi}$ with label $a$ if $a$ is the label present in $\varphi$ (so that condition~\ref{label} is satisfied). At this point, conditions~\ref{noepsiloneqpsialpha} and~\ref{nopsialphaeqrhobeta} are trivially satisfied. On the contrary,~\ref{epsiloneqpsialpha} and~\ref{psialphaeqrhobeta} are probably not satisfied, and require a positive action (i.e.\ changing the current model) to make them true.
% We want to add witnesses that guarantee the satisfaction of~\ref{epsiloneqpsialpha} and~\ref{psialphaeqrhobeta}. However, adding witnesses jeopardizes the validity of ~\ref{noepsiloneqpsialpha} and~\ref{nopsialphaeqrhobeta}, so we need to do it carefully enough.
%
%
%
%
%
% Let $\varphi\in N_{n+1}$. To define the {\rm root}ed data tree $\Tt^\varphi=(T^\varphi,\pi^\varphi,r^\varphi)$ we first classify pairs $(\psi,\alpha)$ with $\psi \in N_n$ and $\alpha \in P_n$ in four disjoint cases depending on the conjuncts of $\varphi$:
%
%\begin{itemize}
\begin{align*}
\sisi&=\{(\psi,\alpha) \mid \psi \in N_n, \alpha \in P_n, \tup{\eps=\dow[\psi]\alpha} \hbox{ and } \tup{\eps\neq\dow[\psi]\alpha} \hbox{ are conjuncts of } \varphi\}
\\
\sino&=\{(\psi,\alpha) \mid \psi \in N_n, \alpha \in P_n, \tup{\eps=\dow[\psi]\alpha} \hbox{ and } \lnot \tup{\eps\neq\dow[\psi]\alpha} \hbox{ are conjuncts of } \varphi\}
\\
\nosi&=\{(\psi,\alpha) \mid \psi \in N_n, \alpha \in P_n, \lnot \tup{\eps=\dow[\psi]\alpha} \hbox{ and } \tup{\eps\neq\dow[\psi]\alpha} \hbox{ are conjuncts of } \varphi\}
\\
\nono&=\{(\psi,\alpha) \mid \psi \in N_n, \alpha \in P_n, \lnot \tup{\eps=\dow[\psi]\alpha} \hbox{ and } \lnot \tup{\eps\neq\dow[\psi]\alpha} \hbox{ are conjuncts of } \varphi\}
\end{align*}
%\end{itemize}
%
%
We make the following observations regarding the above definitions:
%
%\begin{itemize}
 %\item 

\newcommand{\lemacinco}{
Let $\psi\in N_n$, $\alpha \in P_n$, $\gamma \in P_{n+1}$. If $\lnot \tup{\gamma = \dow[\psi]\alpha} \land \lnot \tup{\gamma \neq \dow[\psi]\alpha} \land \tup{\gamma} \land \tup{\dow[\psi]}$ is consistent, then $\lnot \tup{\alpha = \alpha}$ is a conjunct of $\psi$. 
}
 
 \begin{observation}
 For $(\psi, \alpha) \in \nono$, our axioms should tell us that either $\tup{\alpha}$ is not a conjunct of $\psi$ or $\dow [\psi] \beta$ does not appear in any other positive conjunct of $\varphi$. If this is not the case, then $\varphi$ would be clearly unsatisfiable and thus our axiomatic system would not be complete. This assertion is a consequence of the following lemma plus {\bf Der12} of Fact~\ref{fact boolean}. It is important to remark that the axioms required for the proof can be easily proven sound.

 \begin{lemma}\label{lema 5}
\lemacinco
\end{lemma}

\begin{proof}
 See \S\ref{app}.
\end{proof}

Then, by Lemma~\ref{lemma:inconsistentneq} plus the fact that we will construct our model by hanging from the {\rm root} the trees given by inductive hypothesis, we should not be worried about the satisfaction of either $\lnot \tup{\eps = \dow [\psi] \alpha}$ nor $\lnot \tup{\eps \neq \dow [\psi] \alpha}$ because we will never create a pair of nodes witnessing the path $\dow[\psi] \alpha$.
\end{observation}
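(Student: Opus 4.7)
The plan is to prove the contrapositive: assuming that $\tup{\alpha=\alpha}$ is a conjunct of $\psi$, I will show that the conjunction $\tup{\gamma} \land \tup{\dow[\psi]} \land \lnot\tup{\gamma=\dow[\psi]\alpha} \land \lnot\tup{\gamma\neq\dow[\psi]\alpha}$ is inconsistent. Semantically this is immediate: $\tup{\alpha=\alpha}$ being a conjunct of $\psi$ means that wherever $\psi$ holds, $\alpha$ can be traversed from that node, so $\tup{\dow[\psi]}$ forces a witness of $\dow[\psi]\alpha$; then any witness of $\gamma$ must either share or fail to share the data value at the endpoint of that witness, triggering exactly one of the two positive data-diamonds that are being negated. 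The axiomatic formalisation splits into two short steps.

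First, I will upgrade $\tup{\dow[\psi]}$ to $\tup{\dow[\psi]\alpha}$. From the hypothesis, Boolean reasoning gives $\psi\equiv\psi\land\tup{\alpha=\alpha}$, and {\bf Der21} of Fact~\ref{fact boolean} then yields $[\psi]\equiv[\psi][\tup{\alpha=\alpha}]$. Applying \eqax{7} with prefix path $\dow[\psi]$ and both equated paths taken to be $\alpha$, followed by \eqax{1}, produces
\[
\tup{\dow[\psi]} \equiv \tup{\dow[\psi][\tup{\alpha=\alpha}]} \leq \tup{\dow[\psi]\alpha=\dow[\psi]\alpha} \equiv \tup{\dow[\psi]\alpha}.
\]
Second, I will instantiate \neqaxfive by taking its first and third path variables to be $\gamma$ and its second and fourth to be $\dow[\psi]\alpha$; after rewriting the two reflexive equality diamonds via \eqax{1}, this reads
\[
\tup{\gamma}\land\tup{\dow[\psi]\alpha}\leq\tup{\gamma=\dow[\psi]\alpha}\lor\tup{\gamma\neq\dow[\psi]\alpha}.
\]
Chaining with the previous inequality and applying Boolean reasoning closes the argument.

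I do not foresee a real obstacle; the whole proof is a brief axiomatic shuffle. The only mildly non-obvious point is recognising that \eqax{7} specialised to identical equated paths, combined with \eqax{1}, is exactly what lifts $\tup{\dow[\psi]}$ to $\tup{\dow[\psi]\alpha}$ under the hypothesis on $\psi$, after which \neqaxfive supplies the crucial trichotomy between data-equality and data-inequality of any two jointly witnessed paths.
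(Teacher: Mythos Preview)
Your proof is correct and follows essentially the same route as the paper's: assume $\tup{\alpha=\alpha}$ is a conjunct of $\psi$, use {\bf Der21} together with \eqax{7} to pass from $\tup{\dow[\psi]}$ to $\tup{\dow[\psi]\alpha=\dow[\psi]\alpha}$, and then invoke \neqaxfive (with \eqax{1} to rewrite the reflexive diamonds) to obtain the contradiction with $\lnot\tup{\gamma=\dow[\psi]\alpha}\land\lnot\tup{\gamma\neq\dow[\psi]\alpha}\land\tup{\gamma}$. The only cosmetic difference is that you first reduce $\tup{\dow[\psi]\alpha=\dow[\psi]\alpha}$ to $\tup{\dow[\psi]\alpha}$ before applying \neqaxfive, whereas the paper applies the axiom directly and lets the implicit use of \eqax{1} absorb that step.
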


 \newcommand{\lemauno}{Let $*\in\{=,\neq\}$, 
 $\psi\in N_n$, $\alpha,\beta\in P_n$, $\gamma \in P_{n+1}$. If $\tup{\gamma = \dow[\psi]\alpha} \land \lnot \tup{\gamma \neq \dow[\psi]\alpha} \land \lnot \tup{\gamma * \dow[\psi]\beta}$ is consistent, then $\lnot \tup{\alpha * \beta}$ is a conjunct of $\psi$.}

\begin{observation}\label{los sino todos iguales} For $(\psi,\alpha) \in \sino$, our axioms should tell us that in a tree $\Tt^{\psi}$, any pair of nodes satisfying $\alpha$ ends in a node in the same equivalence class, since we want to put any such node in the class of the root $r^{\varphi}$. The following lemma has this property as an immediate consequence.

 \begin{lemma}\label{lema 1} % ($*$ should be replaced evenly by $=$ or $\neq$).
\lemauno
 \end{lemma}
 
 \begin{proof}
 See \S\ref{app}.
\end{proof}
 \end{observation}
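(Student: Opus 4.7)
The plan is to argue by contradiction: since $\psi \in N_n$, either $\tup{\alpha * \beta}$ or $\lnot\tup{\alpha * \beta}$ is a conjunct of $\psi$, so I would assume the former and derive that the hypothesised expression is inconsistent. The argument naturally splits into an auxiliary step that extracts information from the two fixed conjuncts $\tup{\gamma = \dow[\psi]\alpha}\land\lnot\tup{\gamma\neq\dow[\psi]\alpha}$ common to both values of $*$, followed by a case analysis on $*$.

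First I would establish that $\lnot\tup{\alpha \neq \alpha}$ must be a conjunct of $\psi$. Otherwise $\tup{\alpha\neq\alpha}$ is a conjunct; then {\bf Der21} gives $\dow[\psi] \equiv \dow[\psi][\tup{\alpha\neq\alpha}]$, and \neqaxfifteen instantiated with outer path $\dow[\psi]$ and duplicated inner path $\alpha$ yields $\tup{\dow[\psi]} \leq \tup{\dow[\psi]\alpha \neq \dow[\psi]\alpha}$. Instantiating \neqaxeight with $\alpha' := \gamma' := \dow[\psi]\alpha$ and $\beta' := \eta' := \gamma$, and collapsing $\tup{\gamma = \gamma}$ to $\tup{\gamma}$ via \eqax{1}, I obtain $\tup{\dow[\psi]\alpha \neq \dow[\psi]\alpha}\land\tup{\gamma}\leq\tup{\gamma\neq\dow[\psi]\alpha}$. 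Combining this with $\tup{\gamma = \dow[\psi]\alpha} \leq \tup{\gamma}$ (by \eqax{5}) and $\tup{\gamma = \dow[\psi]\alpha} \leq \tup{\dow[\psi]}$ (by \eqax{5} plus {\bf Der12}) chains into $\tup{\gamma = \dow[\psi]\alpha} \leq \tup{\gamma \neq \dow[\psi]\alpha}$, contradicting the consistency of $\tup{\gamma = \dow[\psi]\alpha}\land\lnot\tup{\gamma\neq\dow[\psi]\alpha}$.

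With $\lnot\tup{\alpha \neq \alpha}$ in hand I case on $*$. For $* =\ =$, suppose $\tup{\alpha = \beta}$ is a conjunct of $\psi$; then {\bf Der21} gives $\dow[\psi]\alpha \equiv \dow[\psi][\lnot\tup{\alpha\neq\alpha}\land\tup{\alpha=\beta}]\alpha$, and \neqaxeighteen with $\eta := \dow[\psi]$ immediately delivers $\tup{\gamma = \dow[\psi]\alpha} \leq \tup{\gamma = \dow[\psi]\beta}$, contradicting $\lnot\tup{\gamma = \dow[\psi]\beta}$. For $* = \neq$, suppose $\tup{\alpha \neq \beta}$ is a conjunct of $\psi$; the same rewriting via {\bf Der21} and \neqaxfifteen (now with distinct inner paths $\alpha,\beta$) yields $\tup{\dow[\psi]} \leq \tup{\dow[\psi]\alpha \neq \dow[\psi]\beta}$, and \neqaxeight with $\alpha' := \dow[\psi]\alpha$, $\gamma' := \dow[\psi]\beta$, $\beta' := \eta' := \gamma$ plus \eqax{1} gives $\tup{\dow[\psi]\alpha \neq \dow[\psi]\beta}\land\tup{\gamma} \leq \tup{\gamma \neq \dow[\psi]\alpha} \lor \tup{\gamma \neq \dow[\psi]\beta}$; chaining with $\tup{\gamma = \dow[\psi]\alpha} \leq \tup{\gamma} \land \tup{\dow[\psi]}$ as in the auxiliary step produces $\tup{\gamma = \dow[\psi]\alpha} \leq \tup{\gamma \neq \dow[\psi]\alpha} \lor \tup{\gamma \neq \dow[\psi]\beta}$, contradicting $\lnot\tup{\gamma\neq\dow[\psi]\alpha}\land\lnot\tup{\gamma\neq\dow[\psi]\beta}$. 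The hardest part will be the auxiliary step: spotting the ``twin'' instantiation of \neqaxeight with $\beta' = \eta' = \gamma$ that collapses via \eqax{1} to the pure existence assertion $\tup{\gamma}$, which is what allows a local inequality produced at the $\dow[\psi]$-child by \neqaxfifteen to be propagated upward as a global inequality against the external path $\gamma$; once this lifting principle is isolated, both cases of $*$ follow by the same templated chain of inequalities.
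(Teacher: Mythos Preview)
Your argument is correct. For $*={\neq}$ it coincides with the paper's proof: both rewrite $\dow[\psi]$ via {\bf Der21} to expose the assumed conjunct $\tup{\alpha\neq\beta}$, apply \neqaxfifteen to obtain $\tup{\dow[\psi]\alpha\neq\dow[\psi]\beta}$, and then use the ``twin'' instantiation of \neqaxeight against $\tup{\gamma=\gamma}\equiv\tup{\gamma}$ to lift this to a disjunction of global inequalities.

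The genuine difference is in the case $*={=}$. The paper does not isolate your auxiliary fact $\lnot\tup{\alpha\neq\alpha}$ at all; instead it mirrors the $\neq$-case structurally, replacing \neqaxfifteen by \eqax{7} to get $\tup{\dow[\psi]\alpha=\dow[\psi]\beta}$ and replacing \neqaxeight by \neqaxfive (NeqAx6) to conclude $\tup{\gamma=\dow[\psi]\beta}\lor\tup{\gamma\neq\dow[\psi]\alpha}$, which is already the desired contradiction. Your route first establishes $\lnot\tup{\alpha\neq\alpha}$ (via essentially the $\neq$-argument with $\beta:=\alpha$) and then applies \neqaxeighteen (NeqAx10) directly. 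Both are valid; the paper's version is shorter and more symmetric between the two cases, while yours has the virtue of making explicit a reusable intermediate fact (that $\lnot\tup{\alpha\neq\alpha}$ is forced by the two fixed conjuncts), which is exactly the sort of information the canonical-model construction later exploits. Note, though, that your auxiliary step is only consumed in the $=$ case, so it could be localised there rather than stated upfront.
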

 
 \begin{observation}\label{los sino y los nosi distintos} For $(\psi,\alpha)\in \sino$ and $(\psi,\beta)\in \nosi$, Lemma~\ref{lema 1} also tells us that in a tree $\Tt^{\psi}$, any pairs of nodes satisfying $\alpha$ and $\beta$ end in points in different equivalence classes; which is also necessary to be able to satisfy $\varphi$. 
 \end{observation}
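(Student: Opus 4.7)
The plan is to derive this observation as an immediate consequence of Lemma~\ref{lema 1}, instantiated with $\gamma := \eps$ and $* := \,=$, followed by a soundness argument. First, I would unpack the definitions: $(\psi,\alpha) \in \sino$ means that both $\tup{\eps = \dow[\psi]\alpha}$ and $\lnot \tup{\eps \neq \dow[\psi]\alpha}$ are conjuncts of $\varphi$, while $(\psi,\beta) \in \nosi$ means that $\lnot \tup{\eps = \dow[\psi]\beta}$ is a conjunct of $\varphi$. Since $\varphi \in N_{n+1}$ is consistent, the conjunction
$$
\tup{\eps = \dow[\psi]\alpha} \land \lnot \tup{\eps \neq \dow[\psi]\alpha} \land \lnot \tup{\eps = \dow[\psi]\beta}
$$
is consistent as well.

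Applying Lemma~\ref{lema 1} with $\gamma := \eps$ and $* := \,=$, I obtain that $\lnot \tup{\alpha = \beta}$ is a conjunct of $\psi$. The semantic conclusion then follows by soundness (Proposition~\ref{prop:corr}): in any data tree $\Tt^{\psi}$ with root $r^{\psi}$ satisfying $\psi$, we have $\Tt^{\psi}, r^{\psi} \models \lnot \tup{\alpha = \beta}$. Reading the semantic clause of $\tup{\alpha = \beta}$, this says there are no nodes $x, y$ with $\Tt^{\psi}, r^{\psi}, x \models \alpha$, $\Tt^{\psi}, r^{\psi}, y \models \beta$, and $[x]_{\pi^{\psi}} = [y]_{\pi^{\psi}}$; equivalently, every pair of $\alpha$- and $\beta$-endpoints lies in different equivalence classes.

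The entire content of the proof is thus the application of Lemma~\ref{lema 1}; there is no real obstacle, since the conjuncts needed to feed the hypothesis are handed to us directly by the definitions of $\sino$ and $\nosi$, and the semantic translation is just reading off the definition of $\dbracket{\tup{\alpha=\beta}}^{\Tt^{\psi}}$ together with Proposition~\ref{prop:corr}.
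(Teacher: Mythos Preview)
Your proposal is correct and matches the paper's approach exactly: the observation is stated in the paper as an immediate consequence of Lemma~\ref{lema 1}, and you have correctly identified the instantiation ($\gamma := \eps$, $* := {=}$) and spelled out the soundness step that translates the syntactic conclusion $\lnot\tup{\alpha=\beta}$ being a conjunct of $\psi$ into the semantic claim about equivalence classes in $\Tt^\psi$.
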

 
 \newcommand{\lemados}
 {Let $*\in\{=,\neq\}$, 
  $\psi\in N_n$, $\alpha,\beta\in P_n$, $\gamma \in P_{n+1}$. If $\tup{\gamma = \dow[\psi]\alpha} \land \lnot \tup{\gamma \neq \dow[\psi]\alpha}  \land  \tup{\gamma * \dow[\psi]\beta}$ is consistent, then $\tup{\alpha * \beta}$ is a conjunct of $\psi$. %($*$ should be replaced evenly by $=$ or $\neq$).
  }

 \begin{observation}\label{testigo para epsilonneqpsialpha} For $(\psi,\alpha)\in \sisi$ and $(\psi,\beta)\in \sino$, in order to obtain a witness for $\tup{\eps \neq \dow[\psi]\alpha}$, our axioms should tell us that in a tree $\Tt^{\psi}$ we can find a pair of nodes satisfying $\alpha$ starting from the root, and such that its ending node is in a different class from that of the ending node of any pair of nodes satisfying $\beta$ and beginning at the root of that tree. The following lemma combined with Observation~\ref{los sino todos iguales} has this as an immediate consequence.

 \begin{lemma}\label{lema 2} 
 \lemados
\end{lemma}
 
  \begin{proof}
 See \S\ref{app}.
\end{proof}
\end{observation}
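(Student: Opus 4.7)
The plan is to argue by contradiction. Since $\psi\in N_n$ and $\tup{\alpha*\beta}\in D_n$, the definition of normal form forces either $\tup{\alpha*\beta}$ or $\lnot\tup{\alpha*\beta}$ to be a conjunct of $\psi$; assume for contradiction that it is the latter. First, I would observe that $\tup{\alpha=\alpha}$ is also a conjunct of $\psi$: by \eqax{5} we have $\tup{\gamma=\dow[\psi]\alpha}\le\tup{\dow[\psi]\alpha}$, so the consistency of the hypothesis yields consistency of $\tup{\dow[\psi]\alpha}$, and Lemma~\ref{nextPathIsConjunctneq} then gives that $\tup{\alpha=\alpha}$ is a conjunct of $\psi$. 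In particular, by \eqax{1}, $\tup{\alpha}$ is equivalent to $\tup{\alpha=\alpha}$, so it too behaves like a conjunct of $\psi$.

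Next, since $\psi$ contains both $\lnot\tup{\alpha*\beta}$ and $\tup{\alpha}$ as conjuncts, we have $\axiomRestrNeq\vdash\psi\equiv\psi\land\lnot\tup{\alpha*\beta}\land\tup{\alpha}$. Two applications of Der21 (Fact~\ref{fact boolean}) then give the key rewrite
$$\axiomRestrNeq\vdash\dow[\psi]\beta\equiv\dow[\psi][\lnot\tup{\alpha*\beta}\land\tup{\alpha}]\beta.$$

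Now I would invoke the axiom scheme that was evidently tailored for precisely this situation: for the case $*\;{=}\;{=}$, \neqaxnine with $\eta:=\dow[\psi]$ gives
$$\tup{\gamma=\dow[\psi][\lnot\tup{\alpha=\beta}\land\tup{\alpha}]\beta}\le\tup{\gamma\neq\dow[\psi]\alpha},$$
while for the case $*\;{=}\;{\neq}$, \neqaxten with the same substitution gives
$$\tup{\gamma\neq\dow[\psi][\lnot\tup{\alpha\neq\beta}\land\tup{\alpha}]\beta}\le\tup{\gamma\neq\dow[\psi]\alpha}.$$
Combining with the equivalence of the previous paragraph, in either case we obtain $\tup{\gamma*\dow[\psi]\beta}\le\tup{\gamma\neq\dow[\psi]\alpha}$. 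Since by hypothesis the formula $\tup{\gamma=\dow[\psi]\alpha}\land\lnot\tup{\gamma\neq\dow[\psi]\alpha}\land\tup{\gamma*\dow[\psi]\beta}$ is consistent, while the derived inequality provably collapses its last two conjuncts to $\botNode$, we reach a contradiction, completing the proof.

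The main obstacle is recognizing \neqaxnine and \neqaxten as exactly the right tool to `inject' the inconsistency witnessed by $\psi$ (namely $\lnot\tup{\alpha*\beta}$) inside the predicate of a composite path $\dow[\psi]\beta$ so as to convert an assertion about $\dow[\psi]\beta$ into one about $\dow[\psi]\alpha$ with a flipped $=/\neq$; once this is in place, the rest is routine equational manipulation with Der21, \eqax{1}, and \eqax{5}. The proof is strictly parallel to that of Lemma~\ref{lema 1}, with the roles of positive and negative occurrences of the outer data-aware diamond swapped.
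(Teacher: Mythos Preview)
Your proposal is correct and follows essentially the same argument as the paper's proof: assume for contradiction that $\lnot\tup{\alpha*\beta}$ is a conjunct of $\psi$, use \eqax{5} and Lemma~\ref{nextPathIsConjunctneq} to get $\tup{\alpha}$ (via $\tup{\alpha=\alpha}$) as a conjunct, rewrite $\dow[\psi]\beta$ as $\dow[\psi][\lnot\tup{\alpha*\beta}\land\tup{\alpha}]\beta$ with {\bf Der21}, and then apply \neqaxnine (for $*$ being $=$) or \neqaxten (for $*$ being $\neq$) with $\eta:=\dow[\psi]$ to derive $\tup{\gamma\neq\dow[\psi]\alpha}$, contradicting the hypothesis. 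The paper presents this as a single chain of inequalities rather than isolating the rewrite step, but the logical content is identical.
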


\begin{observation}\label{testigo para epsiloneqpsialpha} For $(\psi,\alpha) \in \sisi$, in order to obtain a witness for $\tup{\eps = \dow[\psi]\alpha}$, we need a tree in which $\psi$ is satisfied and a pair of nodes (beginning at the root of that tree) satisfying $\alpha$ and ending in a node such that: it is in the class of the ending nodes of pairs of nodes satisfying $\beta$ for $(\psi,\beta) \in \sino$,  but it is not in the class of any ending node of a pair of nodes satisfying $\gamma$ for $(\psi,\gamma) \in \nosi$. In case there exists $\beta \in P_n$ such that $(\psi,\beta) \in \sino$, any tree at which $\psi$ is satisfied will work by the previous observations and lemmas. But in case $(\psi,\beta) \not \in \sino$ for all $\beta \in P_n$, we will have to make use of Lemma~\ref{lema clave plus} (the analogous of Lemma~\ref{Lemma:ClavePlusMinus} for this case).
\end{observation}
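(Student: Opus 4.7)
The plan is to substantiate the three nested assertions of this observation: (a) the precise properties required of the witness $x^{\bf v}$ in order to satisfy $\tup{\eps=\dow[\psi]\alpha}$ without spoiling the negative conjuncts of $\varphi$; (b) that when some $\beta$ with $(\psi,\beta)\in\sino$ exists, any tree that satisfies $\psi$ already contains such a witness; and (c) that when no such $\beta$ exists we must invoke Lemma~\ref{lema clave plus}. Assertion (a) follows directly from the merge operation that will be used in the construction: since $r^\varphi$ is placed into the class of $x^{\bf v}$ to secure condition~\ref{epsiloneqpsialphaneq}, condition~\ref{noepsilonneqpsialphaneq} applied to each $(\psi,\beta)\in\sino$ forces every $\beta$-endpoint to enter that class, while condition~\ref{noepsiloneqpsialphaneq} applied to each $(\psi,\gamma)\in\nosi$ forbids any $\gamma$-endpoint from entering it.

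For (b), I would fix any $(\psi,\beta)\in\sino$ and apply Lemma~\ref{lema 2} with $\gamma:=\eps$, $*\,:=\,=$, and with the roles of the two paths of the lemma played by $\beta$ and $\alpha$ respectively. The hypothesis $\tup{\eps=\dow[\psi]\beta}\wedge\lnot\tup{\eps\neq\dow[\psi]\beta}\wedge\tup{\eps=\dow[\psi]\alpha}$ is consistent, because all three conjuncts occur in the consistent $\varphi$ (using $(\psi,\beta)\in\sino$ and $(\psi,\alpha)\in\sisi$). The lemma then gives $\tup{\beta=\alpha}$ as a conjunct of $\psi$, so that in any tree $\Tt^\psi$ satisfying $\psi$ at its root $r^\psi$ there exist $x,y$ with $(r^\psi,x)\models\alpha$, $(r^\psi,y)\models\beta$ and $[x]_\pi=[y]_\pi$. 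Observation~\ref{los sino todos iguales} then tells us that $[x]_\pi$ contains every $\sino$-endpoint, and Observation~\ref{los sino y los nosi distintos} tells us that $[x]_\pi$ is disjoint from every $\nosi$-endpoint class, so $x$ is the required witness and $\Tt^\psi$ may be used unchanged.

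For (c), when no $\beta\in P_n$ satisfies $(\psi,\beta)\in\sino$, the first clause of the requirement in (a) becomes vacuous and all that remains is to produce a pair $(r^\psi,x)$ satisfying $\alpha$ whose endpoint avoids every $\nosi$-endpoint class. This is precisely the kind of statement Lemma~\ref{lema clave plus} (the $\xpd$-analog of Lemma~\ref{Lemma:ClavePlusMinus}) is designed to supply: I would instantiate it with $\psi_0:=\psi$, the same $\alpha$, and $\{\beta_1,\dots,\beta_m\}:=\{\gamma\mid(\psi,\gamma)\in\nosi\}$, verifying its consistency hypothesis from the fact that for each $i$ the expression $\tup{\eps=\dow[\psi]\alpha}\wedge\lnot\tup{\eps=\dow[\psi]\beta_i}$ is a sub-formula of the consistent $\varphi$ (so one can take $\gamma_i:=\eps$).

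The main obstacle is the correct orientation of Lemma~\ref{lema 2} in (b) and the exact form of the consistency hypothesis fed to Lemma~\ref{lema clave plus} in (c); once these are pinned down, the rest is a bookkeeping combination of the already-stated observations and of the analog of Lemma~\ref{Lemma:ClavePlusMinus}.
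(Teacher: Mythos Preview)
Your proposal is correct and essentially matches the paper's own argument (which is fleshed out in Lemma~\ref{lema a}): the same case split on whether some $(\psi,\beta)\in\sino$ exists, the same use of Lemma~\ref{lema 2} together with Observations~\ref{los sino todos iguales} and~\ref{los sino y los nosi distintos} (i.e.\ Lemma~\ref{lema 1}) in the first case, and the same appeal to Lemma~\ref{lema clave plus} with $\gamma_i:=\eps$ and $\{\beta_1,\dots,\beta_m\}=\{\gamma\mid(\psi,\gamma)\in\nosi\}$ in the second. Your orientation of Lemma~\ref{lema 2} (swapping the roles so that the $\sino$-path plays the lemma's $\alpha$) is exactly right.
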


\paragraph{\em Processing data-aware diamonds of the form \boldmath{$(\lnot)\tup{\eps*\dow[\psi]\alpha}$}.} Having all these observations at hand, we begin by analyzing the following (non-disjoint) cases to construct our tree $\Tt^{\varphi}$:

\begin{enumerate}[label=(Case \arabic*),leftmargin=1.6cm]
 \item\label{Rule1} For $(\psi, \alpha) \in \sisi$, we add two witnesses. One for $\tup{\eps = \dow[\psi]\alpha}$ from which we  merge the class of the ending point $x^{\bf v_1}$ of a pair of nodes satisfying $\alpha$ as in Observation~\ref{testigo para epsiloneqpsialpha} with the class of $r^\varphi$. We add another witness for $\tup{\eps \neq \dow[\psi]\alpha}$ (remember Observation~\ref{testigo para epsilonneqpsialpha}). See Figure~\ref{intuicionneq}(a).   
 
 \item\label{Rule2} For $(\psi, \alpha) \in \sino$, we  add one witness for $\tup{\eps =\dow[\psi]\alpha}$ (see Figure~\ref{intuicionneq}(b)) and, at the end of the construction, we will merge the class of any node $x$ such that $r^\varphi, x \models \dow[\psi] \alpha$ with the class of $r^\varphi$ (remember Observation~\ref{los sino todos iguales}).
 
 \item\label{Rule3} For $(\psi, \alpha) \in \nosi$, we  add one witness for $\tup{\eps \neq \dow[\psi]\alpha}$ (See Figure~\ref{intuicionneq}(c)). Note that $\tup{\eps\neq \dow[\psi]\alpha} \land \lnot \tup{\eps= \dow[\psi]\alpha}$ will be satisfied by Observations~\ref{los sino y los nosi distintos} and~\ref{testigo para epsiloneqpsialpha}.
\end{enumerate}

\begin{figure}[ht]
   \begin{center}
   \begin{tabular}{c@{\hskip .5in}c@{\hskip .5in}c}
   \includegraphics[scale=0.25]{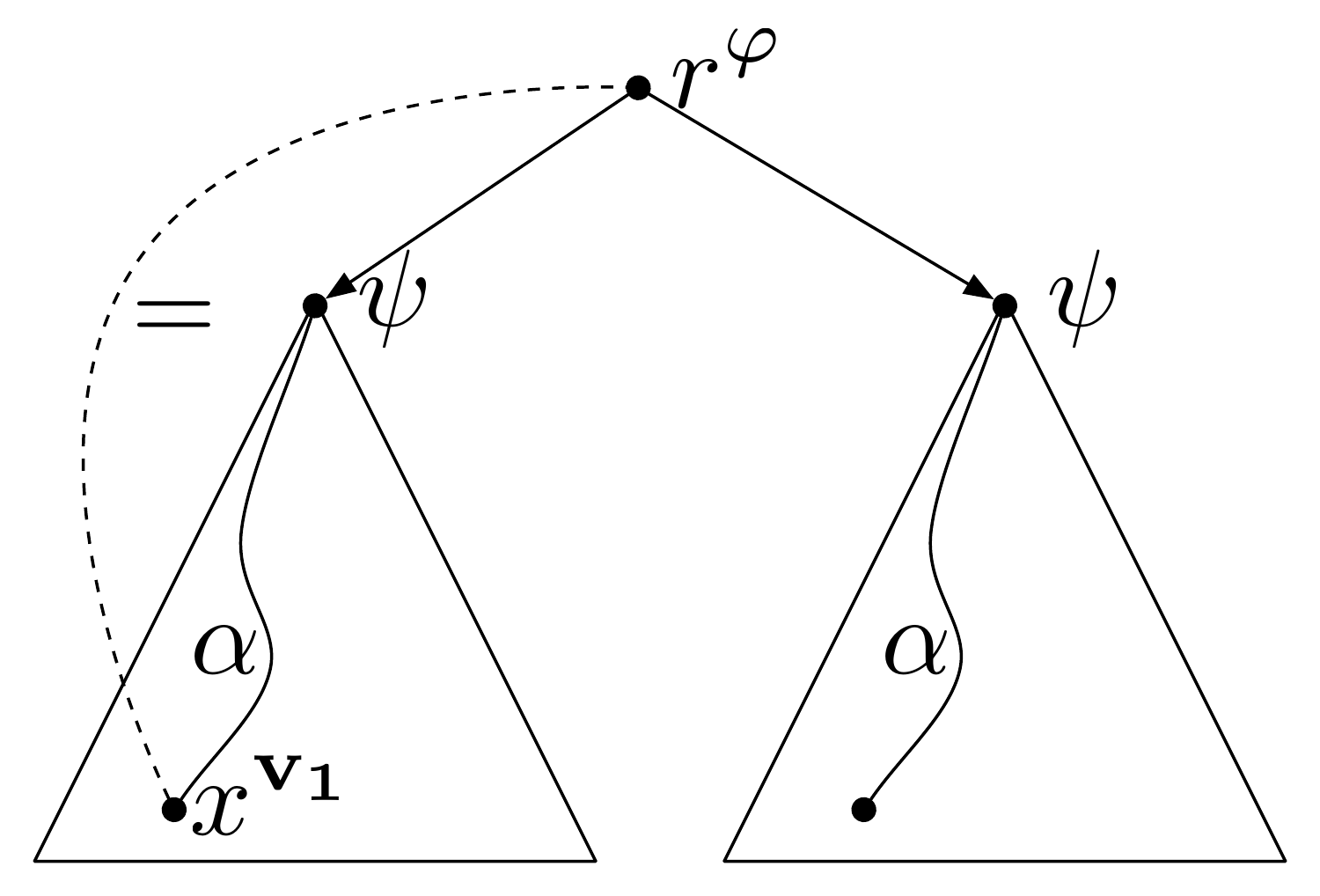}&\includegraphics[scale=0.25]{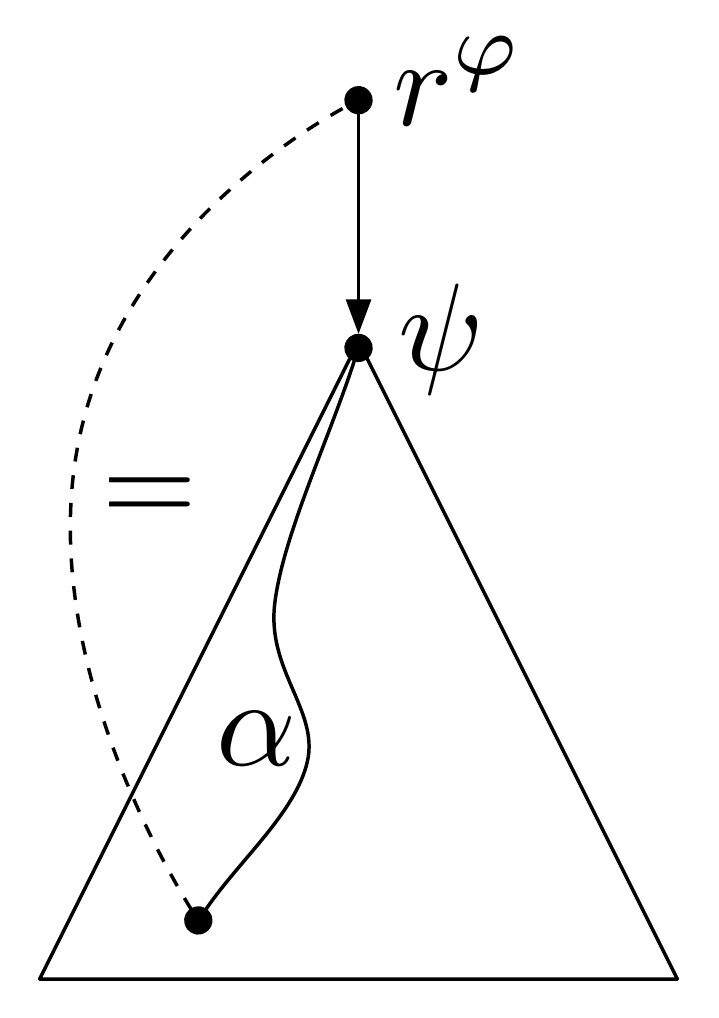}&\includegraphics[scale=0.25]{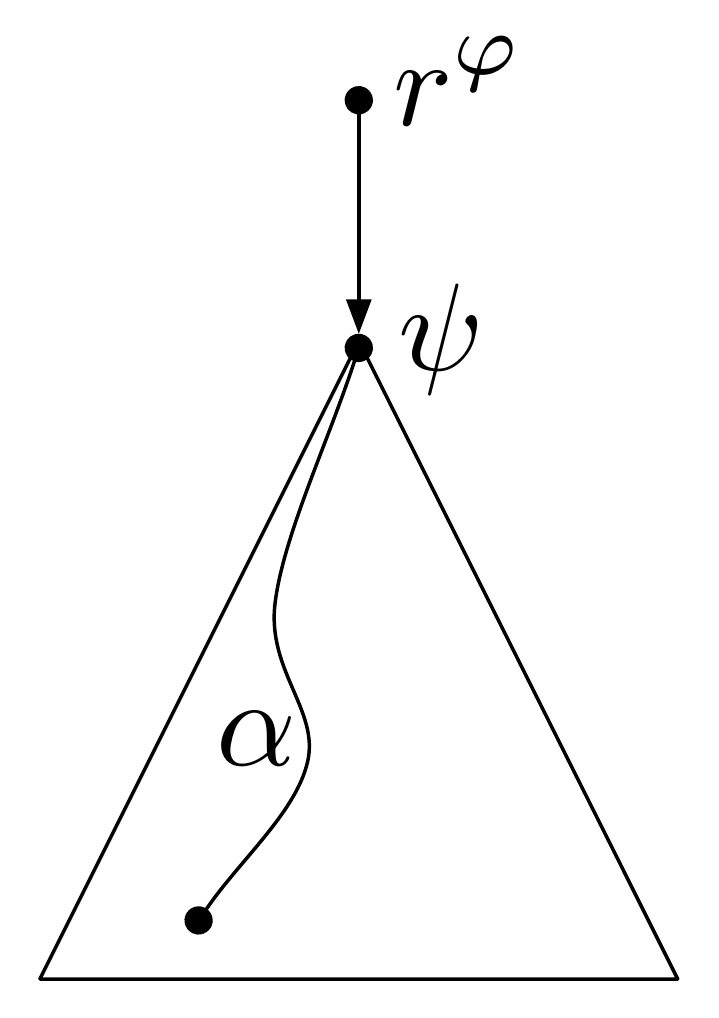}\\
   (a)&(b)&(c)
   \end{tabular}
   \end{center}
   \caption{(a) Witnesses for $\tup{\eps=\dow[\psi]\alpha}$ and $\tup{\eps \neq\dow[\psi]\alpha}$ for $(\psi,\alpha)\in \sisi$; (b) A witness for $\tup{\eps=\dow[\psi]\alpha}$ for $(\psi,\alpha)\in \sino$; (c) A witness for $\tup{\eps\neq\dow[\psi]\alpha}$ for $(\psi,\alpha)\in \nosi$.}   \label{intuicionneq}
\end{figure}

\paragraph{\em Processing data-aware diamonds of the form \boldmath{$(\lnot)\tup{\dow[\psi] \alpha * \dow[\rho] \beta}$}.}
For conjuncts of $\varphi$ the form $(\lnot) \tup{\dow[\psi] \alpha * \dow[\rho] \beta}$ that do not involve comparison with the constant path $\eps$, we have that, depending on which of the sets $\sisi, \sino, \nosi, \nono$  do $(\psi, \alpha)$ and $(\rho,\beta)$ belong to, many of the four possible combinations ($\tup{\dow[\psi]\alpha = \dow[\rho]\beta}$ and $\tup{\dow[\psi]\alpha \neq \dow[\rho]\beta}$, $\tup{\dow[\psi]\alpha = \dow[\rho]\beta}$ and $\lnot \tup{\dow[\psi]\alpha \neq \dow[\rho]\beta}$, etc.) are not possible as conjuncts for a consistent $\varphi$. More specifically: 
%
%\begin{enumerate}[label=\alph*.]
\begin{enumerate}[label=(Case \arabic*),leftmargin=1.6cm]
\setcounter{enumi}{3}
 \item \label{conjuntoZ} If we have $\tup{\dow[\psi]\alpha =\dow[\rho]\beta}$ and $\lnot \tup{\dow[\psi]\alpha \neq \dow[\rho]\beta}$ as conjuncts of $\varphi$, then all the following cases should be impossible since, in that case, $\varphi$ would be clearly unsatisfiable and thus it should be inconsistent: $(\psi, \alpha)$ or $(\rho, \beta)$ in $\sisi$, $(\psi, \alpha)$ or $(\rho, \beta)$ in $\nono$, one in $\sino$ and the other in $\nosi$. Besides, if both belong to $\sino$, since we merge the class of any node $x$ such that $r^\varphi, x \models \dow[\psi]\alpha$ or $r^\varphi, x \models \dow[\rho]\beta$, those conjuncts $\tup{\dow[\psi]\alpha =\dow[\rho]\beta}$ and $\lnot \tup{\dow[\psi]\alpha \neq \dow[\rho]\beta}$ will be satisfied. If both belong to $\nosi$, we need to force these conjuncts by merging the class of any node $x$ such that $r^\varphi, x \models \dow[\psi]\alpha$ or $r^\varphi, x \models \dow[\rho]\beta$ (note that we have such nodes by~\ref{Rule3}). It is 
important to notice that this process does not  add nodes to the 
class of the root since such nodes $x$ are never in the same equivalence class than any $x^{\bf v_1}$ from~\ref{Rule1} nor in the same equivalence class of a witness of $\tup{\dow[\mu]\delta}$ for $(\mu,\delta)\in \sino$.
 
 \item  \label{conjuntoU} If we have $\tup{\dow[\psi]\alpha =\dow[\rho]\beta}$ and $\tup{\dow[\psi]\alpha \neq \dow[\rho]\beta}$ as conjuncts of $\varphi$, then it cannot be the case that $(\psi,\alpha)$ or $(\rho,\beta)$ belong to $\nono$. Neither is possible that both of them belong to $\sino$ or one to $\sino$ and the other to $\nosi$. Besides, if $(\psi, \alpha), (\rho, \beta)$ belong to $\sisi$, then $\tup{\dow[\psi]\alpha =\dow[\rho]\beta}$ and $\tup{\dow[\psi]\alpha \neq \dow[\rho]\beta}$ are already satisfied: $\tup{\dow[\psi]\alpha =\dow[\rho]\beta}$ by the witnesses for $\tup{\eps = \dow[\psi]\alpha}$ and $\tup{\eps = \dow[\rho]\beta}$ (see Figure~\ref{intuicionneq2}(a)), $\tup{
\dow[\psi]\alpha \neq \dow[\rho]\beta}$ by the witnesses for $\tup{\eps = \dow[\psi]\alpha}$ and $\tup{\eps \neq \dow[\rho]\beta}$ (see Figure~\ref{intuicionneq2}(b) and remember Observation~\ref{testigo para epsilonneqpsialpha}). In case one belongs to $\sisi$ and the other to $\sino$, the argument is similar. If one belongs to $\sisi$ and the other to $\nosi$ or both to $\nosi$, $\tup{\dow[\psi]\alpha\neq \dow[\rho]\beta}$ will be satisfied using arguments similar to the previous ones; but we need to add witnesses to guarantee the satisfaction of $\tup{\dow[\psi]\alpha= \dow[\rho]\beta}$ (see Figure~\ref{intuicionneq2}(c)). In some cases, the merging performed in~\ref{conjuntoZ}, would have already merged the classes of a witness for $\tup{\dow[\psi]\alpha}$ and a witness for $\tup{\dow[\rho]\beta}$, in the remaining cases, we will need to force that merging carefully enough not to spoil conditions~\ref{noepsiloneqpsialphaneq} and~\ref{nopsialphaeqrhobetaneq} (we will use Lemma~\ref{lema clave plus} 
to achieve that).   

\begin{figure}[ht]
   \begin{center}
   \begin{tabular}{c@{\hskip .5in}c@{\hskip .5in}c}
   \includegraphics[scale=0.25]{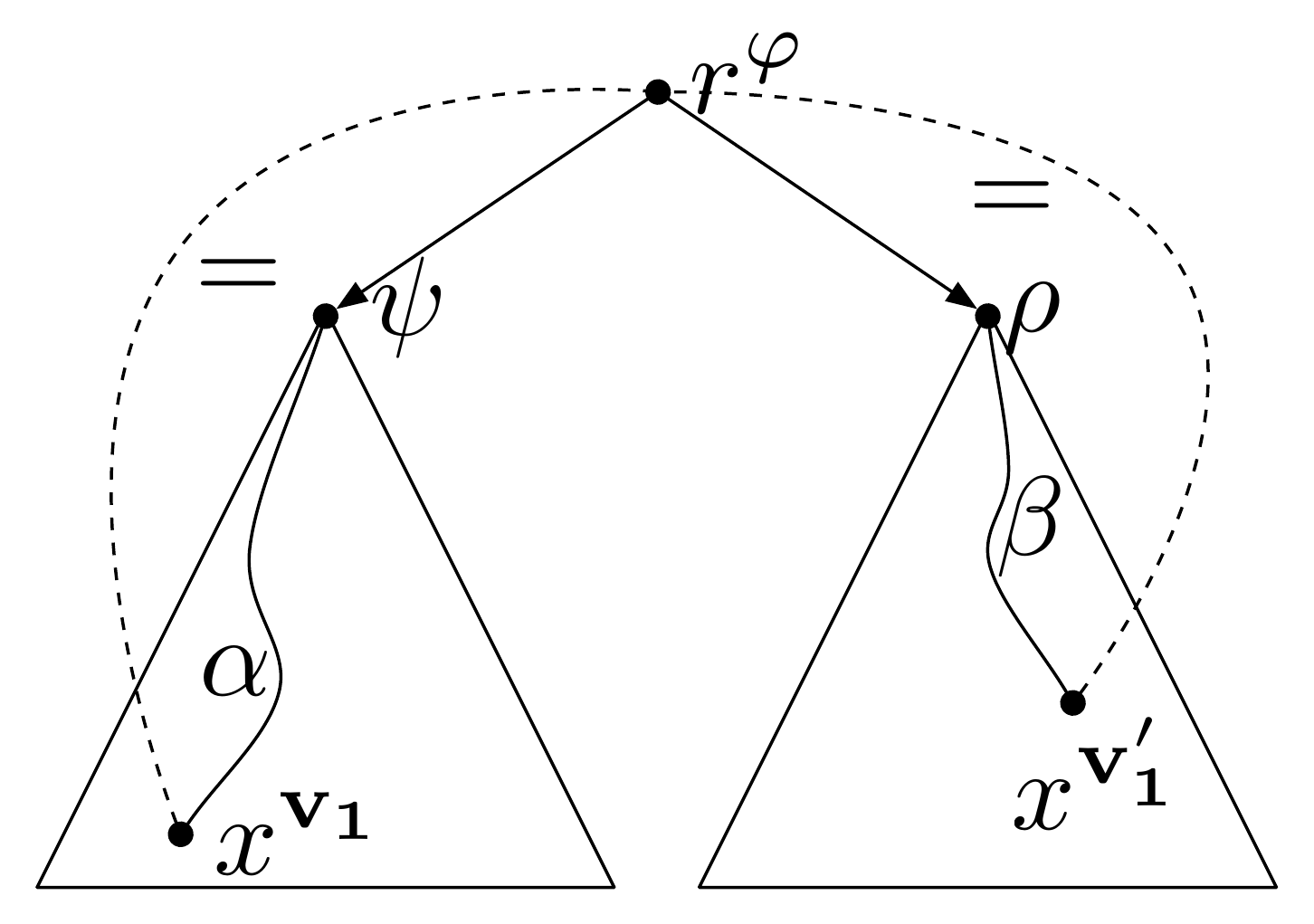}&\includegraphics[scale=0.25]{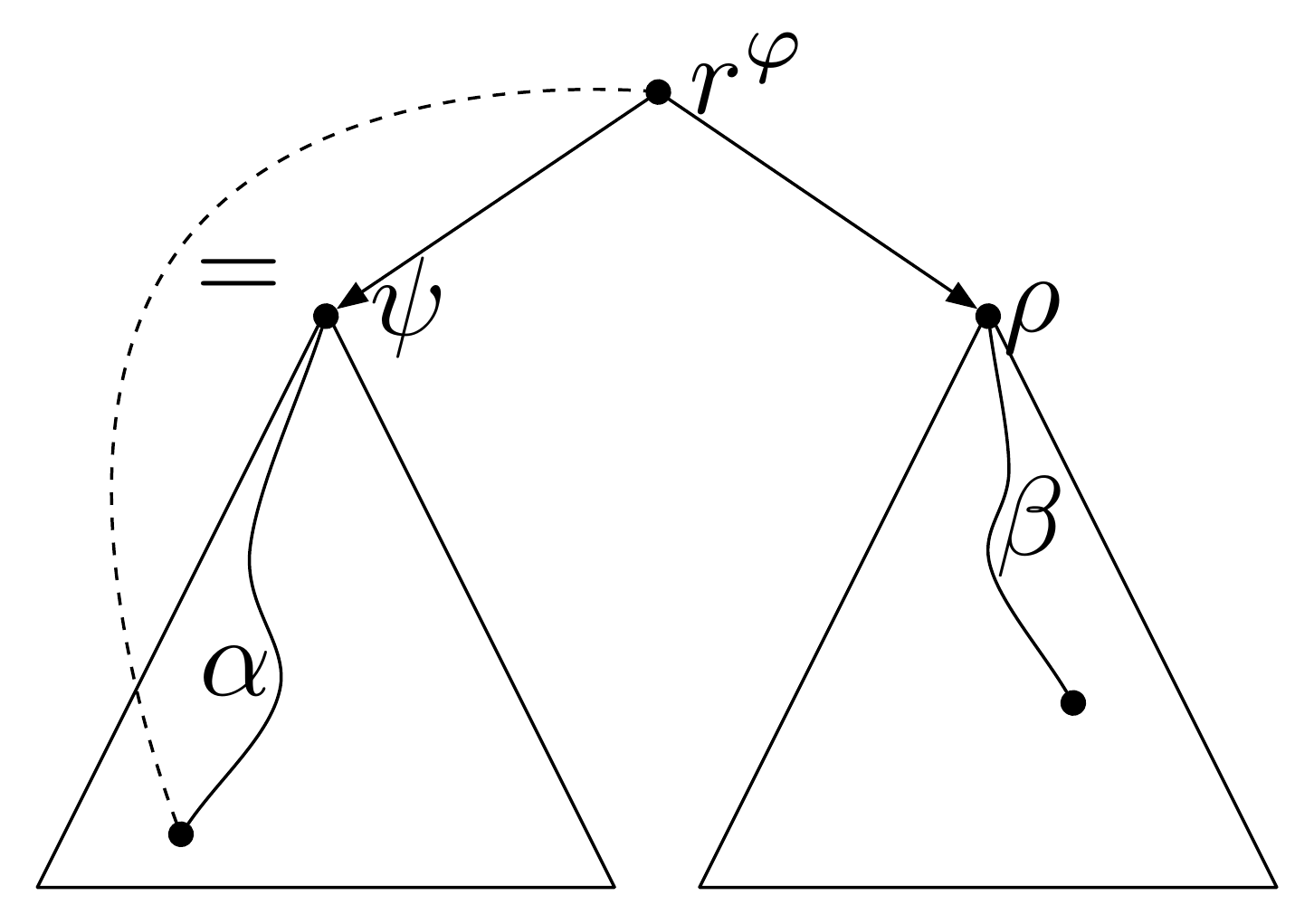}&\includegraphics[scale=0.25]{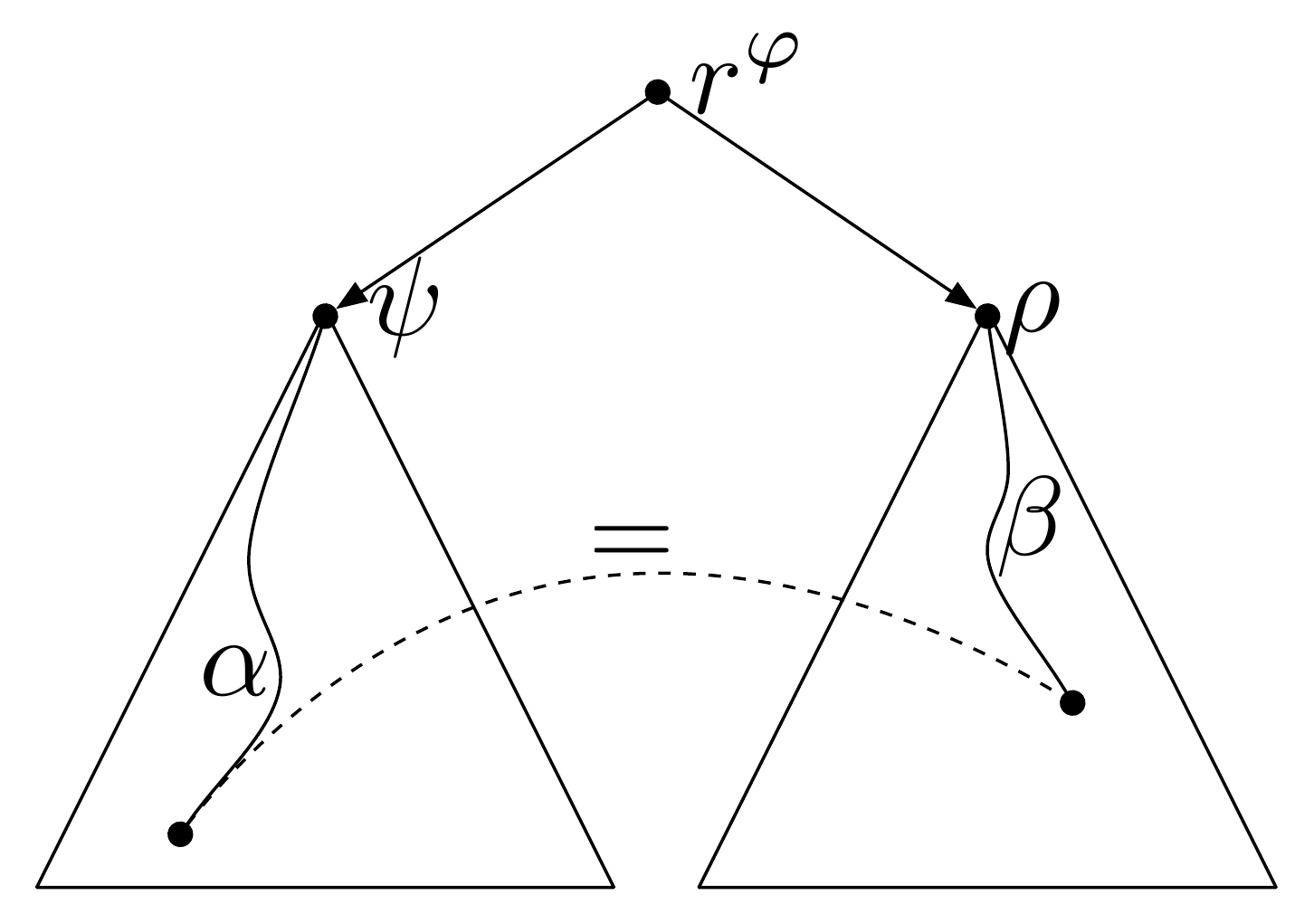}\\
   (a)&(b)&(c)
   \end{tabular}
   \end{center}
   \caption{(a) Witnesses for $\tup{\eps=\dow[\psi]\alpha}$ and $\tup{\eps = \dow[\rho]\beta}$ for $(\psi,\alpha)\in \sisi, (\rho,\beta)\in \sino$ end up in the same equivalence class; (b) Witnesses for $\tup{\eps=\dow[\psi]\alpha}$ for $(\psi,\alpha)\in \sino$ and $\tup{\eps\neq\dow[\rho]\beta}$ for $(\rho,\beta)\in \nosi$ end up in different equivalence classes; (c) Witnesses for $\tup{\dow[\psi]\alpha=\dow[\rho]\beta}$ for $(\psi,\alpha)\in \sisi, (\rho,\beta)\in \nosi$ or $(\psi,\alpha), (\rho,\beta)\in \nosi$.}   \label{intuicionneq2}
\end{figure}
\end{enumerate}

Finally, these last two cases are satisfied automatically:
\begin{enumerate}[label=(Case \arabic*),leftmargin=1.6cm]
\setcounter{enumi}{5}

 \item If we have $\lnot \tup{\dow[\psi]\alpha =\dow[\rho]\beta}$ and $\tup{\dow[\psi]\alpha \neq \dow[\rho]\beta}$ as conjuncts of $\varphi$, then all the following cases should be impossible: $(\psi, \alpha)$ or $(\rho, \beta)$ in $\nono$, both in $\sisi$ or $\sino$. Besides, if one belongs to $\sino$ and the other to $\nosi$ or if one belongs to $\sisi$ and the other to $\nosi$ or if they both belong to $\nosi$, $\lnot \tup{\dow[\psi]\alpha =\dow[\rho]\beta}$ and $\tup{\dow[\psi]\alpha \neq \dow[\rho]\beta}$ will be satisfied automatically ---the last two cases may not be as intuitive as others but are also true and we will give a detailed proof in time. 
  
 \item If we have $\lnot \tup{\dow[\psi]\alpha =\dow[\rho]\beta}$ and $\lnot \tup{\dow[\psi]\alpha \neq \dow[\rho]\beta}$ as conjuncts of $\varphi$, then the only case that should not lead to an inconsistency is when at least one of $(\psi, \alpha)$ and $(\rho, \beta)$ is in $\nono$ and, in this case, $\lnot \tup{\dow[\psi]\alpha =\dow[\rho]\beta}$ and $\lnot \tup{\dow[\psi]\alpha \neq \dow[\rho]\beta}$ will be satisfied automatically. 
  
\end{enumerate}

%The main idea of the following construction is to observe that the consistency of $\varphi$ determines that pairs in the sets defined before cannot ``interact'' in any way. We mean that there are many types of (possibly negated) data-aware diamonds $(\lnot) \tup{\dow[\psi]\alpha * \dow[\rho]\beta}$, for $*\in\{=,\neq\}$ (regarding to which of the four sets $(\psi,\alpha)$ and $(\rho,\beta)$ belongs to) that cannot be conjuncts of $\varphi$ (see \S\ref{verification} for more details). 

%In order to construct the model, we isolate the type of data-aware diamonds that need to be forced and we force them carefully to avoid interference with other data-aware diamonds. Lemma~\ref{lema clave plus} is key to be able to force witnesses for some data-aware diamonds without ruining others.

\bigskip
\subsubsection*{Formalization}

In order to formalize the construction described above, we introduce the following lemma, which is key to guarantee conditions~\ref{epsiloneqpsialphaneq} and~\ref{psialphaeqrhobetaneq} without spoiling conditions~\ref{noepsiloneqpsialphaneq} and~\ref{nopsialphaeqrhobetaneq}:

\begin{lemma}\label{lema clave plus}
 Let $\psi_0\in N_n$, $\alpha, \beta_1,\dots,\beta_m\in P_n$. Suppose that there exists a tree $\Tt^{\psi_0}=(T^{\psi_0},\pi^{\psi_0})$ with {\rm root} $r^{\psi_0}$ such that $\Tt^{\psi_0}, r^{\psi_0}\models \psi_0$ and for all $i=1,\dots,m $ there exists $\gamma_i\in P_{n+1}$ such that $\tup{\gamma_i=\dow[\psi_0]\alpha} \land \lnot \tup{\gamma_i=\dow[\psi_0]\beta_i}$ is consistent. Then there exists a tree $\widetilde{\Tt^{\psi_0}}=(\widetilde{T^{\psi_0}},\widetilde{\pi^{\psi_0}})$ with {\rm root} $\widetilde{r^{\psi_0}}$ and a node $x$ such that:
\begin{itemize} 
 \item $\widetilde{\Tt^{\psi_0}}, \widetilde{r^{\psi_0}}\models \psi_0$, 
 \item $\widetilde{\Tt^{\psi_0}}, \widetilde{r^{\psi_0}}, x\models \alpha$, and 
 \item $[x]_{\widetilde{\pi^{\psi_0}}}\neq [y]_{\widetilde{\pi^{\psi_0}}}$ for all $y$ such that $\widetilde{\Tt^{\psi_0}}, \widetilde{r^{\psi_0}}, y\models \beta_i$ for some $i=1,\dots,m$.
\end{itemize}
\end{lemma}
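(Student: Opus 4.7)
The plan is to adapt the proof of Lemma~\ref{Lemma:ClavePlusMinus}, splitting on whether $\alpha=\eps$ or $\alpha=\dow[\psi_1]\dots\dow[\psi_{j_0}]\eps$ with $j_0>0$. For the base case, I would take $\widetilde{\Tt^{\psi_0}}:=\Tt^{\psi_0}$ and $x:=\widetilde{r^{\psi_0}}$; it then suffices to show that $\lnot\tup{\eps=\beta_i}$ is a conjunct of $\psi_0$ for every $i$. This follows either by the direct derivation via {\bf Der21} and \eqax{8} used in Lemma~\ref{Lemma:ClavePlusMinus}, or by invoking Lemma~\ref{lema A} with $i_0=1$, $\psi_1=\psi_0$, $*$ being equality, $\alpha=\eps$ and $\beta=\beta_i$, after observing that $\lnot\tup{\eps\neq\eps}$ is a conjunct of every consistent normal form (since $\tup{\eps\neq\eps}$ is semantically false and therefore inconsistent by Proposition~\ref{prop:corr}).

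For the inductive case, Lemma~\ref{nextPathIsConjunctneq} yields some $z\in T^{\psi_0}\setminus\{r^{\psi_0}\}$ with $\Tt^{\psi_0},r^{\psi_0},z\models\alpha$. Letting $p$ be the parent of $z$, I would form $\widetilde{T^{\psi_0}}:=T^{\psi_0}\sqcup T^x$, where $T^x$ is a fresh copy of $T^{\psi_0}\restr z$ hung as a new child $x$ of $p$, with a family of data values disjoint from everything already used, and set $\widetilde{\pi^{\psi_0}}:=\pi^{\psi_0}\sqcup\pi^z$. The second and third clauses of the conclusion follow by construction: the path from $\widetilde{r^{\psi_0}}$ down to $x$ mirrors the one to $z$ and hence witnesses $\alpha$, while the freshness of $[x]_{\widetilde{\pi^{\psi_0}}}$ ensures it does not meet any $\beta_i$-witness class.

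The substantive work is the proof that $\widetilde{\Tt^{\psi_0}},\widetilde{r^{\psi_0}}\models\psi_0$, which I would again carry out by induction on $j$, showing that the $j$-th ancestor $x_j$ of $x$ satisfies $\psi_{j_0-j}$. Labels and positive conjuncts of either form $\tup{\mu_1=\mu_2}$ or $\tup{\mu_1\neq\mu_2}$ survive trivially because no nodes are removed, and negative equality conjuncts $\lnot\tup{\mu_1=\mu_2}$ are handled verbatim as in Lemma~\ref{Lemma:ClavePlusMinus}: any new witness pair must lie entirely inside $T^x$ by the freshness of the data values, and then Lemma~\ref{lem:eq_in_psineq} forces $\tup{\hat\mu_1=\hat\mu_2}$ into $\psi_{j_0}$, contradicting the induction hypothesis.

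The hard part is the new case of negative inequality conjuncts $\lnot\tup{\mu_1\neq\mu_2}$ in $\psi_{j_0-j-1}$: a mixed witness pair with one endpoint in $T^x$ and the other outside has distinct data values automatically and would \emph{prima facie} spoil the conjunct. I expect to preclude such a pair by combining the induction hypothesis with Lemmas~\ref{lem:eq_in_psineq},~\ref{lema A}, and~\ref{paraverif}: if $y_1\in T^x$ is a $\mu_1$-witness from $x_{j+1}$ then $\mu_1$ factors as $\dow[\psi_{j_0-j}]\dots\dow[\psi_{j_0}]\hat\mu_1$, so $(z,y_1')\models\hat\mu_1$ in $\Tt^{\psi_0}$ for the preimage $y_1'$, and a parallel analysis of the $\mu_2$-witness (distinguishing whether $\mu_2$ itself enters $T^x$), together with $\lnot\tup{\mu_1\neq\mu_2}$ holding at $x_{j+1}$ in the original tree, should let Lemma~\ref{lema A} (or Lemma~\ref{paraverif} when both paths share the common prefix) force $\lnot\tup{\hat\mu_1=\hat\mu_2}$ or $\lnot\tup{\hat\mu_1\neq\hat\mu_2}$ into $\psi_{j_0}$, contradicting the assumption at $z$. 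The inequality axioms \neqaxnine, \neqaxten and \neqaxeighteen are expected to do the heavy lifting in propagating this structural information through the levels. Finally, showing that $[x]$ is disjoint from every $\beta_i$-witness class in the residual case where $\beta_i$ extends the prefix of $\alpha$ reduces, via \eqax{8} exactly as in Lemma~\ref{Lemma:ClavePlusMinus}, to $\lnot\tup{\eps=\dow[\rho_{j_0+1}]\dots\dow[\rho_{l_0}]\eps}$ being a conjunct of $\psi_{j_0}$.
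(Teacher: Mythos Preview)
Your construction is the one from Lemma~\ref{Lemma:ClavePlusMinus} verbatim, and the paper explicitly explains (right after the proof of Lemma~\ref{lema clave plus}) why this does \emph{not} carry over to $\xpd$. Giving every node of the replicated subtree a fresh data value creates new inequalities, and these can break $\lnot\tup{\mu_1\neq\mu_2}$ conjuncts at ancestors in a way that cannot be argued away. Concretely, take $j_0=1$, $\alpha=\dow[\psi_1]\eps$, and suppose $\lnot\tup{\dow[\psi_1]\eps\neq\dow[\psi_1]\eps}$ is a conjunct of $\psi_0$. After your construction the root has two $\psi_1$-children, namely $z$ and $x$, with distinct data values, so $\tup{\dow[\psi_1]\eps\neq\dow[\psi_1]\eps}$ becomes true at the root and $\psi_0$ fails. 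Your proposed rescue (``preclude such a pair via Lemmas~\ref{lema A},~\ref{paraverif}'') does not go through here: with $\mu_1=\mu_2=\dow[\psi_1]\eps$ you get $\hat\mu_1=\hat\mu_2=\eps$, and the only thing those lemmas can hand you is that $\lnot\tup{\eps\neq\eps}$ is a conjunct of $\psi_1$, which is true but yields no contradiction.

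The paper's proof differs from yours in two essential ways. First, the level at which one replicates is not $j_0$ but
\[
k_0=\min\{\,k\le j_0\mid \lnot\tup{\dow[\psi_{k+1}]\dots\dow[\psi_{j_0}]\eps\neq\dow[\psi_{k+1}]\dots\dow[\psi_{j_0}]\eps}\text{ is a conjunct of }\psi_k\,\};
\]
when $k_0=0$ (as in the counterexample above) one does not touch the tree at all and instead uses Lemma~\ref{lema A} to conclude $\lnot\tup{\alpha=\beta_i}$ is already a conjunct of $\psi_0$. Second, when $k_0>0$ one replicates the subtree at depth $k_0$ and gives a fresh class \emph{only} to the single node $x$ that is the companion of the $\alpha$-witness; every other node of the copy is merged with its companion's class. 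The choice of $k_0$ is exactly what makes the $\lnot\tup{\mu_1\neq\mu_2}$ case go through: if a new witness pair arises with one endpoint in the copy, the minimality of $k_0$ forces $\tup{\dow[\psi_{k_0-j}]\dots\dow[\psi_{j_0}]\eps\neq\dow[\psi_{k_0-j}]\dots\dow[\psi_{j_0}]\eps}$ into $\psi_{k_0-j-1}$, and then \neqaxeight and Lemma~\ref{lema A} push $\lnot\tup{\dow[\psi_{k_0+1}]\dots\dow[\psi_{j_0}]\eps=\hat\mu_1}$ into $\psi_{k_0}$, so the offending endpoint shares its class with its companion and no new inequality appears. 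Your single-fresh-block approach cannot reproduce this argument.
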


\begin{proof}
Suppose $\alpha=\dow[\psi_1]\dots\dow[\psi_{j_0}]\eps$ where $\psi_k \in N_{n-k} $ for all $k=1,\dots,j_0$ and let 
$$
k_0=\min_{0\leq k\leq j_0} \left\{k \mid \lnot \tup{\dow[\psi_{k+1}]\dots\dow[\psi_{j_0}]\eps \neq \dow[\psi_{k+1}]\dots\dow[\psi_{j_0}]\eps} \mbox{ is a conjunct of } \psi_k\right\}.
$$
In case $k_0=0$ (i.e. $\lnot \tup{\alpha \neq \alpha}$), by Lemma~\ref{lema A} 
% \sergio{Choca un poco que la demostraci\'on use sin mayores explicaciones algo que viene despu\'es}\santi{absolutamente de acuerdo. Siempre tenemos que evitar las forward references. Puse los dos lemas al final de la sec de NF. Fijese si queda bien}\bigemi{para evitar ese choque hab\'ia puesto lo que sigue que es algo as\'i como ``ni sab\'es lo que es el lema ese y es re t\'ecnico pero lo estoy usando para decir algo re intuitivo''. Igual estoy de acuerdo y yo tambi\'en hab\'ia pensando en pasar varios lemas a la secci\'on de nf. Despu\'es me fijo c\'omo qued\'o y si pasar algunos m\'as con el mismo criterio.}
, $\lnot \tup{\alpha = \beta_i}$ is a conjunct of $\psi_0 $ for all $i=1,\dots,m$. Then $\widetilde{\Tt^{\psi_0}}=(T^{\psi_0},\pi^{\psi_0})$ satisfies the desired properties. The intuitive idea behind this application of Lemma~\ref{lema A} is that in case every ending point of a pair of nodes satisfying $\alpha$ is in the same 
equivalence class, then there cannot be pairs of nodes satisfying $\alpha$ and $\beta_i$ ending in points with the same data value, because in that case $\tup{\gamma_i=\dow[\psi_0]\alpha} \land \lnot \tup{\gamma_i=\dow[\psi_0]\beta_i}$ would be unsatisfiable and thus inconsistent, which is a contradiction with our hypothesis.

In case $k_0\neq 0$, by consistency, there are $z',x'\in T^{\psi_0}$ such that $\Tt^{\psi_0},r^{\psi_0}, z'\models \dow[\psi_1]\dots\dow[\psi_{k_0}]$ and $\Tt^{\psi_0},z',x'\models \dow[\psi_{k_0+1}]\dots\dow[\psi_{j_0}]$. Before proceeding to complete the proof of this case, we give an intuitive idea.
We prove that we cannot have a witness for $\beta_i$ with the same data value than $x'$ in the subtree $T^{\psi_0} \restr{z'}$. Intuitively this is because, in that case, $\alpha$ and $\beta_i$ would have a common prefix. Let us say that
$$
\beta=\dow[\psi_1]\dots\dow[\psi_{k_0}]\dow [\rho_{k_0+1}]\dots \dow[\rho_{l_0}]\eps \mbox{\quad and\quad}
\tup{\dow[\psi_{k_0+1}]\dots\dow[\psi_{j_0}]\eps=\dow [\rho_{k_0+1}]\dots \dow[\rho_{l_0}]\eps}
$$
is a conjunct of $\psi_{k_0}$. Then, since $\lnot \tup{\dow[\psi_{k_0+1}]\dots\dow[\psi_{j_0}]\eps\neq \dow[\psi_{k_0+1}]\dots\dow[\psi_{j_0}]\eps}$ is also a conjunct of $\psi_{k_0}$, $\tup{\gamma_i=\dow[\psi_0]\alpha} \land \lnot \tup{\gamma_i=\dow[\psi_0]\beta_i}$ would be unsatisfiable (and thus inconsistent) for any choice of $\gamma_i$, which is a contradiction. But our hypotheses do not guarantee that we would not have a witness for $\beta_i$ in the class of $x'$ outside $T^{\psi_0} \restr{z'}$,
 and therefore we need to change the tree in order to achieve the desired properties. We replicate the subtree $T^{\psi_0} \restr{z'}$ but using a fresh data value (different from any other data value already present in $\Tt^{\psi_0}$) for the class of the companion of $x'$ that we call $x$; see Figure~\ref{fig:dibu-oneneq}. It is clear that in this way, the second and the third conditions will be satisfied by $x$. The first condition will also remain true because, 
intuitively, the positive conjuncts will 
remain valid since we are not suppressing any nodes, and
the negative ones that compare by equality will not be affected because every new node has either the same data value than its companion or a fresh data value. The argument for negative conjuncts that compare by inequality is based on the way in which we have chosen $k_0$ (see a detailed proof below). 

Now we formalize the previous intuition.
Let $p$ be the parent of $z'$ ($k_0>0$). 
As we did in the proof of Lemma~\ref{Lemma:ClavePlusMinus}, we define $\widetilde{\+T^{\psi_0}}$ by adding a new child $z$ of $p$ and a data tree $\Tt=(T,\pi)$ hanging from $z$. This tree $T$ is a copy of $T^{\psi_0}\restr{z}$, and we call $x$ to the companion of $x'$. $\widetilde{\pi^{\psi_0}}$ is defined as $\pi^{\psi_0}$ with the exception that the class of $x$ is new (the classes of the other nodes of $T$ are merged with the classes of their companions) (see Figure~\ref{fig:dibu-oneneq}).

\begin{figure}[ht]
   \begin{center}
   \includegraphics[scale=0.25]{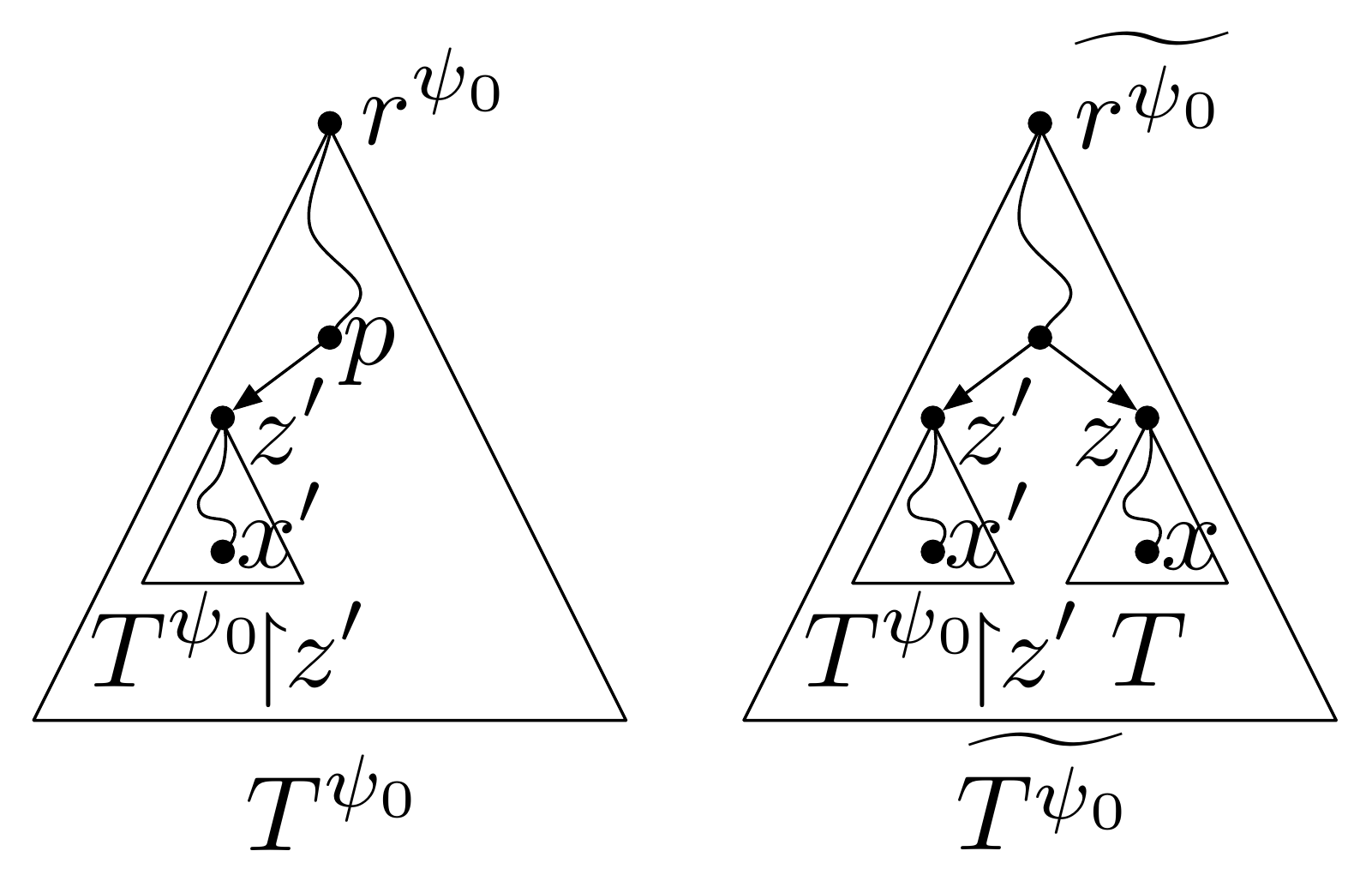}
   \end{center}
   \caption{$T= T^{\psi_0} \restr{z}$ is a new subtree with a special node $x$ such that its class of data values is disjoint to the rest of $\widetilde{\Tt^{\psi_0}}$ and $\widetilde{\Tt^{\psi_0}}, \widetilde{r^{\psi_0}}, x \models \alpha$.}   \label{fig:dibu-oneneq}
\end{figure}

We first prove by induction that $z_j$, the $j$-th ancestor of $z$ (namely $z_j\childp{j}z$, and we let $z_0:=z$), satisfies $\widetilde{\Tt^{\psi_0}}, z_j \models \psi_{k_0-j}$. This will prove both that $\widetilde{\Tt^{\psi_0}}, \widetilde{r^{\psi_0}}\models \psi_0$ and that $\widetilde{\Tt^{\psi_0}}, \widetilde{r^{\psi_0}}, x\models \alpha$.
By Proposition~\ref{prop:local}, it is straightforward from the construction that $\psi_{k_0}$ is satisfied at $z$ (the companion of $z'$) which proves the base case. 
For the inductive case, assume the result holds for $z_0, \dots, z_j$. We want to see that it holds for $z_{j+1}$. 
To do this, we check that every conjunct of $\psi_{k_0-j-1}$ is satisfied at $z_{j+1}$: 

\begin{itemize}
 \item If the conjunct is a label, it is clear that $z_{j+1}$ has that label in $\widetilde{\Tt^{\psi_0}}$, as it has not been changed by the construction.
 
 \item If the conjunct is of the form $\tup{\mu_1 = \mu_2}$ or $\tup{\mu_1 \neq \mu_2}$, then it must still hold in $\widetilde{\Tt^{\psi_0}}$ by inductive hypothesis and the fact that our construction did not remove nodes.
 
 \item If the conjunct is of the form $\lnot \tup{\mu_1 = \mu_2}$, we observe that, by inductive hypothesis plus the way in which we have constructed $\widetilde{\Tt^{\psi_0}}$, we have that: If $\widetilde{\Tt^{\psi_0}}, z_{j+1} \models \tup{\mu_1 = \mu_2}$ then $\Tt^{\psi_0}, z_{j+1} \models \tup{\mu_1 = \mu_2}$ (for a complete proof of this assertion, one can use arguments similar to the ones used in Lemma~\ref{Lemma:ClavePlusMinus}) which is a contradiction with the fact that \  $\Tt^{\psi_0}, z_{j+1} \models \psi_{k_0-(j+1)}$. Then $\widetilde{\Tt^{\psi_0}}, z_{j+1} \models \lnot \tup{\mu_1 = \mu_2}$. 
 
%  $\tup{\mu_1 = \mu_2}$ can only be true in $z_{j+1}$ if there are witnesses $y_1, y_2$ in the same equivalence class in the new subtree $T$ such that \  $\widetilde{\Tt^{\psi_0}}, z_{j+1}, y_1 \models \mu_1$ and $\widetilde{\Tt^{\psi_0}}, z_{j+1}, y_2 \models \mu_2$. 
% In that case, we have that \  $\mu_1 = \dow{[\psi_{k_0-j}] \dow \dots \dow [\psi_{k_0}] \hat{\mu_1}}$, $\mu_2 = \dow{[\psi_{k_0-j}] \dow \dots \dow [\psi_{k_0}] \hat{\mu_2}}$ for some $\hat{\mu_1}, \hat{\mu_2}$, and that \  $\widetilde{\Tt^{\psi_0}}, z_0, y_1 \models \hat{\mu_1}$, $\widetilde{\Tt^{\psi_0}}, z_0, y_2 \models \hat{\mu_2}$. Therefore, by Lemma~\ref{lem:eq_in_psineq}, $\tup{\hat{\mu_1} = \hat{\mu_2}}$ is a conjunct of $\psi_{k_0}$, and then $\Tt^{\psi_0}, z' \models \tup{\hat{\mu_1} = \hat{\mu_2}}$, a contradiction with our assumption that \  $\lnot \tup{\dow{[\psi_{k_0-j}] \dow \dots \dow [\psi_{k_0}] \hat{\mu_1}} = \dow{[\psi_{k_0-j}] \dow \dots \dow [\psi_{k_0}] \hat{\mu_2}}}$ is a conjunct of $\psi_{k_0-(j+1)}$ and \  $\Tt^{\psi_0}, z_{j+1} \models \psi_{k_0-(j+1)}$.

\item  If the conjunct is of the form $\lnot \tup{\mu_1 \neq \mu_2}$, by inductive hypothesis plus the way in which we have constructed $\widetilde{\Tt^{\psi_0}}$, $\tup{\mu_1 \neq \mu_2}$ can only be true in $z_{j+1}$ if there are witnesses $y_1, y_2$ in distinct equivalence classes such that $\widetilde{\Tt^{\psi_0}}, z_{j+1}, y_1 \models \mu_1$, $\widetilde{\Tt^{\psi_0}}, z_{j+1}, y_2 \models \mu_2$ and at least one of them is in the new subtree $T$. In that case, without loss of generality, we have that $\mu_1=\dow[\psi_{k_0-j}]\dots\dow[\psi_{k_0}]\hat{\mu_1}$. Then, by definition of $k_0$, $\tup{\dow[\psi_{k_0-j}]\dots\dow[\psi_{j_0}]\eps \neq \dow[\psi_{k_0-j}]\dots\dow[\psi_{j_0}]\eps}$ is a conjunct of $\psi_{k_0-j-1}$. Therefore, by consistency and  \neqaxeight, $\tup{\dow[\psi_{k_0-j}]\dots\dow[\psi_{j_0}]\eps \neq \mu_2}$ or $\lnot \tup{\mu_2=\mu_2}$ is also a conjunct of $\psi_{k_0-j-1}$. If the latter occurs, we have a contradiction by the previous item. If $\tup{\dow[\psi_{k_0-
j}]\dots\dow[\psi_{j_0}]\eps \neq \mu_2}$ is a conjunct of $\psi_{k_0-j-1}$, by Lemma~\ref{lema A} $\lnot \tup{\dow[\psi_{k_0+1}]\dots \dow[\psi_{j_0}]\eps = \hat{\mu_1}}$ is a conjunct of $\psi_{k_0}$. Then, by construction, the class of $y_1$ in $\widetilde{\Tt^{\psi_0}}$ is equal to the class of its companion and so we can assume that $y_1\not \in T$. Analogously we can assume that $y_2\not \in T$ but, as we have already said, by inductive hypothesis plus the way in which we have constructed $\widetilde{\Tt^{\psi_0}}$, $\tup{\mu_1 \neq \mu_2}$ cannot be satisfied at $z_{j+1}$ by witnesses $y_1, y_2$ if neither of them is in the new subtree $T$.  
\end{itemize}

To conclude the proof, we only need to check that $[x]_{\widetilde{\pi^{\psi_0}}}\neq [y]_{\widetilde{\pi^{\psi_0}}}$ for all $y$ such that \  $\widetilde{\Tt^{\psi_0}}, \widetilde{r^{\psi_0}}, y\models \beta_i$ for some $i=1,\dots,m$. Suppose that $\beta_i=\dow[\rho_1]\dots\dow[\rho_{l_0}]\eps$. If $l_0<k_0$ or $\rho_l\neq \psi_l$ for some $l=1,\dots,k_0$, then the result follows immediately from construction. If not, by hypothesis, there exists $\gamma_i \in P_{n+1}$ such that $\tup{\gamma_i= \dow[\psi_{0}]\dots\dow[\psi_{j_0}]\eps}  \land \lnot \tup{\gamma_i = \dow[\psi_{0}]\dots\dow[\psi_{k_0}]\dow[\rho_{k_0+1}]\dots\dow[\rho_{l_0}]\eps}$ is consistent and $\lnot \tup{\dow[\psi_{k_0+1}]\dots\dow[\psi_{j_0}]\eps \neq \dow[\psi_{k_0+1}]\dots\dow[\psi_{j_0}]\eps}$ is a conjunct of $\psi_{k_0}$. Then, by Lemma~\ref{lema A}, $\lnot \tup{\dow[\psi_{k_0+1}]\dots\dow[\psi_{j_0}]\eps = \dow[\rho_{k_0+1}]\dots\dow[\rho_{l_0}]\
\eps}$ 
is a conjunct of $\psi_{k_0}$.
 This together with the fact that the class of $x$ is disjoint with the part of $\widetilde{T^{\psi_0}}$ outside of $T$, shows that $[x]_{\widetilde{\pi^{\psi_0}}}\neq [y]_{\widetilde{\pi^{\psi_0}}}$ if $y$  is such that $\widetilde{\Tt^{\psi_0}}, \widetilde{r^{\psi_0}}, y\models \beta_i$, which concludes the proof.
\end{proof}

% \begin{lemma}\label{lema 3} Paso a ser parte del 1
%  Let $\psi\in N_n$, $\alpha,\beta\in P_n$, $\gamma \in P_{n+1}$. If $\tup{\gamma = \dow[\psi]\alpha} \land \lnot \tup{\gamma \neq \dow[\psi]\alpha} \land \lnot \tup{\gamma = \dow[\psi]\beta}$ is consistent, then $\lnot \tup{\alpha = \beta}$ is a conjunct of $\psi$.
% \end{lemma}
% 
% \begin{proof}
% \emi{E-A}
% \end{proof}

It might be useful for the reader to note the differences between Lemmas~\ref{lema clave plus} and~\ref{Lemma:ClavePlusMinus}, since this is one of the reasons why the completeness result for $\xpd$ is more complicated than for $\xpdeq$. The main differences between those two lemmas are:

\begin{itemize}
%  \item To use Proposition~\ref{lema clave plus} in Rule 4 (see \S\ref{canonical model}), we need much more generality: $\gamma_i$ might be different for each $\beta_i$ while in Proposition~\ref{Lemma:ClavePlusMinus}, we always compare with the constant path $\eps$.
 
 \item In Lemma~\ref{lema clave plus}, if we would replicate the subtree hanging from a witness of $\tup{\alpha}$ then, due to the fact that we are working with the complete fragment (with inequality tests also), we would not be able to prove that each ancestor of that node satisfies the desired formulas. So we are forced to find that minimum $k_0$ that tells us which subtree we should replicate.
 
 \item In Lemma~\ref{Lemma:ClavePlusMinus}, we can use new data for every new node since, again, we are not working with inequality tests. But when it comes to the complete fragment, we need to be more careful in the way we define the partition in $\widetilde{\Tt^{\psi_0}}$ changing only the class of the new witness of $\tup{\alpha}$. 
\end{itemize}
Now that we have this key lemma, we proceed to the formal construction of $\Tt^{\varphi}$.
%Now we will formalize the construction of $\Tt^{\varphi}$:
%
We define some special sets of quadruples $(\psi, \alpha, \rho, \beta)$ with $\psi, \rho \in N_n$, $\alpha, \beta \in P_n$:
\begin{itemize}
\item ${\bf U}$ is the set of quadruples $(\psi, \alpha, \rho, \beta)$ such that one of the following holds:
	\begin{itemize}
	\item $(\psi, \alpha), (\rho, \beta) \in \nosi$, and $\tup{\dow[\psi]\alpha=\dow[\rho]\beta}$, $\tup{\dow[\psi]\alpha\neq\dow[\rho]\beta}$ are conjuncts of $\varphi$, or
	\item $(\psi, \alpha) \in \sisi$, $(\rho, \beta) \in \nosi$, and $\tup{\dow[\psi]\alpha=\dow[\rho]\beta}$, $\tup{\dow[\psi]\alpha\neq\dow[\rho]\beta}$ are conjuncts of~$\varphi$.
	\end{itemize}
cf.~\ref{conjuntoU}.

\item ${\bf Z}$ is the set of all quadruples $(\psi, \alpha, \rho, \beta)$ such that $(\psi, \alpha), (\rho, \beta) \in \nosi$, and \  $\tup{\dow[\psi]\alpha=\dow[\rho]\beta}$, $\lnot \tup{\dow[\psi]\alpha\neq\dow[\rho]\beta}$ are conjuncts of $\varphi$.

\noindent cf.~\ref{conjuntoZ}.
\end{itemize}

The following lemma states that the relation between the elements of $\nosi$ defined by the set ${\bf Z}$ is transitive, a fact which will be needed to prove that $\varphi$ is indeed satisfied in the constructed tree:

\begin{lemma}\label{lema c}
If $(\psi,\alpha,\rho,\beta), (\rho,\beta,\theta,\gamma)\in {\bf Z}$, then $(\psi,\alpha,\theta,\gamma)\in {\bf Z}$.
\end{lemma}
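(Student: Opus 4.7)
The plan is to verify that the quadruple $(\psi,\alpha,\theta,\gamma)$ satisfies all three defining conditions of $\mathbf{Z}$. Unpacking the hypothesis, from $(\psi,\alpha,\rho,\beta)\in\mathbf{Z}$ and $(\rho,\beta,\theta,\gamma)\in\mathbf{Z}$ I read off that $(\psi,\alpha),(\rho,\beta),(\theta,\gamma)$ all lie in $\nosi$, and that $\varphi$ contains as conjuncts both $\tup{\dow[\psi]\alpha=\dow[\rho]\beta}$ and $\lnot\tup{\dow[\psi]\alpha\neq\dow[\rho]\beta}$, as well as $\tup{\dow[\rho]\beta=\dow[\theta]\gamma}$ and $\lnot\tup{\dow[\rho]\beta\neq\dow[\theta]\gamma}$. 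The first clause of the definition of $\mathbf{Z}$ is thus immediate, so it remains to show that $\tup{\dow[\psi]\alpha=\dow[\theta]\gamma}$ and $\lnot\tup{\dow[\psi]\alpha\neq\dow[\theta]\gamma}$ are also conjuncts of $\varphi$. Because $\varphi\in N_{n+1}$ forces a sign choice for every element of $D_{n+1}$, I will argue each sign by contradiction, invoking the $\axiomRestrNeq$-consistency of $\varphi$.

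For the equality, suppose instead that $\lnot\tup{\dow[\psi]\alpha=\dow[\theta]\gamma}$ were a conjunct of $\varphi$. An instance of \neqaxfive yields
\[
\tup{\dow[\psi]\alpha=\dow[\rho]\beta}\wedge\tup{\dow[\theta]\gamma=\dow[\rho]\beta}\leq\tup{\dow[\psi]\alpha=\dow[\theta]\gamma}\lor\tup{\dow[\rho]\beta\neq\dow[\rho]\beta},
\]
whose left-hand side follows from $\varphi$ (using \eqax{2} to symmetrise the second conjunct). Since the first right-hand disjunct is ruled out by assumption, one obtains $\varphi\vdash\tup{\dow[\rho]\beta\neq\dow[\rho]\beta}$. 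Combining this with $\tup{\dow[\psi]\alpha=\dow[\psi]\alpha}\equiv\tup{\dow[\psi]\alpha}$ (by \eqax{1}, itself derivable from $\tup{\dow[\psi]\alpha=\dow[\rho]\beta}$ via \eqax{5}) in the instance of \neqaxeight where both substitutions collapse to $\dow[\psi]\alpha$ on the right,
\[
\tup{\dow[\rho]\beta\neq\dow[\rho]\beta}\wedge\tup{\dow[\psi]\alpha=\dow[\psi]\alpha}\leq\tup{\dow[\psi]\alpha\neq\dow[\rho]\beta},
\]
produces $\tup{\dow[\psi]\alpha\neq\dow[\rho]\beta}$, which contradicts the conjunct $\lnot\tup{\dow[\psi]\alpha\neq\dow[\rho]\beta}$ of $\varphi$.

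For the inequality, suppose instead that $\tup{\dow[\psi]\alpha\neq\dow[\theta]\gamma}$ were a conjunct of $\varphi$. A single direct instance of \neqaxeight,
\[
\tup{\dow[\psi]\alpha\neq\dow[\theta]\gamma}\wedge\tup{\dow[\rho]\beta=\dow[\rho]\beta}\leq\tup{\dow[\psi]\alpha\neq\dow[\rho]\beta}\lor\tup{\dow[\theta]\gamma\neq\dow[\rho]\beta},
\]
suffices: the left side is derivable from $\varphi$ (the second conjunct using $\tup{\dow[\rho]\beta=\dow[\theta]\gamma}$ together with \eqax{5} and \eqax{1}), while the first right-hand disjunct directly contradicts $\lnot\tup{\dow[\psi]\alpha\neq\dow[\rho]\beta}$ and the second contradicts $\lnot\tup{\dow[\rho]\beta\neq\dow[\theta]\gamma}$ after applying \neqaxone. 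The only mildly subtle step is the detour through the self-inequality $\tup{\dow[\rho]\beta\neq\dow[\rho]\beta}$ in the equality case; apart from that, the argument is pure axiom manipulation and requires no case analysis on the four subsets $\sisi,\sino,\nosi,\nono$.
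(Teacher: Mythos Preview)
Your proof is correct and uses essentially the same ingredients as the paper: \neqaxeight\ for the inequality conjunct and \neqaxfive\ (which the paper cites as \neqaxfour, both pointing to {\sf NeqAx6}) for the equality conjunct. The only difference is ordering and instantiation: the paper first establishes $\lnot\tup{\dow[\psi]\alpha\neq\dow[\theta]\gamma}$ via \neqaxeight\ and then gets the equality directly via \neqaxfour, whereas you treat the two conjuncts independently; consequently your instantiation of \neqaxfive\ produces the ``extra'' disjunct $\tup{\dow[\rho]\beta\neq\dow[\rho]\beta}$, which forces the short detour through another use of \neqaxeight. A slightly cleaner instantiation (taking $\eta:=\dow[\theta]\gamma$ rather than $\eta:=\dow[\rho]\beta$) yields $\tup{\dow[\rho]\beta\neq\dow[\theta]\gamma}$ as the residual disjunct, which is already negated in $\varphi$ and avoids the detour entirely.
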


\begin{proof}
 By  \neqaxeight and the consistency of $\varphi$, $\lnot \tup{\dow[\psi]\alpha \neq \dow[\theta]\gamma}$ is a conjunct of $\varphi$. Then, by consistency of $\varphi$ plus  \neqaxfour, $\tup{\dow[\psi]\alpha = \dow[\theta]\gamma}$ is also a conjunct of $\varphi$ which concludes the proof.
\end{proof}

Now that we have these lemmas, we proceed to construct $\Tt^{\varphi}$ as follows:

\paragraph {\bf Rule 1. Witnesses for \boldmath{${\bf v_1}=(\psi,\alpha)\in \sisi$}.} (cf.~\ref{Rule1}) We define data trees $\Tt_1^{\bf v_1}=(T_1^{\bf v_1},\pi_1^{\bf v_1})$ and $\Tt_2^{\bf v_1}=(T_2^{\bf v_1},\pi_2^{\bf v_1})$ with {\rm root}s $r^{\bf v_1}_1$ and $r^{\bf v_1}_2$ respectively.  In order to choose appropriate witnesses for $\tup{\eps=\dow[\psi]\alpha}$ and $\tup{\eps \neq \dow[\psi]\alpha}$, we need the following lemma:
%We need the following lemma in order to choose appropriate witnesses:

\begin{lemma}\label{lema a}
 Let ${\bf v_1}=(\psi, \alpha) \in {\bf V_{=,\neq}}$. Then there exist $\widetilde{\Tt^\psi}=(\widetilde{T^\psi},\widetilde{\pi^\psi})$ with {\rm root} $\widetilde{r^\psi}$ and a node $x$ such that:
\begin{itemize} 
 \item $\widetilde{\Tt^\psi},\widetilde{r^\psi}\models \psi$, 
 \item $\widetilde{\Tt^\psi},\widetilde{r^\psi}, x\models \alpha$, 
\item $[x]_{\widetilde{\pi^\psi}}= [y]_{\widetilde{\pi^\psi}}$ for all $y$ such that there is $\beta \in P_n$ with $(\psi, \beta) \in \sino$ and $\widetilde{\Tt^\psi},\widetilde{r^\psi}, y\models \beta$, 
\item  
$[x]_{\widetilde{\pi^\psi}}\neq [z]_{\widetilde{\pi^\psi}}$ for all $z$ such that there is $\gamma \in P_n$ with $(\psi, \gamma) \in \nosi$ and $\widetilde{\Tt^\psi},\widetilde{r^\psi}, z\models \gamma$.
\end{itemize} 
\end{lemma}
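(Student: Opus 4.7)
The plan is to split into two cases depending on whether there is some $\beta_0 \in P_n$ with $(\psi, \beta_0) \in \sino$, as foreshadowed by Observation~\ref{testigo para epsiloneqpsialpha}.

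\emph{Case 1.} Such a $\beta_0$ exists. By the inductive hypothesis of Lemma~\ref{lem:construction} (applied at level $n$, using that $\psi \in N_n$ is consistent since $(\psi,\alpha) \in \sisi$ makes $\tup{\eps = \dow[\psi]\alpha}$ a conjunct of $\varphi$), pick any data tree $\widetilde{\Tt^\psi}$ satisfying $\psi$ at its root $\widetilde{r^\psi}$. The key step is to derive three syntactic facts about the conjuncts of $\psi$, all via the technical Lemmas~\ref{lema 1} and~\ref{lema 2} instantiated with $\gamma := \eps \in P_{n+1}$:
(i) $\tup{\beta_0 = \alpha}$ is a conjunct of $\psi$, by Lemma~\ref{lema 2} (with $* := {=}$), since $\tup{\eps = \dow[\psi]\beta_0} \wedge \lnot \tup{\eps \neq \dow[\psi]\beta_0} \wedge \tup{\eps = \dow[\psi]\alpha}$ is a sub-expression of the consistent $\varphi$;
(ii) for every $\beta$ with $(\psi, \beta) \in \sino$, $\lnot \tup{\beta_0 \neq \beta}$ is a conjunct of $\psi$, by Lemma~\ref{lema 1} (with $* := {\neq}$);
(iii) for every $\gamma$ with $(\psi, \gamma) \in \nosi$, $\lnot \tup{\beta_0 = \gamma}$ is a conjunct of $\psi$, by Lemma~\ref{lema 1} (with $* := {=}$). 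Since $\psi$ holds at $\widetilde{r^\psi}$, fact (i) provides nodes $y, z$ with $\widetilde{\Tt^\psi},\widetilde{r^\psi}, y \models \beta_0$, $\widetilde{\Tt^\psi},\widetilde{r^\psi}, z \models \alpha$, and $[y] = [z]$; define $x := z$. Fact (ii) then forces $[y''] = [y] = [x]$ for every $\beta$-witness $y''$ with $(\psi, \beta) \in \sino$, and fact (iii) forces $[z'] \neq [y] = [x]$ for every $\gamma$-witness $z'$ with $(\psi, \gamma) \in \nosi$.

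\emph{Case 2.} $(\psi, \beta) \notin \sino$ for every $\beta \in P_n$, making the third bullet point of the lemma statement vacuous. Apply Lemma~\ref{lema clave plus} with $\psi_0 := \psi$, $\alpha$ as given, $\{\beta_1, \dots, \beta_m\} := \{\gamma \in P_n \mid (\psi, \gamma) \in \nosi\}$, and $\gamma_i := \eps$ for each $i$. Its hypothesis, the consistency of $\tup{\eps = \dow[\psi]\alpha} \wedge \lnot \tup{\eps = \dow[\psi]\gamma}$, is immediate: $(\psi, \alpha) \in \sisi$ supplies $\tup{\eps = \dow[\psi]\alpha}$ as a conjunct of $\varphi$ and $(\psi, \gamma) \in \nosi$ supplies $\lnot \tup{\eps = \dow[\psi]\gamma}$. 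The tree and node returned by Lemma~\ref{lema clave plus} then discharge the remaining conditions.

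The main obstacle is Case 1, where the correct $\alpha$-witness must be identified indirectly through a $\beta_0$-witness, and where one must carefully match the conjuncts of $\varphi$ encoded by membership in $\sisi, \sino, \nosi$ to the precise form required by the hypotheses of Lemmas~\ref{lema 1} and~\ref{lema 2}; Case 2 reduces essentially to a single application of Lemma~\ref{lema clave plus}.
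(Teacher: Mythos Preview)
Your proposal is correct and follows essentially the same route as the paper: the same two-case split according to whether some $(\psi,\beta_0)\in\sino$ exists, invoking Lemmas~\ref{lema 1} and~\ref{lema 2} (with $\gamma=\eps$) on the inductively given tree in Case~1, and reducing Case~2 to a single call of Lemma~\ref{lema clave plus}. Your Case~1 is in fact more explicit than the paper's sketch---you spell out exactly which instantiations of Lemmas~\ref{lema 1} and~\ref{lema 2} yield the three needed conjuncts of $\psi$ and how the witness $x$ is extracted via the $\beta_0$-witness---but the underlying argument is identical.
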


%\begin{lemma}\label{lema a}
% Let ${\bf v_1}=(\psi, \alpha) \in {\bf V_{=,\neq}}$. Then there exist $\widetilde{\Tt^\psi}=(\widetilde{T^\psi},\widetilde{\pi^\psi})$ with {\rm root} $\widetilde{r^\psi}$ and a node $x$ such that:
%\begin{itemize} 
% \item $\widetilde{\Tt^\psi},\widetilde{r^\psi}\models \psi$, 
% \item $\widetilde{\Tt^\psi},\widetilde{r^\psi}, x\models \alpha$, 
%\item $[x]_{\widetilde{\pi^\psi}}= [y]_{\widetilde{\pi^\psi}}$ for all $y$ such that there is $\beta \in P_n$ with $(\psi, \beta) \in \sino$ and $\widetilde{\Tt^\psi},\widetilde{r^\psi}, y\models \beta$, 
%\item  
%$[x]_{\widetilde{\pi^\psi}}\neq [z]_{\widetilde{\pi^\psi}}$ for all $z$ such that there is $\gamma \in P_n$ with $(\psi, \gamma) \in \nosi$ and $\widetilde{\Tt^\psi},\widetilde{r^\psi}, z\models \gamma$.
%\end{itemize} 
%\end{lemma}

\begin{proof}
We first analyze the case that there exists $\beta \in P_n$ such that $(\psi,\beta)\in {\bf V_{=,\lnot \neq}}$. Then, by Lemmas ~\ref{lema 1} and~\ref{lema 2}, the result is immediate from the fact that we are assuming there is a tree $\Tt^{\psi}$ satisfying $\psi$ at its {\rm root}. The idea is that by inductive hypothesis, there exists $\Tt^ {\psi}=(T^{\psi},\pi^{\psi})$ satisfying $\psi$ at is {\rm root}. Then, Lemma~\ref{lema 1} guarantees that every witness of some $\beta$ as described before belongs to the same class in $\pi^{\psi}$ and that every witness of some $\gamma$ as described before does not belong to this class. Finally, Lemma~\ref{lema 2} shows the existence of the desired node $x$. 

To conclude the proof, suppose that $(\psi,\beta)\not \in {\bf V_{=,\lnot \neq}}$ for all $\beta \in P_n$. Then the result follows from Lemma~\ref{lema clave plus}.
\end{proof}
%
%By Lemma~\ref{lema a}, there exist $\widetilde{\Tt^\psi}=(\widetilde{T^\psi},\widetilde{\pi^\psi})$ with {\rm root} $\widetilde{r^\psi}$ and a node $x$ such that:
%% 
%\begin{itemize}
%\item $\widetilde{\Tt^\psi},\widetilde{r^\psi}\models \psi$,
%\item $\widetilde{\Tt^\psi},\widetilde{r^\psi}, x\models \alpha$, 
%\item $[x]_{\widetilde{\pi^\psi}}= [y]_{\widetilde{\pi^\psi}}$ for all $y$ such that there is $\beta\in P_n$ with $(\psi, \beta) \in \sino$ and $\widetilde{\Tt^\psi},\widetilde{r^\psi}, y\models \beta$, 
%\item  
%$[x]_{\widetilde{\pi^\psi}}\neq [z]_{\widetilde{\pi^\psi}}$ for all $z$ such that there is $\gamma\in P_n$ with $(\psi, \gamma) \in \nosi$ and $\widetilde{\Tt^\psi},\widetilde{r^\psi}, z\models \gamma$.
%\end{itemize}

Using Lemma~\ref{lema a}, define $T^{\bf v_1}_1$ as $\widetilde{T^\psi}$, $\pi^{\bf v_1}_1$ as $\widetilde{\pi^\psi}$, $r^{\bf v_1}_1$ as $\widetilde{r^\psi}$ and $x^{\bf v_1}=x \in T^{\bf v_1}_1$. Also, by inductive hypothesis, there exists a tree $\Tt^{\psi}=(T^{\psi}, \pi^{\psi})$ with {\rm root} $r^{\psi}$ such that $\Tt^\psi,r^\psi \models \psi$. Define $T^{\bf v_1}_2$ as $T^{\psi}$, $\pi^{\bf v_1}_2$ as $\pi^{\psi}$ and $r^{\bf v_1}_2$ as $r^{\psi}$. Without loss of generality, we assume that $T^{\bf v_1}_1$ and $T^{\bf v_1}_2$ are disjoint. In other words, the {\rm root}ed data tree $(T^{\bf v_1}_1,\pi^{\bf v_1}_1,r^{\bf v_1}_1)$ is just a copy of $(\widetilde{T^\psi},\widetilde{\pi^\psi},\widetilde{r^\psi})$ with a special node named 
$x^{\bf v_1}$ and $(T^{\bf v_1}_2,\pi^{\bf v_1}_2,r^{\bf v_1}_2)$ is just a copy of $(T^{\psi}, \pi^{\psi})$ disjoint with $(T^{\bf v_1}_1,\pi^{\bf v_1}_1,r^{\bf v_1}_1)$. See Figure~\ref{fig:constr-neq}(a). 

% \sergio{Pondr\'ia ac\'a o antes de la Rule 1 que la uni\'on con clases de equivalencia de la ra\'iz se har\'a despu\'es. Tambi\'en que, como es esperable, no hay que agregar nada para $V_{\lnot =, \lnot  \neq}$}

\begin{figure}[ht]
   \begin{center}
   \begin{tabular}{c@{\hskip .2in}c@{\hskip .2in}c}
   \includegraphics[scale=0.25]{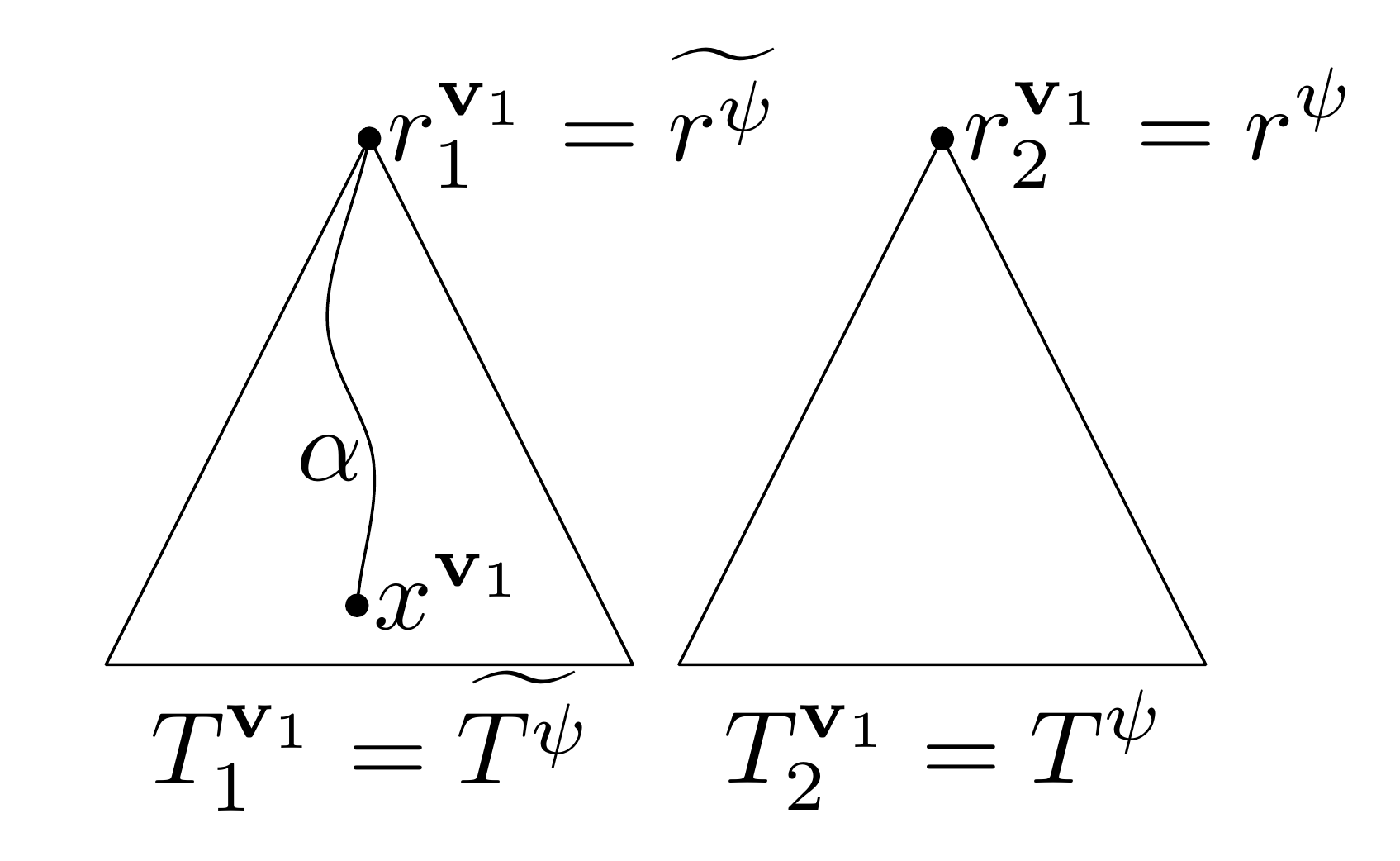}&
   \includegraphics[scale=0.25]{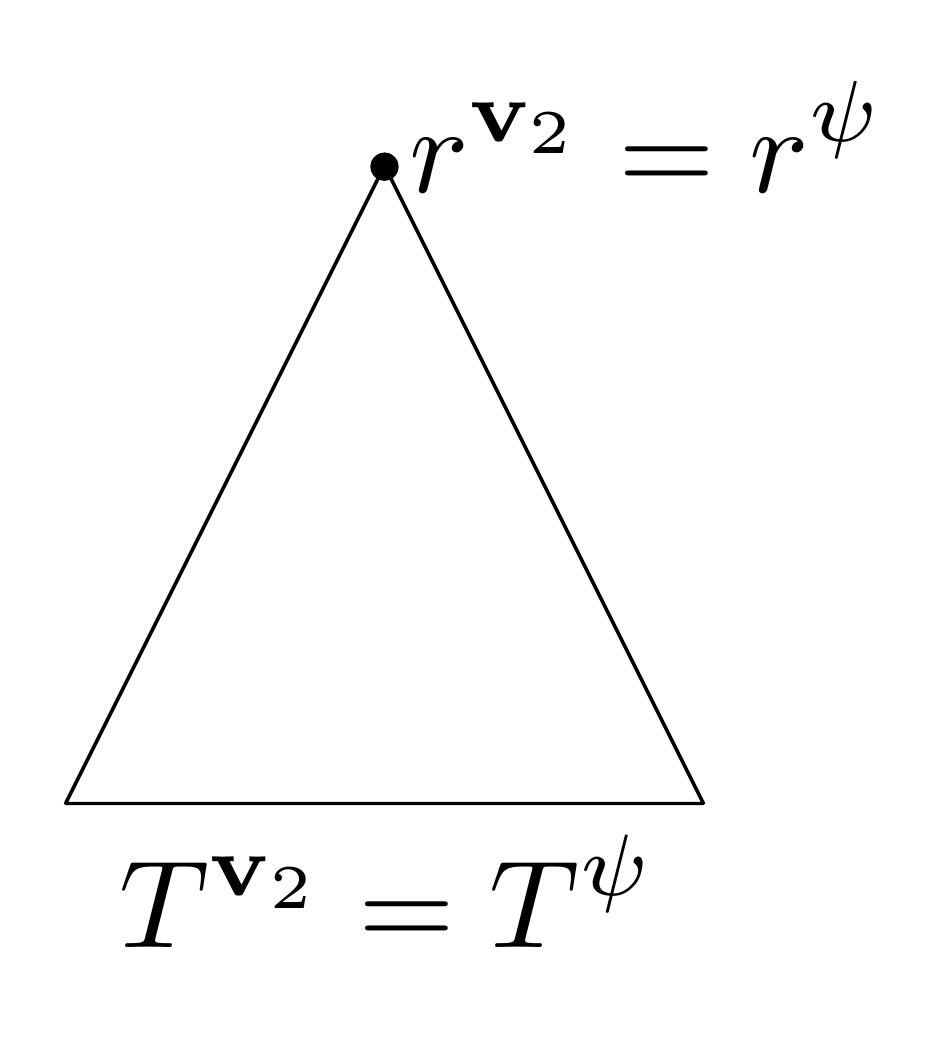}&
   \includegraphics[scale=0.25]{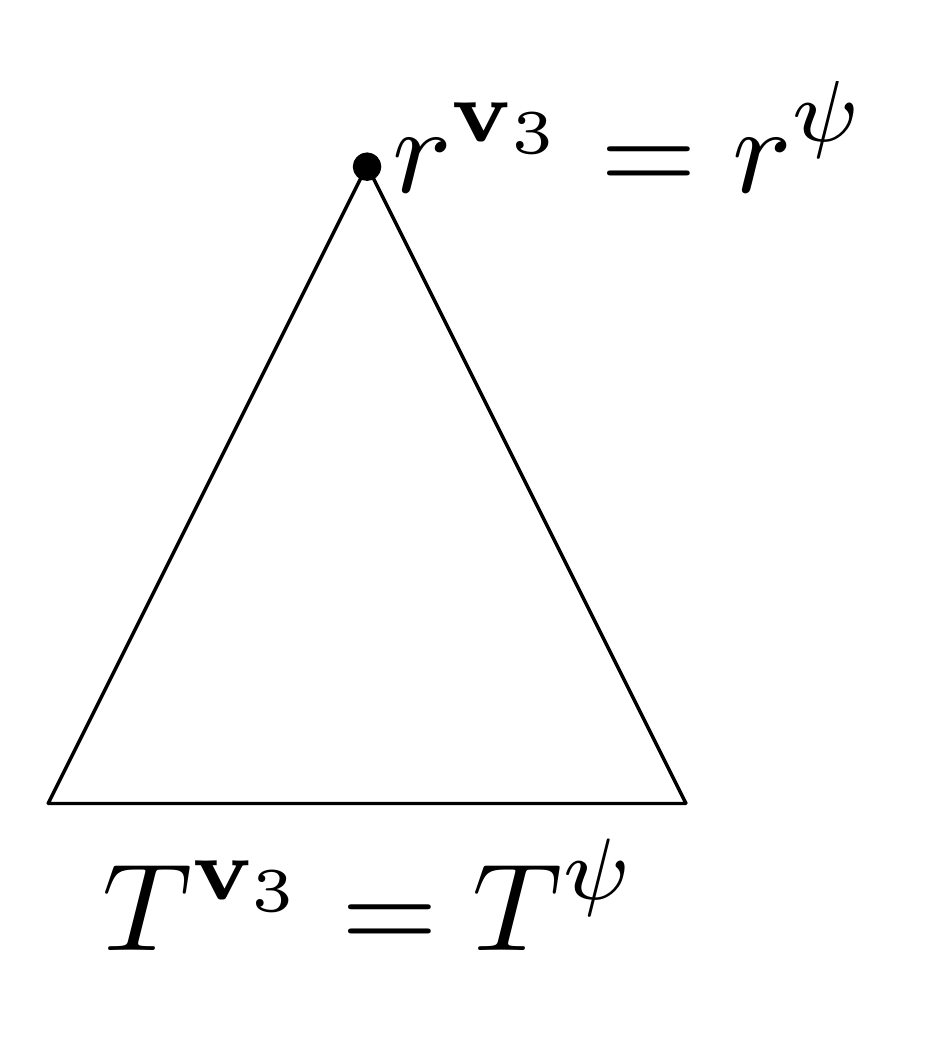}
   \\
	   (a)&(b)&(c) 
   \end{tabular}
   \begin{tabular}{c@{\hskip .2in}c@{\hskip .2in}c}
   \includegraphics[scale=0.25]{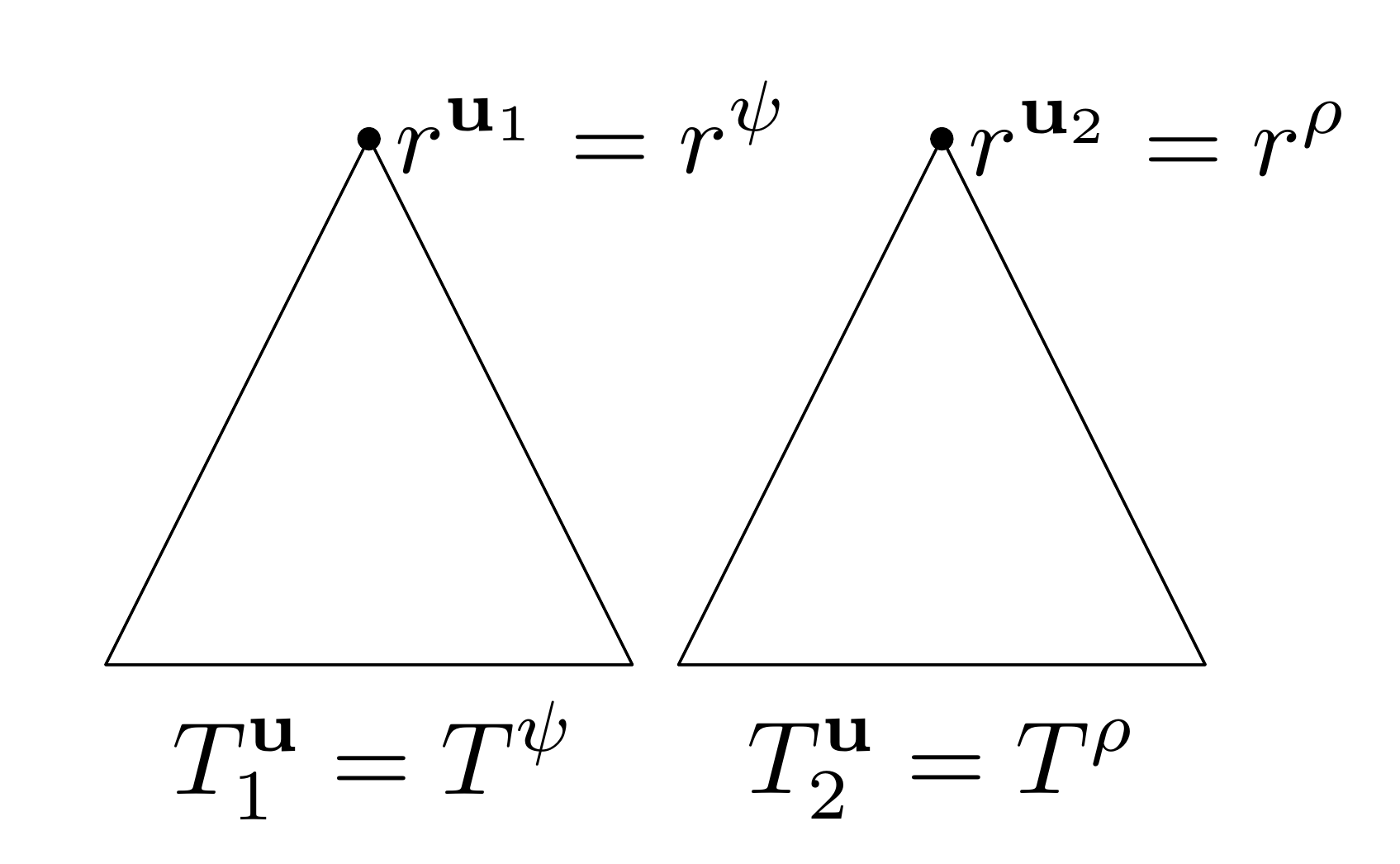}&
   \includegraphics[scale=0.25]{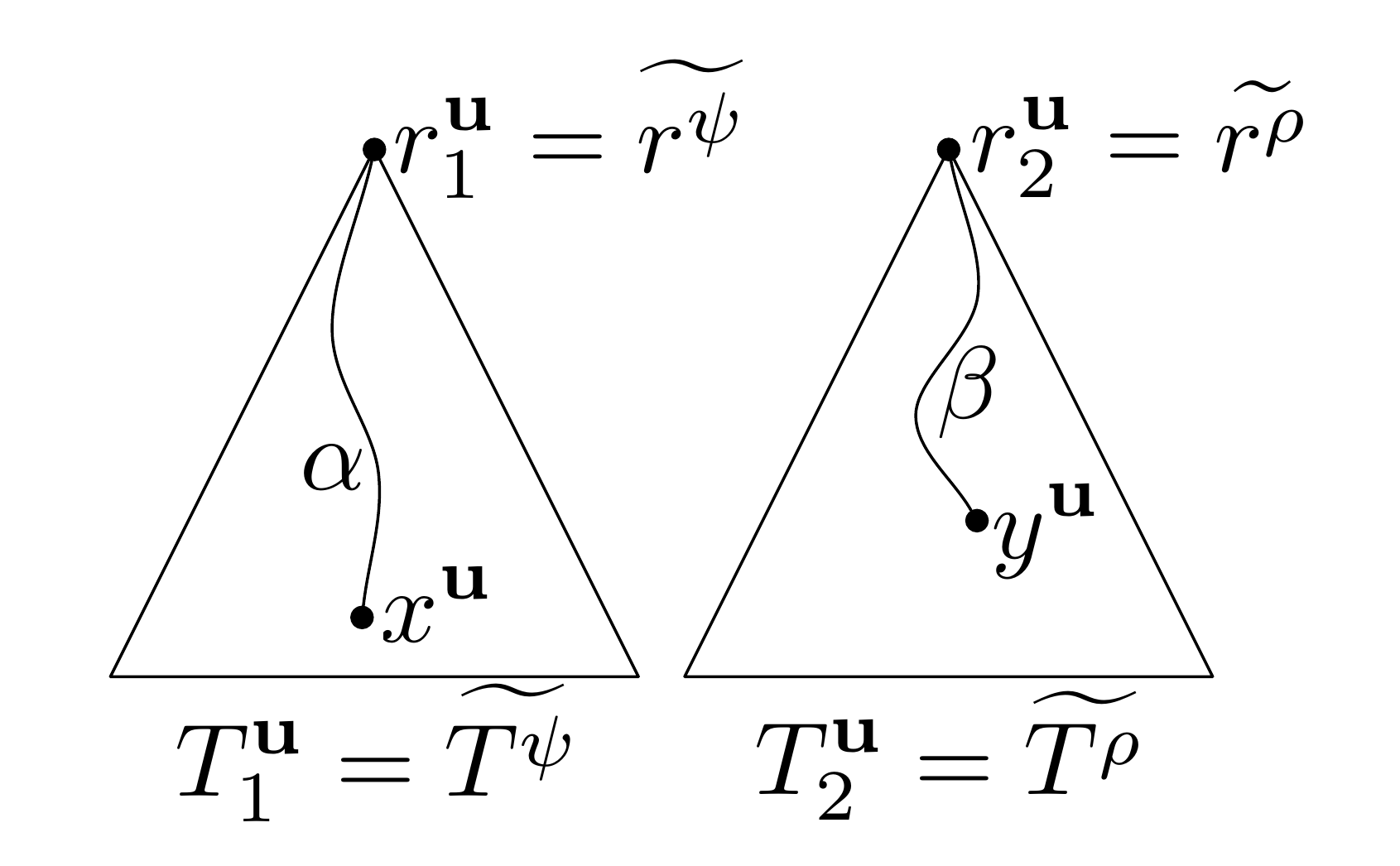}&    
\\(d)&(e)&
   \end{tabular}
   \end{center}
   \caption{Witnesses for (a) ${\bf v_1}=(\psi,\alpha)\in \sisi$; (b) ${\bf v_2}=(\psi,\alpha)\in \sino$; \mbox{(c) ${\bf v_3}=(\psi,\alpha)\in \nosi$}; (d) ${\bf u}=(\psi,\alpha,\rho,\beta)\in {\bf U_1}$; (e) ${\bf u}=(\psi,\alpha,\rho,\beta)\in {\bf U_2}$.
   }   \label{fig:constr-neq}
\end{figure}

\paragraph {\bf Rule 2. Witnesses for \boldmath{${\bf v_2}=(\psi,\alpha)\in \sino$}.} (c.f~\ref{Rule2}) We define a data tree  $\Tt^{\bf v_2}=(T^{\bf v_2},\pi^{\bf v_2})$ with {\rm root} $r^{\bf v_2}$.
By inductive hypothesis, there exists $\Tt^\psi=(T^\psi,\pi^\psi)$, with {\rm root} $r^\psi$ such that $\Tt^\psi,r^\psi\models \psi$.
% Furthermore, by Lemma~\ref{nextPathIsConjunct}, Axiom~\ref{ax:realizable-path}, and Axiom~\ref{ax:comm-eq} we conclude that $\tup{\alpha}$ is a conjunct of $\psi$. Hence there is $x\in T^\psi$ such that $\Tt^\psi,r^\psi,x\models\alpha$ . Define $x^{\bf v_2}$ as any such $x$ and
Define $T^{\bf v_2}$ as $T^\psi$, $\pi^{\bf v_2}$ as $\pi^\psi$, and $r^{\bf v_2}$ as $r^\psi$.
In other words, the {\rm root}ed data tree $(T^{\bf v_2},\pi^{\bf v_2},r^{\bf v_2})$ is just a copy of $(T^\psi,\pi^\psi,r^\psi)$. 
% with a special node named $x^{\bf v_2}$ which satisfies $T^{\bf v_2},\pi^{\bf v_2},r^{\bf v_2},x^{\bf v_2}\models\alpha$.
See Figure~\ref{fig:constr-neq}(b).

\paragraph {\bf Rule 3. Witnesses for \boldmath{${\bf v_3}=(\psi,\alpha)\in \nosi$}.} (cf.~\ref{Rule3}) We define a data tree  $\Tt^{\bf v_3}=(T^{\bf v_3},\pi^{\bf v_3})$ with {\rm root} $r^{\bf v_3}$.
By inductive hypothesis, there exists $\Tt^\psi=(T^\psi,\pi^\psi)$, with {\rm root} $r^\psi$ such that $\Tt^\psi,r^\psi\models \psi$.
% Furthermore, by Lemma~\ref{nextPathIsConjunct}, Axiom~\ref{ax:realizable-path}, and Axiom~\ref{ax:comm-eq} we conclude that $\tup{\alpha}$ is a conjunct of $\psi$. Hence there is $x\in T^\psi$ such that $\Tt^\psi,r^\psi,x\models\alpha$ . Define $x^{\bf v_2}$ as any such $x$ and
Define $T^{\bf v_3}$ as $T^\psi$, $\pi^{\bf v_3}$ as $\pi^\psi$, and $r^{\bf v_3}$ as $r^\psi$.
In other words, the {\rm root}ed data tree $(T^{\bf v_3},\pi^{\bf v_3},r^{\bf v_3})$ is just a copy of $(T^\psi,\pi^\psi,r^\psi)$.
% with a special node named $x^{\bf v_2}$ which satisfies $T^{\bf v_2},\pi^{\bf v_2},r^{\bf v_2},x^{\bf v_2}\models\alpha$.
See Figure~\ref{fig:constr-neq}(c).

\paragraph {\bf Rule 4. Witnesses for \boldmath{$u=(\psi,\alpha,\rho,\beta)\in {\bf U}$}.} (cf.~\ref{conjuntoU}) \label{NeqRule4} We define data trees $\Tt_1^{\bf u}=(T_1^{\bf u},\pi_1^{\bf u})$ and $\Tt_2^{\bf u}=(T_2^{\bf u},\pi_2^{\bf u})$ with {\rm root}s $r^{\bf u}_1$, $r^{\bf u}_2$ respectively.
%
%
%$${\bf U_1}=\{(\psi,\alpha,\rho,\beta) \in {\bf U} \mid \exists \ \gamma, \delta,  \hbox{ such that } (\psi,\gamma,\rho,\delta) \in {\bf Z}, \tup{\gamma = \alpha} \hbox{ and } \lnot \tup{\gamma \neq \alpha} \hbox{ are conjuncts of } \psi,$$
%$$\tup{\delta = \beta} \hbox{ and } \tup{\delta \neq \beta} \hbox{ are conjuncts of } \rho\}$$
%
%$${\bf U_2}=\{(\psi,\alpha,\rho,\beta) \in {\bf U} \mid \exists \ \gamma, \delta,  \hbox{ such that } (\psi,\gamma,\rho,\delta) \in {\bf Z}, \tup{\gamma = \alpha} \hbox{ and } \tup{\gamma \neq \alpha} \hbox{ are conjuncts of } \psi,$$
%$$\tup{\delta = \beta} \hbox{ and } \tup{\delta \neq \beta} \hbox{ are conjuncts of } \rho\}$$
%
%$${\bf U_3}={\bf U} \setminus ({\bf U_1} \cup {\bf U_2})$$
%

By inductive hypothesis, there exist trees $\Tt^\psi=(T^\psi,\pi^\psi)$ (with {\rm root} $r^\psi$) and $\Tt^\rho=(T^\rho,\pi^\rho)$ (with {\rm root} $r^\rho$) such that $\Tt^\psi,r^\psi\models \psi$ and $\Tt^\rho,r^\rho\models \rho$. 

Now, in order to consider the information given by ${\bf U}$ and its interaction with ${\bf Z}$, we  split ${\bf U}$ into two different subsets:

\begin{itemize}
\item ${\bf U_1}$ is the set of $(\psi,\alpha,\rho,\beta) \in {\bf U}$ for which there are $\gamma, \delta\in P_n$ such that:
\begin{itemize}
\item $(\psi,\gamma,\rho,\delta) \in {\bf Z}$, 
\item $\tup{\gamma = \alpha}$ is a conjunct of $\psi$, 
\item $\tup{\delta = \beta}$ is a conjunct of $\rho$.
\end{itemize}

\item ${\bf U_2}={\bf U} \setminus {\bf U_1}$
\end{itemize}

For ${\bf u}=(\psi,\alpha,\rho,\beta) \in {\bf U_1}$, define $T^{\bf u}_1$ as $T^\psi$, $\pi^{\bf u}_1$ as $\pi^\psi$, $r^{\bf u}_1$ as $r^\psi$ and define $T^{\bf u}_2$ as $T^\rho$, $\pi^{\bf u}_2$ as $\pi^\rho$, $r^{\bf u}_2$ as $r^\rho$. Without loss of generality, we assume that $T^{\bf u}_1$ and $T^{\bf u}_2$ are disjoint. 

% Now we define a partition $\pi^{\bf u}$ over $T^{\bf u}_1\cup T^{\bf u}_2$ as 
% %
% $$
% \pi^{\bf u}=\pi^{\bf u}_1\cup\pi^{\bf u}_2.
% $$

In other words, the {\rm root}ed data tree $(T_1^{\bf u},\pi_1^{\bf u}, r_1^{\bf u})$ is just a copy of $(T^\psi,\pi^\psi,r^\psi)$ and the pointed data tree $(T_2^{\bf u},\pi_2^{\bf u},r^{\bf u}_2)$ is a copy of $(T^\rho,\pi^\rho,r^\rho)$. See Figure~\ref{fig:constr-neq}(d). Note that these are the cases in which the satisfaction of $\tup{\dow[\psi]\alpha=\dow[\rho]\beta}$ will be guaranteed by the merging described in~\ref{conjuntoZ}.

For ${\bf u}=(\psi,\alpha,\rho,\beta) \in {\bf U_2}$, in Lemma~\ref{lema clave plus} consider
\begin{align*}
\psi_0&:=\psi
\\
\Tt^{\psi_0}&:=\Tt^{\psi}
\\
\alpha&:=\alpha
\\
\{\beta_1,\dots ,\beta_m\}&:=\{\gamma \in P_{n} \mid \lnot \tup{\dow[\rho]\beta=\dow[\psi]\gamma} \hbox{ is a conjunct of } \varphi\}
\\
\gamma_i&:=\dow[\rho]\beta\mbox{ for all $i=1,\dots ,m$}
\end{align*}

\noindent Then there exist $\widetilde{\Tt^\psi}=(\widetilde{T^\psi},\widetilde{\pi^\psi})$ with {\rm root} $\widetilde{r^\psi}$ and a node $x$ such that:
\begin{itemize}
\item $\widetilde{\Tt^\psi},\widetilde{r^\psi}\models \psi$, 
\item $\widetilde{\Tt^\psi},\widetilde{r^\psi}, x\models \alpha$ 
\item $[x]_{\widetilde{\pi^\psi}}\neq [y]_{\widetilde{\pi^\psi}}$ for all $y$ such that there is $\gamma\in P_n$ with $\widetilde{\Tt^\psi},\widetilde{r^\psi}, y\models \gamma$  and $\lnot \tup{\dow[\rho]\beta = \dow[\psi]\gamma }$ is a conjunct of $\varphi$. 
\end{itemize}
Define $T^{\bf u}_1$ as $\widetilde{T^\psi}$, $\pi^{\bf u}_1$ as $\widetilde{\pi^\psi}$, $r^{\bf u}_1$ as $\widetilde{r^\psi}$ and $x^{\bf u}=x \in T^{\bf u}_1$. Now let  
$$
\left\{\mu_1,\dots,\mu_r \right\}=\left\{\mu \in P_n \mid \mbox{ there exists } \ y\in T^{\bf u}_1 \mbox{ such that } \Tt^{\bf u}_1,r^{\bf u}_1, y\models \mu \mbox{ and } [y]_{\pi^{\bf u}_1}=[x^{\bf u}]_{\pi^{\bf u}_1}\right\}.
$$
Then it follows that $\tup{\dow[\rho]\beta=\dow[\psi]\mu_j}$ is a conjunct of $\varphi$ for all $j=1,\dots,r$. 

In Lemma~\ref{lema clave plus}, consider
\begin{align*}
\psi_0&:=\rho
\\
\Tt^{\psi_0}&:=\Tt^{\rho}
\\
\alpha&:=\beta
\\
\{\beta_1,\dots ,\beta_m\}&:=\{\delta \in P_{n} \mid \exists j=1,\dots ,r \hbox{ with } \lnot \tup{\dow[\rho]\delta=\dow[\psi]\mu_j} \hbox{ is a conjunct of } \varphi\}
\\
\gamma_i&:=\dow[\psi]\mu_j \mbox{ for $j=1,\dots r$ such that $\tup{\dow[\rho]\beta_i=\dow[\psi]\mu_j}$ is a conjunct of $\varphi$}
\end{align*}

\noindent Then there exist a tree $\widetilde{\Tt^{\rho}}=(\widetilde{T^{\rho}},\widetilde{\pi^{ \rho}})$ with {\rm root} $\widetilde{r^{\rho}}$ and a node $y$ such that
\begin{itemize}
\item $\widetilde{\Tt^{\rho}}, \widetilde{r^{\rho}}\models \rho$,
\item $\widetilde{\Tt^{\rho}}, \widetilde{r^{\rho}}, y\models \beta$, 
\item $[y]_{\widetilde{\pi^{\rho}}}\neq [z]_{\widetilde{\pi^{\rho}}}$ for all $z$ such that there is $\delta\in P_n$ and $j=1,\dots,r$ with $\widetilde{\Tt^\rho},\widetilde{r^\rho}, z\models \delta$ and $\lnot \tup{\dow[\rho]\delta=\dow[\psi]\mu_j}$ is a conjunct of $\varphi$.
\end{itemize}
Define $T^{\bf u}_2$ as $\widetilde{T^\rho}$, $\pi^{\bf u}_2$ as $\widetilde{\pi^\rho}$, $r^{\bf u}_2$ as $\widetilde{r^\rho}$ and $y^{\bf u}=y$. Without loss of generality, we assume that $T^{\bf u}_1$ and $T^{\bf u}_2$ are disjoint.

% Now we define a partition $\pi^{\bf u}$ over $T^{\bf u}_1\cup T^{\bf u}_2$ as 
% %
% $$
% \pi^{\bf u}=\pi^{\bf u}_1\cup\pi^{\bf u}_2\cup\{[x^{\bf u}]_{\pi_1^{\bf u}}\cup[y^{\bf u}]_{\pi_2^{\bf u}}\}\setminus\{[x^{\bf u}]_{\pi_1^{\bf u}},[y^{\bf u}]_{\pi_2^{\bf u}}\}.
% $$
%
In other words, the {\rm root}ed data tree $(T_1^{\bf u},\pi^{\bf u}\restr{T_1^{\bf u}} ,r_1^{\bf u})$ is just a copy of $(\widetilde{T^\psi},\widetilde{\pi^\psi},\widetilde{r^\psi})$, with a special node named $x^{\bf u}$ which satisfies $\Tt_1^{\bf u},r_1^{\bf u},x^{\bf u}\models\alpha$. Analogously, the pointed data tree \mbox{$(T_2^{\bf u},\pi^{\bf u}\restr{ T_2^{\bf u}},r^{\bf u}_2)$} is a copy of $(\widetilde{T^\rho},\widetilde{\pi^\rho},\widetilde{r^\rho})$, with a special node named $y^{\bf u}$ which satisfies \  $\Tt_2^{\bf u},r_2^{\bf u},y^{\bf u}\models\beta$.  
See Figure~\ref{fig:constr-neq}(e).

Notice that this rule differs from Rule 2 of~\S\ref{construccion} in the fact that we do not merge the classes of $x^{\bf u}$ and $y^{\bf u}$ yet. We will perform that merging only at the end of the construction. This is not really important and we could have merged the classes at this step; the reason for doing it at the end is only a technical issue. The proof of Fact~\ref{factnocambialaparticion} will be easier to understand this way.    

The following remark will be used later to prove that $\varphi$ is indeed satisfied in the constructed tree. Its proof is omitted since it is analogous to the proof of Remark~\ref{lemacasosindistinto}:

\begin{remark}\label{lema}
Let $(\psi,\alpha,\rho,\beta)\in {\bf U_2}$. If $\lnot \tup{\dow[\psi]\mu = \dow[\rho]\delta}$ is a conjunct of $\varphi$, then $[y^{\bf u}]_{\pi^{\bf u}_2}\neq [y]_{\pi^{\bf u}_2}$ for all $y$ such that $\Tt_2^{\bf u}, r_2^{\bf u}, y \models \delta$ or $[x^{\bf u}]_{\pi^{\bf u}_1}\neq [x]_{\pi^{\bf u}_1}$ for all $x$ such that $\Tt_1^{\bf u}, r_1^{\bf u}, x \models \mu$.
% \begin{enumerate}
%  \item  If  $\lnot \tup{\dow[\psi]\alpha= \dow[\rho]\delta'}$ is a conjunct of $\varphi$, then $[y^{\bf u}]_{\pi^{\bf u}_2}\neq [y]_{\pi^{\bf u}_2}$ for all $y$ such that $T_2^{\bf u}, r_2^{\bf u}, y \models \delta'$.
%  
%  \item If $\lnot \tup{\dow[\rho]\beta= \dow[\psi]\gamma'}$ is a conjunct of $\varphi$, then $[x^{\bf u}]_{\pi^{\bf u}_1}\neq [x]_{\pi^{\bf u}_1} $ for all $x$ such that $T_1^{\bf u}, r_1^{\bf u}, x \models \gamma'$.
%  
%  \item 
% \end{enumerate}

\qed
\end{remark}

% \begin{proof}
% The result is immediate from Rule 4: If neither of the conditions is satisfied, then $\mu=\mu_j$ for some $j=1,\dots ,r$ and so $\tup{\dow[\rho]\delta=\dow[\psi]\mu}$ is a conjunct of $\varphi$ which is a contradiction.
% \end{proof}

\paragraph{The rooted data tree \boldmath{$(T^\varphi,\pi^\varphi,r^\varphi)$}.} As shown in Figure~\ref{fig:constr-neq-2}, now we define $T^\varphi$, using our Rules, as the tree which consists of a {\rm root} $r^\varphi$ with label $a\in\A$ if $a$ is a conjunct of $\varphi$, and with children 
$$
(T^{\bf v_1}_1)_{{\bf v_1}\in \sisi},
(T^{\bf v_1}_2)_{{\bf v_1}\in \sisi},
(T^{\bf v_2})_{{\bf v_2}\in \sino},
(T^{\bf v_3})_{{\bf v_3}\in \nosi},
(T^{\bf u}_1)_{{\bf u}\in {\bf U}},
(T^{\bf u}_2)_{{\bf u}\in {\bf U}}.
$$
% See Figure~\ref{fig:constr-neq-2}: the tree $T^\varphi$ is formed by tying together, with a new {\rm root} $r^\varphi$, all trees
% $T^{\bf v_1}_1$ and
% $T^{\bf v_1}_2$ for ${\bf v_1}=(\psi,\alpha)\in \sisi$, all trees
% $T^{\bf v_2}$ for ${\bf v_2}=(\psi,\alpha)\in \sino$, all trees
% $T^{\bf v_3}$ for \mbox{${\bf v_3}=(\psi,\alpha)\in \nosi$}, and all trees
% $T^{{\bf u}_1}_1$ and $T^{{\bf u}_1}_2$ for ${\bf u}_1=(\psi,\alpha,\rho,\beta)\in {\bf U_1}$, and all trees
% $T^{{\bf u}_2}_1$ and $T^{{\bf u}_2}_2$ for \mbox{${\bf u}_2=(\psi,\alpha,\rho,\beta)\in {\bf U_2}$.}
%
%
% We assume that the nodes of all such trees are pairwise disjoint.
As a first step we provisionally define $\widetilde{\pi^\varphi}$ over $T^\varphi$ by
$$
\widetilde{\pi^\varphi}= \{\{ r^\varphi\}\} \cup \bigcup_{{\bf v_1}\in \sisi}(\pi_1^{\bf v_1}\cup \pi_2^{\bf v_1}) \cup
\bigcup_{{\bf v_2}\in \sino}\pi^{\bf v_2}
\cup \bigcup_{{\bf v_3}\in \nosi}\pi^{\bf v_3}
\cup \bigcup_{{\bf u}\in {\bf U}} (\pi_1^{\bf u}\cup \pi_2^{\bf u})
$$
\begin{figure}[ht]
   \begin{center}
   \includegraphics[scale=0.25]{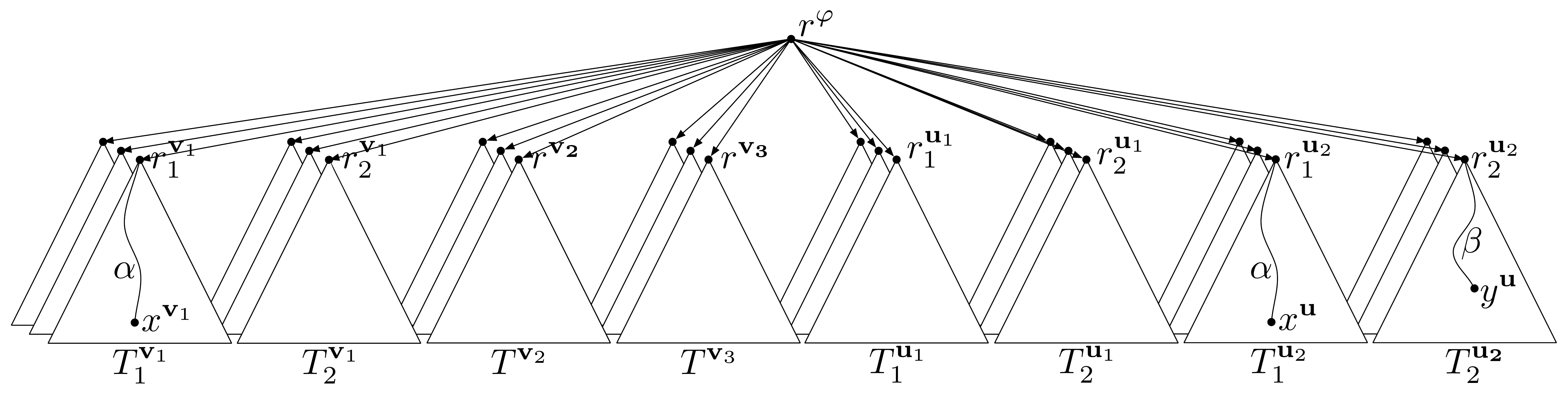}
   \end{center}
   \caption{The tree $T^\varphi$ (without any partition yet).}
   \label{fig:constr-neq-2}
\end{figure}
It is important to notice that, up to this point in the construction, the tree hanging from each child of the {\rm root} preserves its original partition. 

In order to consider the information given by ${\bf Z}$ (cf.~\ref{conjuntoZ}), we split $\nosi$ into two subsets: 
\begin{align*}
{\bf V_{\lnot =, \neq}'}&=\left\{(\psi,\alpha)\in \nosi \mid \mbox{ for all }\ (\rho,\beta)\in \nosi,\  (\psi,\alpha,\rho,\beta)\not\in {\bf Z} \right\},
\\
{\bf V_{\lnot =, \neq}''}&=\nosi \setminus {\bf V_{\lnot =, \neq}'}.
\end{align*}
The following property of the set ${\bf V''_{\lnot =, \neq}}$ will be used to prove that $\varphi$ is indeed satisfied at the constructed tree:

\begin{lemma}\label{lema 0'}
Let $(\theta,\delta), (\theta,\delta')\in {\bf V_{\lnot =,\neq}}$. Suppose that $(\theta,\delta)\in {\bf V''_{\lnot =, \neq}}$ and $\lnot \tup{\delta' \neq \delta'}$ and $\tup{\delta=\delta'}$ are conjuncts of $\theta$. Then $(\theta,\delta,\theta,\delta')\in {\bf Z}$. 
\end{lemma}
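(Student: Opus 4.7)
My plan is to exploit the element $(\rho,\beta_0)\in\nosi$ with $(\theta,\delta,\rho,\beta_0)\in{\bf Z}$ guaranteed by $(\theta,\delta)\in{\bf V_{\lnot =, \neq}''}$, which supplies the conjuncts $\tup{\dow[\theta]\delta=\dow[\rho]\beta_0}$ and $\lnot\tup{\dow[\theta]\delta\neq\dow[\rho]\beta_0}$ of $\varphi$. Together with the conjuncts $\tup{\delta=\delta'}$ and $\lnot\tup{\delta'\neq\delta'}$ of $\theta$, and the memberships $(\theta,\delta),(\theta,\delta')\in\nosi$ (which make $\tup{\dow[\theta]\delta}$ and $\tup{\dow[\theta]\delta'}$ consequences of $\varphi$ via \neqaxone, \neqaxtwo, and \textbf{Der12}), I will derive the two $\varphi$-conjuncts needed to witness $(\theta,\delta,\theta,\delta')\in{\bf Z}$.

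For $\tup{\dow[\theta]\delta=\dow[\theta]\delta'}$ I would use that $\tup{\delta=\delta'}$ is a conjunct of $\theta$ to rewrite $[\theta]\equiv[\theta][\tup{\delta=\delta'}]$ via \textbf{Der21}; then \eqax{7} gives $\tup{\dow[\theta][\tup{\delta=\delta'}]}\leq\tup{\dow[\theta]\delta=\dow[\theta]\delta'}$, hence $\tup{\dow[\theta]}\leq\tup{\dow[\theta]\delta=\dow[\theta]\delta'}$. Since $\varphi$ entails $\tup{\dow[\theta]}$, consistency of $\varphi$ forces $\tup{\dow[\theta]\delta=\dow[\theta]\delta'}$ to be the positive conjunct. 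The symmetric argument (using $\tup{\delta'=\delta}$, available through \eqax{2}) also yields $\tup{\dow[\theta]\delta'=\dow[\theta]\delta}$ as a conjunct of $\varphi$ for use below.

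The main obstacle is $\lnot\tup{\dow[\theta]\delta\neq\dow[\theta]\delta'}$, which I plan to build up through four applications of \neqaxeight. First, \neqaxeight with $\alpha=\gamma=\dow[\theta]\delta$ and $\beta=\eta=\dow[\rho]\beta_0$ gives $\tup{\dow[\theta]\delta\neq\dow[\theta]\delta}\wedge\tup{\dow[\rho]\beta_0=\dow[\rho]\beta_0}\leq\tup{\dow[\theta]\delta\neq\dow[\rho]\beta_0}$; since $\tup{\dow[\rho]\beta_0=\dow[\rho]\beta_0}$ is derivable (via \eqax{1}) and the RHS is inconsistent with $\varphi$, this forces $\lnot\tup{\dow[\theta]\delta\neq\dow[\theta]\delta}$ to be a conjunct. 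Second, if $\tup{\dow[\theta]\delta'\neq\dow[\theta]\delta'}$ were also a conjunct of $\varphi$, then $\tup{\dow[\theta]\delta'\neq\dow[\theta]\delta'}\wedge\lnot\tup{\dow[\theta]\delta\neq\dow[\theta]\delta}$ would be consistent, and Lemma~\ref{paraverif} applied with $\psi=\theta$, $\alpha=\delta'$, $\gamma=\delta$ (using $\lnot\tup{\delta'\neq\delta'}$ as a conjunct of $\theta$) would deliver $\lnot\tup{\delta'=\delta}$ as a conjunct of $\theta$, contradicting $\tup{\delta=\delta'}$ through \eqax{2}; hence $\lnot\tup{\dow[\theta]\delta'\neq\dow[\theta]\delta'}$ is a conjunct.

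Third, \neqaxeight with $\alpha=\dow[\theta]\delta'$, $\gamma=\dow[\rho]\beta_0$, $\beta=\dow[\theta]\delta'$, $\eta=\dow[\theta]\delta$ bounds $\tup{\dow[\theta]\delta'\neq\dow[\rho]\beta_0}\wedge\tup{\dow[\theta]\delta'=\dow[\theta]\delta}$ above by $\tup{\dow[\theta]\delta'\neq\dow[\theta]\delta'}\lor\tup{\dow[\rho]\beta_0\neq\dow[\theta]\delta}$; after an application of \neqaxone the latter disjunct becomes $\tup{\dow[\theta]\delta\neq\dow[\rho]\beta_0}$, and since the first disjunct is inconsistent by the second step and the second by the ${\bf Z}$-membership, we obtain $\lnot\tup{\dow[\theta]\delta'\neq\dow[\rho]\beta_0}$ as a conjunct. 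Finally, a last instance of \neqaxeight with $\alpha=\dow[\theta]\delta$, $\gamma=\dow[\theta]\delta'$ and $\beta=\eta=\dow[\rho]\beta_0$ bounds $\tup{\dow[\theta]\delta\neq\dow[\theta]\delta'}\wedge\tup{\dow[\rho]\beta_0=\dow[\rho]\beta_0}$ by $\tup{\dow[\theta]\delta\neq\dow[\rho]\beta_0}\lor\tup{\dow[\theta]\delta'\neq\dow[\rho]\beta_0}$, both of whose disjuncts are now inconsistent with $\varphi$, forcing $\lnot\tup{\dow[\theta]\delta\neq\dow[\theta]\delta'}$ as the conjunct of $\varphi$ and completing the proof.
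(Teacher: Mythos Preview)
Your proof is correct and follows essentially the same approach as the paper: both arguments derive $\tup{\dow[\theta]\delta=\dow[\theta]\delta'}$ via \eqax{7} and {\bf Der21}, then obtain $\lnot\tup{\dow[\theta]\delta\neq\dow[\theta]\delta}$ from the ${\bf V''_{\lnot =,\neq}}$-hypothesis by \neqaxeight, rule out $\tup{\dow[\theta]\delta'\neq\dow[\theta]\delta'}$ via Lemma~\ref{paraverif}, and finish with \neqaxeight. The only difference is that the paper's final step is slightly more direct: once $\lnot\tup{\dow[\theta]\delta\neq\dow[\theta]\delta}$ and $\lnot\tup{\dow[\theta]\delta'\neq\dow[\theta]\delta'}$ are known, a single instance of \neqaxeight (with $\alpha=\beta=\dow[\theta]\delta$, $\gamma=\eta=\dow[\theta]\delta'$) together with the already-established conjunct $\tup{\dow[\theta]\delta=\dow[\theta]\delta'}$ gives $\lnot\tup{\dow[\theta]\delta\neq\dow[\theta]\delta'}$ immediately, making your third step (the detour through $\lnot\tup{\dow[\theta]\delta'\neq\dow[\rho]\beta_0}$) unnecessary.
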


\begin{proof}
By  \neqaxeight, $\lnot \tup{\dow[\theta]\delta \neq \dow[\theta]\delta}$ is a conjunct of $\varphi$. By  \eqax{7} plus {\bf Der21} of Fact~\ref{fact boolean}, $\tup{\dow[\theta]\delta = \dow[\theta]\delta'}$  is a conjunct of $\varphi$. If we suppose that $\tup{\dow[\theta]\delta' \neq \dow[\theta]\delta'}$ is a conjunct of $\varphi$, by Lemma~\ref{paraverif} %\sergio{Choca un poco que la demostraci\'on use sin mayores explicaciones algo que viene despu\'es}\santi{Ver lema al final de sec de NF}
, we have that $\lnot \tup{\delta=\delta'}$ is a conjunct of $\theta$ which is a contradiction. Then we can assume that $\lnot \tup{\dow[\theta]\delta' \neq \dow[\theta]\delta'}$ is a conjunct of $\varphi$ and so we can conclude from  \neqaxeight that $\lnot \tup{\dow[\theta]\delta \neq \dow[\theta]\delta'}$ is a conjunct of $\varphi$. Then we have that $(\theta,\delta,\theta,\delta')\in {\bf Z}$. 
\end{proof}

As a particular case of Lemma~\ref{lema 0'}, we have: 
 
\begin{remark}
Let $(\theta,\delta), (\theta,\delta')\in {\bf V_{\lnot =,\neq}}$. Suppose that $(\theta,\delta), (\theta,\delta')\in {\bf V''_{\lnot =, \neq}}$ and $\tup{\delta=\delta'}$ is a conjunct of $\theta$. Then $(\theta,\delta,\theta,\delta')\in {\bf Z}$. 
\end{remark}

\begin{proof}
 Use  \neqaxfifteen plus {\bf Der21} of Fact~\ref{fact boolean} and  \neqaxeight.
\end{proof}

We classify the elements of ${\bf V_{\lnot =, \neq}''}$ according to the following equivalence relation:
$$
[(\psi,\alpha)]=[(\rho,\beta)] \mbox{\quad iff\quad} (\psi,\alpha,\rho,\beta) \in {\bf Z}.
$$
Observe that this relation is reflexive by  \neqaxeight, it is clearly symmetric and it is transitive by Lemma~\ref{lema c}.
We name the equivalence classes $A_1,\dots,A_m$. 
We define $\widehat{\pi^{\varphi}}$ over $T^{\varphi}$ taking into account the information given by $\sisi, \sino$ and ${\bf Z}$.
%
%\begin{eqnarray}\label{eqn:part}
%\widehat{\pi^{\varphi}}&=& \left(\widetilde{\pi^{\varphi}}\setminus \left(\{\{r^\varphi\}\} \cup \{[x^{\bf v_1}]_{\widetilde{\pi^\varphi}}\}_{{\bf v_1}\in \sisi} \cup
%\{[x]_{\widetilde{\pi^\varphi}}\}_{x \in M}\cup
%\{ [x]_{\widetilde{\pi^\varphi}}\}_{x \in L}\right)\right)\cup\nonumber
%\\
%&&\left\{\{r^\varphi\}\cup \bigcup_{{\bf v_1}\in \sisi}[x^{\bf v_1}]_{\widetilde{\pi^\varphi}}\cup
%\bigcup_{x \in M}[x]_{\widetilde{\pi^\varphi}}\right\}
%\cup \bigcup_{i=1,\dots,m}\left\{ \bigcup_{ x \in L_i} [x]_{\widetilde{\pi^\varphi}}\right\},
%\end{eqnarray}
%
$\widehat{\pi^{\varphi}}$ is the smallest equivalence relation containing $\widetilde{\pi^{\varphi}}$ such that:

\begin{itemize} 
\item $[x^{\bf v_1}]_{\widehat{\pi^{\varphi}}}=[r^{\varphi}]_{\widehat{\pi^{\varphi}}}$ for all ${\bf v_1}\in \sisi$,

\item $[x]_{\widehat{\pi^{\varphi}}}=[r^{\varphi}]_{\widehat{\pi^{\varphi}}}$ for all $x\in M$,

\item For all $i=1,\dots ,m$ $[x]_{\widehat{\pi^{\varphi}}}=[y]_{\widehat{\pi^{\varphi}}}$ for all $x,y \in L_i$
\end{itemize}
where 
\begin{align*}
M &= \{x \mid \hbox{ there exists } (\psi, \alpha) \in \sino \hbox{ and a child } z \hbox{ of } r^{\varphi} \hbox{ such that} 
\\
&\qquad T^\varphi, \widetilde{\pi^\varphi}, z \models \psi \hbox{ and } T^\varphi, \widetilde{\pi^\varphi},z,x\models\alpha \}
\\
L_i &= \{x \mid \hbox{ there exists } (\psi, \alpha)\in A_i \hbox{ and a child } z \hbox{ of } r^{\varphi} \hbox{ such that}
\\
&\qquad T^\varphi, \widetilde{\pi^\varphi}, z \models \psi \hbox{ and } T^\varphi, \widetilde{\pi^\varphi},z,x\models\alpha \} 
 %\item $L= \bigcup_{i=1,\dots,m} L_i$
\end{align*}
 for all $i=1,\dots ,m$.
  
In the previous ``gluing'', we forced our model to satisfy all diamonds of the form \mbox{$\tup{\eps=\dow[\psi]\alpha}$}, $\lnot\tup{\eps\neq\dow[\psi]\alpha}$ and $\lnot \tup{\dow[\psi]\alpha\neq\dow[\rho]\beta}$ that need to be forced.

\bigskip

It is important to notice that, up to here, the tree hanging from each child of the {\rm root} still preserves its partition:

\begin{fact}\label{factnocambialaparticion1}
The partition restricted to the trees $T_1^{\bf v_1}$, $T_2^{\bf v_1}$ for ${\bf v_1\in \sisi}$, the partition restricted to the trees $T^{\bf v_2}$ for ${\bf v_2\in \sino}$, the partition restricted to the trees $T^{\bf v_3}$ for ${\bf v_3\in \nosi}$ and the partition restricted to the trees $T^{\bf u}_1$ and $T^{\bf u}_2$ for ${\bf u\in U}$ remain unchanged. More formally:

\begin{itemize}
\item For each ${\bf v_1}=(\psi,\alpha)\in {\bf \sisi}$ and $i\in\{1,2\}$, we have $\widehat{\pi^\varphi}\restr{T_i^{\bf v_1}}=\pi_i^{\bf v_1}$.

\item For each ${\bf v_2}=(\psi,\alpha)\in {\bf \sino}$, we have $\widehat{\pi^\varphi}\restr{T^{\bf v_2}}=\pi^{\bf v_2}$.

\item For each ${\bf v_3}=(\psi,\alpha)\in {\bf \nosi}$, we have $\widehat{\pi^\varphi}\restr{T^{\bf v_3}}=\pi^{\bf v_3}$.

\item For each ${\bf u}=(\psi,\alpha,\rho,\beta)\in {\bf U}$ and $i\in\{1,2\}$, we have $\widehat{\pi^\varphi}\restr{T^{\bf u}_i}=\pi^{\bf u}_i$.
\end{itemize}

\end{fact}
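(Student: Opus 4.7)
The inclusion of each original partition into $\widehat{\pi^\varphi}$ restricted to the respective subtree is immediate, since $\widetilde{\pi^\varphi}$ already restricts to the original partition on each subtree (by its disjoint-union definition) and $\widetilde{\pi^\varphi}\subseteq\widehat{\pi^\varphi}$. The task is the reverse inclusion: to show that the three extra identifications used to pass from $\widetilde{\pi^\varphi}$ to $\widehat{\pi^\varphi}$ never merge two $\widetilde{\pi^\varphi}$-distinct classes that both lie inside a single subtree.

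Since $r^\varphi$ is a $\widetilde{\pi^\varphi}$-singleton disjoint from every subtree, a chain of new edges entering a subtree $T$ and returning to it can only do so via one of two patterns: (i) two distinct nodes of $T$ both become equivalent to $r^\varphi$, through the points $x^{\bf v_1}$ and the $M$-witnesses that $T$ contains; or (ii) two distinct nodes of $T$ both belong to some common $L_i$. I will handle the two patterns separately.

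For (i), suppose first that $T=T_1^{\bf v_1}$ with ${\bf v_1}=(\psi,\alpha)\in\sisi$. The nodes of $T$ that get merged with $r^\varphi$ are $x^{\bf v_1}$ together with every $\alpha'$-witness of $(\psi,\alpha')\in\sino$ starting at $r_1^{\bf v_1}$, and the third bullet of Lemma~\ref{lema a} was designed precisely so that they all share a single class of $\pi_1^{\bf v_1}$. For every other shape of subtree $T$, Lemma~\ref{lemma:inconsistentneq} forces all pairs $(\psi,\alpha')\in\sino$ that contribute witnesses inside $T$ to share the same $\psi$, and then Observation~\ref{los sino todos iguales} (via Lemma~\ref{lema 1}) collapses those witnesses into a single class of the original partition of $T$.

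For (ii), fix a subtree $T$ whose root satisfies some $\rho\in N_n$ and assume $x_1,x_2\in L_i\cap T$, witnessed respectively by $(\rho,\alpha_1),(\rho,\alpha_2)\in A_i$ (the $\psi$-components are forced to be $\rho$ by Lemma~\ref{lemma:inconsistentneq}). Since $A_i\subseteq{\bf V''_{\lnot =,\neq}}$, reflexivity of the ${\bf Z}$-induced equivalence (established via Lemma~\ref{lema c} together with \neqaxeight) yields $(\rho,\alpha_1,\rho,\alpha_2)\in {\bf Z}$, so in particular $\lnot\tup{\dow[\rho]\alpha_1\neq\dow[\rho]\alpha_2}$ is a conjunct of $\varphi$. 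I plan to combine \neqaxfifteen with {\bf Der21} and {\bf Der12} of Fact~\ref{fact boolean} and the consistency of $\varphi$ to propagate this inward to ``$\lnot\tup{\alpha_1\neq\alpha_2}$ is a conjunct of $\rho$''; Proposition~\ref{prop:corr} applied to the inductively constructed tree satisfying $\rho$ then forces $[x_1]_{\pi^\rho}=[x_2]_{\pi^\rho}$ in the original partition, as desired. The main obstacle is precisely this propagation step: it is what actually uses the axioms of Table~\ref{tab:axiomsneq}, and it is the reason reflexivity of ${\bf Z}$ on ${\bf V''_{\lnot =,\neq}}$ had to be established beforehand.
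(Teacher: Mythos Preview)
Your decomposition into patterns (i) and (ii) is not exhaustive, and the missing cases are precisely where the work lies. The partition $\widehat{\pi^\varphi}$ is the \emph{transitive closure} of $\widetilde{\pi^\varphi}$ together with the three families of identifications, so a path connecting two $\widetilde{\pi^\varphi}$-classes inside a single subtree $T$ may pass through arbitrarily many other subtrees. Two concrete situations escape your (i)/(ii):

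First, nothing you say rules out a chain that mixes the root-block with an $L_i$-block: a node of $T$ could be glued to $r^\varphi$ while another node of $T$ lies in some $L_i$, and these two blocks could meet in a third subtree $T'$ (e.g.\ via a node of $T'$ that is both a $\sino$-witness and $\widetilde{\pi^\varphi}$-equivalent to a $\nosi$-witness). The paper excludes exactly this by arguing, via Rule~1 and Lemma~\ref{lema 1}, that $\sino$-witnesses and $\nosi$-witnesses never share a class inside any subtree, and that $x^{\bf v_1}$ is separated from every $\nosi$-witness.

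Second, and more seriously, pattern (ii) as you state it only covers a \emph{single} $L_i$. But a chain can run $T \to L_i \to T' \to L_j \to T'' \to \cdots \to T$: in each intermediate subtree the $L_i$-witness may be $\widetilde{\pi^\varphi}$-equivalent to a witness lying in a different $L_j$, so the two endpoints in $T$ need not lie in a common $L_i$ a priori. Collapsing such a chain requires an inductive argument showing that whenever $L_i$ and $L_j$ meet a common $\widetilde{\pi^\varphi}$-class one actually has $A_i=A_j$ (via Lemma~\ref{lema 0'} and the transitivity of Lemma~\ref{lema c}); this is the content of the paper's Lemma~\ref{lema 1'}, which you have not supplied. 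Your ``propagation step'' handles the last hop but presupposes that the chain has already been reduced to a single $L_i$, which is exactly the missing piece.
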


\begin{proof}
 We give a sketch of the proof and leave the details to the reader.
 If we think we have three kinds of ``gluings'', \emph{${\rm root}_{=,\neq}$}-kind, \emph{${\rm root}_{=,\lnot \neq}$}-kind and \emph{$Z$}-kind, then the way in which two equivalence classes in the same subtree can (hypothetically) be glued together is by a sequence of these gluings. The examples displayed in Figure~\ref{figurefact1} shows that in (a), the classes of nodes $x$ and $y$ were glued together by a sequence of the form \emph{${\rm root}_{=,\neq}$}-\emph{${\rm root}_{=,\lnot \neq}$}; in (b), the classes of nodes $x$ and $y$ were glued together by a sequence of the form \emph{${\rm root}_{=,\neq}$}-\emph{${\rm root}_{=,\lnot \neq}$}-\emph{$Z$}.

\begin{figure}[ht]
   \begin{center}
   \begin{tabular}{c@{\hskip .2in}c@{\hskip .2in}c}
   \includegraphics[scale=0.25]{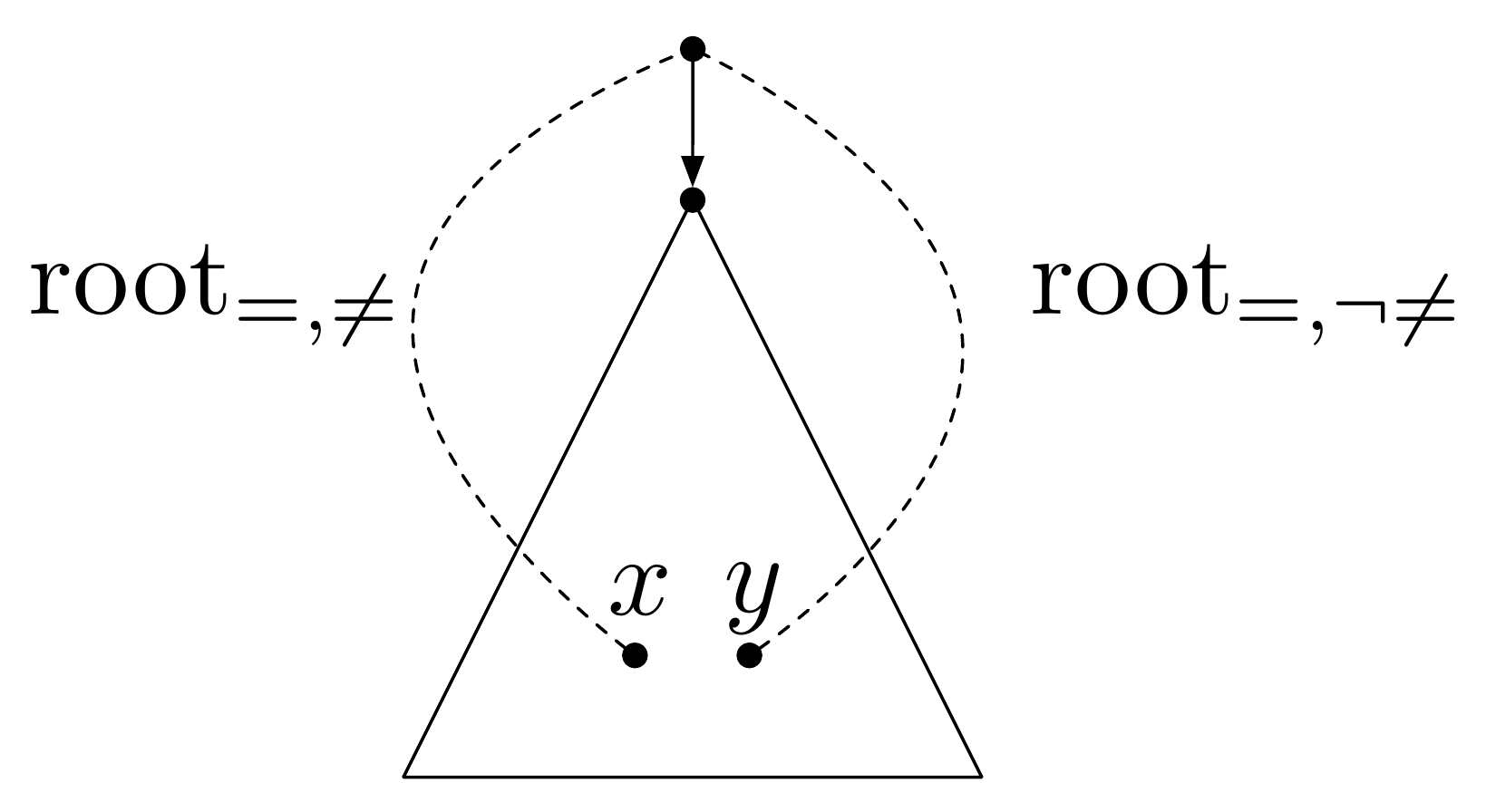} &
   \includegraphics[scale=0.25]{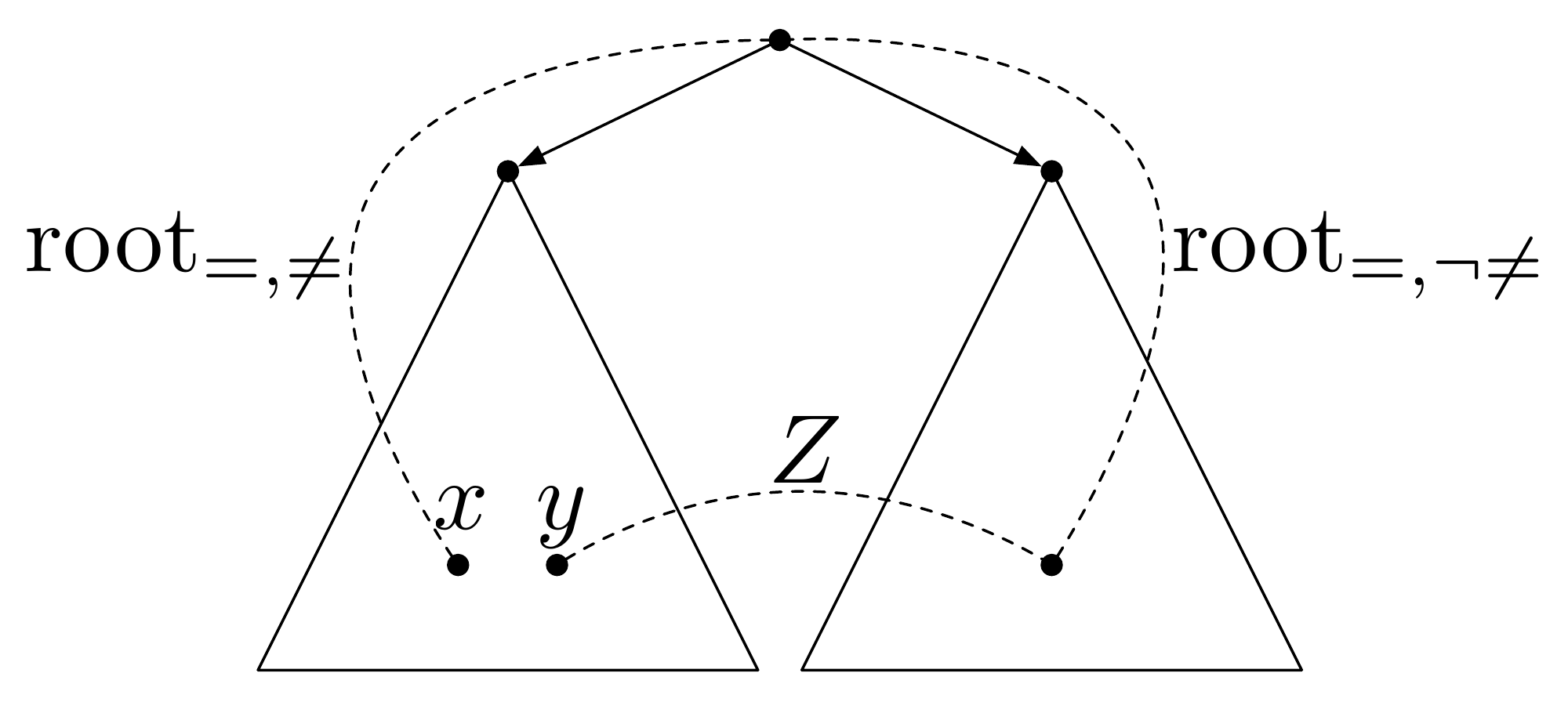}
   \\
	   (a)&(b)
   \end{tabular}
   \end{center}
   \caption{Examples of (hypothetical) ``gluings''.}\label{figurefact1}
\end{figure}

%\medskip
We give a list of the ingredients for the complete proof.
 
 \begin{itemize}
 
 \item By Rule 1, every witness for $\tup{\dow[\psi]\alpha}$ with $(\psi,\alpha)\in \nosi$ in $\Tt^{\bf v_1}_1$ is in a different class (according to $\widetilde{\pi^{\varphi}}$) than $x^{\bf v_1}$ for all ${\bf v_1} \in \sisi$. Thus we do not have sequences containing \emph{${\rm root}_{=,\neq}$}-\emph{$Z$} or \emph{$Z$}-\emph{${\rm root}_{=,\neq}$}.   
 
  \item Lemma~\ref{lema 1} implies that every witness for $\tup{\dow[\psi]\alpha}$ with $(\psi,\alpha)\in \sino$ and every witness for $\tup{\dow[\psi]\beta}$ with $(\psi,\beta)\in \nosi$ in the same subtree belong to different classes in that subtree. As a particular case, every $x\in M$ and $y\in L_i$ in the same subtree belong to different classes. Thus we do not have sequences containing \emph{${\rm root}_{=,\lnot \neq}$}-\emph{$Z$} or \emph{$Z$}-\emph{${\rm root}_{=,\lnot \neq}$}. 
 
  \item Since we use a different copy at each application of Rule 1, we do not have sequences starting and ending with \emph{${\rm root}_{=,\neq}$}.
 
 \item Lemma~\ref{lema 1} implies that every witness for $\tup{\dow[\psi]\alpha}$ with $(\psi,\alpha)\in \sino$ in the same subtree belong to the same equivalence class in that subtree. Thus we do not have to worry about sequences starting and ending with \emph{${\rm root}_{=,\lnot \neq}$} because this kind of sequences do not glue different classes.
 
  \item By Rule 1, every $x^{\bf v_1}$ is in the same class that every witness in the same subtree of $\tup{\dow[\psi]\alpha}$ with $(\psi,\alpha)\in \sino$. Thus we do not  have to worry about sequences starting with \emph{${\rm root}_{=, \neq}$} and ending with \emph{${\rm root}_{=,\lnot \neq}$} (or vice versa) because this kind of sequences do not  glue different classes.
   
 Combining the previous items, it only remains to consider sequences of only \emph{$Z$}-kind gluings.  
   
  \item If $x,x'\in L_i$ in the same subtree, a very simple derivation involving  \neqaxfifteen, shows that $[x]_{\widetilde{\pi^{\varphi}}}=[x']_{\widetilde{\pi^{\varphi}}}$. Thus we do not  have to worry about sequences of the form \emph{$Z$} (just one \emph{$Z$}-kind gluing).   
  
  \item By Lemma~\ref{lema 1'} below, we do not  have to worry about longer sequences of all \emph{$Z$}-kind gluings. 
   \end{itemize}
This concludes the proof of the Fact.
\end{proof}

\begin{lemma}\label{lema 1'}
 Let $\psi,\theta_0,\dots,\theta_m \in N_n$, $\alpha, \beta, \delta_0, \delta_0',\dots,\delta_m,\delta_m'\in P_n$, $x,x',y,y'\in T^\psi,   x_0,y_0 \in T^{\theta_0},\dots,\ x_m,y_m \in T^{\theta_m}$ such that (see Figure~\ref{fig:lemma 1'}): 
\begin{itemize}
\item $[x]_{\pi^\psi}=[x']_{\pi^\psi}$, $[y]_{\pi^\psi}=[y']_{\pi^\psi}$, $T^{\psi}, r^{\psi}, x' \models \alpha$, $T^{\psi}, r^{\psi}, y' \models \beta$,

\item $[x_i]_{\pi^{\theta_i}}=[y_i]_{\pi^{\theta_i}}$, $T^{\theta_i}, r^{\theta_i}, x_i \models \delta_i$, $T^{\theta_i}, r^{\theta_i}, y_i \models \delta_i'$ for $i=0\dots m$, and

\item $(\theta_0,\delta_0,\psi,\alpha) \in {\bf Z}$, $(\theta_i,\delta_i,\theta_{i-1},\delta_{i-1}') \in {\bf Z}$ for $i=1\dots m$,  $(\theta_m,\delta_m',\psi,\beta) \in {\bf Z}$.
\end{itemize}
Then $[x]_{\pi^{\psi}}=[y]_{\pi^{\psi}}$.

%\vspace{1ex}

(Notation: For $\rho \in N_n$, we use $\Tt^{\rho}=(T^{\rho},\pi^{\rho})$ with root $r^{\rho}$ to denote {\em any} tree in which $\rho$ is satisfiable, namely the one given by inductive hypothesis, or the modified one $\widetilde{\Tt^{\rho}}$.)
\end{lemma}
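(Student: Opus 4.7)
The plan is to reduce the goal to showing that $\lnot \tup{\alpha \neq \beta}$ is a conjunct of $\psi$. Indeed, once this is established, since $T^\psi, r^\psi \models \psi$ and since $x'$ and $y'$ are witnesses starting from $r^\psi$ of $\alpha$ and $\beta$ respectively, the semantics of $\lnot \tup{\alpha \neq \beta}$ forces $[x']_{\pi^\psi} = [y']_{\pi^\psi}$, which combined with $[x]_{\pi^\psi} = [x']_{\pi^\psi}$ and $[y]_{\pi^\psi} = [y']_{\pi^\psi}$ yields $[x]_{\pi^\psi} = [y]_{\pi^\psi}$.

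To prove that $\lnot \tup{\alpha \neq \beta}$ is a conjunct of $\psi$, I argue by contradiction: assume $\tup{\alpha \neq \beta}$ is a conjunct of $\psi$. Then \textbf{Der21} (Fact~\ref{fact boolean}) together with \neqaxfifteen yields $\tup{\dow[\psi]} \leq \tup{\dow[\psi]\alpha \neq \dow[\psi]\beta}$, and since $\tup{\dow[\psi]}$ is implied by $\varphi$ (from the conjunct $\tup{\dow[\theta_0]\delta_0 = \dow[\psi]\alpha}$ via \eqax{2}, \eqax{5}, and \textbf{Der12}), we obtain $\varphi \vdash \tup{\dow[\psi]\alpha \neq \dow[\psi]\beta}$. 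The task is to derive $\varphi \vdash \lnot \tup{\dow[\psi]\alpha \neq \dow[\psi]\beta}$, giving a contradiction. The main tool is the contrapositive of \neqaxeight, which reads as the substitution rule
\[
\lnot \tup{A \neq B} \wedge \lnot \tup{C \neq D} \wedge \tup{B = D} \leq \lnot \tup{A \neq C}.
\]

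To feed this rule, each hypothesis $[x_i]_{\pi^{\theta_i}} = [y_i]_{\pi^{\theta_i}}$ combined with the witnesses $x_i \models \delta_i$ and $y_i \models \delta_i'$ yields, by Lemma~\ref{lem:eq_in_psineq}, that $\tup{\delta_i = \delta_i'}$ is a conjunct of $\theta_i$; via \eqaxseven and \textbf{Der21} this lifts to $\varphi \vdash \tup{\dow[\theta_i]\delta_i = \dow[\theta_i]\delta_i'}$ for every $i$, using that each $\tup{\dow[\theta_i]}$ is implied by $\varphi$ through the corresponding $\mathbf{Z}$-conjunct. Now I proceed by backward induction on $i$ from $m$ down to $0$ to show $\varphi \vdash \lnot \tup{\dow[\psi]\beta \neq \dow[\theta_i]\delta_i'}$: the base $i = m$ is the conjunct supplied directly by $(\theta_m, \delta_m', \psi, \beta) \in \mathbf{Z}$; the step from $i$ to $i-1$ applies the substitution rule with $A := \dow[\psi]\beta$, $C := \dow[\theta_{i-1}]\delta_{i-1}'$, $B := \dow[\theta_i]\delta_i'$, $D := \dow[\theta_i]\delta_i$, using the $\mathbf{Z}$-conjunct $\lnot \tup{\dow[\theta_i]\delta_i \neq \dow[\theta_{i-1}]\delta_{i-1}'}$ and the derived $\tup{\dow[\theta_i]\delta_i = \dow[\theta_i]\delta_i'}$. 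A final application of the substitution rule with $A := \dow[\psi]\alpha$, $C := \dow[\psi]\beta$, $B := \dow[\theta_0]\delta_0$, $D := \dow[\theta_0]\delta_0'$, using the $\mathbf{Z}$-conjunct $\lnot \tup{\dow[\psi]\alpha \neq \dow[\theta_0]\delta_0}$, the case $i = 0$ of the backward induction, and $\tup{\dow[\theta_0]\delta_0 = \dow[\theta_0]\delta_0'}$, delivers $\varphi \vdash \lnot \tup{\dow[\psi]\alpha \neq \dow[\psi]\beta}$, the required contradiction.

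The main obstacle is the axiomatic bookkeeping: threading the correct instances of \neqaxeight and its symmetric variants through the chain, invoking \eqax{2} and \neqaxone to swap sides as needed, and verifying that every intermediate term $\tup{B}$ demanded by the substitution rule is implied by $\varphi$. Each such verification is routine from the $\mathbf{Z}$-conjuncts together with \eqax{5} and \textbf{Der12}, but aligning the indices $i, i-1, \ldots$ and the primed versus unprimed paths across the backward induction is where the delicate work concentrates.
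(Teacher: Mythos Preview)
Your argument is correct and follows essentially the same route as the paper, just with a different packaging. The paper's proof is a two-liner: it first shows $(\psi,\alpha,\psi,\beta)\in\mathbf{Z}$ by invoking Lemma~\ref{lema 0'} (which turns each internal equality $\tup{\delta_i=\delta_i'}$ into a $\mathbf{Z}$-link $(\theta_i,\delta_i,\theta_i,\delta_i')\in\mathbf{Z}$) together with the transitivity of $\mathbf{Z}$ (Lemma~\ref{lema c}); then it applies \neqaxeight and Lemma~\ref{lema 1} to descend to $\lnot\tup{\alpha\neq\beta}$ being a conjunct of $\psi$. Your backward induction is exactly what the transitivity chain unwinds to at the level of $\varphi$-derivability: each induction step is one application of the contrapositive of \neqaxeight, which is also the core of Lemma~\ref{lema c}'s proof, and your use of Lemma~\ref{lem:eq_in_psineq} plus \eqax{7} to lift $\tup{\delta_i=\delta_i'}$ to $\tup{\dow[\theta_i]\delta_i=\dow[\theta_i]\delta_i'}$ is precisely the content of Lemma~\ref{lema 0'}/its Remark. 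The paper's version is shorter because it reuses those two lemmas and yields the slightly stronger intermediate statement $(\psi,\alpha,\psi,\beta)\in\mathbf{Z}$ (which records in addition that $(\psi,\alpha),(\psi,\beta)\in\nosi$); your version is more self-contained and avoids the detour through $\mathbf{Z}$-membership, at the cost of writing out the induction explicitly.
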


\begin{figure}[ht]
   \begin{center}
   \includegraphics[scale=0.25]{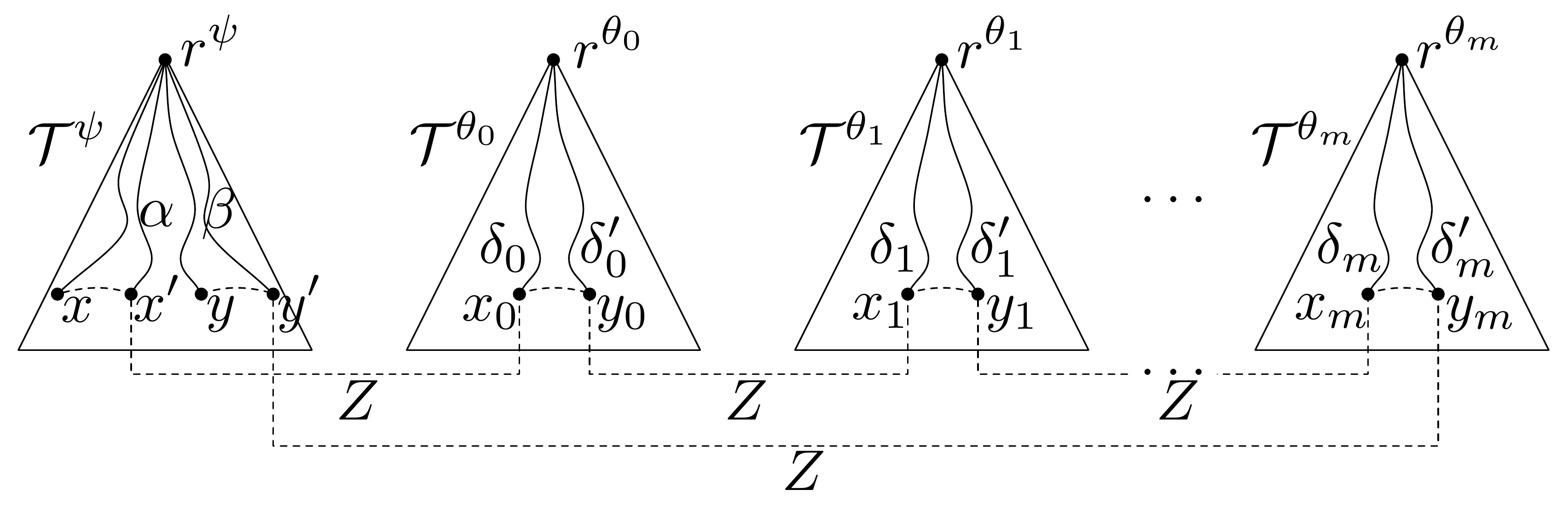}
   \end{center}
   \caption{The hypothesis of Lemma~\ref{lema 1'}}   \label{fig:lemma 1'}
\end{figure}

\begin{proof}
 Observe that, by Lemma~\ref{lema 0'} plus Lemma~\ref{lema c}, $(\psi, \alpha,\psi,\beta) \in {\bf Z}$. Then, by  \neqaxeight plus Lemma~\ref{lema 1}, $\lnot \tup{\alpha \neq \beta}$ is a conjunct of $\psi$ and so $[x]_{\pi^{\psi}}=[y]_{\pi^{\psi}}$. 
\end{proof}

% The following remark follows from Rule 1, Lemma~\ref{lema 1} and Lemma~\ref{lema 0'}:
% 
% \begin{remark}\label{clases disjuntas}
% %The set unions defined in \eqref{eqn:part} are disjoint:
% 
% Each $L_i$ defines a different equivalence class and each of those classes is also different from the class of the {\rm root} $r^{\varphi}$:
% 
% \begin{enumerate}
% \item If ${\bf v_1}\in \sisi$ and $x\in L_i$ for some $i=1,\dots ,m$, then $[x^{\bf v_1}]_{\widetilde{\pi^{\varphi}}} \neq [x]_{\widetilde{\pi^{\varphi}}}$.
% 
% \item If $x\in M$  and $x'\in L_i$ for some $i=1,\dots ,m$, then $[x]_{\widetilde{\pi^{\varphi}}} \neq [x']_{\widetilde{\pi^{\varphi}}}$.
% 
% \item If $x\in L_i$ and $x'\in L_j$ with $i\neq j$, then $[x]_{\widetilde{\pi^{\varphi}}} \neq [x']_{\widetilde{\pi^{\varphi}}}$.
%  \end{enumerate}
% \end{remark}

%  \item If $x \in L_i$, $x'\in L_j$ with $i \neq j$ are such that they are in the same subtree, they are witnesses of $\tup{\dow[\theta]\delta}$ and $\tup{\dow[\theta]\delta'}$ respectively and $(\theta,\delta,\theta,\delta')\not \in Z$. Also, by a simple derivation involving Axiom \neqaxeight, one can prove from the consistency of $\varphi$ that $\lnot \tup{\delta'\neq \delta'}$ is a conjunct of $\theta$. Then, by Lemma~\ref{lema 0'}, $x$ and $x'$ are in different classes in the subtree. 

Finally, define $\pi^\varphi$ over $T^\varphi$ by
\begin{eqnarray*}
\pi^\varphi&=& \left(\widehat{\pi^{\varphi}}\setminus \left(\{[x^{\bf u}]_{\widehat{\pi^{\varphi}}}\}_{{\bf u}\in {\bf U_2}}
\cup
\{[y^{\bf u}]_{\widehat{\pi^{\varphi}}}\}_{{\bf u}\in {\bf U_2}}\right)\right) \cup 
\bigcup_{{\bf u}\in {\bf U_2}}\{[x^{\bf u}]_{\widehat{\pi^{\varphi}}} \cup [y^{\bf u}]_{\widehat{\pi^{\varphi}}}\}.
\end{eqnarray*}
%
%\bigskip
%
In other words, $T^\varphi$ has a {\rm root}, named $r^\varphi$, and children 
$$
(T_1^{\bf v_1})_{{\bf v_1}\in \sisi}, (T_2^{\bf v_1})_{{\bf v_1}\in \sisi}, (T^{\bf v_2})_{{\bf v_2}\in \sino}, (T^{\bf v_3})_{{\bf v_3}\in \nosi}, (T_1^{\bf u})_{{\bf u}\in {\bf U}}, (T_2^{\bf u})_{{\bf u}\in {\bf U}}.
$$
Each of these children is the {\rm root} of its corresponding tree inside $T^\varphi$ as defined above. All these subtrees are disjoint, and $\pi^\varphi$ is defined as the disjoint union of the partitions {\em with the exception} that we put into the same class:

\begin{itemize}
\item the nodes $r^\varphi$, $(x^{\bf{v_1}})_{{\bf v_1}\in\sisi}$ and every witness of $\tup{\dow[\psi]\alpha}$ with $(\psi,\alpha)\in \sino$, 
\item a witness for $\tup{\dow[\psi]\alpha}$ and a witness for $\tup{\dow[\rho]\beta}$ if $(\psi, \alpha, \rho,\beta) \in {\bf U_2}$,
\item every pair of witnesses of $\tup{\dow[\psi]\alpha}$ and $\tup{\dow[\rho]\beta}$ respectively with $(\psi, \alpha, \rho, \beta) \in {\bf Z}$.
\end{itemize}
In the previous gluing, we forced our model to satisfy all diamonds of the form $\tup{\dow[\psi]\alpha = \dow[\rho]\beta}$ that need to be forced.
% 
% It is important to notice that for each ${\bf v_1}=(\psi,\alpha)\in \sisi$, $\pi^\varphi\restriction T_1^{\bf v_1}=\pi_1^{\bf v_1}$, and $\pi_1^{\bf v_1}$ is, modulo renaming of nodes, equal to $\pi^{\psi}$. Hence, since $T^\psi,\pi^\psi,r^\psi\models\psi$, we have that $T_1^{\bf v_1},\pi_1^{\bf v_1},r_1^{\bf v_1}\models\psi$, and so $T_1^{\bf v_1},\pi^\varphi,r_1^{\bf v_1}\models\psi$. The corresponding assertion holds for each child of the {\rm root}. For the complete proof of this see Lemma~\ref{no cambia la particion}.

\bigskip

The following Fact is key to prove that $\varphi$ is satisfied in $\Tt^\varphi$:

\begin{fact}\label{factnocambialaparticion}
The partition restricted to the trees $T_1^{\bf v_1}$, $T_2^{\bf v_1}$ for ${\bf v_1\in \sisi}$, the partition restricted to the trees $T^{\bf v_2}$ for ${\bf v_2\in \sino}$, the partition restricted to the trees $T^{\bf v_3}$ for ${\bf v_3\in \nosi}$ and the partition restricted to the trees $T^{\bf u}_1$ and $T^{\bf u}_2$ for ${\bf u\in U}$ remain unchanged. More formally:
\begin{itemize}
\item For each ${\bf v_1}=(\psi,\alpha)\in {\bf \sisi}$ and $i\in\{1,2\}$, we have $\pi^\varphi\restr{T_i^{\bf v_1}}=\pi_i^{\bf v_1}$.

\item For each ${\bf v_2}=(\psi,\alpha)\in {\bf \sino}$, we have $\pi^\varphi\restr{T^{\bf v_2}}=\pi^{\bf v_2}$.

\item For each ${\bf v_3}=(\psi,\alpha)\in {\bf \nosi}$, we have $\pi^\varphi\restr{T^{\bf v_3}}=\pi^{\bf v_3}$.

\item For each ${\bf u}=(\psi,\alpha,\rho,\beta)\in {\bf U}$ and $i\in\{1,2\}$, we have $\pi^\varphi\restr{T^{\bf u}_i}=\pi^{\bf u}_i$.
\end{itemize}
\end{fact}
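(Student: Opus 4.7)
The proof will follow the same pattern as that of Fact~\ref{factnocambialaparticion1} and will reuse it as the main building block. The key observation is that $\pi^\varphi$ differs from $\widehat{\pi^\varphi}$ only by the identifications $x^{\bf u}\sim y^{\bf u}$ for every ${\bf u}\in {\bf U_2}$. Since by construction $x^{\bf u}\in T_1^{\bf u}$ while $y^{\bf u}\in T_2^{\bf u}$, each such new gluing joins nodes belonging to two \emph{distinct} children-subtrees of $r^\varphi$. I will call these U$_2$-kind gluings, adding a fourth type to the ${\rm root}_{=,\neq}$-kind, ${\rm root}_{=,\lnot\neq}$-kind, and $Z$-kind considered in the proof sketch of Fact~\ref{factnocambialaparticion1}.

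The plan is to take two nodes $z,w$ of the same subtree $T$ with $[z]_{\pi^\varphi}=[w]_{\pi^\varphi}$, look at the witnessing chain of elementary gluings, and show that either the chain uses no U$_2$-kind step (so Fact~\ref{factnocambialaparticion1} applies directly), or else the use of U$_2$-kind steps can be collapsed using the axioms. The essential tool for the latter is Remark~\ref{lema}: whenever a quadruple ${\bf u}=(\psi,\alpha,\rho,\beta)\in {\bf U_2}$ is used, any node in $T_1^{\bf u}$ identified with $x^{\bf u}$ by the previously applied gluings of $\widehat{\pi^\varphi}$ must be a witness of some $\dow[\psi]\mu$ with $\tup{\dow[\rho]\beta=\dow[\psi]\mu}$ a conjunct of $\varphi$, and analogously on the $T_2^{\bf u}$ side.

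Combining this with the transitivity of the $Z$-relation (Lemma~\ref{lema c}), with Lemma~\ref{lema 0'} and with \neqaxeight, every maximal sub-chain of the form U$_2$--$Z^\ast$--U$_2$, U$_2$--$Z^\ast$--${\rm root}_{=,\lnot\neq}$ and their symmetric variants will be shown to collapse: either the relevant quadruple would have to lie in ${\bf U_1}$ (contradicting its membership in ${\bf U_2}={\bf U}\setminus {\bf U_1}$), or the endpoints of the sub-chain would already be identified by $\widehat{\pi^\varphi}$, reducing to a shorter witness chain. Iterating, we reduce to a chain of $\widehat{\pi^\varphi}$-gluings and apply Fact~\ref{factnocambialaparticion1}.

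The main obstacle will be the combinatorial case analysis, just as for Fact~\ref{factnocambialaparticion1}: one has to systematically rule out, for every pattern of alternating gluings starting and ending in the same subtree, that a genuinely new identification is created. I expect the hardest cases to be U$_2$--$Z$ alternations, where one must carefully invoke Remark~\ref{lema} together with the very definition of ${\bf U_1}$ (the subset of ${\bf U}$ for which the required merging of $x^{\bf u}$ and $y^{\bf u}$ is already induced by $Z$) to conclude that the quadruple in question would have had to be placed in ${\bf U_1}$, a contradiction. The remaining alternations (U$_2$ combined with ${\rm root}_{=,\neq}$ or ${\rm root}_{=,\lnot\neq}$) are dispatched by the same kind of arguments already used to rule out sequences in Fact~\ref{factnocambialaparticion1}, together with the fact that each application of Rule~4 on ${\bf u}\in {\bf U_2}$ produces a fresh pair of subtrees $(T_1^{\bf u},T_2^{\bf u})$ disjoint from the rest of the construction.
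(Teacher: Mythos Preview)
Your proposal is correct and follows essentially the same route as the paper. The paper likewise introduces a fourth $U_2$-kind gluing, reduces to Fact~\ref{factnocambialaparticion1}, rules out $U_2$ adjacent to ${\rm root}_{=,\neq}$ or ${\rm root}_{=,\lnot\neq}$ (the latter needing Remark~\ref{lema} and \neqaxfour, not only the Fact~\ref{factnocambialaparticion1} arguments), and then isolates the remaining alternating $Z$--$U_2$ chains into a separate Lemma~\ref{no cambia la particion} proved by induction on chain length; your ``collapse the sub-chain or force the quadruple into ${\bf U_1}$'' argument is exactly the mechanism used there.
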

\begin{proof}
We give a guide for the proof and we leave the details to the reader.

Now think that we have four kinds of ``gluings'', \emph{${\rm root}_{=,\neq}$}-kind, \emph{${\rm root}_{=,\lnot \neq}$}-kind, \emph{$Z$}-kind and \emph{$U_2$}-kind, then the way in which two equivalence classes in the same subtree can (hypothetically) be glued together is by a sequence of these gluings. In the example displayed in Figure~\ref{figurefact} the classes of nodes $x$ and $y$ were glued together by a sequence of the form \emph{$Z$}-\emph{$Z$}-\emph{$U_2$}.

\begin{figure}[ht]
   \begin{center}
   \includegraphics[scale=0.25]{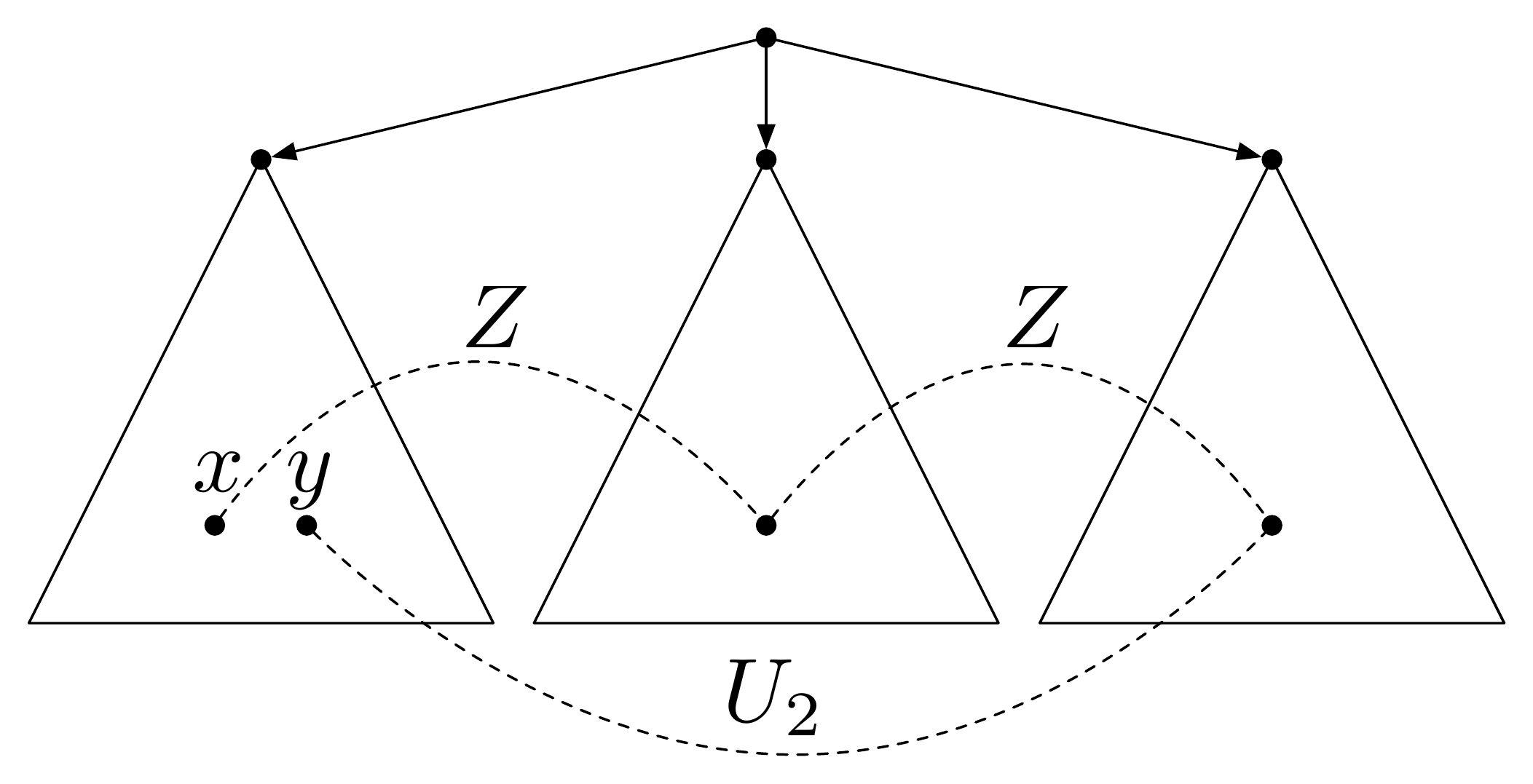}
   \end{center}
   \caption{Example of (hypothetical) ``gluing''.}\label{figurefact}
\end{figure}

\medskip
We give a list of the ingredients for the complete proof.

\begin{itemize}

\item We have already observed that the same assertions hold if we change $\pi^{\varphi}$ for $\widehat{\pi^{\varphi}}$ so we are only interested in sequences that involve some gluing of kind \emph{$U_2$}. Moreover, we can assume all the observations made in the proof of Fact~\ref{factnocambialaparticion1}.

%\item By Lemma~\ref{lema 1'}, we have not merged two different classes in the same tree by a sequence of only \emph{$Z$} kind gluings. 

%\item It is also clear from Rule 1 that we have not merged two different classes in the same tree by a sequence of only \emph{{\rm root}} kind gluings. 

%\item By the way in which we have chosen $x^{\bf v_1}$ and Lemma~\ref{lema 1}, we cannot have a sequence involving a \emph{{\rm root}} kind gluing followed by a \emph{$Z$} kind one (or vice versa). \emi{explicar un poco mejor que tiene que ver el lemma con todo esto, aca o en el lema, no se}

%\item By the previous observations, we are now only interested in sequences that involve some gluing of kind \emph{$U_2$}. 

\item The fact that $x^{\bf v_1}$ and $x^{\bf u}$ (or $y^{\bf u}$) are always in different subtrees tells us that we do not  have sequences containing \emph{${\rm root}_{=,\neq}$}-\emph{$U_2$} or \emph{$U_2$}-\emph{${\rm root}_{=,\neq}$}.

 \item Lemma~\ref{lema 1} implies that every witness for $\tup{\dow[\psi]\alpha}$ with $(\psi,\alpha)\in \sino$ and every witness for $\tup{\dow[\psi]\beta}$ with $(\psi,\beta)\in \nosi$ in the same subtree belong to different classes in that subtree. Thus we do not  have sequences containing \emph{${\rm root}_{=,\lnot \neq}$}-\emph{$U_2$} or \emph{$U_2$}-\emph{${\rm root}_{=,\lnot \neq}$} coming from ${\bf u}=(\psi,\alpha,\rho,\beta)\in U_2$ with $(\psi,\alpha), (\rho,\beta)\in \nosi$. Besides, suppose that we have one of those sequences coming from ${\bf u}=(\psi,\alpha,\rho,\beta)\in U_2$ with $(\psi,\alpha) \in \sisi, (\rho,\beta)\in \nosi$ (the symmetric case is analogous) and $(\psi,\mu) \in \sino$. Then, by the consistency of $\varphi$ plus \neqaxfour, we can conclude that $\lnot\tup{\dow[\psi]\mu =\dow[\rho]\beta}$ is a conjunct of $\varphi$. This gives us a contradiction by Remark~\ref{lema}. Thus we do not  have sequences containing \emph{${\rm root}_{=,\lnot \neq}$}-\emph{$U_2$} or \emph{$U_2$}-\
emph{${\rm root}_{=,\lnot \neq}$} at all.  

\item By Lemma~\ref{lema 0'} plus Lemma~\ref{lema c}, we can reduce sequences with two consecutive \emph{$Z$}-kind gluings to sequences not having two consecutive \emph{$Z$}-kind gluings.

\item Since we use new subtrees for each ${\bf u} \in {\bf U_2}$, we cannot have sequences containing \emph{$U_2$}-\emph{$U_2$} neither sequences starting and ending with \emph{$U_2$}.  

\item By Lemma~\ref{no cambia la particion} below, we cannot have sequences that alternate \emph{$Z$}-kind gluings with \emph{$U_2$}-kind gluings. 

\item One can think that the gluing of the classes $[x^{\bf u}]_{\widehat{\pi^{\varphi}}}$ and $[y^{\bf u}]_{\widehat{\pi^{\varphi}}}$ is made one at a time since they are finite.  
\end{itemize}
This concludes the proof of the Fact.
\end{proof}

\begin{lemma}\label{no cambia la particion}
Let $\psi, \theta_0,\dots,\theta_m \in N_n$, $\alpha, \beta, \delta_0, \delta_0',\dots,\delta_m,\delta_m'\in P_n$, $x,x',y,y'\in T^\psi,   x_0,y_0 \in T^{\theta_0},\dots, \ x_m,y_m \in T^{\theta_m}$. The following conditions (see Figure~\ref{fig:no cambia la particion}) cannot be satisfied all at the same time: 
\begin{itemize}
\item $[x]_{\pi^\psi}=[x']_{\pi^\psi}$, $[y]_{\pi^\psi}=[y']_{\pi^\psi}$, $T^{\psi}, r^{\psi}, x' \models \alpha$, $T^{\psi}, r^{\psi}, y' \models \beta$,

\item $[x_i]_{\pi^{\theta_i}}=[y_i]_{\pi^{\theta_i}}$, $T^{\theta_i}, r^{\theta_i}, x_i \models \delta_i$, $T^{\theta_i}, r^{\theta_i}, y_i \models \delta_i'$ for $i=0\dots m$, 

\item $(\theta_0,\delta_0,\psi,\alpha) \in {\bf Z}$, 

\item 
for $i=1\dots m$, $(\theta_i,\delta_i,\theta_{i-1},\delta_{i-1}') \in\begin{cases}{\bf U_2}&\mbox{if $i$ is odd}\\{\bf Z}&\mbox{otherwise}\end{cases}$

\item $
(\theta_m,\delta'_m,\psi,\beta) 
\in \begin{cases}{\bf Z}&\mbox{if $m$ is odd}\\{\bf U_2}&\mbox{otherwise}\end{cases}
$
\end{itemize}

(Notation: For $\rho \in N_n$, we use $\Tt^{\rho}=(T^{\rho},\pi^{\rho})$ with root $r^{\rho}$ to denote {\em any} tree in which $\rho$ is satisfiable, namely the one given by inductive hypothesis, or the modified one $\widetilde{\Tt^{\rho}}$.)\end{lemma}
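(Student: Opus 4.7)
The proof proceeds by induction on $m$, and in each case reduces to the fact that $\mathbf{U_1}$ and $\mathbf{U_2}$ are disjoint by construction. The core idea is to show that the alleged alternating configuration would force the first $\mathbf{U_2}$ gluing in the chain to be in $\mathbf{U_1}$, a contradiction.

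For the base case $m=0$, the chain consists only of $(\theta_0, \delta_0, \psi, \alpha) \in \mathbf{Z}$ together with $(\theta_0, \delta_0', \psi, \beta) \in \mathbf{U_2}$. The internal equivalence $[x_0]_{\pi^{\theta_0}} = [y_0]_{\pi^{\theta_0}}$ with witnesses $x_0, y_0$ for $\delta_0, \delta_0'$ gives, by Lemma~\ref{lem:eq_in_psineq}, $\tup{\delta_0 = \delta_0'}$ as a conjunct of $\theta_0$; analogously in $T^\psi$, we obtain $\tup{\alpha = \beta}$ as a conjunct of $\psi$. Choosing $\gamma := \delta_0$ and $\delta := \alpha$ in the definition of $\mathbf{U_1}$, all three conditions are immediately satisfied, so $(\theta_0, \delta_0', \psi, \beta) \in \mathbf{U_1}$, contradicting its membership in $\mathbf{U_2} = \mathbf{U}\setminus\mathbf{U_1}$.

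For the inductive step $m \geq 1$, we concentrate on the first $\mathbf{U_2}$ relation in the chain, namely $(\theta_1, \delta_1, \theta_0, \delta_0') \in \mathbf{U_2}$, and show that it must actually lie in $\mathbf{U_1}$. Lemma~\ref{lem:eq_in_psineq} applied to the internal equivalences at $\theta_1$ and $\theta_0$ yields $\tup{\delta_1' = \delta_1}$ as a conjunct of $\theta_1$ and $\tup{\delta_0 = \delta_0'}$ as a conjunct of $\theta_0$. Setting $\gamma := \delta_1'$ and $\delta := \delta_0$, the $\mathbf{U_1}$ criterion reduces to proving $(\theta_1, \delta_1', \theta_0, \delta_0) \in \mathbf{Z}$. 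I obtain this by ``closing the loop'' through the remaining chain, traversing $(\theta_1, \delta_1') \rightsquigarrow (\theta_2, \delta_2) \sim (\theta_2, \delta_2') \rightsquigarrow \cdots \rightsquigarrow (\theta_m, \delta_m') \rightsquigarrow (\psi, \beta) \sim (\psi, \alpha) \rightsquigarrow (\theta_0, \delta_0)$. Every remaining $\mathbf{U_2}$ gluing in this residual path can be converted into a $\mathbf{Z}$ gluing by applying the inductive hypothesis to a strictly shorter alternating sub-chain (the failure of this conversion would itself be a configuration forbidden by the induction hypothesis), and the resulting $\mathbf{Z}$ relations are then composed via Lemma~\ref{lema c} to yield $(\theta_1, \delta_1', \theta_0, \delta_0) \in \mathbf{Z}$.

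The main obstacle is making the ``closing the loop'' step rigorous: the residual sub-chain has its endpoints in different trees (instead of both in $T^\psi$), so a direct appeal to the inductive hypothesis requires either strengthening the statement to allow asymmetric endpoints, or proving a companion lemma that converts a $\mathbf{U_2}$ embedded in an alternating chain into a $\mathbf{Z}$. Moreover, when bridging internal tree equivalences into $\mathbf{Z}$ relations via Lemma~\ref{lema 0'}, the side condition $\lnot\tup{\delta' \neq \delta'}$ being a conjunct of the ambient normal form must be verified, which I expect will rely on Lemmas~\ref{lema A} and~\ref{paraverif}. The bookkeeping of the alternating $\mathbf{Z}/\mathbf{U_2}$ parity as one peels off the initial segment of the chain is what makes the induction step technically delicate.
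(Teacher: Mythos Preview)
Your base case contains a concrete error. You claim that ``analogously in $T^\psi$, we obtain $\tup{\alpha = \beta}$ as a conjunct of $\psi$'' via Lemma~\ref{lem:eq_in_psineq}, but that lemma requires witnesses of $\alpha$ and $\beta$ in the \emph{same} $\pi^\psi$-class. The hypotheses only give $[x]_{\pi^\psi}=[x']_{\pi^\psi}$ and $[y]_{\pi^\psi}=[y']_{\pi^\psi}$ separately; there is no assumption that $[x']_{\pi^\psi}=[y']_{\pi^\psi}$. Indeed, the whole point of the lemma (compare with Lemma~\ref{lema 1'}, which has the same shape but concludes $[x]_{\pi^\psi}=[y]_{\pi^\psi}$) is precisely that such an internal identification does \emph{not} come for free. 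So your route to $\tup{\alpha=\beta}$ being a conjunct of $\psi$ is blocked, and with it your $\mathbf{U_1}$ argument collapses.

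The paper obtains $\tup{\alpha=\beta}$ as a conjunct of $\psi$ by a different mechanism: it invokes Remark~\ref{lema}, which encodes the defining property of the Rule~4 witnesses $x^{\bf u},y^{\bf u}$ for ${\bf u}\in\mathbf{U_2}$. From the $\mathbf{U_2}$ link together with the internal equivalence in $T^{\theta_0}$ one gets that $\tup{\dow[\psi]\beta=\dow[\theta_0]\delta_0}$ is a conjunct of $\varphi$; then Lemma~\ref{lema 2} (using that $(\theta_0,\delta_0,\psi,\alpha)\in\mathbf{Z}$ supplies $\tup{\dow[\psi]\alpha=\dow[\theta_0]\delta_0}\wedge\lnot\tup{\dow[\psi]\alpha\neq\dow[\theta_0]\delta_0}$) pushes this down to $\tup{\alpha=\beta}$ in $\psi$. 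Remark~\ref{lema} is the tool you are missing throughout.

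This also explains why your inductive step stays vague. The paper does not ``close the loop'': it uses Remark~\ref{lema} once more to turn a single $\mathbf{U_2}$ link near one end of the chain into a $\varphi$-level conjunct, then \neqaxeight yields the corresponding $\mathbf{Z}$ relation, and Lemma~\ref{lema c} merges it with the adjacent $\mathbf{Z}$ link. This shortens the alternating chain by exactly two and reduces to the case $m-2$ (with a separate small argument for $m=1$). No strengthening of the statement to asymmetric endpoints is needed, and Lemmas~\ref{lema 0'}, \ref{lema A}, \ref{paraverif} play no role here.
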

 
\begin{figure}[ht]
   \begin{center}
   \includegraphics[scale=0.25]{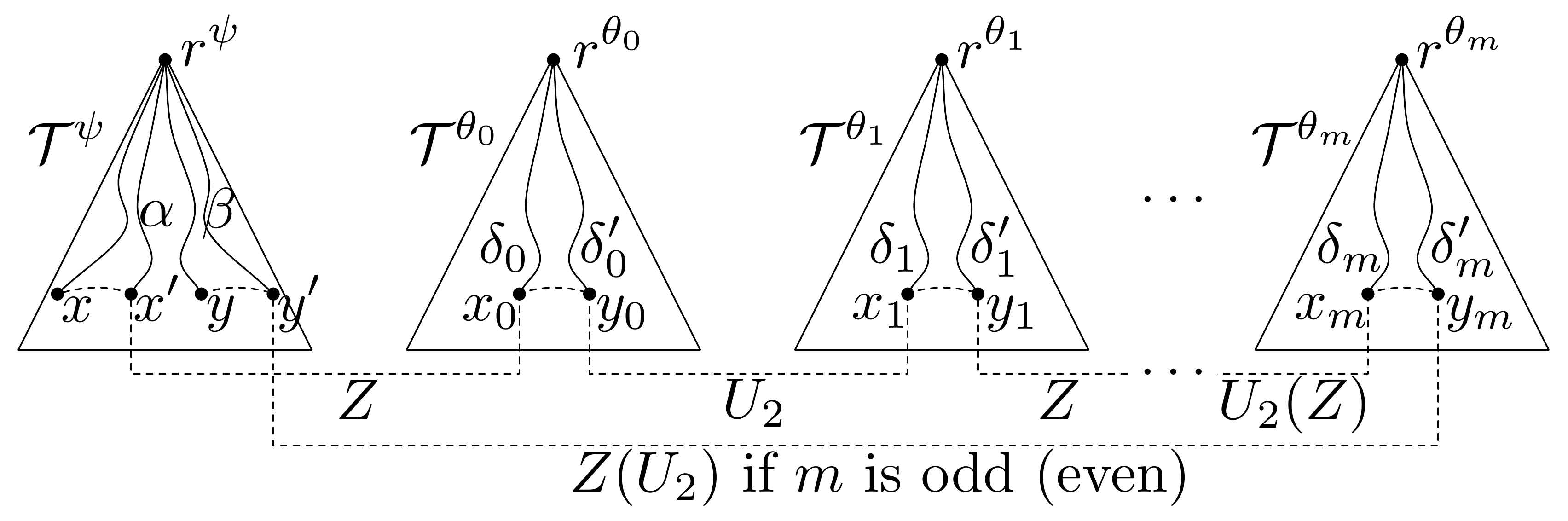}
   \end{center}
   \caption{The hypothesis of Lemma~\ref{no cambia la particion}}   \label{fig:no cambia la particion}
\end{figure}

 \begin{proof}
  We proceed by induction on $m$:
 \begin{itemize}
  \item Case $m=0$ (see Figure~\ref{fig:8and9}(a)):

Since $(\psi,\beta,\theta_0,\delta'_0)\in {\bf U_2}$, $(\psi,\alpha,\theta_0,\delta_0)\in {\bf Z}$ and $\tup{\delta_0=\delta'_0}$ is a conjunct of $\theta_0$, we have that $\lnot \tup{\alpha = \beta}$ is a conjunct of $\psi$. But, on the other hand, by Remark~\ref{lema}, we know that $\tup{\dow[\psi]\beta =\dow[\theta_0]\delta_0}$ is a conjunct of $\varphi$ which implies, by Lemma~\ref{lema 2}, that $\tup{\alpha = \beta}$ is a conjunct of $\psi$, a contradiction. 

\begin{figure}[ht]
   \begin{center}
   \begin{tabular}{c@{\hskip .5in}c}
   \includegraphics[scale=0.25]{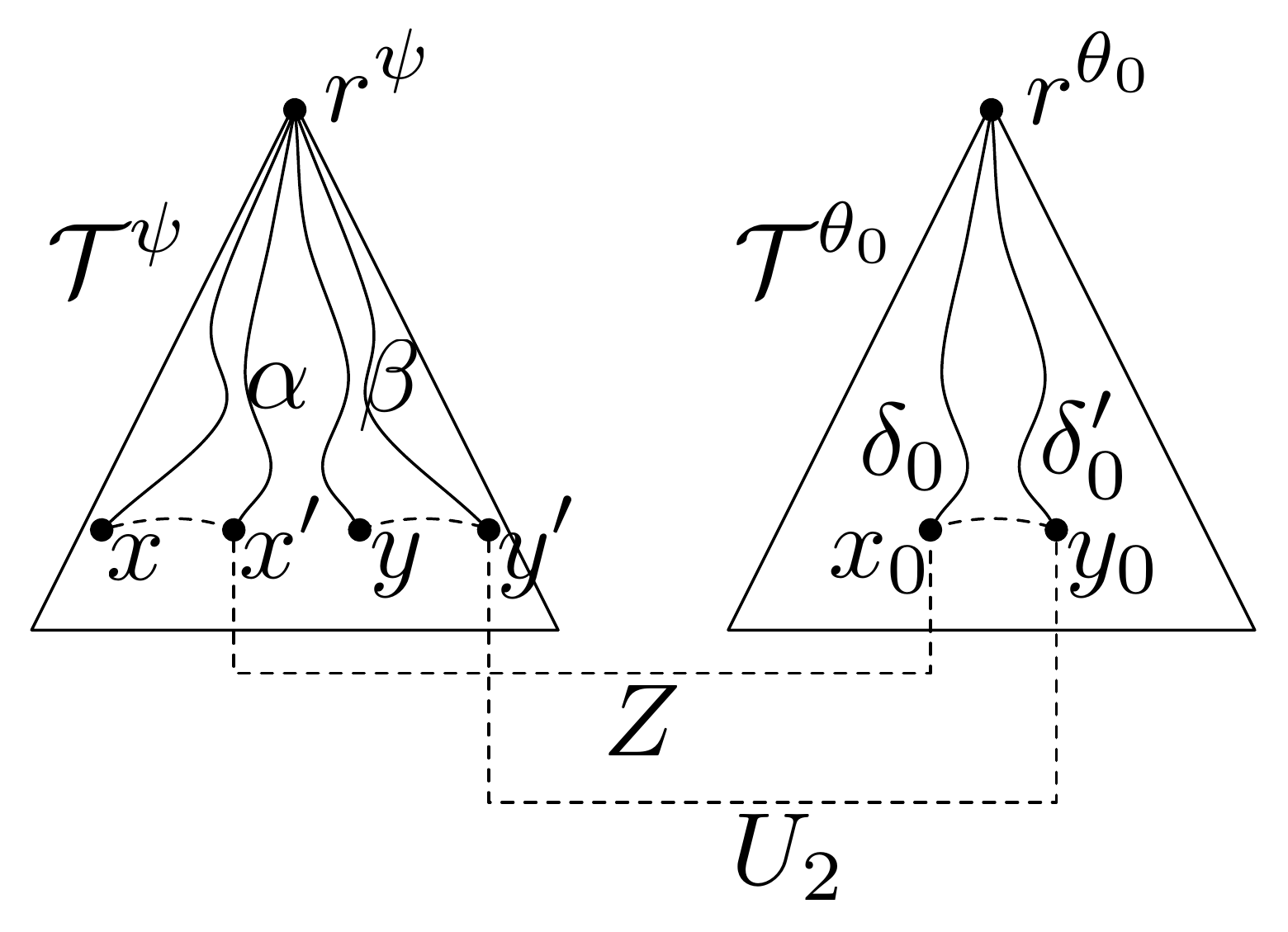} & \includegraphics[scale=0.25]{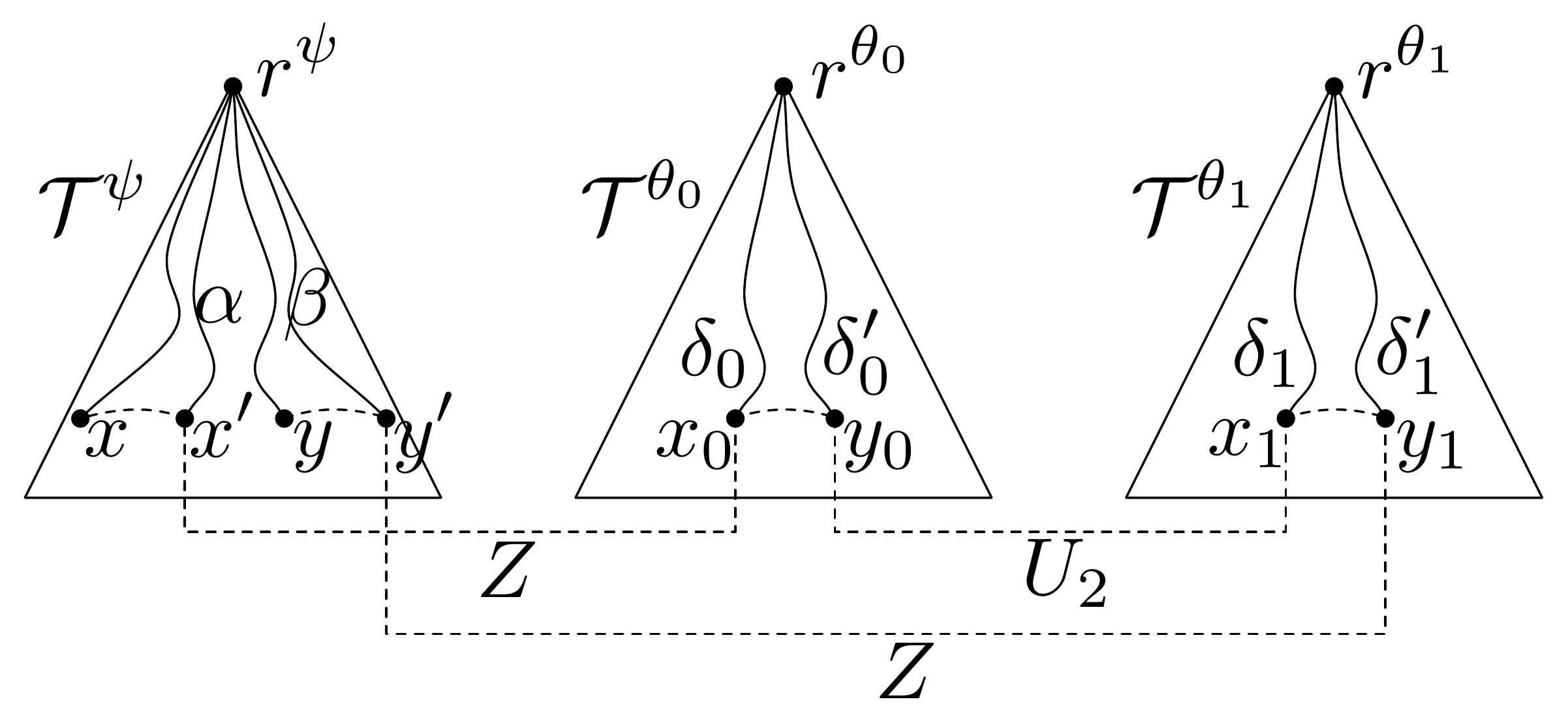} \\
   (a)&(b)
   \end{tabular}
   \end{center}
   \caption{Proof of Lemma~\ref{no cambia la particion}. (a) case $m=0$. (b) case $m=1$.}   \label{fig:8and9}
\end{figure}

  \item If $m=1$ (see Figure~\ref{fig:8and9}(b)):
   
   By Remark~\ref{lema}, $\tup{\dow[\theta_0]\delta_0 = \dow[\theta_1]\delta_1'}$ is a conjunct of $\varphi$ and then, by  \neqaxeight, $(\theta_0,\delta_0, \theta_1,\delta_1') \in {\bf Z}$. This gives a contradiction with the fact that $(\theta_0,\delta'_0,\theta_1,\delta_1)\in {\bf U_2}$ plus the fact that $\tup{\delta_0=\delta'_0}$ is a conjunct of $\theta_0$ and $\tup{\delta_1=\delta'_1}$ is a conjunct of $\theta_1$.
   
%   By Remark~\ref{lema}, $\tup{\dow[\theta_0]\delta_0 = \dow[\theta_1]\delta_1'}$ is a conjunct of $\varphi$. Suppose that $\tup{\alpha \neq \beta}$ is a conjunct of $\psi$. Then, by  \emi{E-A, no se si no falta algun a que diga esto}, $\tup{\dow[\psi]\alpha \neq \dow[\psi]\beta}$ is a conjunct of $\varphi$. This gives a contradiction by Lemma~\ref{lema 20} and so $\lnot \tup{\alpha \neq \beta}$ is a conjunct of $\psi$. Besides, by  \emi{E-A}, $\tup{\dow[\psi]\beta \neq \dow[\theta_0]\delta_0}$ is a conjunct of $\varphi$ which gives a contradiction by Lemma~\ref{lema 2}.  
%   
  \item For the induction, suppose $m\geq 2$:
  
%  $\forall \ k\leq m-1 \Rightarrow m$: 
  
  In case $m$ is odd, by Remark~\ref{lema}, $\tup{\dow[\theta_{m-1}]\delta_{m-1} = \dow[\theta_m]\delta_m'}$ is a conjunct of $\varphi$ and then, by  \neqaxeight, $(\theta_{m-1},\delta_{m-1},\theta_m, \delta_m') \in {\bf Z}$. By Lemma~\ref{lema c},   $(\psi, \beta, \theta_{m-2},\delta'_{m-2}) \in {\bf Z}$ and the result follows from inductive hypothesis for $m-2$.
  
 In case $m$ is even, by Remark~\ref{lema}, $\tup{\dow[\theta_{0}]\delta_{0} = \dow[\theta_1]\delta_1'}$ is a conjunct of $\varphi$ and then, by  \neqaxeight plus  \neqaxeight, $(\theta_{0},\delta_{0},\theta_1, \delta_1') \in {\bf Z}$. By Lemma~\ref{lema c}, $(\psi, \alpha, \theta_{2},\delta_{2}) \in {\bf Z}$ and the result follows from inductive hypothesis for $m-2$.
 \end{itemize}
This concludes the proof. 
\end{proof}

%
%Since for each ${\bf v}=(\psi,\alpha)\in {\bf V}$ we have $T^{\bf v},\pi^{\bf v},r^{\bf v}\models\psi$, and for each ${\bf u}=(\psi,\alpha,\rho,\beta)\in {\bf U}$ we have $T^{\bf u}_1,\pi^{\bf u}_1,r^{\bf u}_1\models\psi$ and $T^{\bf u}_2,\pi^{\bf u}_2,r^{\bf u}_2\models\rho$, 
We conclude from Proposition~\ref{prop:local} and the construction that:
\begin{fact}\label{fac:preservesne}
The validity of a formula in a child of $r^\varphi$ is preserved in $\Tt^\varphi$. More formally:
\begin{itemize}
%
%\item For each ${\bf v}=(\psi,\alpha)\in {\bf V}$, we have $T^\varphi,\pi^\varphi,r^{\bf v}\models\psi$, and  $T^\varphi,\pi^\varphi,r^{\bf v}, x^{\bf v}\models\alpha$.
%
%\item For each ${\bf u}=(\psi,\alpha,\rho,\beta)\in {\bf U}$, we have $T^\varphi,\pi^\varphi,r_1^{\bf u}\models\psi$, $T^\varphi,\pi^\varphi,r_2^{\bf u}\models\rho$, $T^\varphi,\pi^\varphi,r_1^{\bf u},x^{\bf u}\models\alpha$, and $T^\varphi,\pi^\varphi,r_2^{\bf u},y^{\bf u}\models\beta$. 
%
%
\item For each ${\bf v_1}\in {\bf \sisi}$,  $i\in\{1,2\}$ and $x,y\in T_i^{\bf v_1}$ we have $\Tt^\varphi,x \eqfull \Tt_i^{\bf v_1},x$ and $\Tt^\varphi,x,y \eqfull  \Tt_i^{\bf v_1},x,y$.

\item For each ${\bf v_2}\in {\bf \sino}$ and $x,y\in T^{\bf v_2}$ we have $\Tt^\varphi,x \eqfull  \Tt^{\bf v_2},x$ and \  $\Tt^\varphi,x,y \eqfull  \Tt^{\bf v_2},x,y$. 

\item For each ${\bf v_3}\in {\bf \nosi}$ and $x,y\in T^{\bf v_3}$ we have $\Tt^\varphi,x \eqfull  \Tt^{\bf v_3},x$ and \  $\Tt^\varphi,x,y \eqfull  \Tt^{\bf v_3},x,y$. 

\item For each ${\bf u}\in {\bf U}$, $i\in\{1,2\}$ and $x,y\in T^{\bf u}_i$
we have $\Tt^\varphi,x \eqfull \Tt^{\bf u}_i,x$ and \  $\Tt^\varphi,x,y \eqfull \Tt^{\bf u}_i,x,y$.
\end{itemize}
\end{fact}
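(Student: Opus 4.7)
The plan is to reduce the fact directly to Proposition~\ref{prop:local} together with Fact~\ref{factnocambialaparticion}. Since $\xpd$ is local (Proposition~\ref{prop:local}), what a node $x$ or a pair of descendants $(x,y)$ inside some child subtree sees is completely determined by the data tree obtained by restricting $\Tt^\varphi$ to the subtree hanging from $x$ (or from the root of that child component, in the two-point case), regardless of the ancestors of $x$ or of any sibling subtrees.

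First I would observe that, as labeled trees, the underlying tree structure of each child component $T_1^{\bf v_1}$, $T_2^{\bf v_1}$, $T^{\bf v_2}$, $T^{\bf v_3}$, $T^{\bf u}_1$, $T^{\bf u}_2$ sits inside $T^\varphi$ exactly as in its own construction: the only modification we performed was to add a new parent edge from its root to $r^\varphi$, so the downward reachability relation within the component is identical in $\Tt^\varphi$ and in the original component tree. Thus, for any $x$ in (say) $T_i^{\bf v_1}$, the set of nodes of the subtree of $\Tt^\varphi$ hanging from $x$ coincides with the set of nodes of the subtree of $\Tt_i^{\bf v_1}$ hanging from $x$, with the same labels.

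Second, and this is the only non-trivial point, I would invoke Fact~\ref{factnocambialaparticion} to ensure that the partition is also preserved on each child component: namely $\pi^\varphi\restr{T_i^{\bf v_1}} = \pi_i^{\bf v_1}$, $\pi^\varphi\restr{T^{\bf v_2}} = \pi^{\bf v_2}$, $\pi^\varphi\restr{T^{\bf v_3}} = \pi^{\bf v_3}$, and $\pi^\varphi\restr{T^{\bf u}_i} = \pi^{\bf u}_i$. This is precisely the content of Fact~\ref{factnocambialaparticion}, and it guarantees that the equivalence between any two descendants of the root of a given child component under $\pi^\varphi$ matches the equivalence under the component's own partition. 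Restricting further to the subtree hanging from $x$ yields $\pi^\varphi\restr{T^\varphi\restr{x}} = \pi_i^{\bf v_1}\restr{T_i^{\bf v_1}\restr{x}}$ (and likewise for the other cases).

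Putting these two observations together gives a literal identity $\Tt^\varphi\restr{x} = \Tt_i^{\bf v_1}\restr{x}$ (and analogously for $\bv_2,\bv_3,\bu$). Two applications of Proposition~\ref{prop:local} then finish the proof: the first bullet of that proposition yields $\Tt^\varphi,x \eqfull \Tt^\varphi\restr{x},x = \Tt_i^{\bf v_1}\restr{x},x \eqfull \Tt_i^{\bf v_1},x$, and the second bullet yields $\Tt^\varphi,x,y \eqfull \Tt_i^{\bf v_1},x,y$ for descendants $y$ of $x$; but actually we need the assertion only for $x,y$ in the same child component, and for these it suffices to restrict to the root of the component rather than to $x$, which is again an instance of Proposition~\ref{prop:local}. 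The only delicate step is applying Fact~\ref{factnocambialaparticion}; everything else is routine bookkeeping and direct appeal to locality.
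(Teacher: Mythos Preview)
Your proposal is correct and matches the paper's own reasoning: the paper simply states that the fact follows ``from Proposition~\ref{prop:local} and the construction,'' where the relevant part of the construction is precisely Fact~\ref{factnocambialaparticion}. Your write-up is a faithful elaboration of that one-line justification, and the only non-trivial ingredient you single out---Fact~\ref{factnocambialaparticion}---is exactly the one the paper relies on.
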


\bigskip

It only remains to prove that the conditions~\ref{labelneq} --
\ref{nopsialphaneqrhobetaneq} from the beginning of \S\ref{canonical model} are satisfied in the tree we have constructed:

%We check that $\Tt^\varphi, r^\varphi\models\varphi$.
%Since $\varphi \in N_{n+1} $ is a conjunction as in Definition~\ref{def:cons-neneq}, it is enough to prove that
%each conjunct of $\varphi$ holds in our model. If label $a$ is a conjunct of $\varphi$, then it is clear that it is satisfied in $r^\varphi$ by construction.
%Note that we have eight kinds of conjuncts left to analyze:

%\begin{enumerate}

%\item[\ref{labelneq}.] 
\paragraph{Verification of \ref{labelneq}.}
This condition is trivially satisfied.

\paragraph{Verification of \ref{epsiloneqpsialphaneq}.} 
%\item[\ref{epsiloneqpsialphaneq}.] 
Suppose $\tup{\epsilon=\dow[\psi]\alpha}$ is a conjunct of $\varphi$. Then there are two possibilities, $(\psi,\alpha)\in {\bf V_{=,\neq}}$ or $(\psi,\alpha)\in {\bf V_{=, \lnot \neq}}$.
\begin{itemize}
\item In the first case, by Rule 1 and construction, 
there exists $x^{\bv_1}\in T^\varphi$ such that \mbox{$[r^\varphi]_{\pi^\varphi}=[x^{\bv_1}]_{\pi^\varphi}$} with $\bv_1=(\psi,\alpha)$.
By construction, we also know \mbox{$\Tt^{\bv_1}_1, r^{\bv_1}_1\models\psi$} and \mbox{$\Tt^{\bv_1}_1, r_1^{\bv_1},x^{\bv_1}\models\alpha$}.
Then, by Fact~\ref{fac:preservesne}, $\Tt^\varphi, r^\varphi\models\tup{\epsilon=\dow[\psi]\alpha}$.

\item In the second case, $\tup{\dow[\psi]\alpha}$ is consistent. Then, by construction plus Lemma~\ref{nextPathIsConjunctneq}, there is $x\in T^\varphi$ such that $\Tt^{\bv_2}, r^{\bv_2}\models\psi$, $\Tt^{\bv_2}, r^{\bv_2},x\models\alpha$ and $[r^\varphi]_{\pi^\varphi}=[x]_{\pi^\varphi}$ with ${\bf v_2}=(\psi,\alpha)$.
Then, by Fact~\ref{fac:preservesne}, $\Tt^\varphi,r^\varphi\models\tup{\epsilon=\dow[\psi]\alpha}$.
\end{itemize}

\paragraph{Verification of \ref{epsilonneqpsialphaneq}.} 
%\item[\ref{epsilonneqpsialphaneq}.] 
Suppose $\tup{\epsilon\neq\dow[\psi]\alpha}$ is a conjunct of $\varphi$. Then there are two possibilities, $(\psi,\alpha)\in {\bf V_{=,\neq}}$ or $(\psi,\alpha)\in {\bf V_{\lnot =,  \neq}}$.
\begin{itemize}
\item 
In the first case, by Rule 1 plus Lemmas~\ref{nextPathIsConjunctneq},~\ref{lema 1} and~\ref{lema 2}, there is $x \in T^\varphi$ such that (for ${\bf v_1}=(\psi,\alpha)$) $\Tt^{\bv_1}_2,r_2^{\bv_1},x\models\alpha$ and $x\not\in [z]_{\pi^\varphi}$ for all $z$ such that $\Tt^{\bv_1}_2,r_2^{\bv_1},z\models\beta$ for some $(\psi, \beta)\in {\bf V_{=, \lnot \neq}}$ (The argument is similar to the ones used in the proof of Fact~\ref{factnocambialaparticion1} to make conclusions from Lemma~\ref{lema 1}). We also know by construction that $\Tt^{\bv_1}_2,r^{\bv_1}_2\models\psi$. In order to conclude from Fact~\ref{fac:preservesne} that $\Tt^\varphi,r^\varphi\models\tup{\epsilon\neq\dow[\psi]\alpha}$, it only remains to observe that  $[r^\varphi]_{\pi^\varphi}\neq[x]_{\pi^\varphi}$ (for a sketch of the proof see Sketch~\ref{detalles de 2} in \S\ref{app}). %\emi{esto no es del todo trivial, no s\'e si vale la pena poner la demo (el caso que no hay ning\'un $v_2$ con $\psi$ es un poquito m\'as delicado)} 

\item In the second case, by Rule 3 plus Lemma~\ref{nextPathIsConjunctneq}, there exists $x \in T^\varphi$ such that (for ${\bf v_3}=(\psi,\alpha)$) $\Tt^{\bv_3},r^{\bv_3},x\models\alpha$. We also know by construction that $\Tt^{\bv_3},r^{\bv_3}\models\psi$. In order to conclude from Fact~\ref{fac:preservesne} that $\Tt^\varphi,r^\varphi\models\tup{\epsilon\neq\dow[\psi]\alpha}$, it only remains to observe that  $[r^\varphi]_{\pi^\varphi}\neq[x]_{\pi^\varphi}$ (the proof follows the same sketch than the previous case).
\end{itemize}

\paragraph{Verification of  \ref{psialphaeqrhobetaneq}.}
%\item[\ref{psialphaeqrhobetaneq}.]  
Suppose $\tup{\dow[\psi]\alpha=\dow[\rho]\beta}$ is a conjunct of $\varphi$. By the consistency of $\varphi$ plus \neqaxfour, neither $(\psi,\alpha)$ nor $(\rho,\beta)$ can be in ${\bf V_{\lnot =,  \lnot \neq}}$. By the consistency of $\varphi$ plus  \neqaxfour, it cannot  be the case that one of them belongs to ${\bf V_{=, \lnot \neq}}$ and the other one to ${\bf V_{\lnot =,  \neq}}$. Then there are five possibilities to consider (we are omitting symmetric cases):

\begin{itemize}
\item If $(\psi,\alpha)\in {\bf V_{=,\neq}}$ and $(\rho,\beta)\in {\bf V_{=,\neq}}$, by construction, 
there is $x^{\bv_1}\in T^\varphi$ such that $\Tt^{\bv_1}_1,r_1^{\bv_1}\models\psi$, $\Tt_1^{\bv_1},r_1^{\bv_1},x^{\bv_1}\models\alpha$ and $[r^\varphi]_{\pi^\varphi}=[x^{\bv_1}]_{\pi^\varphi}$, with $\bv_1=(\psi,\alpha)$. Since the same happens with $(\rho,\beta)$, we can conclude from Fact~\ref{fac:preservesne} that $\Tt^\varphi,r^\varphi\models\tup{\dow[\psi]\alpha=\dow[\rho]\beta}$.

\item If $(\psi,\alpha)\in {\bf V_{=,\neq}}$ and $(\rho,\beta)\in {\bf V_{=, \lnot \neq}}$, by construction, 
there is $x^{\bv_1}\in T^\varphi$ such that  $\Tt^{\bv_1}_1,r_1^{\bv_1}\models\psi$, $\Tt_1^{\bv_1},r_1^{\bv_1},x^{\bv_1}\models\alpha$ and $[r^\varphi]_{\pi^\varphi}=[x^{\bv_1}]_{\pi^\varphi}$, with $\bv_1=(\psi,\alpha)$.
%Then $T^\varphi,\pi^\varphi,r^\varphi\models\tup{\epsilon=\dow[\psi]\alpha}$.
By Lemma~\ref{nextPathIsConjunctneq} plus Rule 2, 
there is $x\in T^{\bv_2}$ (with $\bv_2=(\rho,\beta)$) such that $\Tt^{\bv_2},r^{\bv_2}\models\rho$ and $\Tt^{\bv_2},r^{\bv_2},x\models\beta$. Then, by construction,  $[r^\varphi]_{\pi^\varphi}=[x]_{\pi^\varphi}$ and so $[x^{\bf v_1}]_{\pi^\varphi}=[x]_{\pi^\varphi}$. %, and so $T^\varphi,\pi^\varphi,r^\varphi\models\tup{\epsilon=\dow[\rho]\beta}$.
%Finally, by axiom \sergio{}, 
We conclude from Fact~\ref{fac:preservesne} that $\Tt^\varphi,r^\varphi\models\tup{\dow[\psi]\alpha=\dow[\rho]\beta}$.

\item If $(\psi,\alpha)\in {\bf V_{=,\neq}}$ and $(\rho,\beta)\in {\bf V_{\lnot =,  \neq}}$, by the consistency of $\varphi$ plus \neqaxfour, $(\psi, \alpha, \rho, \beta)={\bf u} \in {\bf U}$. Then, by construction, there are $x^{{\bf u}}\in T_1^{\bf u}$, $y^{{\bf u}}\in T_2^{\bf u}$ such that $\Tt^{\bf u}_1,r^{\bf u}_1\models\psi$, $\Tt^{\bf u}_2,r^{\bf u}_2\models \rho$, $\Tt^{\bf u}_1,r^{\bf u}_1,x^{\bf u}\models\alpha$, $\Tt^{\bf u}_2,r^{\bf u}_2,y^{\bf u}\models\beta$ and  $[x^{\bf u}]_{\pi^\varphi}=[y^{\bf u}]_{\pi^\varphi}$ (If ${\bf u} \in {\bf U_2}$ the assertion is straightforward and if ${\bf u} \in {\bf U_1}$ these nodes exist because of the gluing related to the set $Z$). Then, we conclude from Fact~\ref{fac:preservesne} that  $\Tt^\varphi,r^\varphi\models\tup{\dow[\psi]\alpha=\dow[\rho]\beta}$. 

\item If $(\psi,\alpha)\in {\bf V_{=, \lnot \neq}}$ and $(\rho,\beta)\in {\bf V_{=, \lnot \neq}}$, by Rule 2 plus Lemma~\ref{nextPathIsConjunctneq}, 
there are $x\in T^{\bv_2}$ (with $\bv_2=(\psi,\alpha)$) and $y\in T^{\bv_2'}$ (with $\bv_2'=(\rho,\beta)$) such that $\Tt^{\bf v_2},r^{\bf v_2}\models\psi$, $\Tt^{\bv_2},r^{\bv_2},x\models\alpha$, $\Tt^{\bf v_2'},r^{\bf v_2'}\models\rho$ and $\Tt^{\bv_2'},r^{\bv_2'},y\models\beta$. By construction, $[x]_{\pi^\varphi}=[r^\varphi]_{\pi^\varphi}=[y]_{\pi^\varphi}$ and so, we conclude from Fact~\ref{fac:preservesne} that  $\Tt^\varphi,r^\varphi\models\tup{\dow[\psi]\alpha=\dow[\rho]\beta}$.

\item If $(\psi,\alpha)\in {\bf V_{\lnot =,  \neq}}$ and $(\rho,\beta)\in {\bf V_{\lnot =,  \neq}}$, then $(\psi,\alpha,\rho,\beta)={\bf u}\in {\bf U}$ or $(\psi,\alpha,\rho,\beta)={\bf z}\in {\bf Z}$. In the first case, the proof is exactly the same given for the case that \mbox{$(\psi,\alpha)\in {\bf V_{=,\neq}}$} and $(\rho,\beta)\in {\bf V_{\lnot =,  \neq}}$. In the other case, by Rule 3 plus Lemma~\ref{nextPathIsConjunctneq}, there are \mbox{$x\in T^{\bv_3}$}
(with $\bv_3=(\psi,\alpha)$), $y\in T^{{\bv_3'}}$ (with $\bv_3'=(\rho,\beta)$) such that $\Tt^{\bv_3},r^{\bv_3} \models\psi$, $\Tt^{\bv_3},r^{\bv_3},x\models\alpha$, $\Tt^{\bv_3'},r^{\bv_3'}\models \rho$ and $\Tt^{\bv_3'},r^{\bv_3'},y\models\beta$. Observe 
that 	 $[x]_{\pi^\varphi}=[y]_{\pi^\varphi}$ because of the way in which we have defined the partition $\pi^\varphi$. Then we conclude from Fact~\ref{fac:preservesne} that  
$\Tt^\varphi,r^\varphi\models\tup{\dow[\psi]\alpha=\dow[\rho]\beta}$.
\end{itemize}

\paragraph{Verification of  \ref{psialphaneqrhobetaneq}.}
%\item[\ref{psialphaneqrhobetaneq}.] 
Suppose $\tup{\dow[\psi]\alpha\neq \dow[\rho]\beta}$ is a conjunct of $\varphi$. By the consistency of $\varphi$ plus \neqaxfour, neither $(\psi,\alpha)$ nor $(\rho,\beta)$ can be in ${\bf V_{\lnot =,  \lnot \neq}}$. By the consistency of $\varphi$ plus \mbox{\neqaxeight,} it cannot  be the case that they both belong to ${\bf V_{=, \lnot \neq}}$. Then there are five possibilities to consider:
\begin{itemize}
\item If $(\psi,\alpha)\in {\bf V_{=,\neq}}$ and $(\rho,\beta)\in {\bf V_{=,\neq}}$, by items~\ref{epsiloneqpsialphaneq} and~\ref{epsilonneqpsialphaneq}, there exist $x,y \in T^\varphi$, such that $\Tt^\varphi,r^\varphi, x\models\dow[\psi]\alpha$, $\Tt^\varphi,r^\varphi, y\models\dow[\rho]\beta$,  $[r^\varphi]_{\pi^\varphi}=[x]_{\pi^\varphi}$ and $[r^\varphi]_{\pi^\varphi}\neq[y]_{\pi^\varphi}$. Then we conclude that  $\Tt^\varphi,r^\varphi\models\tup{\dow[\psi]\alpha\neq\dow[\rho]\beta}$. 

% by Rule 1 plus lema 4, we know that there are some $x^{\bv_1}\in T^\varphi$, $x\in T^\varphi$ such that $T^{\bv_1}_1,\pi^{\bv_1}_1,r^{\bv_1}_1\models\psi$, $T^{\bv_1}_1,\pi_1^{\bv_1},r_1^{\bv_1},x^{\bv_1}\models\alpha$, $T^{\bv'_1}_2,\pi^{\bv'_1}_2,r^{\bv'_1}_2\models\rho$, $T^{\bv'_1}_2,\pi_2^{\bv'_1},r_2^{\bv'_1},x\models\beta$, $[r^\varphi]_{\pi^\varphi}=[x^{\bv_1}]_{\pi^\varphi}$ and $x \not\in [z]_{\pi^\varphi} \ \forall \ z$ such that $\Tt^\rho,r^\rho,z\models\gamma$ for some $(\rho, \gamma)\in {\bf V_{=, \lnot \neq}}$. To conclude that $T^\varphi,\pi^\varphi,r^\varphi\models\tup{\dow[\psi]\alpha\neq\dow[\rho]\beta}$, we only need to observe that $[x^{\bf v_1}]_{\pi^\varphi}\neq[x]_{\pi^\varphi}$.

\item If $(\psi,\alpha)\in {\bf V_{=,\neq}}$ and $(\rho,\beta)\in {\bf V_{=, \lnot \neq}}$, or $(\psi,\alpha) \in {\bf V_{=,\neq}}$ and $(\rho,\beta)\in {\bf V_{\lnot =,  \neq}}$ or \mbox{$(\psi,\alpha)\in {\bf V_{=, \lnot \neq}}$} and $(\rho,\beta)\in {\bf V_{\lnot =,  \neq}}$, the proof is analogous to the previous one.

\item If $(\psi,\alpha)={\bf v}_3\in {\bf V_{\lnot =,  \neq}}$ and $(\rho,\beta)={\bf v'_3}\in {\bf V_{\lnot =,  \neq}}$. 
\begin{itemize}
 \item In case $(\psi,\alpha)\neq(\rho,\beta)$: If $\tup{\alpha \neq \alpha}$ is a conjunct of $\psi$ (if $\tup{\beta \neq \beta}$ is a conjunct of $\rho$, the proof is analogous), by Lemma~\ref{nextPathIsConjunctneq}, Rule 3 and Fact~\ref{factnocambialaparticion} there exist $x,y \in T^{\bv_3}$, $z\in T^{\bv_3'}$ such that $\Tt^{\bv_3}, r^{\bv_3} \models \psi$, $\Tt^{\bv_3},r^{\bv_3}, x\models \alpha$, $\Tt^{\bv_3},r^{\bv_3}, y\models \alpha$, $\Tt^{\bv_3'},r^{\bv_3'} \models \rho$, $\Tt^{\bv_3'},r^{\bv_3'}, z\models \beta$ and $[x]_{\pi^\varphi}\neq [y]_{\pi^\varphi}$. Then we conclude from Fact~\ref{fac:preservesne} that $\Tt^{\varphi}, r^{\varphi} \models \tup{\dow[\psi]\alpha\neq \dow[\rho]\beta}$ (either $x$ or $y$ is not in $[z]_{\pi^{\varphi}}$).
 
 Suppose then that $\lnot \tup{\alpha \neq \alpha}$ is a conjunct of $\psi$ and $\lnot \tup{\beta \neq \beta}$ is a conjunct of $\rho$. Then, as before, there exist $x \in T^{\bv_3}$, $z\in T^{\bv_3'}$ such that $\Tt^{\bv_3},r^{\bv_3} \models \psi$, $\Tt^{\bv_3},r^{\bv_3}, x\models \alpha$, $\Tt^{\bv_3'}, r^{\bv_3'} \models \rho$, $\Tt^{\bv_3'},r^{\bv_3'}, z\models \beta$. To conclude the proof, it only remains to observe that, in this case, $[x]_{\pi^\varphi}\neq [z]_{\pi^\varphi}$ (for a sketch of the proof see Sketch~\ref{detalles de 4a} in \S\ref{app}). Then we conclude from Fact~\ref{fac:preservesne} that $\Tt^{\varphi}, r^{\varphi} \models \tup{\dow[\psi]\alpha\neq \dow[\rho]\beta}$.

 \item In case $(\psi,\alpha)=(\rho,\beta)$, by consistency of $\varphi$, we have that $(\psi,\alpha,\psi,\alpha)={\bf u}\in {\bf U}$. If $\tup{\alpha \neq \alpha}$ is a conjunct of $\psi$, by Lemma~\ref{nextPathIsConjunctneq}, Rule 3 and Fact~\ref{factnocambialaparticion} there exist  $x,y \in T^{\bv_3}$ such that $\Tt^{\bv_3},r^{\bv_3} \models \psi$, $\Tt^{\bv_3},r^{\bv_3}, x\models \alpha$, $\Tt^{\bv_3},r^{\bv_3}, y\models \alpha$ and $[x]_{\pi^\varphi}\neq [y]_{\pi^\varphi}$. Then we conclude from Fact~\ref{fac:preservesne} that $\Tt^{\varphi}, r^{\varphi} \models \tup{\dow[\psi]\alpha\neq \dow[\psi]\alpha}$.
 
 Suppose then that $\lnot \tup{\alpha \neq \alpha}$ is a conjunct of $\psi$. Then, as before, there exist $x \in T^{\bv_3}$, $z\in T_1^{\bf u}$ such that $\Tt^{\bv_3}, r^{\bf v_3} \models \psi$, $\Tt^{\bv_3},r^{\bv_3}, x\models \alpha$, $\Tt_1^{\bf u},r_1^{\bf u} \models \psi$, $\Tt_1^{\bf u},r_1^{\bf u}, z\models \alpha$. To conclude the proof, it only remains to observe that, in this case, $[x]_{\pi^\varphi}\neq [z]_{\pi^\varphi}$ (for a sketch of the proof see Sketch~\ref{detalles de 4b} in \S\ref{app}). Then we conclude from Fact~\ref{fac:preservesne} that $\Tt^{\varphi}, r^{\varphi} \models \tup{\dow[\psi]\alpha\neq \dow[\rho]\beta}$.
\end{itemize}

\end{itemize}

\paragraph{Verification of  \ref{noepsiloneqpsialphaneq}.}
%\item[\ref{noepsiloneqpsialphaneq}.] 
Suppose $\neg\tup{\epsilon=\dow[\psi]\alpha}$ is a conjunct of $\varphi$. Aiming for a contradiction, suppose that $\Tt^\varphi,r^\varphi\models\tup{\epsilon=\dow[\psi]\alpha}$. Then there is a successor $z$ of $r^\varphi$ in which $\psi$ holds, and, by construction plus Lemma~\ref{lemma:inconsistentneq}, $z$ is the root of some copy of the tree $\Tt^\psi$, i.e.\ $z=r^\psi$ (it might be $\widetilde{\Tt^{\psi}}$ and $\widetilde{r^{\psi}}$ but, in that case, the argument is the same).  Moreover, there is $x\in T^\psi$ such that $\Tt^\psi,r^\psi,x\models\alpha$, with $[x]_{\pi^\varphi}=[r^\varphi]_{\pi^\varphi}$. In addition to this, $(\psi,\alpha)\in {\bf V_{\lnot =,  \neq}}$ or $(\psi,\alpha)\in {\bf V_{\lnot =,  \lnot \neq}}$. If the latter occurs, by construction of $\Tt^\varphi$ plus Lemma~\ref{lema 5} and Lemma~\ref{lemma:inconsistentneq}, we have that $\lnot \tup{\alpha=\alpha}$ is a conjunct of $\psi$ which is a contradiction. In the former, observe that 
$[x]_{\pi^\varphi}\neq [r^\varphi]_{\pi^\varphi}$ (for a sketch of the proof see Sketch~\ref{detalles de 5} in \S\ref{app}) which is a 
contradiction.   

% Aiming for a contradiction, suppose that $T^\varphi,\pi^\varphi,r^\varphi\models\tup{\epsilon=\dow[\psi]\alpha}$. Then there is a successor $z$ of $r^\varphi$ in which $\psi$ holds, and by construction and Lemma~\ref{lemma:inconsistent}, $z$ is the root of some copy of the tree $\Tt^\psi$, i.e.\ $z=r^\psi$.  Moreover, there is $x\in T^\psi$ such that $T^\psi,\pi^\psi,r^\psi,x\models\alpha$, with $[x]_{\pi^\varphi}=[r^\varphi]_{\pi^\varphi}$. But $x$ and $r^\varphi$ are in the same equivalence class only if $x\in [x^{\bf v_1}]_{\widetilde{\pi^\varphi}}$ for some ${\bf v_1}=(\psi, \beta)\in {\bf V_{=,\neq}}$ or $x \in [z]_{\widetilde{\pi^\varphi}}$ for some $z$ such that $\Tt^\rho,r^\rho,z\models\gamma$ for some $(\rho, \gamma)\in {\bf V_{=, \lnot \neq}}$. But this never happens by construction plus lemma~\ref{lema 3}.

% If $x\in [x^{\bf v_1}]_{\widetilde{\pi^\varphi}}$ for some ${\bf v_1}=(\psi, \beta)\in {\bf V_{=,\neq}}$, since $T^\psi,\pi^\psi,r^\psi,x\models\alpha$ and $T^\psi,\pi^\psi,r^\psi,x^{\bf v_1}\models\beta$, we conclude $T^\psi,\pi^\psi,r^\psi\models\tup{\alpha=\beta}$. By Lemma~\ref{lem:eq_in_psi} $\tup{\alpha=\beta}$ 
% is a conjunct of $\psi$, and therefore,
% by Axioms~\ref{ax:comm-eq} and~\ref{ax:subst}, and the consistency of $\varphi$, we have that $\tup{\epsilon=\dow[\psi]\alpha}$ is a conjunct \santi{hace falta ax~\ref{ax:comm-eq} aca?}
% of $\varphi$. This is a contradiction.
% 
% If $x \in [z]_{\widetilde{\pi^\varphi}}$ for some $z$ such that $\Tt^\rho,r^\rho,z\models\gamma$ for some $(\rho, \gamma)\in {\bf V_{=, \lnot \neq}}$, 

\paragraph{Verification of  \ref{noepsilonneqpsialphaneq}.}
%\item[\ref{noepsilonneqpsialphaneq}.] 
Suppose $\neg\tup{\epsilon\neq\dow[\psi]\alpha}$ is a conjunct of $\varphi$. Aiming for a contradiction, suppose that  $\Tt^\varphi,r^\varphi\models\tup{\epsilon\neq\dow[\psi]\alpha}$. Then there is a successor $z$ of $r^\varphi$ in which $\psi$ holds, and by construction and Lemma~\ref{lemma:inconsistentneq}, $z$ is the root of some copy of the tree $\Tt^\psi$, i.e.\ $z=r^\psi$ (it might be $\widetilde{\Tt^{\psi}}$ and $\widetilde{r^{\psi}}$ but, in that case, the argument is the same).  Moreover, there is $x\in T^\psi$ such that $\Tt^\psi,r^\psi,x\models\alpha$, with $[x]_{\pi^\varphi}\neq[r^\varphi]_{\pi^\varphi}$. Then, by construction, $(\psi,\alpha)\not \in {\bf V_{=, \lnot \neq}}$. Since $\neg\tup{\epsilon\neq\dow[\psi]\alpha}$ is a conjunct of $\varphi$, the only remaining possibility is that $(\psi,\alpha)\in {\bf V_{\lnot =,  \lnot \neq}}$ but this 
is a contradiction by construction plus Lemma~\ref{lema 5} and Lemma~\ref{lemma:inconsistentneq}. 

\paragraph{Verification of  \ref{nopsialphaeqrhobetaneq}.}
%\item[\ref{nopsialphaeqrhobetaneq}.] 
Suppose $\neg\tup{\dow[\psi]\alpha=\dow[\rho]\beta}$ is a conjunct of $\varphi$. By the consistency of $\varphi$ plus  \neqaxfive, it cannot  be the case that both $(\psi,\alpha), (\rho,\beta)$ are in ${\bf V_{=,\neq}}\cup {\bf V_{=, \lnot \neq}}$. In case $(\psi,\alpha)\in {\bf V_{\lnot =,  \lnot \neq}}$ (if \mbox{$(\rho, \beta) \in {\bf V_{\lnot =,  \lnot \neq}}$}, the proof is analogous), suppose that 
$\Tt^\varphi,r^\varphi\models\tup{\dow[\psi]\alpha = \dow[\rho]\beta}$. In particular, there is a successor of $r^\varphi$, $z$ and a descendant $w$ such that $\Tt^\varphi, z,w \models [\psi]\alpha$. But this is a contradiction by construction plus Lemma~\ref{lema 5} and Lemma~\ref{lemma:inconsistentneq}. Then $\Tt^\varphi,r^\varphi\models \neg \tup{\dow[\psi]\alpha = \dow[\rho]\beta}$. We then have three remaining cases to analyze:
\begin{itemize}
\item
If $(\psi,\alpha)\in {\bf V_{=, \lnot \neq}}$ and $(\rho,\beta)\in {\bf V_{\lnot =,  \neq}}$, then, by items~\ref{noepsiloneqpsialphaneq} and~\ref{noepsilonneqpsialphaneq}, we have the result.

\item If $(\psi,\alpha)\in {\bf V_{=,\neq}}$ and $(\rho,\beta)\in {\bf V_{\lnot =,  \neq}}$ or $(\psi,\alpha), (\rho,\beta)\in {\bf V_{\lnot =,  \neq}}$. In order to conclude that $\Tt^\varphi,r^\varphi\models \lnot \tup{\dow[\psi]\alpha=\dow[\rho]\beta}$, one only have to observe that, if $x, y \in T^\varphi$ are such that $\Tt^{\varphi},r^{\varphi}, x\models \dow[\psi]\alpha$ and $\Tt^{\varphi},r^{\varphi}, y\models \dow[\rho]\beta$, then  $[x]_{\pi^\varphi}\neq[y]_{\pi^\varphi}$ (for a sketch of the proof see Sketch~\ref{detalles de 7} in \S\ref{app}). 
\end{itemize}

\paragraph{Verification of  \ref{nopsialphaneqrhobetaneq}.}
%\item[\ref{nopsialphaneqrhobetaneq}.] 
Suppose $\neg\tup{\dow[\psi]\alpha\neq\dow[\rho]\beta}$ is a conjunct of $\varphi$. By the consistency of $\varphi$ plus  \neqaxfive, it cannot  be the case that one of $(\psi,\alpha), (\rho,\beta)$ is from ${\bf V_{=,\neq}}$ and the other from ${\bf V_{=, \lnot \neq}}$, neither can one be from ${\bf V_{=,\neq}}$ and the other from ${\bf V_{\lnot =,  \neq}}$, or one from ${\bf V_{=, \lnot \neq}}$ and the other from ${\bf V_{\lnot =,  \neq}}$, or both from ${\bf V_{=,\neq}}$. In case $(\psi,\alpha)\in {\bf V_{\lnot =,  \lnot \neq}}$ (if $(\rho, \beta) \in {\bf V_{\lnot =,  \lnot \neq}}$, the proof is analogous), suppose that 
$\Tt^\varphi,r^\varphi\models\tup{\dow[\psi]\alpha \neq \dow[\rho]\beta}$. In particular, there is a successor $z$ of $r^\varphi$ and a descendant $w$ such that $\Tt^\varphi, z,w \models [\psi]\alpha$. But this is a contradiction by construction plus Lemma~\ref{lema 5} and Lemma~\ref{lemma:inconsistentneq}. We then have two remaining cases to analyze:
\begin{itemize}
\item If $(\psi,\alpha), (\rho,\beta) \in {\bf V_{=, \lnot \neq}}$, by item~\ref{noepsilonneqpsialphaneq}, we have the result.

\item If $(\psi,\alpha), (\rho,\beta) \in {\bf V_{\lnot =,  \neq}}$, by the consistency of $\varphi$ plus  \neqaxfour, $(\psi,\alpha,\rho,\beta) \in {\bf Z}$ and the result follows immediately from the construction of the model.
\end{itemize}

%\end{enumerate}

%%\input{K4} %Estas son notas de lo que hace Fine, no van en el paper.
%
%%\input{Down+}
%\subsection{Completeness for path expressions}\label{subsec:completeness-path-neq}
%\input{compl-path-neq}

\section{Conclusions}\label{sec:conclusions}
%!TEX root = main.tex

The addition of an equivalence relation on top of a tree-like Kripke model, and the ability of the modal language to compare if two nodes at the end of path expressions are in the {\em same} or in {\em different}  equivalence classes has proved to change remarkably the canonical model construction of the basic modal logic. When the language has only comparisons by `equality', the situation is somewhat simpler, based on the fact that `equality' is a transitive relation. Also notice that while
\begin{equation}\label{eqn:concl:1}
\mbox{``all pairs of paths with certain properties end in {\em different} equivalence classes"}
\end{equation}
is expressible when tests by equality are present, 
\begin{equation}\label{eqn:concl:2}
\mbox{``all pairs of paths with certain properties end in the {\em same} equivalence classes"} 
\end{equation}
is only expressible when tests by inequality are also present.
Both properties are universal. However, in the construction of the canonical model,  \eqref{eqn:concl:1} is compatible with adding many disjoint copies of subtrees with disjoint partitions, while \eqref{eqn:concl:2} is not. The axiomatization for the fragment containing both the operators of `equality' and `inequality' proved to be much more involved than the one containing only `equality', as witnessed by the large amount of axioms reflecting the intricate relationships between both binary operators.

\bigskip

In this research we have considered $\xpd$ over arbitrary data trees. Furthermore, $\xpd$ is also suitable for reasoning about (finite or infinite) data {\em graphs}, as it is done in \cite{LV12,KR16}. In either of the alternatives (finite vs.\ infinite data trees vs.\ data graphs) it can be shown that $\xpd$ is also axiomatizable by the system given in this paper ---notice there are no specific axioms of an underlying tree topology. Since our construction of canonical models gives us a recursively bounded finite data tree, we conclude:
\begin{corollary}[Bounded tree model property]
There is a primitive recursive function $f$ such that any satisfiable node or path expression $\varphi$ of $\xpd$ of size $n$ over the class of finite/arbitrary data trees/data graphs is satisfiable in a data tree of size at most $f(n)$.
\end{corollary}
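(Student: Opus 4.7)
The plan is to show that the canonical model constructions from Sections \ref{construccion} and \ref{canonical model} already produce data trees whose size is bounded by a primitive recursive function of the downward depth (and hence of the size) of the input formula, and then to argue this suffices for all four classes mentioned in the statement.

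First, I would reduce the general case to the case of a normal form. Given a satisfiable $\varphi$ of size $n$ (whether node or path expression), by soundness (Proposition~\ref{prop:corr}) $\varphi$ is consistent, and we have $\md(\varphi) \le n$. By Theorem~\ref{thm:normal-form-neq}, $\axiomRestrNeq\vdash\varphi\equiv\bigvee_i\varphi_i$ (resp.\ $\bigcup_i[\varphi_i]\alpha_i$) with $\varphi_i\in N_{\md(\varphi)}$ and, in the path case, $\alpha_i\in P_{\md(\varphi)}$. At least one disjunct must be consistent, and in the path case, for that $i$ we may assume $\tup{\alpha_i=\alpha_i}$ is a conjunct of $\varphi_i$ (by Lemma~\ref{nextPathIsConjunctneq}). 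Hence it suffices to produce a recursively bounded tree satisfying an arbitrary $\psi\in N_n$ at its root.

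Next, I would define $T(n) := \max_{\psi\in N_n}|T^\psi|$, where $T^\psi$ is the canonical tree from Section~\ref{canonical model}, and show by induction that $T(n)$ admits a primitive recursive upper bound. For the base case, $T(0)=1$. For the inductive step, observe that $|N_n|$, $|P_n|$, $|D_n|$ admit straightforward primitive recursive bounds in terms of $|\A|$ and $n$: $|P_{n+1}|\le 1+|N_n|\cdot|P_n|$, $|D_{n+1}|\le 2|P_{n+1}|^2$, and $|N_{n+1}|\le |\A|\cdot 2^{|D_{n+1}|}$, which gives a tower-of-exponentials bound $h(n)$ on all of these. The canonical tree $T^\psi$ for $\psi\in N_{n+1}$ consists of a root together with at most $O(|N_n|^2|P_n|^2)$ hanging subtrees, one for each element of the four sets $\sisi,\sino,\nosi$ and ${\bf U}$ (and for $\sisi$ two subtrees). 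Each such subtree is either a copy of an inductively built $\Tt^{\psi'}$ for $\psi'\in N_n$, or the modified tree $\widetilde{\Tt^{\psi'}}$ from Lemma~\ref{lema clave plus} (or from Lemma~\ref{lema a} and Lemma~\ref{Lemma:ClavePlusMinus} in the $\xpdeq$ case), which at most doubles the size by duplicating one subtree. Therefore
\begin{equation*}
T(n+1) \le 1 + c\cdot h(n)^4 \cdot 2\cdot T(n)
\end{equation*}
for a fixed constant $c$, which solves to a primitive recursive function of $n$. Setting $f(n) := T(n)$ yields the desired bound for satisfiability over arbitrary data trees.

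Finally, I would address the three remaining classes. The canonical construction itself produces a \emph{finite} data tree, so the bound carries over to finite data trees without change. For data graphs (finite or infinite), since every data tree is in particular a data graph and the semantics on trees coincides with the semantics induced by viewing the tree as a graph, satisfiability on graphs implies satisfiability on trees via our soundness-and-completeness theorem (Theorem~\ref{thm:sat}), so the same $f$ works. The only subtle point, and what I expect to be the main obstacle, is verifying carefully that the modification step of Lemma~\ref{lema clave plus} only increases the size by a bounded multiplicative factor: we need to check that the recursive invocations of that lemma inside Rule~1--Rule~4 do not compose in a way that creates super-polynomial blowup per level. This is where one must argue that, within a single level of the recursion, each child subtree undergoes at most one duplication, so the blowup at level $n+1$ is linear in $T(n)$ times the bounded branching factor, keeping the overall recurrence primitive recursive.
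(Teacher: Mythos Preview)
Your proposal is correct and follows exactly the approach the paper intends: the paper does not give a proof of this corollary at all, merely stating that ``since our construction of canonical models gives us a recursively bounded finite data tree, we conclude'' the result, and also noting that ``there are no specific axioms of an underlying tree topology'' to cover the data-graph classes. Your write-up is precisely the elaboration the paper omits; your worry about compounding modifications is unfounded, since Lemma~\ref{lema clave plus} is applied at most once to each subtree taken from the inductive hypothesis (and in Rule~4 to two \emph{separate} such subtrees), so the blowup per level is indeed linear in $T(n)$ times the primitive-recursive branching factor.
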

Hence, although in the database community one may restrict to the finite case, the above corollary shows that allowing or disallowing infinite models does not make any difference.

This already shows that the satisfiability problem of $\xpd$ is decidable over any of the classes of models stated above. Of course, this result ---at least for $\xpd$ over finite data trees--- is not new, as mentioned in the introduction \cite{Figueira12ACM}. However, the canonical model construction may give us new insights into obtaining sequent calculus axiomatizations, as done in \cite{datagl}, which might be useful for obtaining alternative proofs of complexity for the satisfiability problem of fragments or extensions of $\xpd$.

On the application side, the axioms may help to define effective rewrite rules for query optimization in $\xpd$. 

The study of $\xpath$ with `descendant' instead of `child' axis seems to be much harder. This question, or the addition of other axes such as `parent' or `sibling' (in the case of ordered trees), constitute future lines of research.

\bigskip

\paragraph{Acknowledgements.}
This work was partially supported by grant ANPCyT-PICT-2013-2011 and UBACyT 20020150100002BA.

\bibliographystyle{plain}
\bibliography{bib}

%\appendix
%\section{Missing proofs from Section \ref{subsec:completeness}}\label{app:restr}
%\input{appendix}

%\section{Missing proofs from \S\ref{completeness neq}}\label{app}
\appendix
\section{Missing proofs}\label{app}
%!TEX root = main.tex

%\begin{lemma} 
\noindent{\bf Lema \ref{lema A}.}
{\em \lemaA}
%\end{lemma}

\begin{proof}%[Proof of Lemma \ref{lema A}]
Let us start with the case of $*$ being $\neq$. 
Aiming for a contradiction, suppose that $\tup{\gamma \neq \dow[\psi_1]\dots\dow[\psi_{i_0}]\alpha} \land \lnot \tup{\gamma \neq \dow[\psi_1]\dots\dow[\psi_{i_0}]\beta}$ is consistent and that both $\lnot \tup{\alpha \neq \alpha}$ and $\tup{\alpha = \beta}$
are conjuncts of $\psi_{i_0}$.

First, let us prove some facts that will be useful in the rest of the proof:
\begin{enumerate}
  \item\label{it:gamma-conjbis} The following derivation:
  \begin{align*}
  \tup{\gamma \neq \dow[\psi_1]\dots\dow[\psi_{i_0}]\alpha} & \leq \tup{\dow[\psi_1]\dots\dow[\psi_{i_0}]\alpha}\tag{\neqaxtwo} \\
  &\leq  \tup{\dow[\psi_1]\dots\dow[\psi_{i_0}]} \tag{{\bf Der12}~Fact \ref{fact boolean}} \\
  &\leq  \tup{\dow[\psi_1]\dots\dow[\psi_{i_0}]\alpha = \dow[\psi_1]\dots\dow[\psi_{i_0}]\beta}\tag{\eqax{7} \& {\bf Der21} (Fact \ref{fact boolean})} \\
  & \leq \tup{\dow[\psi_1]\dots\dow[\psi_{i_0}]\beta}\tag{\eqax{5}}
  \end{align*}
  In particular, by {\bf Der13} (Fact~\ref{fact boolean}), we have that $\tup{\dow[\psi_{i_0}]\beta}$ is consistent and so $\tup{\beta=\beta}$ is a conjunct of $\psi_{i_0}$ (by Lemma~\ref{nextPathIsConjunctneq}).

  \item\label{it:psi-conbis} From the second line of Item~\ref{it:gamma-conjbis}, we have that $\tup{\dow[\psi_1]\dots\dow[\psi_{i_0}]}$ is
  consistent, and then, by {\bf Der13} (Fact \ref{fact boolean}), $\psi_{i_0}$ is
  consistent.

  \item\label{it:all-gammabis} Aiming for a contradiction, let us suppose that $\tup{\beta\neq\beta}$
  is a conjunct of $\psi_{i_0}$. Then
%  \begin{center}
  \begin{align*}
  & \hspace{-20pt} \tup{\gamma \neq \dow[\psi_1]\dots\dow[\psi_{i_0}]\alpha} \land \lnot \tup{\gamma \neq \dow[\psi_1]\dots\dow[\psi_{i_0}]\beta} \\&
  \leq \tup{\gamma} \land \tup{\dow[\psi_1]\dots\dow[\psi_{i_0}]} \land \lnot \tup{\gamma \neq \dow[\psi_1]\dots\dow[\psi_{i_0}]\beta}\tag{\neqaxtwo \&  Item~\ref{it:gamma-conjbis}} \\ &
  \leq \tup{\gamma} \land \tup{\dow[\psi_1]\dots\dow[\psi_{i_0}]\beta\neq\dow[\psi_1]\dots\dow[\psi_{i_0}]\beta} \land \lnot \tup{\gamma \neq \dow[\psi_1]\dots\dow[\psi_{i_0}]\beta}\tag{\neqaxfifteen \& {\bf Der21} (Fact \ref{fact boolean})}\\&
  \equiv \tup{\gamma \neq \dow[\psi_1]\dots\dow[\psi_{i_0}]\beta}\land \lnot \tup{\gamma \neq \dow[\psi_1]\dots\dow[\psi_{i_0}]\beta}\tag{\neqaxeight} \\&
  \equiv \botNode \tag{Boolean}
\end{align*}
%\end{center}
which is a contradiction. Then $\neg\tup{\beta\neq\beta}$ is a conjunct of $\psi_{i_0}$.

  \item\label{it:all-alpha-gammabis} Because $\psi_{i_0}$ is consistent (Item~\ref{it:psi-conbis}), by the previous Item plus \neqaxeight, $\lnot \tup{\alpha\neq \beta}$ is a conjunct of $\psi_{i_0}$.
%   
%   
%   Aiming for a contradiction, let us suppose that $\tup{\alpha\neq\beta}$
%   is a conjunct of $\psi_{i_0}$. Because $\psi_{i_0}$ is consistent by~\ref{it:psi-con}, then all its conjuncts are
%   consistent. In particular by hypothesis we have $\tup{\alpha=\beta}$ and $\tup{\alpha\neq\beta}$
%   are consistent. Then we have $\tup{\alpha\neq\alpha}\lor\tup{\beta\neq\beta}$ consistent,
%   by Axiom~\neqaxfourteen. But the first disjunct contradicts the hypothesis $\neg\tup{\alpha\neq\alpha}$,
%   and the second contradicts $\neg\tup{\beta\neq\beta}$ consistent obtained by Item~\ref{it:all-gamma}. 
%   Then $\neg\tup{\alpha\neq\beta}$ is a conjunct of $\psi_{i_0}$.
  \end{enumerate}
Then we have 
%\begin{center}
\begin{align*}
  &\tup{\gamma{\neq}\dow[\psi_1]\dots\dow[\psi_{i_0}]\alpha} \land \lnot \tup{\gamma{\neq}\dow[\psi_1]\dots\dow[\psi_{i_0}]\beta}   \\&
   \leq \tup{\gamma{\neq}\dow[\psi_1]\dots\dow[\psi_{i_0}]\beta} \land \lnot \tup{\gamma{\neq}\dow[\psi_1]\dots\dow[\psi_{i_0}]\beta} \tag{Items~\ref{it:gamma-conjbis} and~\ref{it:all-alpha-gammabis} \&  \neqaxten  \& {\bf Der21} (Fact \ref{fact boolean})} \\ &
   \equiv \botNode\tag{Boolean}
\end{align*} 
%\end{center}
%
% $$\tup{\gamma \neq \dow[\psi_1]\dots\dow[\psi_{i_0}]\alpha} \land \lnot \tup{\gamma \neq \dow[\psi_1]\dots\dow[\psi_{i_0}]\beta}\not\equiv \botNode$$
% By~\ref{it:gamma-conj} and~\ref{it:all-alpha-gamma}, hypothesis and Axiom~\neqaxten
% $$\tup{\gamma \neq \dow[\psi_1]\dots\dow[\psi_{i_0}]\beta} \land \lnot \tup{\gamma \neq \dow[\psi_1]\dots\dow[\psi_{i_0}]\beta}\not\equiv \botNode$$
% Hence, by Boolean reasoning we get
% $$\botNode\not\equiv\botNode$$
which is contradiction, from the assumption that $\tup{\alpha=\beta}$ was a conjunct of $\psi_{i_0}$.
Therefore,  $\neg\tup{\alpha=\beta}$ is a conjunct of $\psi_{i_0}$.

For the case of $*$ being $=$, use \neqaxeighteen plus {\bf Der21} of Fact \ref{fact boolean}. 
\end{proof}

%With the notation and hypothesis of \S~\ref{canonical model}:

% \begin{lemma}\label{lema 5}
% Let $\psi\in N_n$, $\alpha \in P_n$. If $\lnot \tup{\eps = \dow[\psi]\alpha} \land \lnot \tup{\eps \neq \dow[\psi]\alpha}$ is consistent \sergio{No deberia serlo siempre? Dice que no existe ese camino a la derecha}, then $\lnot \tup{\alpha}$ is a conjunct of $\psi$ or $\lnot \tup{\eps = \dow[\psi]\alpha} \land \tup{\eps \neq \dow[\psi]\alpha}\land \tup{\eps * \dow[\psi]\beta}$ \sergio{Falta un $\lnot$ en el segundo conyunto?} is inconsistent $\forall \ \beta \in P_n$. Note that the lemma holds both for defining $*$ as $=$ or $\neq$.
% \end{lemma}
%
% \begin{proof}
%  \emi{E-A}
% \end{proof}

% \begin{lemma}\label{lema b}
%   Let ${\bf u}=(\psi, \alpha,\rho,\beta) \in {\bf U}$ and $\left\{\gamma_1,\dots,\gamma_m \right\}=\left\{\gamma \in P_n\mid \lnot \tup{\dow[\psi]\gamma = \dow[\rho]\beta} \mbox{ is a conjunct of } \varphi \right\}$. Then there exists $\Tt^\psi=(T^\psi,\pi^\psi)$ with {\rm root} $r^\psi$ and a node $x$ such that $\Tt^\psi,r^\psi\models \psi$, $\Tt^\psi,r^\psi, x\models \alpha$ and $[x]_{\pi^\psi}\neq [y]_{\pi^\psi} \ \forall \ y $ such that $\Tt^\psi,r^\psi, y\models \gamma_i$ for some $i=1,\dots,m$.
% \end{lemma}
% 
% \begin{proof}
%  It follows immediately from proposition~\ref{lema clave plus}.
% \end{proof}

\noindent{\bf Lema \ref{paraverif}.}
{\em \paraverif}

\begin{proof}%[Proof of lemma \ref{paraverif}]
Aiming for a contradiction, suppose that $\tup{\dow[\psi]\alpha \neq \dow[\psi]\alpha} \land \lnot \tup{\dow[\psi]\gamma \neq \dow[\psi]\gamma}$ is consistent and both $\lnot \tup{\alpha \neq \alpha}$ and $\tup{\alpha = \gamma}$
are conjuncts of $\psi$.

Let us prove some facts that will be useful in the rest of the proof:
\begin{enumerate}
  \item\label{it:gamma-conj} The following derivation:
\begin{align*}
  \tup{\dow[\psi]\alpha \neq \dow[\psi]\alpha} & \leq \tup{\dow[\psi]\alpha}\tag{\neqaxtwo} \\
  & \leq \tup{\dow[\psi]}\tag{{\bf Der12} (Fact \ref{fact boolean})} \\
  & \leq \tup{\dow[\psi]\alpha=\dow[\psi]\gamma}\tag{\eqax{7} \& {\bf Der21}(Fact \ref{fact boolean})} \\
  & \leq \tup{\dow[\psi]\gamma}\tag{\eqax{5}}
  \end{align*}

  In particular, we have that $\tup{\gamma=\gamma}$ is a conjunct of $\psi$ (by Lemma~\ref{nextPathIsConjunctneq}).

  \item\label{it:psi-con} From the second line of Item~\ref{it:gamma-conj}, we have that $\tup{\dow[\psi]}$ is
  consistent, and by  {\bf Der13} (Fact \ref{fact boolean}), $\psi$ is
  consistent.

  \item\label{it:all-gamma} Aiming for a contradiction, let us suppose that $\tup{\gamma\neq\gamma}$
  is a conjunct of $\psi$. Then
%
%  \begin{center}
  \begin{align*}
  &\hspace{-20pt}  \tup{\dow[\psi]\alpha \neq \dow[\psi]\alpha} \land  \lnot \tup{\dow[\psi]\gamma \neq \dow[\psi]\gamma} \\&
  \leq  \tup{\dow[\psi]} \land  \lnot \tup{\dow[\psi]\gamma \neq \dow[\psi]\gamma} \tag{Item~\ref{it:gamma-conj}} \\&
  \leq \tup{\dow[\psi]\gamma\neq \dow[\psi]\gamma} \land  \lnot \tup{\dow[\psi]\gamma \neq \dow[\psi]\gamma}\tag{\neqaxfifteen \& {\bf Der21} (Fact \ref{fact boolean})}\\&
  \equiv  \botNode\tag{Boolean}
  \end{align*} 
%  \end{center}
which is a contradiction. Then $\neg\tup{\gamma\neq\gamma}$ is a conjunct of $\psi$.

  \item\label{it:all-alpha-gamma} Because $\psi$ is consistent (Item~\ref{it:psi-con}), by the previous item plus \neqaxeight, $\lnot \tup{\alpha\neq \gamma}$ is a conjunct of $\psi$.
   
%   Aiming for a contradiction, let us suppose that $\tup{\alpha\neq\gamma}$
%   is a conjunct of $\psi$. Because $\psi$ is consistent by~\ref{it:psi-con}, then all its conjuncts are
%   consistent. In particular by hypothesis we have $\tup{\alpha=\gamma}$ and $\tup{\alpha\neq\gamma}$
%   are consistent. Then we have $\tup{\alpha\neq\alpha}\lor\tup{\gamma\neq\gamma}$ consistent,
%   by ~\neqaxfourteen. But the first disjunct contradicts the hypothesis $\neg\tup{\alpha\neq\alpha}$,
%   and the second contradicts $\neg\tup{\gamma\neq\gamma}$ consistent obtained by Item~\ref{it:all-gamma}. 
%   Then $\neg\tup{\alpha\neq\gamma}$ is a conjunct of $\psi$.
  \end{enumerate}
%
%\noindent 
Then we have 
%\begin{center}
\begin{align*}
 & \hspace{-20pt} \tup{\dow[\psi]\alpha \neq \dow[\psi]\alpha} \land  \lnot \tup{\dow[\psi]\gamma \neq \dow[\psi]\gamma} \\&
 \leq \tup{\dow[\psi]\gamma} \land \tup{\dow[\psi]\alpha \neq \dow[\psi]\alpha}\land  \lnot \tup{\dow[\psi]\gamma \neq \dow[\psi]\gamma}\tag{Item~\ref{it:gamma-conj}} \\&
  \leq \tup{\dow[\psi]\alpha\neq \dow[\psi]\gamma} \land  \lnot \tup{\dow[\psi]\gamma \neq \dow[\psi]\gamma}\tag{\neqaxeight} \\&
  \leq \tup{\dow[\psi]\gamma \neq \dow[\psi]\gamma} \land \lnot \tup{\dow[\psi]\gamma \neq \dow[\psi]\gamma}\tag{Items~\ref{it:gamma-conj} and~\ref{it:all-alpha-gamma}, \& \neqaxten  \& {\bf Der21} (Fact \ref{fact boolean})} \\&
  \equiv\botNode\tag{Boolean}
  \end{align*}
%\end{center}
%
% $$\tup{\dow[\psi]\alpha \neq \dow[\psi]\alpha} \land \lnot \tup{\dow[\psi]\gamma \neq \dow[\psi]\gamma}\not\equiv \botNode$$
% By~\ref{it:gamma-conj}, hypothesis and Lemma~\ref{lema 10}
% $$\tup{\dow[\psi]\gamma \neq \dow[\psi]\alpha} \land \lnot \tup{\dow[\psi]\gamma \neq \dow[\psi]\gamma}\not\equiv \botNode$$
% Applying \neqaxten, Items~\ref{it:all-alpha-gamma} and~\ref{it:gamma-conj}, we have
% $$\tup{\dow[\psi]\gamma \neq \dow[\psi]\gamma} \land \lnot \tup{\dow[\psi]\gamma \neq \dow[\psi]\gamma}\not\equiv \botNode$$
% Hence, by Boolean reasoning we get
% $$\botNode\not\equiv\botNode$$
which is contradiction, from the assumption that $\tup{\alpha=\gamma}$ was a conjunct of $\psi$.
Therefore,  $\neg\tup{\alpha=\gamma}$ is a conjunct of $\psi$.
% \noindent Then 
% $$\tup{\dow[\psi]\alpha \neq \dow[\psi]\alpha} \land \lnot \tup{\dow[\psi]\gamma \neq \dow[\psi]\gamma}\not\equiv\botNode$$
% By hypothesis it can be rewritten as
% \\
% \begin{center}
% \begin{tabular}{rl}
%  & $\tup{\dow[\theta\land\lnot \tup{\alpha \neq \alpha}\land\tup{\alpha = \gamma}]\alpha \neq \dow[\theta\land\lnot \tup{\alpha \neq \alpha}\land\tup{\alpha = \gamma}]\alpha} $ \\
% $\land$ & $\lnot \tup{\dow[\theta\land\lnot \tup{\alpha \neq \alpha}\land\tup{\alpha = \gamma}]\gamma \neq \dow[\theta\land\lnot \tup{\alpha \neq \alpha}\land\tup{\alpha = \gamma}]\gamma}\not\equiv\botNode$
% \end{tabular}
% \end{center}
% By applying \neqaxthirteen on both conjuncts, and then \neqaxone on the first, we get
% \\
% \begin{center}
% \begin{tabular}{rl}
%  & $\tup{\dow[\theta\land\lnot \tup{\alpha \neq \alpha}\land\tup{\alpha = \gamma}]\alpha \neq \dow[\theta\land\lnot \tup{\alpha \neq \alpha}\land\tup{\alpha = \gamma}]\gamma} $ \\
% $\land$ & $\lnot \tup{\dow[\theta\land\lnot \tup{\alpha \neq \alpha}\land\tup{\alpha = \gamma}]\alpha \neq \dow[\theta\land\lnot \tup{\alpha \neq \alpha}\land\tup{\alpha = \gamma}]\gamma}\not\equiv\botNode$
% \end{tabular}
% \end{center}
%  \raul{a 13 muy rebuscado}
% By Boolean reasoning we have
% $$\botNode\not\equiv\botNode$$
% which is a contradiction. Then $\lnot \tup{\alpha = \gamma}$ is a conjunct of $\psi$.
\end{proof}

% We prove here some technical lemmas that will simplify many proofs: 

%Proof of Lemma~\ref{lema 5}:

\noindent{\bf Lema \ref{lema 5}.}
{\em \lemacinco}

\begin{proof}% [Proof of Lemma~\ref{lema 5}]
Aiming for a contradiction, suppose that $\tup{\alpha=\alpha}$ is a conjunct of $\psi$. Then 
%\begin{center}
  \begin{align*}
   &\hspace{-20pt} \lnot \tup{\gamma= \dow[\psi]\alpha} \land \lnot \tup{\gamma \neq \dow[\psi]\alpha} \land \tup{\gamma} \land \tup{\dow[\psi]}\\&
   \leq \lnot \tup{\gamma= \dow[\psi]\alpha} \land \lnot \tup{\gamma \neq \dow[\psi]\alpha} \land \tup{\gamma} \land \tup{\dow[\psi]\alpha =\dow[\psi]\alpha}\tag{\eqax{7} \& {\bf Der21} (Fact \ref{fact boolean})}\\&
  \equiv \lnot \tup{\gamma} \land \tup{\gamma}\tag{\neqaxfour} \\&
  \equiv\botNode\tag{\hbox{Boolean}}
\end{align*}
%\end{center}
and this concludes the proof.
%
%
%Aiming for a contradiction, suppose that $\tup{\alpha=\alpha}$ is a conjunct of $\psi$ and there exists $\beta \in P_n$ such that $(\psi,\beta) \not \in V_{\lnot =, \lnot  \neq}$. Then $\lnot \tup{\eps = \dow[\psi]\alpha} \land \lnot \tup{\eps \neq \dow[\psi]\alpha} \land \tup{\eps = \dow[\psi]\beta}$  is consistent or $\lnot \tup{\eps = \dow[\psi]\alpha} \land \lnot \tup{\eps \neq \dow[\psi]\alpha} \land \tup{\eps \neq \dow[\psi]\beta}$ is consistent. We prove that this is a contradiction. Let $*$ be $=$ or $\neq$, then 
%%\begin{center}
%  \begin{align*}
%   &\hspace{-20pt} \lnot \tup{\eps = \dow[\psi]\alpha} \land \lnot \tup{\eps \neq \dow[\psi]\alpha} \land \tup{\eps * \dow[\psi]\beta} \\&
%  \leq \lnot \tup{\eps = \dow[\psi]\alpha} \land \lnot \tup{\eps \neq \dow[\psi]\alpha} \land \tup{\dow[\psi]}\tag{\eqax{5} \& {\bf Der12}~Fact \ref{fact boolean}} \\&
%   \leq \lnot \tup{\eps = \dow[\psi]\alpha} \land \lnot \tup{\eps \neq \dow[\psi]\alpha} \land \tup{\dow[\psi]\alpha =\dow[\psi]\alpha}\tag{\eqax{7} \& {\bf Der21}~Fact \ref{fact boolean}}\\&
%  \equiv \lnot \tup{\eps}\tag{\neqaxfour} \\&
%  \equiv\botNode\tag{Def.\ of \botNode}
%\end{align*}
%%\end{center}
%and this concludes the proof.
\end{proof}

%Proof of Lemma~\ref{lema 1}:

% \begin{lemma}% ($*$ should be replaced evenly by $=$ or $\neq$).

\noindent{\bf Lema \ref{lema 1}.}
{\em \lemauno}

% \end{lemma}
 
\begin{proof}%[Proof of Lemma~\ref{lema 1}]
Let us first prove the case for $*$ being $\neq$. 
Suppose that   $\tup{\gamma = \dow[\psi]\alpha} \land \lnot \tup{\gamma \neq \dow[\psi]\alpha} \land \lnot \tup{\gamma \neq \dow[\psi]\beta}$ is consistent. Aiming for a contradiction, suppose that $\tup{\alpha\neq\beta}$ is a conjunct of $\psi$. Then
%Remember that from~\cite[Table 6]{cateLM10}, we have:
%$$\mbox{{\bf Der12:}} ~~ \axiomRestrNeq\vdash\tup{\alpha\beta} \leq \tup{\alpha}$$
%
%\begin{center}
  \begin{align*}&
\hspace{-20pt} \tup{\gamma = \dow[\psi]\alpha} \land \lnot \tup{\gamma \neq \dow[\psi]\alpha} \land \lnot \tup{\gamma \neq \dow[\psi]\beta} \\&
    \equiv  \tup{\gamma = \dow[\psi\land\tup{\alpha\neq\beta}]\alpha} \land \lnot \tup{\gamma \neq \dow[\psi]\alpha} \land \lnot \tup{\gamma \neq \dow[\psi]\beta} \tag{Hypothesis}  \\&
   \leq \tup{\gamma} \land \tup{\dow[\psi\land\tup{\alpha\neq\beta}]\alpha}  \land \lnot \tup{\gamma \neq \dow[\psi]\alpha} \land \lnot \tup{\gamma \neq \dow[\psi]\beta} \tag{\eqax{2}  \& \eqax{5}} \\&
   \leq \tup{\gamma} \land \tup{\dow[\psi\land\tup{\alpha\neq\beta}]}  \land \lnot \tup{\gamma \neq \dow[\psi]\alpha} \land \lnot \tup{\gamma \neq \dow[\psi]\beta} \tag{{\bf Der12} (Fact \ref{fact boolean})} \\&
   \leq \tup{\gamma} \land \tup{\dow[\psi]\alpha \neq \dow[\psi]\beta} \land \lnot \tup{\gamma \neq \dow[\psi]\alpha} \land \lnot \tup{\gamma \neq \dow[\psi]\beta} \tag{\neqaxfifteen \& {\bf Der21} (Fact \ref{fact boolean})} \\&
   \leq \tup{\gamma \neq \dow[\psi]\beta} \land \lnot \tup{\gamma \neq \dow[\psi]\beta} \tag{\neqaxeight}\\&
   \equiv \botNode\tag{Boolean}
\end{align*}
%\end{center}
% \noindent Then we suppose 
% $$\tup{\gamma = \dow[\psi]\alpha} \land \lnot \tup{\gamma \neq \dow[\psi]\alpha} \land \lnot \tup{\gamma \neq \dow[\psi]\beta}\not\equiv\botNode$$
% Because $\tup{\alpha\neq\beta}$ is a conjunct of $\psi$ this can be rewritten as
% $$\tup{\gamma = \dow[\theta\land\tup{\alpha\neq\beta}]\alpha} \land \lnot \tup{\gamma \neq \dow[\theta\land\tup{\alpha\neq\beta}]\alpha} \land \lnot \tup{\gamma \neq \dow[\theta\land\tup{\alpha\neq\beta}]\beta}\not\equiv\botNode$$
% Applying Axioms \eqax{2} and \eqax{5} to the first conjunct, we obtain
% $$\tup{\gamma} \land \tup{\dow[\theta\land\tup{\alpha\neq\beta}]\alpha}  \land \lnot \tup{\gamma \neq \dow[\theta\land\tup{\alpha\neq\beta}]\alpha} \land \lnot \tup{\gamma \neq \dow[\theta\land\tup{\alpha\neq\beta}]\beta}\not\equiv\botNode$$
% As seen in {\bf Der12} from~\cite[Table 6]{cateLM10} we have $ \axiomRestrNeq\vdash\tup{\alpha\beta} \leq \tup{\alpha}$, then
% $$\tup{\gamma} \land \tup{\dow[\theta\land\tup{\alpha\neq\beta}]}  \land \lnot \tup{\gamma \neq \dow[\theta\land\tup{\alpha\neq\beta}]\alpha} \land \lnot \tup{\gamma \neq \dow[\theta\land\tup{\alpha\neq\beta}]\beta}\not\equiv\botNode$$
% Hence, by \neqaxeleven and Boolean reasoning, we get
% $$\botNode\not\equiv\botNode$$
which is a contradiction. Then $\tup{\alpha\neq\beta}$ is a conjunct of $\psi$. For the case in which $*$ is $=$, the proof is similar but instead of \neqaxfifteen we use \eqax{7} and instead of \neqaxeight we use \neqaxfour.

\end{proof}

%Proof of Lemma~\ref{lema 2}:

\noindent{\bf Lema \ref{lema 2}.}
{\em  \lemados}

\begin{proof}%[Proof of Lemma~\ref{lema 2}]
Let us first prove the case for $*$ being $=$. 
Suppose that   $\tup{\gamma = \dow[\psi]\alpha} \land \lnot \tup{\gamma \neq \dow[\psi]\alpha}  \land  \tup{\gamma = \dow[\psi]\beta}$ is consistent. Aiming for a contradiction, suppose that $\lnot \tup{\alpha = \beta}$ is a conjunct of $\psi$. Also, since $\tup{\dow[\psi]\alpha}$ is consistent (by \eqax{5}), then by Lemma~\ref{nextPathIsConjunctneq} $\tup{\alpha=\alpha}$ is a conjunct of $\psi$. Then
%\begin{center}
\begin{align*}
  &\hspace{-20pt}  \tup{\gamma = \dow[\psi]\alpha}{\land}\lnot \tup{\gamma \neq \dow[\psi]\alpha}{\land}\tup{\gamma = \dow[\psi]\beta} \\&
    \leq \lnot \tup{\gamma \neq \dow[\psi]\alpha}\land\tup{\gamma = \dow[\psi{\land}\neg\tup{\alpha=\beta}{\land}\tup{\alpha}]\beta}\tag{\eqax{1}}\\&
  \leq  \lnot \tup{\gamma \neq \dow[\psi]\alpha}  \land  \tup{\gamma \neq \dow[\psi]\alpha}\tag{\neqaxnine \& {\bf Der21} (Fact \ref{fact boolean})}\\&
  \equiv  \botNode \tag{Boolean}
\end{align*}
which is a contradiction. Then $\tup{\alpha = \beta}$ is a conjunct of $\psi$.
For the case in which $*$ is $\neq$, the proof is similar but using \neqaxten instead of  \neqaxnine.
\end{proof}

 \begin{sketch}\label{detalles de 2}
 Thinking in terms of sequences as in the proofs of Facts~\ref{factnocambialaparticion1} and~\ref{factnocambialaparticion}, one only has to observe that:
 
 \begin{itemize}
  
  \item $[x]_{\pi_2^{\bf v_1}}\neq [z]_{\pi_2^{\bf v_1}}$ for all $z\in T^{\bf v_1}_2$ in a class that was glued to the class of the {\rm root} via a \emph{${\rm root}_{=,\lnot \neq}$}-kind gluing. 
  
  \item \emph{${\rm root}_{=,\neq}$}-kind gluings are made in different subtrees.
  
  \item By the same arguments given in the proofs of Facts~\ref{factnocambialaparticion1} and~\ref{factnocambialaparticion}, we can't have a sequence containing any of the following: 
\begin{center}
  \begin{tabular}{l@{\hskip .5in}l@{\hskip .5in}l}
   -- \emph{${\rm root}_{=,\neq}$}-\emph{$Z$}, 
   &
   -- \emph{$Z$}-\emph{${\rm root}_{=,\neq}$}, 
   &
   -- \emph{${\rm root}_{=,\lnot \neq}$}-\emph{$Z$},
\\ 
   -- \emph{$Z$}-\emph{${\rm root}_{=,\lnot \neq}$}, 
   &
   -- \emph{${\rm root}_{=,\neq}$}-\emph{$U_2$},
   &
   -- \emph{$U_2$}-\emph{${\rm root}_{=,\neq}$},
\\ 
   -- \emph{${\rm root}_{=,\lnot \neq}$}-\emph{$U_2$}, 
   &
   -- \emph{$U_2$}-\emph{${\rm root}_{=,\lnot \neq}$}.
   &
  \end{tabular}
  \end{center}
  
 \end{itemize}

 \end{sketch}

\begin{sketch}\label{detalles de 4a}
  Thinking in terms of sequences as in the proofs of Facts~\ref{factnocambialaparticion1} and~\ref{factnocambialaparticion}, one only has to observe that:

  \begin{itemize}
    \item $[x]_{\pi^{\bf v_3}}\neq [y]_{\pi^{\bf v_3}}$ for all $y\in T^{\bf v_3}$ in a class that was glued to the class of the {\rm root} via a \emph{${\rm root}_{=,\lnot \neq}$}-kind gluing (Use Lemma~\ref{lema 1}).
    
    \item $[z]_{\pi^{\bf v_3'}}\neq [y]_{\pi^{\bf v_3'}}$ for all $y\in T^{\bf v_3'}$ in a class that was glued to the class of the {\rm root} via a \emph{${\rm root}_{=,\lnot \neq}$}-kind gluing (Use Lemma~\ref{lema 1}). 
        
    \item \emph{${\rm root}_{=,\neq}$}-kind gluings are made in different subtrees.
    
   \item $[x]_{\pi^{\bf v_3}}$ and $[z]_{\pi^{\bf v_3'}}$ can not be glued together by a sequence of all \emph{$Z$}-kind gluings because of the consistency of $\varphi$ plus Lemmas~\ref{lema 0'} and~\ref{lema c}.
   
   \item $[x]_{\pi^{\bf v_3}}$ and $[z]_{\pi^{\bf v_3'}}$ can not be glued together by a sequence that begins or ends with a \emph{$U_2$}-kind gluing because we use new subtrees for that kind of gluings.
   
    \item By the same arguments given in the proofs of Facts~\ref{factnocambialaparticion1} and~\ref{factnocambialaparticion}, we can't have a sequence containing any of the following: 
\begin{center}
  \begin{tabular}{l@{\hskip .5in}l@{\hskip .5in}l}
   -- \emph{${\rm root}_{=,\neq}$}-\emph{$Z$}, 
   &
   -- \emph{$Z$}-\emph{${\rm root}_{=,\neq}$}, 
   &
   -- \emph{${\rm root}_{=,\lnot \neq}$}-\emph{$Z$},
   \\
   -- \emph{$Z$}-\emph{${\rm root}_{=,\lnot \neq}$}, 
   &
   -- \emph{${\rm root}_{=,\neq}$}-\emph{$U_2$},
   &
   -- \emph{$U_2$}-\emph{${\rm root}_{=,\neq}$},
   \\
   -- \emph{${\rm root}_{=,\lnot \neq}$}-\emph{$U_2$}, 
   &
   -- \emph{$U_2$}-\emph{${\rm root}_{=,\lnot \neq}$}.
   &
  \end{tabular}
  \end{center}
  
  \item By the same arguments given in the proof of Fact~\ref{factnocambialaparticion}, we can reduce sequences with two consecutive \emph{$Z$}-kind gluings to sequences that not have two consecutive \emph{$Z$}-kind gluings.

\item By the same arguments given in the proof of Fact~\ref{factnocambialaparticion}, we can't have sequences containing \emph{$U_2$}- \emph{$U_2$}.

\item One can prove that $[x]_{\pi^{\bf v_3}}$ and $[z]_{\pi^{\bf v_3'}}$ are not glued together by a sequence that alternates \emph{$Z$}-kind gluings and \emph{$U_2$}-kind gluings (starting and ending with \emph{$Z$}) by induction with arguments similar to the ones used in Lemma~\ref{no cambia la particion}.

  \end{itemize}

\end{sketch}
 
\begin{sketch}\label{detalles de 4b}
  Thinking in terms of sequences as in the proofs of Facts~\ref{factnocambialaparticion1} and~\ref{factnocambialaparticion}, one only has to observe that:

  \begin{itemize}
 
    \item $[x]_{\pi^{\bf v_3}}\neq [y]_{\pi^{\bf v_3}}$ for all $y\in T^{\bf v_3}$ in a class that was glued to the class of the {\rm root} via a \emph{${\rm root}_{=,\lnot \neq}$}-kind gluing (Use Lemma~\ref{lema 1}).
   
    \item $[z]_{\pi_1^{\bf u}}\neq [y]_{\pi_1^{\bf u}}$ for all $y\in T_1^{\bf u}$ in a class that was glued to the class of the {\rm root} via a \emph{${\rm root}_{=,\lnot \neq}$}-kind gluing (Use Lemma~\ref{lema 1}). 
   
   \item \emph{${\rm root}_{=,\neq}$}-kind gluings are made in different subtrees.
   
   \item $[x]_{\pi^{\bf v_3}}$ and $[z]_{\pi_1^{\bf u}}$ can not be glued together by a sequence that begins with a \emph{$U_2$}-kind gluing because we use new subtrees for that kind of gluings.
   
 \item $[x]_{\pi^{\bf v_3}}$ and $[z]_{\pi_1^{\bf u}}$ can not be glued together by a sequence that begins with a \emph{$Z$}-kind gluing because of the consistency of $\varphi$ plus  \neqaxeight and Lemma~\ref{paraverif}.
\end{itemize}  
\end{sketch}

\begin{sketch}\label{detalles de 5}
  Thinking in terms of sequences as in the proofs of Facts~\ref{factnocambialaparticion1} and~\ref{factnocambialaparticion}, one only has to observe that:

  \begin{itemize}
 
   \item  $[x]_{\pi^{\psi}}\neq [y]_{\pi^{\psi}}$ for all $y\in T^{\psi}$ in a class that was glued to the class of the {\rm root} via a \emph{${\rm root}_{=,\lnot \neq}$}-kind gluing (Use Lemma~\ref{lema 1}).
   
   \item  $[x]_{\pi^{\psi}}\neq [y]_{\pi^{\psi}}$ for all $y\in T^{\psi}$ in a class that was glued to the class of the {\rm root} via a \emph{${\rm root}_{=,\neq}$}-kind gluing (Rule 1).
   
      \item By the same arguments given in the proofs of Facts~\ref{factnocambialaparticion1} and~\ref{factnocambialaparticion}, we can't have a sequence containing any of the following: 
\begin{center}
  \begin{tabular}{l@{\hskip .5in}l@{\hskip .5in}l}
   -- \emph{${\rm root}_{=,\neq}$}-\emph{$Z$}, 
   &
   -- \emph{$Z$}-\emph{${\rm root}_{=,\neq}$}, 
   &
   -- \emph{${\rm root}_{=,\lnot \neq}$}-\emph{$Z$},
   \\
   -- \emph{$Z$}-\emph{${\rm root}_{=,\lnot \neq}$}, 
   &
   -- \emph{${\rm root}_{=,\neq}$}-\emph{$U_2$},
   &
   -- \emph{$U_2$}-\emph{${\rm root}_{=,\neq}$},
   \\
   -- \emph{${\rm root}_{=,\lnot \neq}$}-\emph{$U_2$}, 
   &
   -- \emph{$U_2$}-\emph{${\rm root}_{=,\lnot \neq}$}.
   &
  \end{tabular}
  \end{center}
\end{itemize}  
\end{sketch}

\begin{sketch}\label{detalles de 7}
  Thinking in terms of sequences as in the proofs of Facts~\ref{factnocambialaparticion1} and~\ref{factnocambialaparticion}, one only has to observe that:

  \begin{itemize}
 \item In case $\psi=\rho$, by consistency of $\varphi$ plus  \eqax{7} and {\bf Der21} of Fact \ref{fact boolean}, $\lnot \tup{\alpha = \beta}$ is a conjunct of $\psi$.
 
 \item $[x]_{\pi^{\psi}}$ and $[y]_{\pi^{\rho}}$ can not be glued together by a sequence of all \emph{$Z$}-kind gluings because of the consistency of $\varphi$ plus Lemmas~\ref{lema 0'} and~\ref{lema c}.
 
   \item By Lemma~\ref{lema 1} plus construction of $\Tt^{\varphi}$, $[y]_{\pi^{\rho}}\neq [z]_{\pi^{\rho}}$ for all $z\in T^{\rho}$ in a class that was glued to the {\rm root} via a \emph{${\rm root}_{=,\lnot \neq}$}-kind gluing.
   
   \item  By Rule 1, $[y]_{\pi^{\rho}}\neq [z]_{\pi^{\rho}}$ for all $z\in T^{\rho}$ in a class that was glued to the {\rm root} via a \emph{${\rm root}_{=,\neq}$}-kind gluing.
   
     \item By the same arguments given in the proofs of Facts~\ref{factnocambialaparticion1} and~\ref{factnocambialaparticion}, we can't have a sequence containing any of the following: 
\begin{center}
  \begin{tabular}{l@{\hskip .5in}l@{\hskip .5in}l}
   -- \emph{${\rm root}_{=,\neq}$}-\emph{$Z$}, 
   &
   -- \emph{$Z$}-\emph{${\rm root}_{=,\neq}$}, 
   &
   -- \emph{${\rm root}_{=,\lnot \neq}$}-\emph{$Z$},
   \\
   -- \emph{$Z$}-\emph{${\rm root}_{=,\lnot \neq}$}, 
   &
   -- \emph{${\rm root}_{=,\neq}$}-\emph{$U_2$},
   &
   -- \emph{$U_2$}-\emph{${\rm root}_{=,\neq}$},
   \\
   -- \emph{${\rm root}_{=,\lnot \neq}$}-\emph{$U_2$}, 
   &
   -- \emph{$U_2$}-\emph{${\rm root}_{=,\lnot \neq}$}.
   &
  \end{tabular}
  \end{center}
  
  \item By the same arguments given in the proof of Fact~\ref{factnocambialaparticion}, we can reduce sequences with two consecutive \emph{$Z$}-kind gluings to sequences that not have two consecutive \emph{$Z$}-kind gluings.

\item By the same arguments given in the proof of Fact~\ref{factnocambialaparticion}, we can't have sequences containing \emph{$U_2$}-\emph{$U_2$}.

\item One can prove by induction that $[x]_{\pi^{\psi}}$ and $[y]_{\pi^{\rho}}$ are not glued together by a sequence that alternates \emph{$Z$}-kind gluings and \emph{$U_2$}-kind gluings (neither starting with \emph{$Z$} or with \emph{$U_2$}) with arguments similar to the ones used in Lemma~\ref{no cambia la particion}.
\end{itemize}

(Notation: For $\psi, \rho \in N_n$, we use the notation $\Tt^{\psi}=(T^{\psi},\pi^{\psi}), \Tt^{\rho}=(T^{\rho},\pi^{\rho})$ with roots $r^{\psi}, r^{\rho}$ respectively to denote {\em any} tree in which $\psi, \rho$, respectively, are satisfiable.)

\end{sketch}

\end{document}